\newcommand{\di}{d}
\newcommand{\scal}[2]{\big<#1,#2\big>} 
\let\a=\alpha \let\b=\beta         \let\d=\delta     
             \let\l=\lambda
          \let\ph=\varphi   
\let\ps=\psi   \let\o=\omega     
 \let\D=\Delta
\def\VV{{\cal V}}
\def\NN{{\cal N}}
\def\LL{{\cal L}}
\def\DD{{\cal D}}
\def\UU{{\cal U}}
\def\ff{{\mathfrak f}}
\def\ee{{\underline \varepsilon}}
  \def\v0{{\vec 0}}
\def\bR{\mathbb{R}}
\def\bN{\mathbb{N}}
\def\cU{\mathcal{U}}
\def\cV{\mathcal{V}}
\def\cF{\mathcal{F}}
\def\cG{\mathcal{G}}
\def\cL{\mathcal{L}}
\def\cN{\mathcal{N}}
\def\cE{\mathcal{E}}
\def\cK{\mathcal{K}}
\def\cH{\mathcal{H}}
\def\ph{\varphi}
\def\NNN{\mathbb{N}}  
\def\RRR{\mathbb{R}}
\def\indic{\hbox{\raise-2pt \hbox{\indbf 1}}}
\let\dpr=\partial
\let\io=\infty
\def\*{{\hfill\break\null\hfill\break}}
\def\bmedia#1{{\bigl\langle#1\bigr\rangle}}
\def\tende#1{\,\vtop{\ialign{##\crcr\rightarrowfill\crcr
             \noalign{\kern-1pt\nointerlineskip}
             \hskip3.pt${\scriptstyle #1}$\hskip3.pt\crcr}}\,}
\def\otto{\,{\kern-1.truept\leftarrow\kern-5.truept\to\kern-1.truept}\,}
\def\tr{{\rm tr}}
\def\wt#1{\widetilde{#1}}
\def\sqt[#1]#2{\root #1\of {#2}}
\def\Im{{\rm Im}\,}
\def\lis{\overline}
\def\norml#1#2{{\left|\hskip-.05em\left|#1\right|\hskip-.05em\right|}_{#2}}
\def\ins#1#2#3{\vbox to0pt{\kern-#2 \hbox{\kern#1 #3}\vss}\nointerlineskip}
\def\be{\begin{equation}}
\def\ee{\end{equation}}
\def\bea{\begin{eqnarray}}\def\eea{\end{eqnarray}}
\def\bean{\begin{eqnarray*}}\def\eean{\end{eqnarray*}}
\def\bfr{\begin{flushright}}\def\efr{\end{flushright}}
\def\bc{\begin{center}}\def\ec{\end{center}}
\def\bal{\begin{align}} \def\eal{\end{align}}
\def\ba#1{\begin{array}{#1}} \def\ea{\end{array}}
\def\bd{\begin{description}}\def\ed{\end{description}}
\def\non{\nonumber}
\def\Halmos{\hfill\vrule height10pt width4pt depth2pt \par\hbox to \hsize{}}
\newtheorem{theorem}{Theorem}[section]  
\newtheorem{prop}[theorem]{Proposition}
\newtheorem{lemma}[theorem]{Lemma}
\numberwithin{equation}{section}
\def \phis#1{\ph_t^{N}({(#1)}/{2})}  
\def \phiq#1{\big( \ph_t^{N}({(#1)}/{2})\,\big)}
\def \dotphis#1{\dot\ph_t^{N}({(#1)}/{2})}  
\def \ddotphis#1{\ddot\ph_t^{N}({(#1)}/{2})}
\def\phn{\varphi^{N}}
\def\norml#1#2{{|\hskip-.05em|#1|\hskip-.05em|}_{#2}}
\newcommand{\expv}[2]{\langle #1,#2 \rangle}
\author{Chiara Boccato, Serena Cenatiempo, Benjamin Schlein \\ \\ Institute of Mathematics, University of Zurich\\ Winterthurerstrasse 190, 8057 Zurich, Switzerland}
\title{Quantum many-body fluctuations around nonlinear Schr\"odinger dynamics}
\begin{document}

\maketitle

\begin{abstract}
We consider the many body quantum dynamics of systems of bosons interacting through a two-body potential $N^{3\beta-1} V (N^\beta x)$, scaling with the number of particles $N$. For $0< \beta < 1$, we obtain a norm-approximation of the evolution of an appropriate class of data on the Fock space. To this end, we need to correct the evolution of the condensate described by the one-particle nonlinear Schr\"odinger equation by means of a fluctuation dynamics, governed by a quadratic generator.
\end{abstract}

\section{Introduction}

In the last years important progress has been achieved in the mathematical understanding of the time-evolution of many body quantum systems. Here, we are going to consider the dynamics of systems of bosons, characterized by permutation symmetric wave functions. 

A bosonic system of $N$ particles moving in three space dimensions can be described on the Hilbert space 
\[ L^2_s (\bR^{3N}) = \left\{ \psi_N \in L^2 (\bR^{3N}) : \| \psi_N \|_2 = 1 \; \text{ and } \psi_N (x_{\pi(1)}, \dots , x_{\pi(N)}) = \psi_N (x_1, \dots , x_N) \right\} \]
The evolution of an initial $\psi_N \in L^2_s (\bR^{3N})$ is governed by the Schr\"odinger equation 
\begin{equation}\label{eq:schr0} i\partial_t \psi_{N,t} = H_N \psi_{N,t} 
\end{equation}
where, on the r.h.s., $H_N$ is the Hamilton operator of the system. Restricting our attention to two-body interactions, the Hamilton operator takes the form  
\begin{equation}\label{eq:ham0} H_N = \sum_{j=1}^N (-\Delta_{x_j} + V_\text{ext} (x_j)) + \lambda \sum^N_{i<j} V (x_i - x_j) 
\end{equation}
where $V_\text{ext}$ is an external potential, $V$ is an interaction and $\lambda \in \bR$ is a coupling constant. Notice that the unique global solution of 
(\ref{eq:schr0}) is given by $\psi_{N,t} = \exp (-i H_N t) \psi_{N,0}$.

For systems of interest in physics and chemistry, the number of particles involved in the dynamics is huge, ranging from $N \simeq 10^3$ up to values of the order $N \simeq 10^{23}$. For this reason, despite the fact that (\ref{eq:schr0}) is a linear equation, it is typically very difficult to extract useful information. One of the main goals of non-equilibrium statistical mechanics is therefore the derivation of effective equations approximating the solution of (\ref{eq:schr0}) in the interesting regimes.  

The simplest non-trivial limit, in which it is possible to obtain an effective approximation of (\ref{eq:schr0}) is the mean field regime, where $N \gg 1$, $|\lambda| \ll 1$ and $N \lambda$ remains fix, of order one. These conditions guarantee that particles interact through a large number of weak collisions, whose total effect is comparable with their inertia. To investigate the time-evolution in the mean field regime, we set $\lambda = 1/N$ in (\ref{eq:ham0}). We obtain the Hamiltonian
\begin{equation}\label{eq:ham0-mf} H_N^\text{mf} = \sum_{j=1}^N (-\Delta_{x_j} + V_\text{ext} (x_j)) + \frac{1}{N} \sum_{i<j}^N V(x_i -x_j) \end{equation}
and we study the corresponding evolution $\psi_{N,t} = \exp (-i H_N^\text{mf} t) \psi_N$. Let us assume that the initial data is approximately factorized, i.e. that $\psi_{N,0} \simeq \ph^{\otimes N}$ for a $\ph \in L^2 (\bR^3)$. Because of the mean-field nature of the interaction, we can expect that, for large $N$, factorization is approximately preserved by the time-evolution. In other words, we can expect that $\psi_{N,t} \simeq \ph_t^{\otimes N}$ for a new $\ph_t \in L^2 (\bR^3)$. Under this assumption, it is simple to show that $\ph_t$ must satisfy the self-consistent Hartree equation
\begin{equation}\label{eq:hartree0} i\partial_t \ph_t = (-\Delta + V_\text{ext}) \ph_t + (V * |\ph_t|^2 ) \ph_t 
\end{equation}
where the cubic nonlinearity reflects the two-body interactions. 

To obtain a precise statement about the convergence of the many-body evolution towards the Hartree dynamics, we introduce the notion of reduced densities. The one-particle reduced density associated with the solution $\psi_{N,t}$ of the Schr\"odinger equation is defined as the non-negative trace-class operator on $L^2 (\bR^3)$, with the integral kernel 
\[ \gamma_{N,t}^{(1)} (x;y) = \int dx_2 \dots dx_N \, \psi_{N,t} (x , x_2, \dots , x_N) \overline{\psi}_{N,t} (y,x_2, \dots , x_N) \]
normalized so that $\tr \gamma_{N,t}^{(1)} = 1$ ($\gamma^{(1)}_{N,t}$ is obtained by taking the partial trace of the orthogonal projection $|\psi_{N,t} \rangle \langle \psi_{N,t}|$ over the degrees of freedom of the last $(N-1)$ particles). 
Similarly, we can also introduce the $k$-particle reduced density $\gamma_{N,t}^{(k)}$ associated with $\psi_{N,t}$, for every $k = 2, 3, \dots , N$. 

It turns out that the reduced densities $\gamma^{(k)}_{N,t}$ provide the appropriate language to discuss convergence of the many body evolution towards the Hartree dynamics in the mean field regime. In fact, under appropriate conditions on the interaction potential $V$ (including the case $V (x) = \pm |x|^{-1}$ of a Coulomb interaction), one can show that, for every family of initial data $\psi_{N,0} \in L^2 (\bR^{3N})$ with $\gamma^{(1)}_{N,0} \to |\ph \rangle \langle \ph|$ (approximate factorization at time $t=0$), we will have 
\begin{equation}\label{eq:conv-red} \gamma_{N,t}^{(1)} \to |\ph_t \rangle \langle \ph_t | 
\end{equation}
as $N \to \infty$, for all fixed $t \in \bR$. Here $\ph_t$ is the solution of the Hartree equation (\ref{eq:hartree0}), with the initial data $\ph_{t=0} = \ph$. In fact (\ref{eq:conv-red}) can be extended to get convergence of the $k$-particle reduced density, for any fixed $k \in \bN$. The first results in the direction of (\ref{eq:conv-red}) have been obtained in \cite{Hepp,GV,Spohn}. More recently, much work went into the proof of (\ref{eq:conv-red}) in the case of singular interaction potentials; see \cite{EY,BGM,ES, 
FK,RS,AGT,KSS,KP,AN,CLS,CP,CH,AFP}.


After identifying the limiting effective dynamics (the one governed by the Hartree equation (\ref{eq:hartree0}), in the mean field regime), it is natural to consider fluctuations around it. To study fluctuations, it is very useful to switch to a representation of the many particle system on the bosonic Fock space
\[ \cF = \bigoplus_{n \geq 0} L^2_s (\bR^{3n}, dx_1 \dots dx_n) \]
On $\cF$, we can describe states with a variable number of particles. The vector $\Psi = \{ \psi_0, \psi_1, \dots \} \in \cF$ describes a state having $n$ particles with probability $\| \psi_n \|_2^2$, for all $n\in \bN$. For $f \in L^2 (\bR^3)$, we let $a^* (f)$ and $a(f)$ denote the usual creation and annihilation operators acting on $\cF$. We also introduce operator-valued distributions $a_x^*, a_x$ creating and, respectively, annihilating a particle at $x$. They satisfy the canonical commutation relations 
\[ [a_x , a_y^* ] = \delta (x-y) , \qquad [a_x, a_y] = [a_x^*, a_y^*] = 0 \,. \]
In terms of these operator-valued distributions, we define the Hamilton operator on $\cF$ by 
\begin{equation}\label{eq:F-ham}\cH^\text{mf}_N = \int dx \, a_x^* (-\Delta_x + V_\text{ext} (x))  a_x + \frac{1}{2N} \int dx dy \, V(x-y) a_x^* a_y^* a_y a_x 
\end{equation}
Since $\cH^\text{mf}_N$ commutes with the number of particles operator 
\[ \cN = \int dx \, a_x^* a_x \]
the corresponding time evolution preserves the number of particles. In particular, if we choose an initial data of the form $\Psi = \{ 0, \dots , 0 , \psi_N, 0, \dots \}$ with exactly $N$ particles, its evolution will coincide precisely with the one generated by (\ref{eq:ham0-mf}).

The advantage of working in the Fock space, rather than in the $N$-particle space $L^2_s (\bR^{3N})$, is the freedom in the choice of the initial data. We are interested in the evolution of coherent initial data. For $f \in L^2 (\bR^3)$, the coherent state with orbital $f$ is given by $W(f) \Omega \in \cF$, where 
$\Omega = \{ 1, 0, 0, \dots \}$ is the vacuum state (with no particles) and 
\[ W(f) = \exp (a^* (f) - a(f)) \]
is the Weyl operator with orbital $f$. Simple computations show that
\begin{equation}\label{eq:WOm} W(f) \Omega = e^{-\| f \|^2_2/2} \left\{ 1, f , \frac{f^{\otimes 2}}{\sqrt{2!}}, \dots \right\} \end{equation}
and that the expected number of particles is given by
\begin{equation}\label{eq:WNW} \langle W(f) \Omega, \cN W(f) \Omega \rangle = \| f \|_2^2 \end{equation}

Motivated by (\ref{eq:WNW}), we study the many body evolution generated by (\ref{eq:F-ham}) for initial coherent states of the form $W(\sqrt{N} \ph) \Omega$, with $\ph \in L^2 (\bR^3)$ such that $\| \ph \|_2 = 1$ (this normalization guarantees that the expected number of particles in $W(\sqrt{N} \ph) \Omega$ is equal to $N$). Since factorization is believed to be approximately preserved by the mean field dynamics, we expect that the evolution of the coherent state $W(\sqrt{N} \ph) \Omega$ can be approximated by the evolved coherent state $W(\sqrt{N} \ph_t) \Omega$, where $\ph_t$ is the solution of the Hartree equation (\ref{eq:hartree0}). 

We define the fluctuation dynamics 
\begin{equation}\label{eq:UN-mf} \cU^\text{mf}_N (t) = W (\sqrt{N} \ph_t)^* e^{-i\cH_N t} W(\sqrt{N} \ph) \end{equation}
and we set $\xi_t = \cU^\text{mf}_N (t) \Omega$. Then, we have
\begin{equation}\label{eq:ident0} 
e^{-i\cH_N^\text{mf} t} W(\sqrt{N} \ph) \Omega = W (\sqrt{N} \ph_t) \xi_t 
\end{equation}
Hence, to prove that the full evolution of the initial coherent state (the l.h.s. of the last equation) is approximately coherent, we need to show that $\xi_t$ is close to the vacuum (in fact, it is enough to prove that the expected number of particles in $\xi_t$ is much smaller than $N$, since the evolved Weyl operator $W(\sqrt{N} \ph_t)$ creates a condensate with  approximately $N$ particles in the state $\ph_t$).
To this end, it is useful to observe that 
\[ i \partial_t \cU^\text{mf}_N (t) = \cL_N^\text{mf} (t) \cU^\text{mf}_N (t) \]
with the generator
\begin{equation}\label{eq:LNmf} 
\begin{split} \cL^\text{mf}_N (t) = \; & \int dx a_x^* (-\Delta_x + V_\text{ext} (x)) a_x + \int dx (V*|\ph_t|^2)(x) a_x^* a_x \\ &+ \int dx dy V(x-y) \ph_t (x) \overline{\ph}_t (y) a_x^* a_y \\ &+ \frac{1}{2} \int dx dy V(x-y) \left[ \ph_t (x) \ph_t (y) a_x^* a_y^* + \text{h.c.} \right] \\ &+ \frac{1}{\sqrt{N}} \int dx dy V(x-y) a_x^* \left[ \ph_t (y) a_y^* + \text{h.c.} \right] a_x \\ &+ \frac{1}{2N} \int dx dy V(x-y) a_x^* a_y^* a_y a_x 
\end{split} 
\end{equation}

Notice that the terms on the third and fourth line do not commute with the number of particles operator $\cN$. This implies that the fluctuation dynamics $\cU^\text{mf}_N (t)$ does not preserve the number of particles (this is of course no surprise; the number of excitations of the condensate is expected to increase during the dynamics). Nevertheless it turns out that the expectation of $\cN$ cannot increase too fast. In fact, using the expression (\ref{eq:LNmf}), one can show that 
\begin{equation}\label{eq:xiN} \langle \xi_t, \cN \xi_t \rangle = \langle \cU^\text{mf}_N (t) \Omega, \cN \cU^\text{mf}_N (t) \Omega\rangle  \leq C e^{K|t|} \end{equation}
uniformly in $N$. This estimate on the growth  of the expectation of $\cN$ in the state $\xi_t$ can be used to show that (\ref{eq:ident0}) remains close to a coherent state, in the sense of the reduced densities. 
If we denote by $\gamma^{(1)}_{N,t}$ the reduced density of the full evolution of $W(\sqrt{N} \ph) \Omega$ (more generally, of a state of the form $W(\sqrt{N} \ph) \xi$, with $\xi$ having only few particles), (\ref{eq:xiN}) allows us to show that  
\begin{equation}\label{eq:rate} \tr \, \left| \gamma^{(1)}_{N,t} - |\ph_t \rangle \langle \ph_t| \right| \leq \frac{C e^{K|t|}}{N} \end{equation}
and that similar estimates hold for the $k$-particles reduced density, for all $k \in \bN$. The study of the dynamics of coherent states has been initiated in \cite{Hepp,GV}. More recently, it has been further developed in \cite{RS,CLS}, leading to a proof of (\ref{eq:rate}).

A part from  bounds like (\ref{eq:rate}) on the rate of convergence of the reduced densities, this approach  also allows us to study the fluctuations around the Hartree dynamics, in the limit of large $N$. {F}rom (\ref{eq:LNmf}), we expect that, for $N \to \infty$, the fluctuation dynamics $\cU^\text{mf}_N (t)$ converges towards a limiting dynamics $\cU^\text{mf}_\infty (t)$, defined by 
\[ i\partial_t \cU^\text{mf}_\infty (t) = \cL^\text{mf}_\infty (t) \cU^\text{mf}_\infty (t) \]
with the generator
\begin{equation}\label{eq:Linfty-mf} \begin{split} \cL^\text{mf}_\infty (t) = \; & \int dx a_x^* (-\Delta_x + V_\text{ext} (x)) a_x + \int dx (V*|\ph_t|^2)(x) a_x^* a_x \\ &+ \int dx dy V(x-y) \ph_t (x) \overline{\ph}_t (y) a_x^* a_y \\ &+ \frac{1}{2} \int dx dy V(x-y) \left[ \ph_t (x) \ph_t (y) a_x^* a_y^* + \text{h.c.} \right]
\end{split} \end{equation}
independent of $N$. In fact, under appropriate assumptions on the interaction potential, one can indeed show that 
\begin{equation}\label{eq:conv-fluc} \| \cU^\text{mf}_N (t) \Omega - \cU^\text{mf}_\infty (t) \Omega \| \leq \frac{Ce^{K|t|}}{\sqrt{N}} \end{equation}
By definition, this implies that 
\begin{equation}\label{eq:fluc-lim} \| e^{-i\cH^\text{mf}_N t} W(\sqrt{N} \ph) \Omega - W (\sqrt{N} \ph_t) \cU^\text{mf}_\infty (t) \Omega \| \leq \frac{Ce^{K|t|}}{\sqrt{N}}  
\end{equation}
Hence, taking into account the limiting fluctuation dynamics $\cU^\text{mf}_\infty (t)$, we provide a norm approximation to the full many body evolution (hence, a stronger approximation compared with the one furnished by the evolved coherent state $W(\sqrt{N} \ph_t) \Omega$, which is only valid in the sense of the reduced densities). The convergence (\ref{eq:conv-fluc}) has already been observed in \cite{Hepp} for smooth interactions and then in \cite{GV} for a larger class of potentials. More recently, it has been established, in a slightly different form, in \cite{GMM1,GMM2,XC}.
Using (\ref{eq:conv-fluc}) and bounds like (\ref{eq:xiN}) on the growth of the expectation of the number of particles (and of its higher moments) with respect to the evolution $\cU_N^\text{mf}$, one can prove a central limit theorem for sums of one-particle observables evolved through the full interacting many-body dynamics; see \cite{BKS,BSS}. 

Instead of considering fluctuations around the Hartree dynamics for coherent initial states on the Fock space, it is possible to analyze directly the mean field evolution in the $N$-particle Hilbert-space $L^2_s (\bR^{3N})$, for approximately factorized initial data. To this end, it is convenient to introduce the time-dependent map
\begin{equation}\label{eq:uNt} u_{N,t} : L^2_s (\bR^{3N}) \to \cF_t \end{equation}
where $\cF_t$ denotes the bosonic Fock space, constructed over the orthogonal complement in $L^2 (\bR^3)$ of the one-dimensional space spanned by the solution $\ph_t$ of (\ref{eq:hartree0}) and $u_{N,t} \psi = \{ \psi^{(0)}, \psi^{(1)}, \dots , \psi^{(N)}, 0 ,0 , \dots \}$ if 
\[ \psi = \psi^{(0)} \ph_t^{\otimes N} + \psi^{(1)} \otimes_s \ph^{\otimes (N-1)} + \dots + \psi^{(N-1)} \otimes_s \ph + \psi^{(N)} \]
where $\otimes_s$ denotes the symmetric tensor product. When applied to the many body evolution $\psi_{N,t} = e^{-i H_N^\text{mf} t} \psi_{N,0}$ of an approximately factorized initial data $\psi_{N,0}$, the isometric map $u_{N,t}$ eliminates the particles in the condensate $\ph_t$ and let us focus on the fluctuations. It has been shown in \cite{LNS}, inspired by ideas developed in the time-independent setting in \cite{LNSS}, that there exists a Fock space unitary evolution $\wt{\cU}_\infty^\text{mf} (t;s)$ with a quadratic generator $\wt{\cL}^\text{mf}_\infty (t)$ such that 
\begin{equation}\label{eq:fluc-fixN} \left\| u_{N,t} e^{-i H_N t} \psi_N - \wt{\cU}_\infty^\text{mf} (t;0) u_{N,0} \psi_N \right\| \leq C N^{-1/2} e^{K|t|}
\end{equation} 
Notice that $\wt{\cL}^\text{mf}_\infty$ is similar but does not coincide with the limiting generator (\ref{eq:Linfty-mf}) (the difference between the two generators is due to the requirement, in the definition of $u_{N,t}$, that fluctuations are orthogonal to $\ph_t$).

A more subtle and physically interesting regime, in which it is possible to approximate the many-body evolution by an effective dynamics, is the Gross-Pitaevskii regime. In the Fock space representation, the Hamilton operator is given by 
\begin{equation}\label{eq:ham-GP} \cH^\text{GP}_N = \int dx \, a_x^* (-\Delta_x + V_\text{ext} (x)) a_x + \frac{1}{2} \int dx dy \, N^2 V(N (x-y)) a_x^* a_y^* a_y a_x 
\end{equation}
where $V \geq 0$ is a smooth, short range potential. It turns out that, in this case, the many-body 
Schr\"odinger evolution can be approximated by the time-dependent Gross-Pitaevskii equation 
\begin{equation}\label{eq:GP0}
i\partial_t \ph_{\text{GP},t} = (-\Delta + V_\text{ext}) \ph_{\text{GP},t} + 8 \pi a_0 \, |\ph_{\text{GP},t}|^2 \ph_{\text{GP},t} 
\end{equation}
where $a_0$ is the scattering length of the potential $V$. Recall that the scattering length is defined through the solution $f$ of the zero-energy scattering equation
\begin{equation}\label{eq:scatt00} \left[ -\Delta + \frac{1}{2} V \right] f = 0 \end{equation}
with the boundary condition $f (x) \to 1$, for $|x| \to \infty$. For $x$ outside the support of $V$, we have 
\[ f(x) = 1 - \frac{a_0}{|x|} \]
where the constant $a_0$ is defined to be the scattering length of $V$. Equivalently, we can define $a_0$ through the integral 
\begin{equation}\label{eq:scatt0} 8 \pi a_0 = \int V(x) f(x) dx \end{equation}

{F}rom the point of view of physics, the Gross-Pitaevskii regime is very different from the mean field limit since here particles interact rarely (only when they are very close, at distances of the order $N^{-1}$) but when they do, the collisions are very strong. As a consequence of the strong interactions among the particles, the many body wave function develops a short range correlation structure which can be described by the solution $f$ of the zero-energy scattering equation (\ref{eq:scatt00}) and is responsible for the emergence of the scattering length in (\ref{eq:GP0}). 

A first derivation of (\ref{eq:GP0}) starting from many-body quantum mechanics has been given in \cite{ESY1,ESY2,ESY3}. Later, an alternative approach has been proposed in \cite{P}. Bounds on the rate of the convergence towards the Gross-Pitaevskii dynamics have been then obtained in \cite{BdOS12}, making use of an appropriately modified version of the coherent states method illustrated above. The main problem one has to face to apply the coherent states approach to the Gross-Pitaevskii regime is the formation of correlations among the particles, which cannot be described by coherent states (the development of a short scale correlation structure in the Gross-Pitaevskii regime has been studied in \cite{EMS,CH2}).
To circumvent this problem, one has to modify the coherent states through  appropriate Bogoliubov transformations. Following \cite{BdOS12}, we define the function 
\[ k^\text{GP}_t (x;y) = - N (1-f (N (x-y))) \ph^{N}_{\text{GP},t} (x) \ph^{N}_{\text{GP},t} (y) \]
where $f$ is the solution of (\ref{eq:scatt00}) 
and where $\ph^{N}_{\text{GP},t}$ is the solution of a modified, $N$-dependent, version of (\ref{eq:GP0}), with the local nonlinearity replaced by an Hartree term, given by convolution with $N^3 V(N.)f(N.)$ (since by (\ref{eq:scatt0}), $N^3 V(N.) f(N.) \to 8\pi a_0 \delta$, it is easy to bound the difference between $\ph^{N}_{\text{GP},t}$ and $\ph_{\text{GP},t}$). Using $k^\text{GP}_t$, we construct the unitary operator 
\[ T_{\text{GP}, t} = \exp \left[ \frac{1}{2}\int dx dy \left( k^\text{GP}_t (x;y) a_x^* a_y^* - \text{h.c.} \right) \right] \]
on $\cF$. $T_{\text{GP},t}$ is a Bogoliubov transformation; it acts on creation and annihilation operators by
\[ \begin{split} T_{\text{GP},t}^* a(f) T_{\text{GP} ,t} = a (\cosh_{k^\text{GP}_t} f) + a^* (\sinh_{k^\text{GP}_t} \overline{f}) \\
T_{\text{GP}, t}^* a^* (f) T_{\text{GP} , t} = a^* (\cosh_{k^\text{GP}_t} f) + a (\sinh_{k^\text{GP}_t} \overline{f}) 
\end{split} 
\]
The operator $T_{\text{GP} , t}$ can be used to implement the short scale correlation structure characterizing the solution of the many-body 
Schr\"odinger equation in the Gross-Pitaevskii limit. 

We consider the initial data  $W(\sqrt{N} \ph) T_{\text{GP},0} \, \Omega$ (more generally, we can consider states of the form $W(\sqrt{N} \ph) T_{\text{GP} ,0} \xi$, with $\xi$  containing only a bounded number of particles). We expect that the many-body evolution of such an initial data still has the same form. To confirm this fact, 
we define the fluctuation vector $\xi_t$ by requiring that 
\begin{equation}\label{eq:xit-GP} 
e^{-i\cH_N t} W(\sqrt{N} \ph) T_{\text{GP} ,0} \Omega = W (\sqrt{N} \phn_{\text{GP},t}) T_{\text{GP}, t} \xi_t \end{equation}

{F}rom (\ref{eq:xit-GP}), to show convergence of the many-body dynamics towards (\ref{eq:GP0}) it is enough to prove that the fluctuation vector $\xi_t$ defined by (\ref{eq:xit-GP}) remains close to the vacuum. Since $\xi_t = \cU^\text{GP}_N (t) \Omega$, with  
\[ \cU^\text{GP}_N (t) = T_{\text{GP},t}^* W^* (\sqrt{N} \phn_{\text{GP},t}) e^{-i\cH_N t} W(\sqrt{N} \ph) T_{\text{GP},0}  \]
the problem reduces to show a bound for the growth of the number of particles with respect to the fluctuation dynamics $\cU^\text{GP}_N$. 

Such a bound has been established in \cite{BdOS12}, making use of certain cancellations in the generator of $\cU^\text{GP}_N$ produced by the introduction of the Bogoliubov transformation $T_{\text{GP},t}$. As a consequence, it was proven in \cite{BdOS12} that the reduced density $\gamma^{(1)}_{N,t}$ of the full evolution of the initial data $W(\sqrt{N} \ph) T_{\text{GP},0} \Omega$ (or, more generally, of initial data having the form $W(\sqrt{N} \ph) T_{\text{GP},0} \xi$, for $\xi \in \cF$ with a bounded expectation for $\cN$ and $\cH_N$) satisfies the bound
\begin{equation}\label{eq:rate-GP} \tr \, \left| \gamma^{(1)}_{N,t} - |\ph_{t} \rangle \langle \ph_{t} | \right| \leq C N^{-1/2} \exp (c_1 \exp (c_2 |t|))
\end{equation}
where $\ph_t$ is the solution of the Gross-Pitaevskii equation (\ref{eq:GP0}). 

In the mean field regime that we discussed above, the coherent states approach could also be used to describe 
fluctuations around the limiting equation. In particular, it allowed us to identify a limiting fluctuation dynamics with a quadratic generator 
and to apply it to obtain a norm bound of the form (\ref{eq:fluc-lim}) for the many-body evolution. After establishing the estimate (\ref{eq:rate-GP}) for the rate of convergence of the one-particle reduced density, it is therefore natural to ask whether we can use the same approach to describe fluctuations around the Gross-Pitaevskii equation in the limit of large $N$.  Unfortunately, it turns out that the in the Gross-Pitaevskii regime, one cannot approximate the fluctuation dynamics $\cU_N^\text{GP}$ by a quadratic evolution in norm. Although one can control its effect on the growth of the number of particles (needed to prove (\ref{eq:rate-GP})), the cubic and quartic components of the generator of $\cU_N^\text{GP}$ (cubic and quartic in the creation and annihilation operators) are not negligible in the limit of large $N$. 

Instead of considering fluctuations of the time-evolution around the time-dependent Gross-Pitaevskii equation, it is also possible to approach this problem from a static, time-independent, point of view. To this end, one can trap the system in a finite volume (either by imposing boundary conditions or by turning on an external potential) and one can study the difference between the many-body ground state energy and the minimum of the Gross-Pitaevskii energy functional or, more generally, the energy of low lying excitations (the fact that Gross-Pitaevskii theory describes, in leading order, the ground state properties of the many body system has been established in \cite{LSY,LS}).  
In the mean field setting, this program has been carried out in \cite{S,GS,LNSS,DN}, where it was proven that the excitation spectrum is determined by a quadratic Hamiltonian similar to (\ref{eq:Linfty-mf}). This suggests that, in the mean field regime, a good approximation for the many body ground state has the form $W(\sqrt{N} \ph) T \Omega$, where $\ph$ minimizes the Hartree energy functional and where $T$ is a Bogoliubov transformation (the exponential of a quadratic expression in creation and annihilation operators), needed to diagonalize the quadratic Hamiltonian. Similarly, good approximation for low-lying excited states have the form $W(\sqrt{N_0} \ph) T a^* (g_1) \dots a^* (g_k) \Omega$, for appropriate $k \in \bN$, $N_0 = N - k$ and orbital $g_1, \dots , g_k$ orthogonal to $\ph$ (in fact, to produce states with a fixed number of particles, it is better to work with a map $u_{N}$, defined similarly to (\ref{eq:uNt}), rather than with the Weyl operator $W (\sqrt{N} \ph)$; see \cite{LNSS} for details). 

Although the excitation spectrum in the Gross-Pitaevskii regime should still be close to the 
spectrum of a quadratic Hamiltonian, a good approximation for the ground state cannot have 
the form $W(\sqrt{N} \ph) T \Omega$ for a Bogoliubov transformation $T$ (analogously, excited states cannot be approximated by vectors like $W(\sqrt{N} \ph) T a^* (g_1) \dots a^* (g_k) \Omega$). This follows from \cite{ESY9}, where it has been shown (in fact, in a more general setting) that the minimum of the energy over all states of the form $W(\sqrt{N} \ph) T \Omega$ (with $T$ being the exponential of a quadratic expression) remains strictly above the true ground state energy, with an error of order one (an upper bound to the correct ground state energy, up to an error that, in the Gross-Pitaevskii regime, vanishes in the limit of large $N$, has been obtained in \cite{YY}; this result is consistent with the Lee-Huang-Yang prediction). 

In this paper, we are going to consider an intermediate regime, lying between the mean field and the Gross-Pitaevskii limits. For $0 < \beta < 1$, we define the Hamilton operator 
\begin{equation}\label{eq:Hbeta} \cH_N = \int dx \nabla_x a_x^* \nabla_x a_x + \frac{1}{2N} \int dx dy \, N^{3\beta} V (N^\beta (x-y)) a_x^* a_y^* a_y a_x \end{equation}
acting on the bosonic Fock space $\cF$. To simplify a bit the computations, we neglect here the external potential (but it would be easy to modify our analysis to include one). The Hamiltonian (\ref{eq:Hbeta}) can be thought of as an interpolation between the mean field Hamiltonian (\ref{eq:F-ham}), obtained with $\beta =0$, and the Gross-Pitaevskii Hamiltonian (\ref{eq:ham-GP}), recovered with $\beta = 1$. 

For $0< \beta < 1$, the many body evolution develops weaker correlations, compared with the Gross-Pitaevskii regime. As a consequence, on the level of the reduced densities, the many body evolution generated by (\ref{eq:Hbeta}) can be approximated, in the limit $N \to \infty$ and for all $0 < \beta < 1$, by the nonlinear Schr\"odinger equation
\begin{equation}\label{eq:NLS0} i\partial_t \ph_t = -\Delta \ph_t + b_0 |\ph_t|^2 \ph_t 
\end{equation}
with $b_0 = \int V(x) dx$ (notice that $8\pi a_0 \leq b_0$ for all short range and repulsive $V$).

While (\ref{eq:NLS0}) is enough if we are interested in the limiting behavior of the reduced densities, to study fluctuations and to obtain a norm approximation for the many body evolution we need a more precise ansatz, taking into account the (weak) two-body correlations. Instead of working with the solution of the zero-energy scattering equation (\ref{eq:scatt00}), we find more convenient here to fix $\ell > 0$ and to consider the ground state $f_{N,\ell}$ of the Neumann problem associated with the potential $N^{-1+3\beta} V(N^\beta .)$ on the sphere of radius $\ell$ centered at the origin. In other words, we choose $f_{N,\ell}$ as the solution of the eigenvalue problem (\ref{eq:Nf}) associated with the smallest possible eigenvalue $\l_{N,\ell}$, normalized so that $f_{N,\ell} (x) = 1$ for $|x| = \ell$ and continued to $\bR^3$ by requiring that $f_{N,\ell} (x) = 1$ for all $|x| \geq \ell$.  
We use $f_{N,\ell}$ to describe correlations among particles in the condensate. Accordingly, we consider the $N$-dependent Hartree equation 
\begin{equation}\label{eq:NLSN} 
i\partial \phn_t = -\Delta \phn_t + (N^{3\beta} V(N^\beta .) f_{N,\ell} *|\phn_t|^2 ) \phn_t \, .
\end{equation}
As $N \to \infty$ and for all $0 < \beta < 1$, $\phn_t$ approaches the solution of the nonlinear equation (\ref{eq:NLS0}).  We will see, however, that (\ref{eq:NLSN}) furnishes a better approximation for the dynamics of the condensate wave function, because, through the factor $f_{N,\ell}$, it takes into account the correlations among the particles (which, despite being weak, are not negligible in the analysis of the fluctuations).

Our goal is to study the fluctuations around  (\ref{eq:NLSN}), to prove that their dynamics has a quadratic generator in the limit of large $N$, and to use the limiting fluctuation dynamics (with the quadratic generator) to obtain a norm approximation of the many body evolution generated by (\ref{eq:Hbeta}). 


First of all, we need to take care of the correlation structure. We proceed similarly as in \cite{BdOS12}, introducing a family of Bogoliubov transformations. Using the Neumann ground state $f_{N,\ell}$, we set $\o_{N,\ell} = 1 - f_{N,\ell}$ (so that $\o_{N,\ell} (x) = 0$ for all $|x| \geq \ell$) and we define
\begin{equation}\label{eq:kNell0} k_{N,t} (x;y) = -N \o_{N,\ell} (x-y) (\phn_t ((x+y)/2))^2  
\end{equation}
where $\phn_t$ is the solution of (\ref{eq:NLSN}). It turns out (see Lemma \ref{lm:propomega}) that $\| k_{N,t} \|_2$ is bounded, uniformly in $N$, and therefore (\ref{eq:kNell0}) is the integral kernel of a Hilbert-Schmidt operator that we denote again by $k_{N,t}$. Using $k_{N,t}$, we define the Bogoliubov transformation
\begin{equation}\label{eq:TN} T_{N,t}  = \exp \left[ \frac{1}{2}\int dx dy \, k_{N,t} (x;y) a_x^* a_y^* - \text{h.c.} \right] \end{equation}
We consider initial data of the form $W(\sqrt{N} \ph) T_{N,0} \xi_N$, with a $\xi_N$ ``close'' to the vacuum $\Omega$ (in the sense that the expectation and the variance of the number of particles and of the kinetic energy operator in the state $\xi_N$ can be bounded uniformly in $N$). We write the evolution of such initial data as 
\[ e^{-i\cH_N t} W(\sqrt{N} \ph) T_{N,0} \xi_N = W(\sqrt{N} \phn_t) T_{N,t} \xi_{N,t} \]
where 
\[ \xi_{N,t} = \cU_N (t;0) \xi_N \]
with the fluctuation dynamics
\[ \cU_N (t;0) = T_{N,t}^* W^* (\sqrt{N} \phn_t) e^{-it \cH_N} W (\sqrt{N} \ph) T_{N,0} \]
Our goal is to approximate $\cU_N (t;0)$ by an evolution with a quadratic generator. 

To this end, we define the phase 
\begin{equation}\begin{split}\label{eq:etaN}
\eta_N (t) =\;& N \int dx dy N^{3\beta} V(N^\beta(x-y)) (1/2 - f_{N,\ell} (x-y)) |\phn_t (x)|^2 |\phn_t (y)|^2\\
&+\int dx dy |\nabla_x \sinh_{k_{N,t}} (x,y)|^2 + \int dx (N^{3\beta} V(N^\beta .) *|\phn_t|^2) (x) \langle s^N_x, s^N_x \rangle \\ &+ \int dx dy N^{3\beta} V(N^\beta (x-y)) \phn_t (x) \bar{\ph}^{N}_t (y) \langle s^N_x,s^N_y \rangle \\
 &+\text{Re } \int dx dy \, N^{3\beta} V(N^\beta (x-y)) \phn_t (x) \phn_t (y) \langle s^N_x , c^N_y \rangle\\
 &+\frac{1}{2N} \int dx dy \, N^{3\beta} V (N^\beta (x-y)) \left[ |\langle s^N_x , c^N_y \rangle|^2  +  |\langle s^N_x , s^N_y \rangle|^2   + \langle s^N_y , s^N_y \rangle \langle s^N_x ,s^N_x \rangle \right] \end{split}
\end{equation}
and the time-dependent and $N$-dependent quadratic generator
\begin{equation}\label{eq:L2N0} 
\begin{split}
\cL_{2,N} (t) = \; & (i\partial_tT_{N,t}^*)T_{N,t} + \cL^{(K)}_{2,N} (t) + \cL^{(V)}_{2,N} (t) \\
&+\frac{N}{2} \int dx dy \, \omega_{N,\ell} (x-y) \\ &\hspace{.3cm} \times \left[( \phn_t ((x+y)/2) \Delta \phn_t ((x+y)/2) + |\nabla \phn_t ((x+y)/2)|^2)  a_x^* a_y^*+\text{h.c.}\right] \\
&+ N \lambda_{\ell, N} \int dx dy \, {\bf 1} (|x-y| \leq \ell) \left[ (\phn_t ((x+y)/2))^2 a_x^* a_y^* + \text{h.c.} \right]  
\end{split}
\end{equation}
with 
\begin{equation}\label{eq:LV} \begin{split} 
\cL_{2,N}^{(V)} (t) = \; & \int dx \,  (N^{3\beta} V(N^\beta .) *|\phn_t|^2) (x) \\ &\hspace{1cm} \times \left[ a^* (c_x^{N}) a (c_x^{N}) + a^* (s_x^N) a (s_x^N) + a^* 
(c_x^{N}) a^* (s_x^N) + a (s_x^N) a(c_x^{N}) \right] \\
&+ \int dx dy N^{3\beta} V(N^\beta (x-y)) \phn_t (x) 
\bar{\ph}^N_t (y) \\ &\hspace{1cm} \times \left[ a^* (c_x^{N}) a(c_y^N) + a^* (s_y^N) a 
(s_x^N) + a^* (c_x^{N}) a^* 
(s_y^N) + a (s_x^N) a(c_y^N) \right] \\
&+ \frac{1}{2} \int dx dy N^{3\beta} V(N^\beta (x-y)) 
\phn_t (x) \phn_t (y) \\ &\hspace{1cm} \times \left[ a^* (c_x^{N}) a(s_y^N) + a^* (c_y^N) a(s_x^N) + a(s_x^N) a (s_y^N) \right] \\
&+ \frac{1}{2} \int dx dy N^{3\beta} V(N^\beta (x-y)) 
\bar{\ph}^N_t (x) \bar{\ph}^N_t (y) \\ &\hspace{1cm} \times \left[ a^* (s_y^N) a(c_x^{N})+ a^* (s_x^N) a (c_y^N) + a^* (s_x^N) a^* (s_y^N) \right] \\
&+ \frac{1}{2} \int dx dy N^{3\beta} V(N^\beta (x-y))
\phn_t (x) \phn_t (y) \left[ a^* (p_x^N) a_y^* 
+ a^* (c_x^{N}) a^* (p_y^N) \right] \\ 
&+ \frac{1}{2} \int dx dy N^{3\beta} V(N^\beta (x-y)) 
\bar{\ph}^N_t (x) \bar{\ph}^N_t (y) \left[ a (p_x^N) a_y + a (c_x^{N}) a (p_y^N) \right] 
\end{split} \end{equation}
and
\begin{equation}\label{eq:LK} \begin{split} 
\cL_{2,N}^{(K)} (t) = \; &\int dx \nabla_x a_x^* \nabla_x a_x + \int dx \, \Big[ a^*_x  a(-\Delta_x p_x^N) + a^* (-\Delta_x p_x^N) a_x +  a^* (\nabla_x p_x^N) a(\nabla_x p_x^N)  \\
&\hspace{1cm}  + \nabla_x a^* (k_x) \nabla_x a(k_x) 
+ a^* (-\Delta_x r_x^N) a(k_x)  + a^* (s_x^N) a(-\Delta_x r_x^N) \\ &\hspace{1cm}   + a^* (-\Delta_x p_x^N) a^* 
(k_x)  + a(k_x) a (-\Delta p_x^N) + a^*_x a^* (-\Delta_x r_x^N)  \\ &\hspace{1cm} + a (-\Delta_x r_x^N) a_x + a^* (p_x^N) a^* (-\Delta_x r_x^N) + a(-\Delta_x r_x^N) a(p_x^N) 
\Big] 
\end{split} \end{equation}
Here we used the shorthand notation $c^N_x (y) = \cosh_{k_{N,t}} (y,x)$, $s^N_x (y) = \sinh_{k_{N,t}} (y,x)$, $p^N_x (y) = (\cosh_{k_{N,t}} - 1) (y,x)$ and  $r^N_x (y) = (\sinh_{k_{N,t}} - k_{N,t}) (y,x)$. 

We denote by $\cU_{2,N} (t;s)$ the time evolution generated by $\cL_{2,N}$. In other words, $\cU_{2,N} (t;s)$ is the two-parameter group of unitary operators solving the Schr\"odinger equation 
\begin{equation}\label{eq:U2N-def} i\partial_t \cU_{2,N} (t;s) = \cL_{2,N} (t) \cU_{2,N} (t;s) \end{equation}
with the initial condition $\cU_{2,N} (s;s) =1$ for all $s \in \bR$. The existence of such an evolution $\cU_{2,N}$ can be established as in \cite{GV} (notice here that all kernels entering the generator $\cL_{2,N} (t)$ are smooth for every finite $N$; they only develop singularities as $N \to \infty$). 

In the next theorem, our first main result, we show that, when acting on appropriate vectors $\xi_N$, $\cU_N (t;0)$ can be approximated by $\cU_{2,N} (t;0)$, up to the (physically uninteresting) phase $\eta_N$. 
\begin{theorem}\label{thm:main1}
Let $V \geq 0$ be smooth, spherical symmetric and compactly supported. Fix $0 < \beta < 1$ and let $\alpha = \min (\beta /2 , (1-\beta)/2)$. Let $\ph \in H^4 (\bR^3)$. Fix $\ell > 0$. Consider a sequence $\xi_N \in \cF$ such that $\| \xi_N \| =1$ and  
\begin{equation}\label{eq:ass-xi} \langle \xi_N , \left[\cN^2 + \cK^2 + \cV_N \right] \xi_N \rangle \leq C \end{equation}
uniformly in $N$. Here we introduced the notation 
\begin{equation}\label{eq:cVN} \cK = \int dx \nabla_x a_x^* \nabla_x a_x, \quad \quad \cV_N = \frac{1}{2N} \int dx dy N^{3\beta} V(N^\beta (x-y)) a_x^* a_y^* a_y a_x  \end{equation}
for the kinetic and the potential energy operators entering in $\cH_N = \cK + \cV_N$. Then there are $C,c_1, c_2 > 0$ such that 
\[ \begin{split} 
&\left\| e^{-i \cH_N t} W(\sqrt{N} \ph) \, T_{N,0} \, \xi_N - e^{-i \int_0^t  \eta_N (s) ds} \, W (\sqrt{N} \phn_t) T_{N,t} \, \cU_{2,N} (t) \xi_N \right\|^2 \\ &\hspace{8cm} \leq C N^{-\alpha} \exp (c_1 \exp (c_2 |t|)) \end{split} \]
for all $t \in \bR$ and all $N$ large enough. 
\end{theorem}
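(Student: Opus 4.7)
The plan is a Duhamel-type comparison between the true fluctuation dynamics $\cU_N(t;0)$ defined just above the theorem and the quadratic reference dynamics $\cU_{2,N}(t;0)$ twisted by the scalar phase $\chi_N(t) := \exp\bigl(-i\int_0^t \eta_N(s)\,ds\bigr)$. Denoting by $\cL_N(t)$ the generator of $\cU_N(t;0)$ and setting $A(t) = \cU_N(t;0)\xi_N$, $B(t) = \chi_N(t)\cU_{2,N}(t;0)\xi_N$, the identity
\[
A(t) - B(t) = -i\int_0^t \cU_N(t;s)\bigl[\cL_N(s) - \cL_{2,N}(s) - \eta_N(s)\bigr] B(s)\,ds
\]
combined with the unitarity of $\cU_N(t;s)$ reduces the theorem to a pointwise bound
\[
\bigl\|[\cL_N(s) - \cL_{2,N}(s) - \eta_N(s)]\, \cU_{2,N}(s;0)\xi_N\bigr\| \leq C N^{-\alpha/2} \exp\bigl(c_1' \exp(c_2'|s|)\bigr),
\]
which after integration in $s$ and squaring yields the stated estimate.

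The first technical step is to compute $\cL_N(t)$ explicitly, writing
\[
\cL_N(t) = (i\partial_t T^*_{N,t})T_{N,t} + T^*_{N,t}\, \mathcal{M}_N(t)\, T_{N,t},
\]
where $\mathcal{M}_N(t)$ is the analogue of (\ref{eq:LNmf}) (with $V_\text{ext}=0$) in which $V(x-y)/N$ is replaced by $N^{3\beta-1}V(N^\beta(x-y))$ and $\ph_t$ by $\phn_t$. Applying the Bogoliubov action $T^*_{N,t} a_x T_{N,t} = a(c^N_x) + a^*(s^N_x)$ to every factor expands $\cL_N(t)$ into c-number, quadratic, cubic and quartic contributions. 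The c-number part is by construction $\eta_N(t)$; the quadratic part, combined with $(i\partial_t T^*_{N,t})T_{N,t}$ and the terms produced by the $N$-dependent equation (\ref{eq:NLSN}) satisfied by $\phn_t$, reproduces $\cL_{2,N}(t)$ from (\ref{eq:L2N0}). The most delicate cancellation here is the quadratic piece generated by conjugating $\cV_N$ through $T_{N,t}$: the would-be singular contribution proportional to $N\int N^{3\beta}V(N^\beta(x-y))(\phn_t((x+y)/2))^2 a^*_x a^*_y$ is absorbed by the corresponding term inside $(i\partial_t T^*_{N,t})T_{N,t}$ up to a remainder which, via the Neumann eigenvalue equation $(-\Delta + \tfrac{1}{2} N^{3\beta-1}V(N^\beta\cdot))(1-\omega_{N,\ell}) = \lambda_{\ell,N}\mathbf{1}(|\cdot|\leq\ell)(1-\omega_{N,\ell})$, collapses precisely into the last two lines of (\ref{eq:L2N0}).

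The second step is to propagate moments along the reference dynamics:
\[
\bigl\langle \cU_{2,N}(t;0)\xi_N,\, \bigl[(\cN+1)^k + \cK^2 + \cV_N\bigr]\, \cU_{2,N}(t;0)\xi_N\bigr\rangle \leq C_k \exp\bigl(c_1\exp(c_2|t|)\bigr),
\]
uniformly in $N$ for each fixed $k$. This follows by computing $[\cL_{2,N}(t),(\cN+1)^k]$ and $[\cL_{2,N}(t),\cK+\cV_N]$, showing via uniform bounds $\|s^N_x\|_2, \|c^N_x - \delta_x\|_2 \leq C$ together with $\|\phn_t\|_{H^4}\leq C\exp(c|t|)$ from the well-posedness of (\ref{eq:NLSN}) that these commutators are dominated by the observable itself plus lower-order terms, and then applying Gronwall initialized by hypothesis (\ref{eq:ass-xi}).

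The hardest part is the remainder estimate on $\cE_N(t) := \cL_N(t) - \cL_{2,N}(t) - \eta_N(t)$: the residual cubic and quartic contributions, obtained in Step~1, must be shown to be $O(N^{-\alpha/2})$ in operator norm against states of finite $\cN$, $\cK$, $\cV_N$-moments. Two distinct mechanisms set the final rate. The cubic terms (pre-factor $N^{-1/2}$) in $\mathcal{M}_N(t)$, after conjugation by $T_{N,t}$ and a Cauchy-Schwarz estimate exploiting $\|N^{3\beta/2}V^{1/2}(N^\beta\cdot)\|_2 = O(1)$, yield an error of order $N^{-(1-\beta)/2}$ times mixed moments of $\cN$ and $\cK$; the quartic terms (pre-factor $N^{-1}$), including the mismatch between $N^{3\beta}V(N^\beta\cdot)$ and $N^{3\beta}V(N^\beta\cdot)f_{N,\ell}$ that enters (\ref{eq:NLSN}), produce an error of order $N^{-\beta/2}$ times moments of $\cN$ and $\cV_N$. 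Their minimum is exactly the claimed $\alpha = \min(\beta/2,(1-\beta)/2)$. The main technical obstacle is the bookkeeping required to verify that every subleading quadratic contribution not already absorbed into $\cL_{2,N}(t)$ or $\eta_N(t)$ also carries at least this power of $N^{-1/2}$; this is precisely where the Neumann-based choice of $k_{N,t}$, rather than the full zero-energy scattering kernel used in the Gross-Pitaevskii regime, is essential, since it keeps $\|k_{N,t}\|_2$ uniformly bounded and prevents the Bogoliubov conjugation from blowing up while still encoding enough of the correlation structure to tame the singular part of $\cV_N$.
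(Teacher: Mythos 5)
Your overall architecture — Duhamel comparison, decomposition of the generator, Gronwall — is the paper's, but the reduction in your very first display is a genuine gap. You write
\[
A(t) - B(t) = -i\int_0^t \cU_N(t;s)\bigl[\cL_N(s) - \cL_{2,N}(s) - \eta_N(s)\bigr] B(s)\,ds
\]
and then claim this reduces the theorem, via unitarity, to an \emph{operator-norm} bound $\|[\cL_N(s)-\cL_{2,N}(s)-\eta_N(s)]\,B(s)\|\leq C N^{-\alpha/2}\cdots$. This is strictly stronger than what can hold. The difference $\cL_N-\cL_{2,N}-\eta_N$ equals $\cV_N + \cE_N$ (in the paper's notation, Theorem~\ref{thm:LN2}), and $\cV_N$ is \emph{not} a small operator: $\|\cV_N\psi\|$ on a state with bounded $\cN^2,\cK^2,\cV_N$ moments is not $O(N^{-\alpha/2})$, and controlling it would require $\langle\psi,\cV_N^2\psi\rangle$ moments that neither the hypothesis nor the propagated a priori bounds provide. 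Your intuition that ``the quartic terms carry prefactor $N^{-1}$ and are therefore small'' conflates the smallness of the prefactor with smallness of the operator norm; the operator $\cV_N$ itself has $O(1)$ expectation and potentially much larger norm.

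What makes the paper's argument work is that one never leaves the bilinear-form level. Differentiating $\|A(t)-B(t)\|^2$ produces $2\,\Im\langle A(t),(\cV_N+\cE_N(t))B(t)\rangle$, and the $\cV_N$-contribution is treated by the \emph{imbalanced} Cauchy--Schwarz
\[
|\langle \psi_1, \cV_N \psi_2 \rangle| \le \langle \psi_1,\cV_N\psi_1\rangle^{1/2}\langle \psi_2,\cV_N\psi_2\rangle^{1/2}
\le \frac{1}{\sqrt N}\langle\psi_1,\cV_N\psi_1\rangle + \sqrt N\,\langle\psi_2,\cV_N\psi_2\rangle,
\]
where the $\sqrt N$ in the second term is absorbed by the $1/(2N)$ inside $\cV_N$ together with Lemma~\ref{lm:VNK2}, yielding $\langle\psi_2,(\cK+\cN+1)^2\psi_2\rangle$. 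Crucially, the two sides are estimated by \emph{different} a priori bounds: $\psi_1=A(t)=\cU_N(t;0)\xi_N$ carries $\cV_N$-moments (controlled by Theorem~\ref{tm:gn} via the conserved-like quantity $\cL_N-\eta_N\simeq\cK+\cV_N$), while $\psi_2=B(t)=\cU_{2,N}(t;0)\xi_N$ carries $\cK^2$- and $\cN^2$-moments (controlled by Theorem~\ref{thm:5}). Your Step~2 lists ``$\cN$, $\cK^2$, $\cV_N$-moments along $\cU_{2,N}$'', but the paper does not — and cannot cheaply — propagate $\cV_N$-moments along the quadratic dynamics; it propagates them along $\cU_N$ only. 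Your $\cE_N$-estimate in Step~3 is similarly stated symmetrically, whereas (\ref{eq:est-cE2}) is deliberately asymmetric for the same reason: $\cK$ (first power) on $\psi_1$, $\cK^2$ on $\psi_2$.

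Two smaller inaccuracies. (1) The cancellation of the singular $N\int N^{3\beta}V(N^\beta(x-y))(\phn_t)^2\,a^*_xa^*_y$ pairing is against the $-\Delta_x k_{N,t}$ term coming from $T_{N,t}^*\cK T_{N,t}$, via $-\Delta\omega_{N,\ell}=\lambda_{N,\ell}f_{N,\ell}{\bf 1}-\tfrac12 N^{3\beta-1}V(N^\beta\cdot)f_{N,\ell}$; the operator $(i\partial_t T_{N,t}^*)T_{N,t}$ is a separate, everywhere-bounded contribution and plays no role in that cancellation. (2) The $N^{-\beta/2}$ rate does not come from the $V\mapsto Vf_{N,\ell}$ mismatch; it is the price of replacing $\phn_t(x)\phn_t(y)$ by $\phn_t((x+y)/2)^2$ (the term $B_1$ in the paper's proof of Theorem~\ref{thm:LN2}), controlled by writing $V=\nabla\cdot v$ and integrating by parts. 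Fixing the reduction to the bilinear-form version of Gronwall, and splitting off $\cV_N$ with the imbalanced Cauchy--Schwarz, would bring your outline in line with the paper's actual argument.
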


Notice that a result similar to Theorem \ref{thm:main1} has been recently obtained in \cite{GM}, however only for $0 < \beta < 1/3$. The main difference with respect to \cite{GM} is the fact that here we already introduce the Bogoliubov transform $T_{N,0}$ at time $t=0$; this allows us to cover all $\beta < 1$. 

Theorem \ref{thm:main1} shows that, in the limit of large $N$, the dynamics of the fluctuations around the nonlinear Schr\"odinger equation (\ref{eq:NLSN}) can be approximated by the evolution $\cU_{2,N} (t;s)$ having the quadratic generator (\ref{eq:L2N0}). It is also possible to approximate the dynamics of the fluctuations by a limiting evolution, again with a quadratic generator, but now independent of $N$. To this end, we start by noticing that, in the limit of large $N$, the integral kernel (\ref{eq:kNell0}) approaches the limit
\begin{equation}\label{eq:kell} k_t (x;y) = - \o_\ell^\text{asymp} (x-y) \ph_t^2 ((x+y)/2) \end{equation}
where $\ph_t$ is the solution of the $N$-independent nonlinear Schr\"odinger equation (\ref{eq:NLS0}) while 
\begin{equation}\label{eq:oasym0} \o_\ell^\text{asymp} (x) = \left\{ \begin{array}{ll} 
\frac{b_0}{8\pi} \left[ \frac{1}{|x|} - \frac{3}{2\ell} + \frac{x^2}{2\ell^3} \right] \quad & \text{for } |x| \leq \ell \\
0 & \text{otherwise} \end{array} \right. 
\end{equation}
is the pointwise limit of $N \o_{N,\ell} (x)$ for $N \to \infty$. Like $k_{N,t}$, also $k_t \in L^2 (\bR^3 \times \bR^3)$ is the integral kernel of a Hilbert-Schmidt operator on $L^2 (\bR^3)$. It defines the Bogoliubov transformation
\begin{equation}\label{eq:Tt} T_t = \exp \left[\frac{1}{2} \int dx dy \, k_t (x;y) a_x^* a_y^* - \text{h.c.} \right] \end{equation}

With this notation, we can formally take the limit $N \to \infty$ in (\ref{eq:L2N0}) and define the time-dependent quadratic generator 
\begin{equation}\label{eq:L2inf0} \begin{split} \cL_{2,\infty} (t) = \; & (i\partial_t T_{t}^*) T_{t} + \cL_{2}^{(K)} (t) + \cL^{(V)}_2 (t) \\ &+ \frac{1}{2} \int dx dy \, \omega_{\ell}^\text{asymp} (x-y) \\ &\hspace{1cm} \times \left\{ \left[ \ph_t ((x+y)/2) \Delta \ph_t ((x+y)/2) + |\nabla \ph_t ((x+y)/2)|^2 \right] a_x^* a_y^* + \text{h.c.} \right\} \\ 
&+ \frac{3b_0}{8\pi \ell^3} \int dx dy {\bf 1} (|x-y| \leq \ell) \left[ \ph_t^2 ((x+y)/2) a_x^* a_y^* + \text{h.c.} \right] \end{split} 
\end{equation}
where $\cL^{(K)}_2$ and $\cL^{(V)}_2$ are defined as $\cL_{2,N}^{(K)}$ and $\cL^{(V)}_{2,N}$, but with $\phn_t$ and $k_{N,t}$ replaced by $\ph_t$ and $k_t$ (and thus with $c^N_x, s^N_x, p_x^N, r^N_x$ replaced by $c_x, s_x, p_x, r_x$, defined analogously). 

We denote by $\cU_{2,\infty} (t;s)$ the two-parameter unitary group generated by (\ref{eq:L2inf0}). In other words, we define $\cU_{2,\infty}$ through the 
Schr\"odinger equation 
\[ i\partial_t \cU_{2,\infty} (t;s) = \cL_{2,\infty} (t) \cU_{2,\infty} (t;s) \]
with the initial condition $\cU_{2,\infty} (s;s) = 1$. Also here, the existence of such an evolution $\cU_{2,\infty}$ is guaranteed by the work \cite{GV} (the only singularity entering $\cL_{2,\infty}$ is the Coulomb-type singularity of $\o_\ell^\text{asymp}$). 
\begin{theorem}\label{thm:main2}
Let $V \geq 0$ be smooth, spherical symmetric and compactly supported. Fix $0 < \beta < 1$ and set $\alpha = \min (\beta/2 , (1-\beta)/2)$. Let $\ph \in H^4 (\bR^3)$. Fix $\ell > 0$. Consider a sequence $\xi_N \in \cF$ such that  $\| \xi_N \| =1$ and 
\[ \langle \xi_N , \left[\cN^2 + \cK^2 + \cV_N \right] \xi_N \rangle \leq C \]
uniformly in $N$. Then there exist $C,c_1, c_2 > 0$ such that 
\[ \begin{split} 
&\left\| e^{-i \cH_N t} W(\sqrt{N} \ph) T_{N,0} \xi_N - e^{-i \int_0^t \eta_N (s) ds} \, W (\sqrt{N} \phn_t) T_{N,t} \, \cU_{2,\infty} (t) \xi_N \right\|^2 \\ &\hspace{8cm} \leq C N^{-\alpha} \exp (c_1 \exp (c_2 |t|)) \end{split} \]
for all $t \in \bR$ and all $N$ large enough. 
\end{theorem}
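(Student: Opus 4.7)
By Theorem \ref{thm:main1} and the fact that $W(\sqrt{N}\phn_t) T_{N,t}$ is unitary, the triangle inequality reduces the claim to the norm estimate
\[ \| [\cU_{2,N}(t;0) - \cU_{2,\infty}(t;0)] \xi_N \|^2 \leq C N^{-\alpha} \exp (c_1 \exp (c_2 |t|)). \]
The plan is to prove this by writing, via Duhamel's formula,
\[ [\cU_{2,N}(t;0) - \cU_{2,\infty}(t;0)]\xi_N = -i \int_0^t \cU_{2,N}(t;s) \big[\cL_{2,N}(s) - \cL_{2,\infty}(s)\big] \cU_{2,\infty}(s;0) \xi_N \, ds, \]
using unitarity of $\cU_{2,N}(t;s)$ to reduce the task to bounding $\|[\cL_{2,N}(s)-\cL_{2,\infty}(s)] \cU_{2,\infty}(s;0)\xi_N\|$, and then applying Gronwall.

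\textbf{Quantitative comparison of the generators.} Comparing (\ref{eq:L2N0}) with (\ref{eq:L2inf0}) termwise, the discrepancy $\cL_{2,N}-\cL_{2,\infty}$ is driven by four convergences, each of which I would quantify separately. First, standard energy estimates for (\ref{eq:NLSN}) and (\ref{eq:NLS0}) give $\|\phn_t - \ph_t\|_{H^k} \leq C N^{-\alpha} e^{c|t|}$ for $k\leq 3$, thanks to the assumption $\ph \in H^4$. Second, using the pointwise behavior of $N\o_{N,\ell}$ proved in Lemma \ref{lm:propomega} (together with the decomposition $N\o_{N,\ell} = \o_\ell^{\text{asymp}} + O(N^{-\alpha})$ in suitable norms), one obtains $\|k_{N,t}-k_t\|_{HS} + \|\nabla (k_{N,t}-k_t)\|_{HS} \leq C N^{-\alpha}$, which propagates to $\|c^N_x - c_x\|$, $\|s^N_x - s_x\|$, $\|p^N_x - p_x\|$, $\|r^N_x - r_x\|$ being of order $N^{-\alpha}$ in $L^2_x L^2$ and $H^1_x L^2$. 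Third, for any sufficiently regular $g$, a Taylor expansion yields $\|N^{3\beta} V(N^\beta \cdot) * g - b_0\, g\|_2 \leq C N^{-\beta} \|\nabla g\|_2$, which applies to all terms in $\cL_{2,N}^{(V)}$ paired against smooth factors like $\phn_t \bar{\ph}^N_t$. Fourth, the scattering-type identity $N\lambda_{\ell,N} \to 3 b_0/(8\pi \ell^3)$ (and a rate $O(N^{-\alpha})$) follows from rescaling the Neumann eigenvalue problem and using (\ref{eq:scatt0}). Combining these estimates, each piece of $\cL_{2,N}(s)-\cL_{2,\infty}(s)$ can be bounded by $C N^{-\alpha}$ times an operator of the form $(\cN+\cK+1)$.

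\textbf{Propagation of moments under $\cU_{2,\infty}$ and main obstacle.} The last ingredient is a Gronwall bound
\[ \big\langle \cU_{2,\infty}(s;0)\xi_N ,\, (\cN^2 + \cK^2)\, \cU_{2,\infty}(s;0)\xi_N \big\rangle \leq C \exp (c_1 \exp (c_2|s|)), \]
which I would derive by computing $\partial_s$ of the left-hand side, expanding the commutator with the quadratic generator $\cL_{2,\infty}(s)$, and using that all its kernels lie in $L^2$ or $H^1$. The delicate point here, and the main obstacle of the whole argument, is the Coulomb-type singularity in $\o_\ell^{\text{asymp}}$: terms such as
\[ \int dx dy \, \o_\ell^{\text{asymp}}(x-y) \big[\ph_s\Delta \ph_s + |\nabla\ph_s|^2\big]\big((x+y)/2\big)\, a_x^* a_y^* \]
must be controlled against $\cN$ and $\cK$ using Hardy's inequality on one of the factors $a^*, a$, since the kernel is only in $L^2_{\text{loc}}$. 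A parallel (but technically easier) issue, already treated for $\cU_{2,N}$ in the proof of Theorem \ref{thm:main1}, is that the generator $\cL_{2,\infty}(s)$ contains $\cK$ itself, so the commutator $[\cL_{2,\infty}(s),\cK^2]$ requires a careful cancellation involving derivatives of $k_s$ and $\o_\ell^{\text{asymp}}$.

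\textbf{Conclusion.} Inserting the operator inequality $\|[\cL_{2,N}(s)-\cL_{2,\infty}(s)]\psi\| \leq C N^{-\alpha} \| (\cN+\cK+1)\psi\|$ with $\psi = \cU_{2,\infty}(s;0)\xi_N$, and using the moment propagation above to bound $\|(\cN+\cK+1)\cU_{2,\infty}(s;0)\xi_N\|$ by the initial data (which is controlled by hypothesis), the Duhamel expansion integrates to $C N^{-\alpha} \exp(c_1 \exp(c_2|t|))$, yielding the claim after combination with Theorem \ref{thm:main1}.
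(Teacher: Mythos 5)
Your high-level strategy matches the paper's: unitarity of $W(\sqrt N\phn_t)T_{N,t}$ plus Theorem~\ref{thm:main1} reduces the claim to comparing $\cU_{2,N}(t;0)\xi_N$ with $\cU_{2,\infty}(t;0)\xi_N$, and this comparison is then done by contrasting the generators term-by-term and closing with Gronwall. But the specific way you propose to close the argument departs from the paper in two places where it incurs genuinely more work, and one of them is a real gap.

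First, you take the Duhamel formula in norm, which forces you to prove the \emph{operator-norm} estimate $\|(\cL_{2,N}(s)-\cL_{2,\infty}(s))\psi\|\le CN^{-\alpha}\|(\cN+\cK+1)\psi\|$. The paper never proves this; Lemmas~\ref{lm:k}--\ref{lm:d} only give the \emph{bilinear form} estimate
\[
|\langle\psi_1,(\cL_{2,N}(t)-\cL_{2,\infty}(t))\psi_2\rangle|\le CN^{-\alpha}\exp(c_1\exp(c_2|t|))\,\|(\cN+1)^{1/2}\psi_1\|\,\|(\cN+\cK+1)^{1/2}\psi_2\|,
\]
which is strictly weaker and does not imply your operator bound. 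The obstruction is concrete: in the term $A_{1,3}^N-A_{1,3}$ from Lemma~\ref{lm:v} (the piece $\int (N^{3\beta}V(N^\beta(x-y))-b_0\delta(x-y))\ph_t(x)\ph_t(y)a_x^*a_y$), the smallness only emerges after a Taylor expansion of $a_{x+yN^{-\beta}}-a_x$, which produces one factor of $\|\cK^{1/2}\psi_2\|$ and, symmetrically, one factor of $\|\cN^{1/2}\psi_1\|$ after Cauchy--Schwarz (see \eqref{eq:III-A13}). There is no obvious way to move the $\cN^{1/2}$ off $\psi_1$ to turn this into an operator bound on $\psi_2$ alone; the bilinear structure is essential here. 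The paper avoids the issue entirely by differentiating $\|\cU_{2,N}(t;0)\xi_N-\cU_{2,\infty}(t;0)\xi_N\|^2$ directly, which automatically produces a cross term $\langle\cU_{2,N}\xi_N,(\cL_{2,N}-\cL_{2,\infty})\cU_{2,\infty}\xi_N\rangle$ to which the bilinear estimate applies, with the freedom to place $\cK$ on the $\cU_{2,N}$ side.

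Second, and as a consequence of where you place the $\cK$, you propose to propagate $\cN^2+\cK^2$ under $\cU_{2,\infty}$. The paper deliberately avoids this: it only needs $\langle\cU_{2,\infty}\xi_N,\cN\,\cU_{2,\infty}\xi_N\rangle$, which follows easily from $\pm[\cN,\cL_{2,\infty}]\le Ce^{K|t|}(\cN+1)$, while the $\cK^2$ moments are propagated by Theorem~\ref{thm:5} along $\cU_{2,N}$ (not $\cU_{2,\infty}$), for which all the work was already done in Theorem~\ref{thm:LN2} and Proposition~\ref{prop:compLKV}. Your route for the $\cK^2$ propagation — commuting $\cK^2$ directly with $\cL_{2,\infty}$ — is also not what the paper does even for $\cU_{2,N}$: the paper propagates $\cL_{2,N}^2(t)$ instead of $\cK^2$ (using $[\cL_{2,N},\cL_{2,N}^2]=0$ to avoid the delicate commutator), and then converts via \eqref{eq:L2K2}. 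You correctly flag the Coulomb singularity of $\omega_\ell^{\text{asymp}}$ as a concern, but this is not really new relative to $N\omega_{N,\ell}\sim 1/|x|$, and the real difficulty you would face is the structure of $[\cK^2,\cL_{2,\infty}]$, not the singular kernel per se. Finally, minor imprecisions: the convergence rate of $\ph_t^N\to\ph_t$ carries a double exponential in time (Proposition~\ref{prop:ph}), not $e^{c|t|}$, and the exponent is $\gamma=\min(\beta,1-\beta)$ rather than $\alpha$.
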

 
Theorem \ref{thm:main2} shows that, for all $0< \beta < 1$, the dynamics of the fluctuations around the nonlinear Schr\"odinger evolution (\ref{eq:NLSN}) is governed by the quadratic generator (\ref{eq:L2inf0}).

%
%
%
%
%
%
%
%
%

\bigskip 
 
{\it Acknowledgements.} The authors would like to acknowledge support by the Swiss National Science Foundation through the SNF Project ``Effective equations from quantum dynamics''. 
S. Cenatiempo acknowledges the support of MIUR through the FIR grant 2013 ``Condensed Matter in Mathematical Physics'' (code RBFR13WAET). C. Boccato and B. Schlein also gratefully acknowledge support by the CRC-1060 ``The Mathematics of emergent effects''.

\section{The Fluctuation Dynamics}

We work on the bosonic Fock space
\[ \cF = \mathbb{C} \oplus \bigoplus_{n \geq 1} L^2_s (\bR^{3n}) \]
where $L^2_s (\bR^{3n})$ denotes the space of all $\psi_n \in L^2 (\bR^{3n})$ such that 
\[ \psi_n (x_{\pi 1} , \dots, x_{\pi n}) = \psi_n (x_1, \dots , x_n) \]
for all permutations $\pi \in S_n$. We use the notation 
$\Omega = \{ 1, 0, 0 \dots \}$ for the vacuum vector in $\cF$. 

For $g \in L^2 (\bR^3)$, we introduce the creation operator $a^* (g)$ and its adjoint, the annihilation operator $a(g)$, by 
\[ \begin{split} (a^* (g) \Psi)^{(n)} (x_1, \dots , x_n) &= \frac{1}{\sqrt{n}} \sum_{j=1}^n g (x_j) \psi^{(n-1)} (x_1, \dots , x_{j-1}, x_{j+1}, \dots , x_n) \\
(a(g) \Psi)^{(n)} (x_1, \dots , x_n) &= \sqrt{n+1} \int dx \, \overline{g (x)} \, \psi^{(n+1)} (x, x_1, \dots , x_n) \end{split} \]
for all $\Psi = \{ \psi^{(n)} \}_{n\in \bN} \in \cF$. Creation and annihilation satisfy canonical commutation relations 
\[ [ a(f), a^* (g) ] = \langle f,g \rangle , \quad [ a(f) , a(g)] = [ a^* (f), a^* (g) ] = 0 \]

We will also make use of the operator-valued distributions $a_x^*, a_x$, defined so that 
\[ a^* (g) = \int dx g(x) a_x^*, \quad a(g) = \int dx \overline{g} (x) a_x \]

Although creation and annihilation operators are unbounded operators, they can be bounded with respect to the number of particles operator, defined by 
\[ (\cN \Psi)^{(n)} = n \psi^{(n)} \] 
for all $\Psi = \{ \psi^{(n)} \}_{n\in \bN} \in \cF$ or, equivalently, in terms of the operator valued distributions $a_x, a_x^*$, by 
\[ \cN = \int dx \, a_x^* a_x \, . \]
Using this last expression, it is easy to check that
\[\begin{split} \| a(g) \Psi \| &\leq \| g \|_2 \| \cN^{1/2} \Psi \|,  \\
\| a^* (g) \Psi \| &\leq \| g \|_2 \| (\cN+1)^{1/2} \Psi \| \end{split} \]

On $\cF$, we are interested in the time-evolution generated by the Hamilton operator
\begin{equation}\label{eq:HN-F} 
\cH_N = \int dx \nabla_x a_x^* \nabla_x a_x + \frac{1}{2N} \int dx dy \, N^{3\beta} V (N^\beta (x-y)) a_x^* a_y^* a_y a_x 
\end{equation}
for $0 < \beta < 1$ and for a smooth potential $V \geq 0$ with spherical symmetry and with compact support. Notice that $\cH_N$ commutes with the number of particles operator $\cN$; hence, the time evolution preserves the number of particles. On the $n$-particle sector, (\ref{eq:HN-F}) acts as 
\[ \cH_N |_{\cF_n} = \sum_{j=1}^n -\Delta_{x_j} + \frac{1}{N} \sum_{i<j}^n N^{3\beta} V(N^\beta (x-y)) \]

We are going to study the time-evolution generated by (\ref{eq:HN-F}) on coherent states. For $g \in L^2 (\bR^3)$, we define the Weyl operator 
\begin{equation}\label{eq:BHC} W(g) = \exp (a^* (g) - a(g)) = e^{-\| g \|_2^2/2} e^{a^* (g)} e^{-a(g)} \end{equation} 
Weyl operators are unitary $W^* (g) = W(-g) = W^{-1} (g)$ and they act on creation and annihilation by shifts, i.e. for any $f,g \in L^2 (\bR^3)$, we have 
\begin{equation}\label{eq:W-shift}\begin{split} 
W(g)^* a(f) W(g) &= a(f) + \langle f, g \rangle, \\
W (g)^* a^* (f) W(g) &= a^* (f) + \langle g,f \rangle \end{split} \end{equation}


For $\ph \in L^2 (\bR^3)$ with $\| \ph \|_2 = 1$, the coherent state $W(\sqrt{N} \ph) \Omega$ describes a condensate with an average of $N$ particles, all described by the orbital $\ph$ (recall (\ref{eq:WOm}).
To obtain a norm approximation of the many body evolution, we have to implement the correct short scale correlation structure on top of $W(\sqrt{N} \ph) \Omega$. To this end, we fix $\ell > 0$ and we define $f_{N,\ell}$ to be the ground state of the Neumann problem 
\begin{equation}\label{eq:Nf} \left[ -\Delta + \frac{1}{2N} N^{3\beta} V(N^{\beta} x) \right] f_{N,\ell} = \lambda_{N,\ell} f_{N,\ell} 
\end{equation}
on the sphere $|x| \leq \ell$, normalized so that $f_{N,\ell} (x) = 1$ for $|x| = \ell$. In the next lemma, whose proof is deferred to Appendix \ref{s:scattering}, we collect some important properties of the solution of (\ref{eq:Nf}).

\begin{lemma}\label{lm:propomega} Let $V$ be smooth, positive, spherically symmetric and compactly supported with $b_0 = \int V dx$. Let $f_{N, \ell}$ be the ground state of the Neumann problem 
\be
\big(-\D + \frac{1}{2} N^{3\b-1} V(N^{\b}\cdot)\big) f_{N, \ell} = \l_{N,\ell} f_{N, \ell}\label{rescaled_scattChi}
\ee
on the sphere of radius $\ell$, with the boundary conditions
\[
f_{N, \ell}(x) =1 \qquad   \dpr_r f_{N, \ell}(x)=0\,
\]
for all $x \in \bR^3$ with $|x| = \ell$. For $N$ sufficiently large (such that $RN^{-\beta} < \ell$) we have: 
\begin{enumerate}
\item [i)]
\begin{equation}\label{energy}
\left| \l_{N, \ell} - \frac{3 b_0}{8\pi N \ell^{3}} \right| \leq \frac{C}{N^{2-\beta}} \end{equation}
\item [ii)] There is a constant $0<c_0<1$ such that, for all $|x| \leq \ell$,
\begin{align}
c_0 \leq f_{N, \ell}(x) \leq 1   \label{S.ii}
\end{align}
\item [iii)] Let $\o_{N, \ell}=1-f_{N, \ell}$. There exists a constant $C >0$ such that, for all $|x| \leq \ell$,  
\begin{align}
\o_{N, \ell}(x) \leq \frac{C}{N(|x|+N^{-\b})} \qquad |\nabla\o_{N, \ell}(x)| \leq \frac{C}{N(|x|^2+N^{-2\b})} \label{S.iii}
\end{align}
\end{enumerate}
\end{lemma}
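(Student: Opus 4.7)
The plan exploits spherical symmetry of $V$ to reduce the Neumann eigenvalue problem to a radial ODE that can be integrated essentially in closed form, splitting the analysis into the region $|x|\leq RN^{-\beta}$ (support of the rescaled potential) and the exterior annulus. Writing $f_{N,\ell}(x) = f(|x|)$, the equation reads
\begin{equation*}
(r^2 f'(r))' = r^2 \bigl[\tfrac{1}{2}N^{3\beta-1} V(N^\beta r) - \lambda_{N,\ell}\bigr] f(r),
\end{equation*}
with $f(\ell) = 1$ and $f'(\ell) = 0$. For $r > RN^{-\beta}$ the potential vanishes and the equation reduces to the free radial eigenvalue problem with small eigenvalue $\lambda_{N,\ell} \ell^2 \sim N^{-1}$, whose general solution is an explicit combination of sines and cosines.

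To prove part (i), I would integrate the original PDE for $f_{N,\ell}$ over $B_\ell$; the Neumann condition kills the boundary term from $\int \Delta f_{N,\ell}$, yielding
\begin{equation*}
\lambda_{N,\ell} \int_{B_\ell} f_{N,\ell}\, dx = \frac{1}{2N}\int V(y)\, f_{N,\ell}(y/N^\beta)\, dy .
\end{equation*}
Once (ii)-(iii) are granted, the right-hand side equals $b_0/(2N) + O(N^{\beta-2})$ and the left integral equals $4\pi\ell^3/3 + O(N^{\beta-1})$, producing the claimed asymptotic for $\lambda_{N,\ell}$. For (ii), $f_{N,\ell}>0$ is automatic from the ground-state (Perron-Frobenius) property of the Neumann problem; the bound $f_{N,\ell}\leq 1$ follows from a maximum-principle argument: at any interior maximum $x_0$, $-\Delta f_{N,\ell}(x_0) \geq 0$ forces $\tfrac{1}{2}N^{3\beta-1}V(N^\beta x_0) \leq \lambda_{N,\ell}$, which is incompatible with $x_0 \in \mathrm{supp}\, V(N^\beta\cdot)$ for large $N$ since $N^{3\beta-1} \gg N^{-1}$; on the complement of the support, the explicit solution to the free radial eigenvalue equation with $f(\ell)=1$ and $f'(\ell)=0$ gives $f\leq 1 + O(N^{-1})$, which can be improved to $f \leq 1$ by iteration. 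The lower bound $c_0$ is immediate from (iii), since $\|\omega_{N,\ell}\|_\infty \leq C N^{\beta-1}$ implies $f_{N,\ell}\geq 1 - CN^{\beta-1}\geq 1/2$ for $N$ large.

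The main work is (iii). Integrating the radial equation from $0$ to $r$ (using regularity of $f$ at the origin) gives
\begin{equation*}
r^2 f'(r) = \int_0^r s^2 \left[\tfrac{1}{2}N^{3\beta-1}V(N^\beta s) - \lambda_{N,\ell}\right] f(s)\, ds,
\end{equation*}
whose compatibility at $r = \ell$ reproduces the formula of (i). Integrating once more from $r$ to $\ell$ and applying Fubini provides the explicit representation
\begin{equation*}
\omega_{N,\ell}(r) = \int_0^\ell s^2 \left[\tfrac{1}{2}N^{3\beta-1}V(N^\beta s) - \lambda_{N,\ell}\right] f(s) \left[\tfrac{1}{\max(r,s)} - \tfrac{1}{\ell}\right] ds .
\end{equation*}
For $r \geq RN^{-\beta}$, the potential support lies inside $\{s\leq r\}$, so after the change of variable $y = N^\beta s$ the potential contribution reduces to $(8\pi N)^{-1} b_0 (1/r - 1/\ell) + O(N^{\beta-2})$, giving $\omega_{N,\ell}(r)\leq C/(Nr)$; for $r \leq RN^{-\beta}$, the kernel $1/\max(r,s) - 1/\ell$ is bounded by $CN^\beta$ and one gets $\omega_{N,\ell}(r) \leq CN^{\beta-1}$. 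The two regimes merge into the stated $C/(N(|x|+N^{-\beta}))$ estimate, and a parallel analysis of the first integral representation for $f'(r)$ produces the gradient bound. The main obstacle is the circular dependence between (ii) and (iii): the integral representation assumes $f_{N,\ell}$ is bounded, whereas the lower bound on $f_{N,\ell}$ uses (iii). I would break this loop by first proving a crude $L^\infty$ bound on $f_{N,\ell}$ via the maximum principle, injecting it into the representation to obtain a preliminary version of (iii), and then bootstrapping to the sharp estimates.
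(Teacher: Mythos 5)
Your plan is sound and takes a genuinely different route from the paper's proof. For part (iii), the paper works with the Newton-potential representation
\[
\omega_{N,\ell}(x)=\frac{1}{4\pi}\int\frac{1}{|x-y|}\Bigl(\tfrac12 N^{3\beta-1}V(N^\beta y)-\lambda_{N,\ell}\Bigr)f_{N,\ell}(y)\,dy,
\]
which is legitimate because the extension of $\omega_{N,\ell}$ by zero is $C^1$ across $|x|=\ell$ (thanks to the Neumann condition), and then applies the Hardy--Littlewood--Sobolev inequality to control the resulting convolutions. You instead integrate the radial ODE twice and use Fubini to get the kernel $1/\max(r,s)-1/\ell$. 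The two are equivalent by spherical symmetry, but yours is more elementary (no HLS), makes the asymptotics $\omega_{N,\ell}\to \omega^{\text{asymp}}_\ell$ transparent, and gives the gradient bound by differentiating one fewer time. For part (i), the paper derives an upper bound with the constant trial function and a lower bound from the Rayleigh quotient with $f_{N,\ell}$ itself (discarding the kinetic energy); your version --- integrating the PDE over $B_\ell$ and using the Neumann condition to kill the boundary term --- gives an exact identity $\lambda_{N,\ell}\int_{B_\ell}f_{N,\ell}=\tfrac1{2N}\int V\,f_{N,\ell}(\cdot/N^\beta)$ from which both bounds follow at once; this is tidier. For part (ii) the paper simply cites [ESY0, Lemma A.1], whereas you sketch a self-contained maximum-principle argument, so your contribution there is real.

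One point worth tightening: your maximum-principle step is not airtight as stated. Because $V$ is smooth and compactly supported, it can be positive but arbitrarily small near the edge of its support, so ``$N^{3\beta-1}V(N^\beta x_0)\leq \lambda_{N,\ell}$ is incompatible with $x_0$ in the support'' is only true where $V$ is bounded away from zero. And at a point where $V(N^\beta x_0)=0$ the argument gives no information. The clean fix, which you half-suggest, is this: on $RN^{-\beta}\leq r\leq\ell$ the two boundary conditions at $r=\ell$ determine $u=rf$ exactly as $u(r)=\ell\cos(\sqrt\lambda(r-\ell))+\lambda^{-1/2}\sin(\sqrt\lambda(r-\ell))$, and one checks directly that $f=u/r<1$ for $r<\ell$ when $\lambda>0$; then on the remaining core $r<RN^{-\beta}$ one uses the representation from (iii) (which only needs $0<f\leq M$ for some $N$-uniform constant $M$, provided by the explicit exterior solution and a one-step bootstrap) to get $\omega_{N,\ell}=O(N^{\beta-1})$, hence $f\geq c_0$. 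You correctly flag the circular dependence between (ii) and (iii) and the bootstrap is indeed the right way to break it; spelling out the crude $L^\infty$ bound via the exterior explicit solution (which is $N$-uniform) rather than a generic elliptic regularity estimate (which would a priori be $N$-dependent) is the detail that makes the bootstrap close.
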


We continue $f_{N,\ell}$ to a function on $\bR^3$ by setting $f_{N,\ell} (x) = 1$ for all $|x| \geq \ell$ and we define $\o_{N,\ell} = 1 - f_{N,\ell}$ (so that $\o_{N,\ell} (x) = 0$ for all $|x| \geq \ell$). To generate correlations in the condensate $W(\sqrt{N} \ph) \Omega$ at time $t=0$, we define 
\begin{equation}\label{eq:kN0} k_{N,0} (x;y) = -N \o_{N,\ell} (x-y) \ph^2 ((x+y)/2) 
\end{equation}
With Lemma \ref{lm:propomega} it is easy to check that $\| k_{N,0} \|_2 \leq C$, uniformly in $N$ (the constant depends on $\ell$, though). Hence, (\ref{eq:kN0}) is the integral kernel of a Hilbert-Schmidt operator, which we will denote again by $k_{N,0}$. We use $k_{N,0}$ to define the Bogoliubov transformation 
\[ T_{N,0} = \exp \left[ \frac 1 2\int dx dy k_{N,0} (x;y) a_x^* a_y^* - \text{h.c.} \right] \] 
acting as a unitary operator on $\cF$.  

We are going to study the time evolution of initial data of the form $W(\sqrt{N} \ph) T_{N,0} \xi_N$, for a sequence $\xi_N \in \cF$ satisfying (\ref{eq:ass-xi}) (this assumption guarantees that our initial data are dominated by the Weyl operator $W(\sqrt{N} \ph)$, creating a condensate with an average of $N$ particles in the state $\ph$, and by the Bogoliubov transformation $T_{N,0}$, creating the short scale correlation structure). 

To construct an ansatz to approximate the full evolution of such an initial data, we proceed as follows. First of all, we let the condensate wave function $\ph$ evolve, according to the Hartree equation
\begin{equation}\label{eq:NLSN1}
i\partial_t \phn_t = -\Delta \phn_t + (N^{3\beta} V(N^\beta .)f_{N,\ell} * |\phn_t|^2) \phn_t 
\end{equation}
with the initial data $\phn_{t=0} = \ph$. Some standard properties of this equation (including global well-posedness in the energy space $H^1 (\bR^3)$ and propagation of higher Sobolev regularity) are reviewed in Proposition \ref{prop:ph}. 

On top of the evolved condensate, we add the short scale correlation structure, appropriately modified to take into account the variation of the condensate wave function. Similarly to (\ref{eq:kN0}),
we define 
\begin{equation}\label{eq:kNt} 
k_{N,t} (x;y) = - N \o_{N,\ell} (x-y) (\phn_t ((x+y)/2))^2 
\end{equation}
Like at time $t=0$, the function $k_{N,t}$ turns out to be the integral kernel of a Hilbert-Schmidt operator.
We use it to define the family of Bogoliubov transformations 
\begin{equation}\label{eq:bogNt} T_{N,t} = \exp \left[ \frac 12 \int dx dy k_{N,t} (x;y) a_x^* a_y^* - \text{h.c.} \right] 
\end{equation}
Observe that the unitary operators $T_{N,t}$ act on creation and annihilation operators by
\begin{equation} \begin{split}\label{eq:chsh} T_{N,t}^* a(g) T_{N,t} &= a (\cosh_{k_{N,t}} g) + a^* (\sinh_{k_{N,t}} \overline{g}) \\
 T_{N,t}^* a^* (g) T_{N,t} &= a^* (\cosh_{k_{N,t}} g) + a (\sinh_{k_{N,t}} \overline{g})
\end{split}
\end{equation}
where 
\[ \begin{split} \cosh_{k_{N,t}} &= \sum_{n \geq 0} \frac{1}{(2n)!} \,(k_{N,t}  \overline{k}_{N,t})^n \, , \; \qquad    \sinh_{k_{N,t}} = \sum_{n \geq 0} \frac{1}{(2n+1)!} (k_{N,t} \overline{k}_{N,t})^n k_{N,t}\, . \end{split} \]

We define the fluctuation vector $\xi_{N,t}$ 
at time $t$, requiring that
\begin{equation}\label{eq:iden1} e^{-i \cH_N t} W(\sqrt{N} \ph) T_{N,0} \xi_{N} = W(\sqrt{N} \phn_t) T_{N,t} \xi_{N,t} \end{equation}
Equivalently, we have $\xi_{N,t} = \cU_N (t;0) \xi_{N}$ with the fluctuation dynamics 
\begin{equation}\label{eq:flucd} \cU_N (t;s) = T_{N,t}^* W^* (\sqrt{N} \phn_t) e^{-i (t-s) \cH_N} W(\sqrt{N} \phn_s) T_{N,s} 
\end{equation}

The fluctuation vector $\xi_{N,t}$ measures the distance from the modified coherent state $W(\sqrt{N} \phn_t) T_{N,t} \Omega$. Following the analysis of  \cite{BdOS12}, one can prove a bound of the form
\[ \begin{split} \langle \xi_{N,t} , \cN \xi_{N,t} \rangle &= \langle \cU_N (t;0) \xi_N , \cN \cU_N (t;0) \xi_N \rangle \\ &\leq C \exp (c_1 \exp (c_2 |t|)) \langle \xi_N, \left[ \cN + \cN^2 / N + \cH_N \right] \xi_N \rangle \end{split} \]
for the growth of the expectation of the number of particles with respect to the fluctuation dynamics. As explained in the introduction, this bound provides an estimate of the form (\ref{eq:rate-GP}) for the convergence of the reduced density associated with the full many body evolution (\ref{eq:iden1}).

Here, we want to go one step further, and prove a norm-approximation for (\ref{eq:iden1}), using (\ref{eq:NLSN1}) for the evolution of the condensate, (\ref{eq:bogNt}) for the description of the correlation structure, and a unitary evolution with a quadratic generator to approximate the fluctuation dynamics. 
To this end, we will use the following proposition.
\begin{prop}\label{prop:UN-UN2}
Let $V \geq 0$ be smooth, spherical symmetric and compactly supported. Fix $0 < \beta < 1$ and set $\alpha = \min (\beta/2 , (1-\beta)/2)$. Let $\ph \in H^4 (\bR^3)$. Fix $\ell > 0$. Consider a sequence $\xi_N \in \cF$ such that  $\| \xi_N \| =1$ and 
\[ \langle \xi_N , \left[\cN^2 + \cK^2 + \cV_N \right] \xi_N \rangle \leq C \]
uniformly in $N$. Let $\cU_N$ be the fluctuation dynamics defined in (\ref{eq:flucd}), and $\cU_{2,N}$
be the quadratic evolution defined in (\ref{eq:U2N-def}). Then there exist $C,c_1, c_2 > 0$ such that 
\[ \left\| \cU_N (t;0) \xi_N - e^{-i \int_0^t \eta_N (s) ds} \, \cU_{2,N} (t;0) \xi_N \right\|^2  \leq C N^{-\alpha} \exp (c_1 \exp (c_2 |t|)) \]
for all $t \in \bR$ and all $N$ large enough. 
\end{prop}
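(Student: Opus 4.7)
The strategy is a Gronwall argument on
\[
D_N(t) = \left\| \cU_N(t;0)\,\xi_N - e^{-i \int_0^t \eta_N(s)\,ds}\,\cU_{2,N}(t;0)\,\xi_N \right\|^2.
\]
Let $\cL_N(t)$ be the (time-dependent, self-adjoint) generator of $\cU_N$, defined by $i\partial_t \cU_N(t;0) = \cL_N(t)\,\cU_N(t;0)$. Writing $\psi_t = \cU_N(t;0)\xi_N$ and $\tilde\psi_t = e^{-i\int_0^t \eta_N(s)ds}\cU_{2,N}(t;0)\xi_N$, self-adjointness of $\cL_{2,N}(t)+\eta_N(t)$ yields
\[
\partial_t D_N(t) = 2\, \Im \bigl\langle [\cL_N(t) - \cL_{2,N}(t) - \eta_N(t)]\,\psi_t,\ \psi_t - \tilde\psi_t \bigr\rangle,
\]
so by Cauchy--Schwarz it suffices to prove $\bigl\| [\cL_N(t)-\cL_{2,N}(t)-\eta_N(t)]\,\psi_t \bigr\| \leq C N^{-\alpha/2} \exp(c_1 \exp(c_2|t|))$ and then close Gronwall in $\sqrt{D_N(t)}$.

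The first task is to derive an explicit formula for $\cL_N(t)$. Differentiating (\ref{eq:flucd}) and inserting $T_{N,t}T_{N,t}^* = 1$ one obtains
\[
\cL_N(t) = (i\partial_t T_{N,t}^*)\,T_{N,t} + T_{N,t}^* \bigl[ (i\partial_t W^*(\sqrt{N}\phn_t))W(\sqrt{N}\phn_t) + W^*(\sqrt{N}\phn_t)\cH_N W(\sqrt{N}\phn_t) \bigr] T_{N,t}.
\]
Applying the Weyl shift (\ref{eq:W-shift}) to $\cH_N$ and using (\ref{eq:NLSN1}) to cancel the terms linear in $a,a^*$, the middle bracket reduces to a scalar, a quadratic form, a cubic term with an explicit prefactor $N^{-1/2}$, and the quartic $\cV_N$. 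Conjugation by $T_{N,t}$ via (\ref{eq:chsh}) rewrites each piece in terms of the kernels $c^N_x, s^N_x, p^N_x, r^N_x$: the scalar contributions produced by the commutators reassemble into $\eta_N(t)$ as in (\ref{eq:etaN}); the quadratic pieces, after one uses the Neumann equation (\ref{eq:Nf}) for $f_{N,\ell}$ to trade $-\Delta k_{N,t}$ for $\omega_{N,\ell}$ times the interaction (thereby producing the two last lines of (\ref{eq:L2N0})), reassemble into $\cL_{2,N}(t)$. Everything else constitutes the remainder $\cE_N(t) := \cL_N(t) - \cL_{2,N}(t) - \eta_N(t)$.

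The hard part is to bound $\|\cE_N(t)\psi_t\|$. The remainder splits into: (a) the quartic $\cV_N$ and the Bogoliubov-conjugated cubic terms, which by Cauchy--Schwarz and the CCR are controlled by $C N^{-1/2} \|\cV_N^{1/2}(\cN+1)^{1/2}\psi_t\|$ and $C N^{-1}\|\cV_N^{1/2}\psi_t\|^2$; (b) residual quadratic errors produced by the non-exact expansion of $\cosh_{k_{N,t}}$ and $\sinh_{k_{N,t}}$, which split further into two types. In terms where the convolution with $N^{3\beta}V(N^\beta\cdot)$ meets a smooth factor, Taylor expansion in the relative coordinate---using $\phn_t \in H^4$ obtained from Proposition~\ref{prop:ph}---gives an error of order $N^{-(1-\beta)/2}$. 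In terms where the convolution meets the singular kernel $N\omega_{N,\ell}$, the $L^2$-type bounds of Lemma~\ref{lm:propomega} give an error of order $N^{-\beta/2}$. Balancing the two yields precisely $\alpha = \min(\beta/2,(1-\beta)/2)$. The delicate point is arranging the cancellations so that \emph{only} these small contributions remain, which forces one to track the structure of (\ref{eq:L2N0})--(\ref{eq:etaN}) term by term and to use the scattering identity (\ref{eq:Nf}) exactly.

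Finally, to close Gronwall one needs the a priori moment estimate
\[
\bigl\langle \psi_t,\, (\cN^2 + \cK^2 + \cV_N)\,\psi_t \bigr\rangle \leq C\exp(c_1 \exp(c_2|t|)),
\]
which I would establish by computing the commutators $[\cL_N(t),\cN^2]$, $[\cL_N(t),\cK^2]$ and $[\cL_N(t),\cV_N]$, bounding them by the same moments plus lower-order terms, and iterating Gronwall in the spirit of \cite{BdOS12}; the propagation of $H^4$-regularity of $\phn_t$ under (\ref{eq:NLSN1}) is what forces the hypothesis $\ph \in H^4$. Combining this moment bound with the error estimate on $\cE_N(t)$ gives $\partial_t\sqrt{D_N(t)} \leq C\sqrt{D_N(t)} + C N^{-\alpha/2}\exp(c_1\exp(c_2|t|))$, and integrating with initial condition $D_N(0) = 0$ yields the claim.
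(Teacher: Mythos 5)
Your set-up -- differentiating the squared norm, using self-adjointness of the two generators, and decomposing $\cL_N(t) = \eta_N(t) + \cL_{2,N}(t) + \cV_N + \cE_N(t)$ by Weyl/Bogoliubov conjugation plus the scattering identity -- is the right skeleton, and your identification of the two error sources (one yielding $N^{-(1-\beta)/2}$, the other $N^{-\beta/2}$) is correct. But the central technical step of your plan has a genuine gap that the paper's own proof is specifically structured to avoid.

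You propose to Cauchy--Schwarz as
\[
\partial_t D_N(t) \leq 2\,\bigl\|(\cL_N(t) - \cL_{2,N}(t) - \eta_N(t))\,\psi_t\bigr\|\cdot\sqrt{D_N(t)},
\]
reducing matters to a vector-norm bound on $\cE_N(t)\,\psi_t$ (and on $\cV_N\,\psi_t$), and then to supply the required a priori input you claim the moment estimate
\[
\bigl\langle \psi_t,\, (\cN^2 + \cK^2 + \cV_N)\,\psi_t \bigr\rangle \leq C\exp(c_1 \exp(c_2|t|)),\qquad \psi_t = \cU_N(t;0)\xi_N,
\]
uniformly in $N$. This last claim is not correct, and no argument of the $\cite{BdOS12}$ type gives it: under the \emph{full} fluctuation dynamics $\cU_N$ (which contains the cubic term of the conjugated Hamiltonian), the second moment of the number operator grows linearly in $N$. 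This is exactly what Lemma \ref{lm:unu} says, and Theorem \ref{tm:gn} only controls the \emph{first} moment $\langle\psi_t,\cN\,\psi_t\rangle$ and the energy expectation $\langle\psi_t,(\cL_N-\eta_N)\,\psi_t\rangle$. A uniform-in-$N$ bound on $\langle\psi_t,\cN^2\,\psi_t\rangle$ or $\langle\psi_t,\cK^2\,\psi_t\rangle$ under $\cU_N$ would be much stronger than anything proven in the paper (or in its predecessors), and your Gronwall step breaks down without it. The same issue infects your treatment of $\cV_N$: as an operator $\cV_N$ is not small, and $\|\cV_N\,\psi_t\|$ is not controllable.

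What the paper does instead is exploit an \emph{asymmetric} bilinear estimate. Rather than bounding $\|\cE_N(t)\,\psi_t\|$, it bounds
\[
\bigl|\langle\psi_1, \cE_N(t)\,\psi_2\rangle\bigr| \leq C N^{-\alpha} e^{K|t|}\bigl[\langle\psi_1, (\cK+\cN+1)\psi_1\rangle + \langle\psi_2, (\cK^2 + (\cN+1)^2)\psi_2\rangle\bigr]
\]
(equation (\ref{eq:est-cE2}) of Theorem \ref{thm:LN2}, and similarly for $\cV_N$ via Lemma \ref{lm:VNK2}), deliberately placing only first-moment weights on the argument $\psi_1 = \cU_N(t;0)\xi_N$ and the second-moment weights on $\psi_2 = \cU_{2,N}(t;0)\xi_N$. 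The point is that second moments \emph{can} be propagated uniformly in $N$ under the quadratic dynamics $\cU_{2,N}$ (Theorem \ref{thm:5}), since $\cL_{2,N}$ has no cubic or quartic piece. With this asymmetry in place, $\partial_t D_N(t)$ is bounded directly by $C N^{-\alpha}\exp(c_1\exp(c_2|t|))$; one then simply integrates in $t$ with $D_N(0)=0$ -- no Gronwall step is needed at the level of $D_N$ at all. Your proposal would need to be reorganized around this bilinear estimate to close.
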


Making use of this proposition, we can easily prove Theorem \ref{thm:main1}.

\begin{proof}[Proof of Theorem \ref{thm:main1}]
We notice that
\[ \begin{split} 
&\left\| e^{-i\cH_N t} W(\sqrt{N} \ph) T_{N,0} \xi_N - e^{-i \int_0^t \eta_N (s) ds} \, W (\sqrt{N} \phn_t) T_{N,t} \cU_{2,N} (t;0) \xi_N 
\right\|^2 \\ &\hspace{2cm} =  \left\| W (\sqrt{N} \phn_t) T_{N,t} \left[ \cU_N (t;0) - e^{-i \int_0^t \eta_N (s) ds} \, \cU_{2,N} (t;0) \right] \xi_N \right\|^2 \\ &\hspace{2cm} = \left\| \left[ \cU_N (t;0) - e^{-i \int_0^t \eta_N (s) ds} \, \cU_{2,N} (t;0) \right] \xi_N \right\|^2 
\leq C N^{-\alpha} \exp (c_1 \exp (c_2 |t|)) \end{split} \]
where, in the last inequality, we used Proposition \ref{prop:UN-UN2}. 
\end{proof}

To prove Proposition \ref{prop:UN-UN2} we will compute the generator of the fluctuation dynamics $\cU_N$ and we will compare it with the quadratic generator (\ref{eq:L2N0}) of $\cU_{2,N}$. This is the content of the next two sections.

\section{The Generator of the Fluctuation Dynamics}\label{s:est}

{F}rom the definition of the fluctuation dynamics (\ref{eq:flucd}), we find 
\[ i\partial_t \cU_N (t;s) = \cL_N (t) \cU_N (t;s) \]
with the generator 
\begin{equation}\label{eq:cLN} \begin{split}  
\cL_N (t) = \; &(i\partial_t T_{N,t}^*) T_{N,t} + T_{N,t}^* \left[ \left(i\partial_t W^* (\sqrt{N} \phn_t) \right) W (\sqrt{N} \phn_t) \right. \\ &\hspace{4cm} \left. + W^* (\sqrt{N} \phn_t) \cH_N W (\sqrt{N} \phn_t) \right] T_{N,t} 
\end{split} 
\end{equation}

The goal of this section is to prove the following theorem, describing the properties of the generator $\cL_N (t)$. 
\begin{theorem}\label{thm:LN2}
Let $V \geq 0$ be smooth, spherical symmetric and compactly supported. Fix $0 < \beta < 1$ and set $\alpha = \min (\beta/2 , (1-\beta)/2)$. Let $\ph \in H^4 (\bR^3)$. Fix $\ell > 0$. Let $\cU_N$ be the fluctuation dynamics defined in (\ref{eq:flucd}). Then
\[ \cL_N (t) = \eta_N (t) + \cL_{2,N} (t) + \cV_N + \cE_N (t) \]
where the phase $\eta_N (t)$ and the quadratic generator $\cL_{2,N} (t)$ are given by (\ref{eq:etaN}) and (\ref{eq:L2N0}), respectively, $\cV_N$ is defined in (\ref{eq:cVN}),  
and where the error term $\cE_N (t)$ satisfies
\begin{equation}\label{eq:est-cE1} \begin{split} 
\pm \cE_N (t) \leq \delta \cV_N + C \cN^2 /N + C_\delta e^{K|t|} (\cN+1) \\
\pm [\cN, \cE_N (t)] \leq \delta \cV_N + C \cN^2 /N + C_\delta e^{K|t|} (\cN+1) \\
\pm \dot{\cE}_N (t) \leq \delta \cV_N + C \cN^2 /N + C_\delta e^{K|t|} (\cN+1) \end{split} \end{equation}
and 
\begin{equation}\label{eq:est-cE2} \left| \langle \psi_1, \cE_N (t) \psi_2 \rangle \right| \leq C N^{-\alpha} e^{K|t|} \left[ \langle \psi_1, (\cK+ \cN + 1) \psi_1 \rangle + \langle \psi_2 , (\cK^2 + (\cN+1)^2) \psi_2 \rangle \right] \end{equation}
for all $\psi_1, \psi_2 \in \cF$. Moreover, the quadratic generator $\cL_{2,N} (t)$ satisfies the bounds 
\begin{equation}\label{eq:est-cL2N} \begin{split} 
\pm (\cL_{2,N} (t) - \cK) &\leq C e^{K|t|} (\cN+1), \qquad \hspace{.25cm} (\cL_{2,N} (t) - \cK)^2 \leq C e^{K|t|} (\cN+1)^2 \\
\pm \left[ \cN , \cL_{2,N} (t) \right] &\leq C e^{K|t|} (\cN+1),  \qquad \hspace{.07cm} \pm \left[ \cN^2 , \cL_{2,N} (t) \right] \leq C e^{K|t|} (\cN+1)^2 \\
\pm \dot{\cL}_{2,N} (t) &\leq C e^{K|t|} (\cN+1) ,\qquad 
\hspace{1.1cm} |\dot{\cL}_{2,N} (t)|^2 \leq C e^{K|t|} (\cN+1)^2
\end{split} 
\end{equation}
\end{theorem}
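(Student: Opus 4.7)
The plan is to expand $\cL_N(t)$ from its definition (\ref{eq:cLN}), identify the contributions that assemble into the phase $\eta_N(t)$, the quadratic generator $\cL_{2,N}(t)$, and the potential $\cV_N$, and bound the remainder $\cE_N(t)$. I would begin with the Weyl conjugation $W^*(\sqrt{N}\phn_t)\cH_N W(\sqrt{N}\phn_t)$: by (\ref{eq:W-shift}), each $a_x$ becomes $a_x + \sqrt{N}\phn_t(x)$, so expansion produces contributions of degree $0,1,2,3,4$ in creation and annihilation operators. The degree-$4$ piece is exactly $\cV_N$; the degree-$3$ piece is of order $N^{-1/2}$; the degree-$2$ piece yields a Hartree-type quadratic form plus off-diagonal pairing terms with kernel $N^{3\beta}V(N^\beta(x-y))\phn_t(x)\phn_t(y)$; the degree-$1$ piece collects linear terms; and the degree-$0$ piece is a large c-number. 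Adding $(i\partial_t W^*)W$ and invoking the Hartree equation (\ref{eq:NLSN1}) for $\phn_t$, the degree-$1$ contributions cancel, with residual $\sqrt{N}\int N^{3\beta}V(N^\beta(x-y))\o_{N,\ell}(x-y)|\phn_t(y)|^2\phn_t(x) a_x^* + \text{h.c.}$ that feeds into $\cE_N(t)$.

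Next I would conjugate the pre-Bogoliubov generator by $T_{N,t}$ using (\ref{eq:chsh}), with the splittings $\cosh_{k_{N,t}} = 1 + p^N$ and $\sinh_{k_{N,t}} = k_{N,t} + r^N$. Quadratic terms transform to quadratic plus c-number; the cubic term of order $N^{-1/2}$ produces a new quadratic contribution (from contracting one operator against $k_{N,t}$), a surviving cubic remainder, and a c-number correction of order $\|k_{N,t}\|/\sqrt{N}$; and $\cV_N$ transforms to $\cV_N$ plus controlled corrections involving $k_{N,t}$. I would then add $(i\partial_t T_{N,t}^*)T_{N,t}$, which contributes a quadratic expression in $\dot k_{N,t}$. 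The crucial step is to invoke the scattering equation (\ref{eq:Nf}): the term $-\Delta_x k_{N,t}(x,y)$ arising from conjugation of the kinetic energy combines with the off-diagonal potential $N^{3\beta}V(N^\beta(x-y))\phn_t(x)\phn_t(y)$ to produce the smooth pair-creation terms visible in the last two lines of (\ref{eq:L2N0}), involving $\o_{N,\ell}$ and $N\l_{N,\ell}\mathbf{1}(|x-y|\leq\ell)$. All c-numbers collected along the way assemble into $\eta_N(t)$ as in (\ref{eq:etaN}), and the quadratic contributions reproduce $\cL_{2,N}(t)$ as in (\ref{eq:L2N0}).

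All remaining contributions form $\cE_N(t)$; the hard part will be verifying (\ref{eq:est-cE1}) and (\ref{eq:est-cE2}). The critical contribution is the residual cubic term, schematically $N^{-1/2}\int N^{3\beta}V(N^\beta(x-y))\phn_t(y)\,a_x^* a_y^* a_x + \text{h.c.}$, together with its Bogoliubov corrections of relative size $\|k_{N,t}\|$. I would bound it by Cauchy-Schwarz, writing $a^*a^*a = (V^{1/2}a^*a^*)(V^{1/2}a)$ and estimating the two factors via $\cV_N^{1/2}(\cN+1)^{1/2}$ and $\cV_N^{1/2}$; a standard $\eps/\delta$ inequality then yields $\delta\cV_N + C\cN^2/N + C_\delta e^{K|t|}(\cN+1)$, the $\cN^2/N$ contribution coming from the Bogoliubov-induced correction to $\cV_N$ itself. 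To extract the rate $N^{-\alpha}$ in (\ref{eq:est-cE2}), one instead absorbs the cubic term into $\cK$ rather than $\cV_N$, using the Sobolev bound $\|N^{3\beta}V(N^\beta\cdot)*|\phn_t|^2\|_\infty$ balanced against the $N^{-1/2}$ cubic prefactor and the $N^{-1}$ factor in $k_{N,t}$; the worst-case exponent is $\alpha = \min(\beta/2,(1-\beta)/2)$. The commutator $[\cN,\cE_N(t)]$ preserves the degree structure, so the same bounds apply, and $\dot{\cE}_N(t)$ is controlled using that $\phnd_t$ and $\dot k_{N,t}$ satisfy the same type of estimates as $\phn_t$ and $k_{N,t}$, thanks to the $H^4$-regularity of $\phn_t$ guaranteed by Proposition \ref{prop:ph}.

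Finally, the bounds (\ref{eq:est-cL2N}) on $\cL_{2,N}(t)$ follow by direct inspection of (\ref{eq:L2N0}). Every integral kernel appearing in $\cL_{2,N}(t)$ has Hilbert-Schmidt norm bounded uniformly in $N$, using Lemma \ref{lm:propomega} for $\o_{N,\ell}$ together with $\|k_{N,t}\|_2 \leq C$ and the smoothness of $\cosh_{k_{N,t}}, \sinh_{k_{N,t}}$; standard $\cN$-bounds for pair creation and annihilation then yield $\pm(\cL_{2,N}(t)-\cK) \leq Ce^{K|t|}(\cN+1)$ and the corresponding squared estimate, while the commutator and time-derivative bounds are of the same type. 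The main obstacle throughout is the cancellation between $-\Delta_x k_{N,t}$ and $N^{3\beta}V(N^\beta(x-y))\phn_t(x)\phn_t(y)$: individually both are singular in $N$, but by the scattering equation (\ref{eq:Nf}) their difference is smooth, and only by tracking the remainder with care does one see the sharp rate $N^{-\alpha}$ with $\alpha = \min(\beta/2,(1-\beta)/2)$, reflecting the trade-off between the regimes $\beta \to 0$ and $\beta \to 1$.
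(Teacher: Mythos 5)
Your overall strategy --- Weyl conjugation into degrees $0$--$4$, Bogoliubov conjugation, the cancellation of $-\Delta_x k_{N,t}$ against $N^{3\beta}V(N^\beta(x-y))\phn_t(x)\phn_t(y)$ via the scattering equation (\ref{eq:Nf}), and control of the error by Cauchy--Schwarz with absorption into $\delta\cV_N$ --- is essentially the paper's. However, there is a genuine gap in your treatment of the linear (degree-$1$) terms. You state that after invoking (\ref{eq:NLSN1}) the residual
\begin{equation*}
\cG_N^{(1)}(t)=\sqrt{N}\,a^*\bigl((N^{3\beta}V(N^\beta\cdot)\o_{N,\ell}*|\phn_t|^2)\phn_t\bigr)+\text{h.c.}
\end{equation*}
``feeds into $\cE_N(t)$.'' This cannot work for $\beta>1/2$: using $\o_{N,\ell}(z)\leq C N^{-1}(|z|+N^{-\beta})^{-1}$ from Lemma~\ref{lm:propomega}, the one-particle kernel of $\cG_N^{(1)}$ has $L^2$-norm of order $\sqrt{N}\cdot N^{\beta-1}=N^{\beta-1/2}$, which diverges as $N\to\infty$, so $\pm\cG_N^{(1)}$ is not bounded by $Ce^{K|t|}(\cN+1)$ and cannot satisfy either (\ref{eq:est-cE1}) or (\ref{eq:est-cE2}).

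The paper handles this by a cancellation you do not mention, and which your degree accounting actually precludes. You claim that conjugating the cubic term by $T_{N,t}$ produces ``a new quadratic contribution'' and ``a c-number correction,'' but $a_x^*a_y^*a_x$ has odd degree, and Wick-ordering after Bogoliubov conjugation shifts degree only by multiples of $2$: the conjugated cubic contains only degree-$3$ and degree-$1$ operators, no quadratic piece and no scalar. (The scalar $\text{Re}\int N^{3\beta}V\,\phn_t\phn_t\,\langle s_x^N,c_y^N\rangle$ in $\eta_N$ comes from the conjugated quadratic pairing terms, and the $(2N)^{-1}$ scalars from the quartic.) It is precisely the degree-$1$ output of the cubic --- with coefficient proportional to the dominant part $\bar{k}_{N,t}(x,y)$ of the contraction $\langle s_x^N,c_y^N\rangle$ --- that must be recombined with $\cG_N^{(1)}$, forming the term $\text{A}_1$ in (\ref{eq:LNtfin2}); only the second-order cancellation $\phn_t(x)\phn_t(y)-\phn_t((x+y)/2)^2=O(|x-y|^2)$ makes the remainder small uniformly in $\beta<1$, cf.\ (\ref{eq:b1}). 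Without this step your decomposition does not close for $\beta>1/2$, which is exactly the regime where Theorem~\ref{thm:main1} improves on \cite{GM}.
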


To prove Theorem \ref{thm:LN2}, we compute the different terms on the r.h.s. of (\ref{eq:cLN}). 
{F}rom (\ref{eq:W-shift}) we obtain 
\[ (i\partial_t W^* (\sqrt{N} \phn_t)) W (\sqrt{N} \phn_t)  + W^* (\sqrt{N} \phn_t) \cH_N W (\sqrt{N} \phn_t) = \sum_{j=0}^4 \cG_N^{(j)} (t) \]
with 
\[ \begin{split} \cG_N^{(0)} (t) = \; & N \int dx dy N^{3\beta} V(N^\beta (x-y)) (1/2 - f_{N,\ell} (x-y)) |\phn_t (x)|^2 |\phn_t (y)|^2 \\
\cG_N^{(1)} (t) = \; &\sqrt{N} \left[ a^* ((N^{3\beta} V(N^\beta .) \omega_{N,\ell}  * |\phn_t|^2) \phn_t) \right. \\ &\hspace{4cm} \left.+ a ((N^{3\beta} V(N^\beta .) \omega_{N,\ell}  * |\phn_t|^2) \phn_t) \right]
 \\
\cG_N^{(2)} (t) = &\; \int dx \nabla_x a_x^* \nabla_x a_x + \int dx \, (N^{3\beta} V(N^\beta .) * |\phn_t|^2 )(x) a_x^* a_x \\ &+ \int dx dy \, N^{3\beta} V(N^\beta (x-y)) \phn_t (x) \overline{\ph}^N_t (y) a_x^* a_y \\ &+ \frac{1}{2} \int dx dy \, N^{3\beta} V(N^\beta (x-y)) \left[ \phn_t (x) \phn_t (y) a_x^* a_y^* + \overline{\ph}^N_t (x) \overline{\ph}^N_t (y) a_x a_y \right] \\  
\cG_N^{(3)} (t) = &\; \frac{1}{\sqrt{N}} \int dx N^{3\beta} V(N^\beta (x-y)) a_x^* \left( \phn_t (y) a_y^* + \overline{\ph}^N_t (y) a_y \right) a_x \\ 
\cG_N^{(4)} (t) = &\; \frac{1}{2N} \int dx dy N^{3\beta} V(N^\beta (x-y)) a_x^* a_y^* a_y a_x 
\end{split} 
\] 
{F}rom (\ref{eq:cLN}), we conclude that
\[ \cL_N (t) = \left[ i\partial_t T_{N,t}^* \right] T_{N,t} + \sum_{j=0}^4 T_{N,t}^* \cG_N^{(j)} (t) T_{N,t} \]
Next, we analyze the contributions $T_{N,t}^* \cG_N^{(j)} (t) T_{N,t}$ to the generator $\cL_N (t)$ separately. 

For $j=0$, we have 
\begin{equation}\label{eq:G0N}
T_{N,t}^* \cG^{(0)}_N (t) T_{N,t} = N \int dx dy N^{3\beta} V(N^\beta (x-y)) (1/2 - f_{N,\ell} (x-y)) |\phn_t (x)|^2 |\phn_t (y)|^2
\end{equation}
For $j=1$, we just write 
\begin{equation}\label{eq:G1N}
T_{N,t}^* \cG^{(1)}_N (t) T_{N,t} = \sqrt{N} \, T_{N,t}^* \left[  a^* ((N^{3\beta} V(N^\beta .) \omega_{N,\ell} * |\phn_t|^2) \phn_t) + \text{h.c.} \right] T_{N,t}
\end{equation}
since at the end the main part of this term will be canceled. 

In the next four subsections, we study the terms $T_{N,t}^* \cG_{N}^{(j)} (t) T_{N,t}$, for $j=2,3,4$ and the term $(i\partial_t T_{N,t}^*) T_{N,t}$. At the end, in subsection \ref{s:pf-thm3}, we combine all results to prove Theorem \ref{thm:LN2}.


\subsection{Analysis of $T_{N,t}^* \cG^{(2)}_N (t) T_{N,t}$}
\label{ss:q-terms}

We denote by 
\[ \cK = \int dx \, \nabla_x a_x^* \nabla_x a_x \]
the kinetic energy operator. We will use the shorthand notation $c_{N,t} = \cosh_{k_{N,t}}$, $s_{N,t} = \sinh_{k_{N,t}}$, $p_{N,t} = c_{N,t} - 1$, $r_{N,t} = s_{N,t} - k_{N,t}$ and, for a fixed $x \in \bR^3$, $c_x^{N}  (y) = c_{N,t} (y;x)$, $s_x^N (y) = s_{N,t} (y;x)$, $p_x^N (y) = p_{N,t} (y;x)$ and $r_x^N (y) = r_{N,t} (y;x)$.

The goal of this subsection is to prove the following proposition, which is a consequence of Proposition \ref{prop:L2K} and Proposition \ref{prop:L2hat} below. 
\begin{prop}\label{prop:quadr}
Under the assumptions of Theorem \ref{thm:LN2}, we have 
\begin{equation}\label{eq:G2N} \begin{split}
 T_{N,t}^* \cG_N^{(2)} T_{N,t} = &\int dx dy |\nabla_x \sinh_{k_{N,t}} (x,y)|^2 + \int dx (N^{3\beta} V(N^\beta .) *|\phn_t|^2) (x) \langle s_x^N, s_x^N \rangle \\ &+ \int dx dy N^{3\beta} V(N^\beta (x-y)) \phn_t (x) \bar{\ph}^N_t (y) \langle s_x^N,s_y^N \rangle \\
 &+\text{Re } \int dx dy \, N^{3\beta} V(N^\beta (x-y)) \phn_t (x) \phn_t (y) \langle s_x^N , c_y^N \rangle  \\
  &+ \int dx dy \, (-\Delta_x k_{N,t} (x,y)) a_x^* a_y^* + \int dx dy \, (-\Delta_x k_{N,t} (x,y)) a_x a_y \\ 
&+ \frac{1}{2} \int dx dy N^{3\beta} V(N^\beta (x-y)) \left[ \phn_t (x) \phn_t (y) a^*_x a^*_y + \bar{\ph}^N_t (x) \bar{\ph}^N_t (y) a_x a_y \right]  \\
  &+ \cL^{(K)}_{2,N} (t) + \cL^{(V)}_{2,N} (t)  \end{split} 
\end{equation}
with $\cL_{2,N}^{(K)}$ and $\cL_{2,N}^{(V)}$ given in (\ref{eq:LK}) and, respectively, in (\ref{eq:LV}). Putting
\begin{equation}\label{eq:L2N2}  
\cL^{(2)}_{2,N} (t) = \cL^{(K)}_{2,N} (t) + \cL^{(V)}_{2,N} (t) 
\end{equation}
we find 
\[ \begin{split} 
\pm (\cL^{(2)}_{2,N} (t) - \cK) &\leq C e^{K|t|} (\cN+1), \qquad \hspace{.22cm} (\cL^{(2)}_{2,N} (t) - \cK)^2 \leq C e^{K|t|} (\cN+1)^2 \\
\pm \left[ \cN , \cL^{(2)}_{2,N} (t) \right] &\leq C e^{K|t|} (\cN+1), \qquad  \pm \left[ \cN^2 , \cL^{(2)}_{2,N} (t) \right] \leq C e^{K|t|} (\cN+1)^2 \\
\pm \dot{\cL}^{(2)}_{2,N} (t) &\leq  C e^{K|t|} (\cN+1), \qquad \hspace{1cm} (\dot{\cL}^{(2)}_{2,N} (t))^2 \leq C e^{K |t|} (\cN+1)^2 
\end{split} \]
\end{prop}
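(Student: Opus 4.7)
The plan is to expand $T_{N,t}^* \cG_N^{(2)}(t) T_{N,t}$ term-by-term using the Bogoliubov identities (\ref{eq:chsh}), extract the scalar and pairing contributions promised in (\ref{eq:G2N}), and then estimate the remainder $\cL^{(2)}_{2,N}(t) = \cL^{(K)}_{2,N}(t) + \cL^{(V)}_{2,N}(t)$ by standard quadratic-form bounds on Fock space.

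For the algebraic identity, I would treat each of the four summands of $\cG_N^{(2)}(t)$ (kinetic, diagonal density, off-diagonal exchange, pairing) separately. Conjugation replaces $a_x \mapsto a(c_x^N) + a^*(s_x^N)$ and $a_x^* \mapsto a^*(c_x^N) + a(s_x^N)$, with gradients pulled through. I then split $c_x^N = \delta_x + p_x^N$ and $s_x^N = k_x + r_x^N$. In the kinetic conjugation, the $(\delta_x,\delta_x)$ channel reproduces $\cK$, which is the first term of $\cL^{(K)}_{2,N}(t)$; the $(\delta_x, k_x)$ cross channels, after normal-ordering and an integration by parts in $x$, produce the operator $\int dx dy\, (-\Delta_x k_{N,t}(x,y)) a_x^* a_y^* + \hc$ explicitly isolated in (\ref{eq:G2N}) together with the scalar $\int dx dy\, |\nabla_x \sinh_{k_{N,t}}(x,y)|^2$; all remaining channels involving at least one $p$ or $r$ factor collect into the remaining operators in $\cL^{(K)}_{2,N}(t)$. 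Analogous expansion of the three $V$-pieces yields the three remaining scalar integrals listed in (\ref{eq:G2N}), the pairing operator $\frac{1}{2}\int N^{3\beta} V(N^\beta(x-y))\phn_t(x)\phn_t(y) a_x^* a_y^* + \hc$ (the $\delta$--$\delta$ channel of the original pairing), and the long list of terms comprising $\cL^{(V)}_{2,N}(t)$. The bookkeeping is purely combinatorial once one tabulates the eight channels for each quadratic monomial.

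For the bounds on $\cL^{(2)}_{2,N}(t) - \cK$, I would combine the standard estimates $\|a(f)\psi\| \le \|f\|_2 \|\cN^{1/2}\psi\|$ and $\|a^*(f)\psi\| \le \|f\|_2 \|(\cN+1)^{1/2}\psi\|$ with uniform Hilbert--Schmidt bounds on $k_{N,t}$, $p_{N,t}$, $r_{N,t}$, and $\sinh_{k_{N,t}}$, all of which follow from Lemma \ref{lm:propomega} together with $H^4$-propagation of $\phn_t$ (Proposition \ref{prop:ph}), which accounts for the $e^{K|t|}$ factor. Cauchy--Schwarz applied term-by-term to each quadratic monomial making up $\cL^{(K)}_{2,N} + \cL^{(V)}_{2,N} - \cK$ then yields $\pm(\cL^{(2)}_{2,N}(t) - \cK) \le C e^{K|t|}(\cN+1)$; iterating once gives the squared estimate. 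The commutator $[\cN, \cL^{(2)}_{2,N}]$ retains only the $a^* a^*$ and $aa$ pieces and satisfies the same bound. The time derivative brings down $\dot k_{N,t}$ and $\dot\phn_t$, controlled in the same norms through the Hartree equation (\ref{eq:NLSN1}), giving $\pm \dot{\cL}^{(2)}_{2,N}(t) \le C e^{K|t|}(\cN+1)$ and its square.

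The main technical obstacle is the Coulomb-type singularity of $\omega_{N,\ell}$, which makes $\nabla_x k_{N,t}$ only marginally controlled in $L^2$ and $-\Delta_x k_{N,t}$ genuinely singular on the scale $N^{-\beta}$; naively, the coefficients appearing in the $-\Delta p^N_x$ and $-\Delta r^N_x$ entries of $\cL^{(K)}_{2,N}$ are not uniformly Hilbert--Schmidt. The resolution is twofold. First, I would use the Neumann eigenvalue equation (\ref{eq:Nf}) to rewrite $-\Delta_x k_{N,t}(x,y)$ as $\frac{1}{2} N^{3\beta} V(N^\beta(x-y)) f_{N,\ell}(x-y) (\phn_t((x+y)/2))^2$ plus smoother terms involving $\nabla\phn_t$, $\Delta\phn_t$ and $\lambda_{N,\ell}$; this identity is also what produces the near-cancellation with the pairing term in (\ref{eq:G2N}) that is exploited when comparing to $\cL_{2,N}(t)$ in (\ref{eq:L2N0}). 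Second, for the remainders I would exploit the Taylor series $p = \sum_{n\ge 1}(k_{N,t}\overline{k}_{N,t})^n/(2n)!$ and $r = \sum_{n\ge 1}(k_{N,t}\overline{k}_{N,t})^n k_{N,t}/(2n+1)!$, which gain one composition with $k_{N,t}\overline{k}_{N,t}$ at each level and so produce kernels smoother than $k_{N,t}$ itself, with Hilbert--Schmidt norms of $\nabla p_{N,t}$ and $\nabla r_{N,t}$ uniformly bounded in $N$. Once these a priori estimates are in place, the remaining work is the term-by-term identification described above.
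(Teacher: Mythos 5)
Your overall plan — expand via the Bogoliubov identities, split $c = 1 + p$, $s = k + r$, peel off the scalar, the $-\Delta k$ pairing, and the $\delta$--$\delta$ pairing, and then estimate the remainders by term-by-term Cauchy--Schwarz with Hilbert--Schmidt bounds — is the same as the paper's. But your account of which terms survive and how they are bounded has a genuine gap.

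You write that, after extracting $\int dx\,dy\,(-\Delta_x k_{N,t}(x,y))\,a_x^*a_y^* + \text{h.c.}$ and the scalar $\int |\nabla_x\sinh_{k_{N,t}}|^2$, ``all remaining channels involving at least one $p$ or $r$ factor collect into the remaining operators in $\cL^{(K)}_{2,N}(t)$.'' This is false. The pure $k$--$k$ channel of the $a^*a$ sector, namely $\int dx\,\nabla_x a^*(k_x)\nabla_x a(k_x)$, remains inside $\cL^{(K)}_{2,N}(t)$ (see (\ref{eq:LK})), and it cannot be bounded by the route you propose. Your estimate scheme relies on ``uniform Hilbert--Schmidt bounds on $k_{N,t}$, $p_{N,t}$, $r_{N,t}$, and $\sinh_{k_{N,t}}$,'' i.e.\ on applying $\|a^\sharp(f)\psi\|\le \|f\|_2\|(\cN+1)^{1/2}\psi\|$ with $f = \nabla_x k_x$; but $\|\nabla_1 k_{N,t}\|_2 \sim N^{\beta/2}$ diverges, so this Cauchy--Schwarz fails. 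The paper's Lemma \ref{lm:L2K} handles this term (there called $A_3$) by a different trick: one writes $\int \nabla_x a^*(k_x)\nabla_x a(k_x) = \int dy\, a^*(u_y) a_y$ with $u(z,y) = \int dx\,\nabla_x k_{N,t}(y,x)\nabla_x \bar k_{N,t}(z,x)$, and the \emph{composed} kernel $u$ is Hilbert--Schmidt uniformly in $N$, because the pointwise bound $|\nabla_x k_{N,t}(x,z)| \le C|x-z|^{-2}\mathbf{1}(|x-z|\le\ell)(\cdots)$ combined with the Hardy--Littlewood--Sobolev inequality saves the day. You would need this (or an equivalent) observation; as written, your argument does not close.

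Two smaller points. First, the scalar $\int|\nabla_x\sinh_{k_{N,t}}|^2$ does not come from a $(\delta,k)$ cross channel; it is the normal-ordering commutator $[\,a(\nabla_x s_x),\,a^*(\nabla_x s_x)\,] = \|\nabla_x s_x\|_2^2$ from the $s$--$s$ term, involving the full $\sinh$. Second, you invoke the Neumann eigenvalue equation to rescue the $-\Delta p$ and $-\Delta r$ kernels, but this is unnecessary here: Lemma \ref{lm:deri} shows $\|\Delta_1 p_{N,t}\|_2,\|\Delta_1 r_{N,t}\|_2 \le C e^{K|t|}$ by direct integration by parts in the Taylor series (moving one gradient onto the inner $k_{N,t}$), without appealing to the scattering equation. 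The scattering equation is used later, in the proof of Theorem \ref{thm:LN2}, to cancel the singular $-\Delta_x k_{N,t}$ pairing against $\cG_N^{(2)}$ and $\cG_N^{(4)}$ — but that cancellation is not part of Proposition \ref{prop:quadr}, where the $-\Delta k$ term is simply listed as a separate summand on the right-hand side of (\ref{eq:G2N}).
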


To prove Proposition \ref{prop:quadr}, we split
\begin{equation}\label{eq:G2-split} T_{N,t}^* \cG^{(2)}_{N} (t) T_{N,t} = T_{N,t}^* \cK T_{N,t} + T_{N,t}^* \cG_{N}^{(2,V)} (t) T_{N,t} \end{equation}
with  
\begin{equation} \label{eq:L2-split}
\begin{split}  \cG_N^{(2,V)} (t) = \; &\int dx \, (N^{3\beta} V(N^\beta .) * |\phn_t|^2 )(x) a_x^* a_x \\ &+ \int dx dy \, N^{3\beta} V(N^\beta (x-y)) \phn_t (x) \bar{\ph}^N_t (y) a_x^* a_y \\ &+ \frac{1}{2} \int dx dy \, N^{3\beta} V(N^\beta (x-y)) \left[ \phn_t (x) \phn_t (y) a_x^* a_y^* + \bar{\ph}^N_t (x) \bar{\ph}^N_t (y) a_x a_y \right]  \end{split} 
\end{equation}

We start by analyzing $T_{N,t}^* \cK T_{N,t}$.  {F}rom (\ref{eq:chsh}), we find 
\[ \begin{split} T_t^* \cK T_t = \;& \int dx dy  \,| \nabla_x \sinh_{k_t} (y,x)|^2 \\ & + \int dx dy \, (-\Delta_x k_{N,t} (x,y)) a_x^* a_y^* + \int dx dy \, (-\Delta_x k_{N,t} (x,y)) a_x a_y \\ &+ \cL_{2,N}^{(K)} (t) \end{split} \]
with $\cL_{2,N}^{(K)}$ as given in (\ref{eq:LK}).
%

\begin{prop}\label{prop:L2K}
Under the assumptions of Theorem \ref{thm:LN2}, the operator $\cL_{2,N}^{(K)} (t)$ defined in (\ref{eq:LK}) satisfies the bounds  
\begin{equation}\label{eq:prop21} \begin{split}
\pm (\cL_{2,N}^{(K)} (t) - \cK) &\leq C e^{K|t|} (\cN + 1),  \qquad \hspace{.4cm} (\cL_{2,N}^{(K)} - \cK)^2 \leq C e^{K|t|} (\cN +1)^2 \\
\pm \left[ \cN , \cL_{2,N}^{(K)}  \right] &\leq  C e^{K|t|} (\cN +1), \qquad \pm [\cN^2 , \cL_{2,N}^{(K)} (t) ] \leq C e^{K|t|} (\cN+1)^2 \\
\pm \dot{\cL}_{2,N}^{(K)} (t) &\leq C e^{K|t|} (\cN+1), \qquad \hspace{.7cm} \left[ \dot{\cL}_{2,N}^{(K)} (t) \right]^2 \leq C e^{K|t|} (\cN+1)^2 \end{split} \end{equation}
\end{prop}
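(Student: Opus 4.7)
The proof reduces to estimating each of the twelve summands in $\cL_{2,N}^{(K)}(t) - \cK$ separately. The plan is to first obtain, uniformly in $N$ and locally uniformly in $t$, bounds on the Hilbert--Schmidt and operator norms of all the kernels appearing in (\ref{eq:LK}), namely $k_{N,t}$, $s_x^N$, $p_x^N$, $r_x^N$ and the kernels obtained after applying $\nabla_x$ or $-\Delta_x$ to them. From Lemma \ref{lm:propomega} and the definition (\ref{eq:kNt}), one has $\|k_{N,t}\|_2 \leq C e^{K|t|}$ and $\|\nabla k_{N,t}\|_2 \leq C e^{K|t|}$ uniformly in $N$, since the prefactor $N$ in (\ref{eq:kNt}) cancels against the $1/N$ decay of $\omega_{N,\ell}$ coming from (\ref{S.iii}); the growth in $t$ is controlled by the $H^k$ norms of $\phn_t$ through Proposition~\ref{prop:ph}. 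Writing $p_{N,t} = \sum_{n \geq 1} (k_{N,t}\overline{k}_{N,t})^n/(2n)!$ and $r_{N,t} = \sum_{n\geq 1} (k_{N,t}\overline{k}_{N,t})^n k_{N,t}/(2n+1)!$, each term contains at least one composition $k_{N,t}\overline{k}_{N,t}$, which regularizes the near-singular behaviour of $k_{N,t}$ and yields uniform bounds on $\|p_{N,t}\|_{\mathrm{op}}$, $\|r_{N,t}\|_2$, $\|\nabla p_{N,t}\|_2$, $\|\Delta p_{N,t}\|_2$, and $\|\Delta r_{N,t}\|_2$.

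With these kernel bounds in hand, the operator inequalities $\pm(\cL_{2,N}^{(K)}(t)-\cK) \leq C e^{K|t|}(\cN+1)$ follow from the standard estimates $\|a(f)\Psi\| \leq \|f\|_2 \|\cN^{1/2}\Psi\|$, $\|a^*(f)\Psi\| \leq \|f\|_2 \|(\cN+1)^{1/2}\Psi\|$ applied to each summand (and their $\nabla_x$-analogues for the $\nabla_x a^*(k_x)\nabla_x a(k_x)$ term). For the squared bound $(\cL_{2,N}^{(K)}-\cK)^2 \leq C e^{K|t|}(\cN+1)^2$, I note that each summand $B_j$ is a bounded quadratic form in creation/annihilation operators that shifts the particle number by at most two, so $\|B_j\Psi\|\leq Ce^{K|t|}\|(\cN+1)\Psi\|$, and then $\|\sum_j B_j \Psi\|^2 \leq n \sum_j \|B_j\Psi\|^2$ gives the desired estimate.

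The commutator bounds with $\cN$ and $\cN^2$ are obtained by noting that $[\cN,a^*(f)a(g)] = 0$ while $[\cN,a^*(f)a^*(g)] = 2 a^*(f)a^*(g)$ and $[\cN,a(f)a(g)] = -2 a(f)a(g)$. Hence only the non-number-preserving summands in $\cL_{2,N}^{(K)}(t)$ contribute to $[\cN,\cL_{2,N}^{(K)}]$, and the resulting operator is again a quadratic form of the same type, so the same argument gives the $C e^{K|t|}(\cN+1)$ bound. For $[\cN^2,\cL_{2,N}^{(K)}]$, one uses $[\cN^2,A] = \cN[\cN,A]+[\cN,A]\cN$ and invokes the pull-through identity $a(f)\cN = (\cN+1)a(f)$ to absorb the $\cN$ factors, picking up the quadratic $(\cN+1)^2$ on the right. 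The time derivative bounds are handled in the same way: differentiating $k_{N,t}$ in $t$ through (\ref{eq:kNt}) and using the NLS equation (\ref{eq:NLSN1}) produces kernels $\dot k_{N,t}$, $\dot s_{N,t}$, $\dot p_{N,t}$, $\dot r_{N,t}$ (and their Laplacians) of the same structure and satisfying analogous bounds, with an additional $e^{K|t|}$ factor coming from propagation of Sobolev regularity for $\phn_t$.

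The main obstacle is the treatment of the kernels $-\Delta p_x^N$ and $-\Delta r_x^N$. A naive estimate is prevented by the fact that $k_{N,t}$ has an essentially $1/|x-y|$ singularity on the support of $\omega_{N,\ell}$, so $-\Delta_x k_{N,t}$ is not Hilbert--Schmidt uniformly in $N$; indeed, the Neumann scattering equation (\ref{rescaled_scattChi}) expresses $-\Delta_x \omega_{N,\ell}$ essentially as $\tfrac{1}{2N} N^{3\beta} V(N^\beta\,\cdot) f_{N,\ell} - \lambda_{N,\ell}f_{N,\ell}$, a kernel of genuinely singular scaling. The point is that in $p_{N,t}$ and $r_{N,t}$ this singular piece always appears composed with at least one additional factor of $k_{N,t}$ (or of $\overline{k}_{N,t}$), and since $k_{N,t}$ is $L^2$ with controlled regularity, the composition produces a Hilbert--Schmidt kernel bounded uniformly in $N$. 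Verifying this cancellation carefully for each term in (\ref{eq:LK}), using integration by parts and the bounds from Lemma \ref{lm:propomega}, is the technically delicate part of the proof.
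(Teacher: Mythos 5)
Your overall plan — bound the kernels, use the standard estimates $\|a^\sharp(f)\Psi\|\leq\|f\|_2\|(\cN+1)^{1/2}\Psi\|$, Cauchy--Schwarz for squares, and use the structure of the commutators with $\cN$ — is essentially the strategy the paper uses, via Lemma~\ref{lm:L2K}. However, there is a concrete error that breaks your argument for one of the terms.

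You claim that $\|\nabla k_{N,t}\|_2 \leq C e^{K|t|}$ uniformly in $N$. This is false. From Lemma~\ref{lm:propomega}~(iii), $|N\nabla\omega_{N,\ell}(x)|\leq C/(|x|^2 + N^{-2\beta})$, which behaves like $1/|x|^2$ on scales $\gtrsim N^{-\beta}$. In three dimensions, $1/|x|^4$ fails to be locally integrable, and the cutoff at scale $N^{-\beta}$ gives
\[
\|\nabla_1 k_{N,t}\|_2 \leq C N^{\beta/2} + C e^{K|t|}
\]
(this is exactly the bound stated in Lemma~\ref{lm:deri}~(i) of the appendix), so it \emph{grows} with $N$. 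This contradicts your own later observation that $N\nabla\omega_{N,\ell}$ is too singular, and it kills the naive estimate for the term
\[
\int dx\,\nabla_x a^*(k_x)\nabla_x a(k_x)
= \int dx\, a^*(\nabla_x k_x^N)\,a(\nabla_x k_x^N),
\]
for which the straightforward bound would give only $C N^\beta(\cN+1)$, not $Ce^{K|t|}(\cN+1)$. This is precisely the one summand in $\cL_{2,N}^{(K)}(t)$ that contains $k_{N,t}$ itself (rather than $p_{N,t}$ or $r_{N,t}$), with a gradient, so the regularization-by-composition mechanism you invoke for $-\Delta p$ and $-\Delta r$ does not apply here directly.

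The paper's Lemma~\ref{lm:L2K} handles this term (called $A_3$ there) by a different regrouping: one writes $A_3 = \int dy\, a^*(u^N_y)\,a_y$ with the \emph{composed} kernel $u_{N,t}(z,y) = \int dx\,\nabla_x k_{N,t}(y,x)\,\nabla_x \bar k_{N,t}(z,x)$, and then shows that $\|u_{N,t}\|_2 \leq C e^{K|t|}$ uniformly in $N$. The integral over $x$ of the two $\sim |x-\cdot|^{-2}$ singularities produces (by Hardy--Littlewood--Sobolev type estimates) only a $\sim |y-z|^{-1}$ singularity, which is locally square-integrable. Without this regrouping, the term cannot be controlled by $(\cN+1)$ uniformly in $N$, so you should replace your false uniform bound on $\|\nabla k_{N,t}\|_2$ with this argument. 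The remaining terms in (\ref{eq:LK}) do fall within the reach of your approach, since they all involve $p_{N,t}$, $r_{N,t}$, or un-differentiated $k_{N,t}$, whose Hilbert--Schmidt norms (and those of $\nabla p$, $\Delta p$, $\Delta r$) \emph{are} uniformly bounded, as established in Appendix~\ref{s:kernel}.
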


The proof of Proposition \ref{prop:L2K} is based on the estimates contained in the next lemma. 
\begin{lemma}\label{lm:L2K}
Let $j_1, j_2 \in L^2 (\bR^3 \times \bR^3)$ and denote $j_{i,x} (z) = j_i (z,x)$ for $i=1,2$. Set 
\[ \begin{split} 
A_1 &= \int dx a^\sharp (j_{1,x}) a_x \\
A_2 &= \int dx a^\sharp (j_{1,x}) a^\sharp (j_{2,x}) \end{split} \]
Then, we have
\begin{equation}\label{eq:A123} 
\begin{split} 
|\langle \psi, A_1 \psi \rangle | &\leq C \| j_1 \|_2 \| (\cN+1)^{1/2} \psi \|^2 \\
|\langle \psi, A_2 \psi \rangle | &\leq C \| j_1 \|_2 \| j_2 \|_2 \| (\cN+1)^{1/2} \psi \|^2 
\end{split} 
\end{equation} 
for all $\psi \in \cF$. Moreover, 
\begin{equation}\label{eq:A123-2} \begin{split} A_1^* A_1 + A_1 A_1^*   &\leq C \| j_1 \|_2^2 \, (\cN+1)^2  
\\
A_2^* A_2 + A_2 A_2^*  & \leq C \| j_1 \| \| j_2 \|_2  \, (\cN+1)^2
\end{split} \end{equation}
Furthermore, let
\[ A_3 = \int dx a^* (\nabla_x k^N_x) a(\nabla_x k^N_x) \]
where we put $k^N_x (y)=k_{N,t}(y,x)$, with $k_{N,t}$ as defined in (\ref{eq:kNt}), with initial data $\ph_{N,t=0} = \ph \in H^4 (\bR^3)$. Then we have
\begin{equation}\label{eq:A4}
A_3 \leq C e^{K|t|} (\cN+1), \qquad  A_3^2 \leq C e^{K|t|} (\cN+1)^2 
\end{equation}
and also the bound
\begin{equation}\label{eq:A4dot-bd}  \dot{A}_3^* \dot{A}_3 + \dot{A}_3 \dot{A}_3^* \leq C e^{K|t|} (\cN+1)^2 \end{equation}
for the time derivative 
\begin{equation}\label{eq:A4dot} \dot{A}_3 = \int dx a^* (\nabla_x \dot{k}^N_x) a (\nabla_x k^N_x) + \int dx a^* (\nabla_x k^N_x) a(\nabla_x \dot{k}^N_x) 
\end{equation}
\end{lemma}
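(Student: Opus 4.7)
The plan is to derive every bound from the two standard estimates $\|a(f)\psi\|\le\|f\|_2\|\cN^{1/2}\psi\|$ and $\|a^*(f)\psi\|\le\|f\|_2\|(\cN+1)^{1/2}\psi\|$, combined with a Cauchy-Schwarz in the auxiliary integration variable $x\in\bR^3$. For the expectation-value bound (\ref{eq:A123}) on $A_1$, one moves the operator $a^\sharp(j_{1,x})$ to the adjoint side of the inner product and writes
\[
|\langle\psi,A_1\psi\rangle|\le\int dx\,\|a^\sharp(j_{1,x})^*\psi\|\,\|a_x\psi\|
\le\|j_1\|_2\,\|(\cN+1)^{1/2}\psi\|\,\|\cN^{1/2}\psi\|,
\]
where Cauchy-Schwarz in $x$ and $\int dx\,\|a_x\psi\|^2=\langle\psi,\cN\psi\rangle$ produce the two $\cN$-factors. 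The $A_2$ estimate is analogous, with both resulting pointwise factors carrying an $(\cN+1)^{1/2}$, so that Cauchy-Schwarz in $x$ yields the product $\|j_1\|_2\|j_2\|_2\|(\cN+1)^{1/2}\psi\|^2$.

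For the operator bounds (\ref{eq:A123-2}) I would apply the same scheme to $\|A_j\psi\|^2=\langle\psi,A_j^*A_j\psi\rangle$ and $\|A_j^*\psi\|^2$, after bringing $A_j^*A_j$ and $A_jA_j^*$ into Wick order via the canonical commutation relations. Each commutator produces a Hilbert-Schmidt pairing between the $j_i$, while the fully normal-ordered remainder is a $d\Gamma$-type operator controlled by $C\|j_1\|_2^2\cN^2$ (and, for $A_2$, by $C\|j_1\|_2^2\|j_2\|_2^2\cN^2$), using $\pm d\Gamma(J^*J)\le\|J\|_{\mathrm{HS}}^2\cN$. In the case $a^\sharp=a^*$ the identification $A_1=d\Gamma(J_1)$ with kernel $J_1(y,x)=j_1(y,x)$ makes this transparent.

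For $A_3$ I would specialize the $A_1$-type argument to the kernel $\nabla_x k_{N,t}(y,x)$, giving
\[
\langle\psi,A_3\psi\rangle=\int dx\,\|a(\nabla_x k_x^N)\psi\|^2\le\Big(\int dx\,\|\nabla_x k_x^N\|_2^2\Big)\|\cN^{1/2}\psi\|^2,
\]
and then estimate $\int dx\,dy\,|\nabla_x k_{N,t}(y,x)|^2$ using the explicit form of $k_{N,t}$. Writing
\[
\nabla_x k_{N,t}(y,x)=N(\nabla\omega_{N,\ell})(y-x)\,\phn_t(\tfrac{x+y}{2})^2-N\omega_{N,\ell}(y-x)\,\phn_t(\tfrac{x+y}{2})(\nabla\phn_t)(\tfrac{x+y}{2}),
\]
the change of variables $u=y-x$, $v=(x+y)/2$ decouples the factors: the $\omega_{N,\ell}$-piece is controlled by $\int|N\omega_{N,\ell}|^2\,du\le C$ against $\|\phn_t\nabla\phn_t\|_2^2$, and the more singular $\nabla\omega_{N,\ell}$-piece by the pointwise bound (\ref{S.iii}) against $\|\phn_t\|_{L^4}^4$. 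The required Sobolev control of $\phn_t$ is propagated from $\ph\in H^4(\bR^3)$ by a standard energy argument for (\ref{eq:NLSN1}), producing the factor $e^{K|t|}$. The squared bound $A_3^2\le Ce^{K|t|}(\cN+1)^2$ follows by applying the same scalar estimate to $A_3\psi$ in place of $\psi$.

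For the time derivative I would write $\dot A_3=\int dx\,[a^*(\nabla_x\dot k_x^N)a(\nabla_x k_x^N)+a^*(\nabla_x k_x^N)a(\nabla_x\dot k_x^N)]$, recognize it as a sum of two $A_1$-type operators, and apply (\ref{eq:A123-2}) to obtain
\[
\dot A_3^*\dot A_3+\dot A_3\dot A_3^*\le C\,\|\nabla_1\dot k_{N,t}\|_2\,\|\nabla_1 k_{N,t}\|_2\,(\cN+1)^2.
\]
Bounding $\|\nabla_1\dot k_{N,t}\|_2$ requires differentiating (\ref{eq:kNt}) in $t$ and substituting $\dot\phn_t$ from (\ref{eq:NLSN1}); the result is structurally analogous to $\nabla_1 k_{N,t}$ but with one extra derivative on $\phn_t$, absorbed by the $H^4$-regularity. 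The main obstacle throughout is the short-scale singularity of $\nabla\omega_{N,\ell}$ at the scale $|x|\sim N^{-\beta}$, which must be balanced against the smoothness of $\phn_t$; it is precisely in order to survive the extra derivative appearing in $\dot k_{N,t}$ that the assumption $\ph\in H^4(\bR^3)$, rather than $\ph\in H^1$, is imposed.
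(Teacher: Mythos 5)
Your treatment of $A_1$, $A_2$ and the operator bounds (\ref{eq:A123-2}) matches the paper's argument: Cauchy--Schwarz in the auxiliary variable plus the elementary estimates on $a(f),a^*(f)$ is exactly what the paper does, and normal-ordering to handle $A_j A_j^*$ is the same mechanism.

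The $A_3$ part, however, has a genuine gap. Your proposed estimate
\[
\langle\psi,A_3\psi\rangle=\int dx\,\|a(\nabla_x k^N_x)\psi\|^2
\le\Big(\int dx\,\|\nabla_x k^N_x\|_2^2\Big)\|\cN^{1/2}\psi\|^2
\]
bounds $A_3$ by $\|\nabla_2 k_{N,t}\|_2^2\,\cN$. But this Hilbert--Schmidt norm is \emph{not} uniformly bounded in $N$: writing $\nabla_x k_{N,t}(y,x)=N\nabla\omega_{N,\ell}(y-x)\phn_t(\tfrac{x+y}{2})^2+\cdots$, the decisive piece is $\int_{|u|\leq\ell}|N\nabla\omega_{N,\ell}(u)|^2\,du\sim\int_{N^{-\beta}}^\ell r^{-4}\,r^2\,dr\sim N^\beta$, exactly the scaling recorded in Lemma \ref{lm:deri}(i), $\|\nabla_1 k_{N,t}\|_2\leq CN^{\beta/2}+Ce^{K|t|}$. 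Your claim that ``the more singular $\nabla\omega_{N,\ell}$-piece'' is controlled against $\|\phn_t\|_{L^4}^4$ therefore picks up an uncontrolled factor $N^\beta$, and the same defect propagates to your bound for $\dot A_3^*\dot A_3+\dot A_3\dot A_3^*$ through the factor $\|\nabla_1 k_{N,t}\|_2$.

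The point you are missing is that $A_3$ must \emph{not} be estimated term by term in $x$; the $x$-integration is exactly what saves the bound. The paper rewrites
\[
A_3=\int dy\,a^*(u^N_y)\,a_y,\qquad
u_{N,t}(z,y)=\int dx\,\nabla_x k_{N,t}(y,x)\,\nabla_x\bar k_{N,t}(z,x),
\]
so that $A_3$ becomes a second-quantized one-body operator whose relevant kernel is the \emph{square} $(\nabla_2 k_{N,t})^*(\nabla_2 k_{N,t})$, not $\nabla_2 k_{N,t}$ itself. The extra $x$-integral regularizes the short-distance singularity: pointwise $|\nabla_x k_{N,t}(y,x)|\lesssim\mathbf{1}(|x-y|\leq\ell)|x-y|^{-2}$, and $\int dx\,|x-y|^{-2}|x-z|^{-2}\sim|y-z|^{-1}$ is locally square-integrable in three dimensions. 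The paper's estimate for $\|u_{N,t}\|_2$ thus lands on the convergent integral $\int\mathbf{1}(|x_1|\leq\ell)\mathbf{1}(|x_2|\leq\ell)\,|x_1|^{-2}|x_2|^{-2}|x_1-x_2|^{-1}\,dx_1\,dx_2<\infty$, giving $\|u_{N,t}\|_2\leq Ce^{K|t|}$ uniformly in $N$, after which the $A_1$-machinery applies directly. The same rewrite, with one $k_{N,t}$ replaced by $\dot k_{N,t}$, yields the bound for $\dot A_3$. Without this ``square first, then estimate'' step, there is no way to absorb the $N^\beta$ loss, no matter how much Sobolev regularity you assume on $\phn_t$.
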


\begin{proof}
We start with (\ref{eq:A123}). To this end, we observe that
\[ \begin{split} | \langle \psi, A_1 \psi \rangle | &\leq \int dx \| a^\sharp (j_{1,x}) \psi \| \| a_x \psi \| \\ &\leq \int dx \| j_{1,x} \|_2 \| (\cN+1)^{1/2} \psi \| \|a_x \psi \| \leq \| j_1 \|_2 \|(\cN+1)^{1/2} \psi \|^2 \end{split} \]
by Cauchy-Schwarz. The bound for $A_2$ follows similarly.
Now, we show (\ref{eq:A123-2}). We begin with
\[ \begin{split} \langle \psi, A_1^* A_1 \psi \rangle  &= \int dx dy \langle \psi, a_x^* a^\sharp (j_{1,x}) a^{\sharp} (j_{1,y}) a_y \psi \rangle \\ &\leq \int dx dy \| a^\sharp (j_{1,x}) a_x \psi \| \| a^\sharp (j_{1,y}) a_y \psi \| \\ &\leq \int dx dy \| j_{1,x} \|_2 \| j_{2,y} \|_2 \| a_x \cN^{1/2} \psi \| \| a_y \cN^{1/2} \psi \| \\ &\leq \left[ \int dx dy \| j_{1,x} \|_2^2 \| a_y \cN^{1/2} \psi \|^2 \right]^{1/2} \left[ \int dx dy \| j_{1,y} \|_2^2 \| a_x \cN^{1/2} \psi \|^2 \right]^{1/2} \\ &\leq \| j_1 \|_2^2 \| \cN \psi \|^2   \end{split} \] 
Analogously 
\[ \begin{split} 
\langle \psi, A_1 &A_1^* \psi \rangle \\ = \; & \int dx dy \langle \psi, a^\sharp (j_{1,x}) a_x a_y^* a^\sharp (j_{1,y}) \psi \rangle \\ \leq \; & \int dx dy \langle \psi, a^\sharp (j_{1,x}) a_y^* a_x a^\sharp (j_{1,y}) \psi \rangle + \int dx \langle \psi, a^\sharp (j_{1,x}) a^\sharp (j_{1,x}) \psi \rangle \\
\leq \; &\int dx dy \| a_y a^\sharp (j_{1,x}) \psi \| \| a_x a^\sharp (j_{1,y}) \psi \| + \| j_{1} \|^2 \| (\cN+1)^{1/2} \psi \|^2  \end{split} \]
which implies, as above, that $\langle \psi, A_1 A_1^* \psi \rangle \leq 2 \| j_1 \|^2_2 \| (\cN+1) \psi \|^2$. The estimate for $A_2^* A_2 + A_2 A_2^*$ follows analogously. 

We switch now to the term $A_3$. We write it as
\[ A_3 = \int dx dy dz \, \nabla_x k_{N,t} (y,x) \nabla_x \bar{k}_{N,t} (z,x) a_z^* a_y = \int dy \, a^* (u^N_y) a_y \]
with 
\[ u^N_y (z) = u_{N,t} (z,y) = \int dx \, \nabla_x k_{N,t} (y,x) \nabla_x \bar{k}_{N,t} (z,x)  \]
To bound the $L^2 (\bR^3 \times \bR^3)$ norm of $u_{N,t}$, we notice that 
\[ \begin{split} \| u_{N,t} \|_2^2&=\int dz dy\Big|\int dx \, \nabla_x k_{N,t} (y,x) \nabla_x \bar{k}_{N,t} (z,x)\Big|^2\\
&=\int dz dy dx_1dx_2 \, \nabla_{x_1} \bar k_{N,t} (y,x_1) \nabla_{x_1} {k}_{N,t} (z,x_1)\nabla_{x_2} k_{N,t} (y,x_2) \nabla_{x_2} \bar{k}_{N,t} (z,x_2)\\
\end{split}\]
Since
\[ |\nabla_x k_{N,t} (x;z)| \leq C \frac{{\bf 1} (|x-z| \leq \ell)}{|x-z|^2} \left[ |\phn_t ((x+z)/2)|^2 + |\nabla \phn_t ((x+z)/2)|^2 \right]  \]
we find
\[\begin{split} \|u_{N,t} \|_2^2 \leq \; &\| \phn_t \|_{H^3}^6 \int dx_1 dx_2 dy \, \left[ |\phn_t ((x_1+y)/2)|^2 + |\nabla \phn_t ((x_1+y)/2)|^2 \right] \\ &\hspace{2cm} \times \frac{{\bf 1} (|x_1-y| \leq \ell) {\bf 1} (|x_2 -y| \leq \ell)}{|x_1 - y|^2 |x_2 - y|^2} \int dz \frac{1}{|x_1 - z|^2 |z - x_2|^2} \\
\leq \; & \| \phn_t \|_{H^3}^6 \| \phn_t \|_{H^1}^2 \int dx_1 dx_2 \frac{{\bf 1} (|x_1| \leq \ell) {\bf 1} (|x_2| \leq \ell)}{|x_1|^2 |x_2|^2 |x_1 - x_2|} \\ \leq \; &C e^{K|t|} \end{split} \]
Hence, (\ref{eq:A4}) follows from (\ref{eq:A123}) and (\ref{eq:A123-2}). As for (\ref{eq:A4dot-bd}), we proceed similarly, remarking that 
\[ \begin{split} \tilde{u}^N_1 (y,z) &= \int dx \nabla_x \dot{k}_{N,t} (y,x) \nabla_x \bar{k}_{N,t} (z,x) \qquad \text{and } \\
\tilde{u}_2^N (y,z) &= \int dx \nabla_x k_{N,t}(y,x) \nabla_x \dot{\bar{k}}_{N,t} (z,x) \end{split} \]
are such that $\| \tilde{u}^N_i \|_2 \leq C e^{K|t|}$ for $i=1,2$ (one has to observe that one can avoid to take the terms $\nabla \dot{\ph}^N_t$ in the $L^\infty$-norm).  
\end{proof}

\begin{proof}[Proof of Proposition \ref{prop:L2K}] 
Since \[ \| k_{N,t} \|_2, \| s_{N,t} \|_2, \| p_{N,t} \|_2, \| r_{N,t} \|_2, \| \nabla_x p_{N,t} \|_2, \| \Delta_x p_{N,t} \|_2, \| \Delta_x r_{N,t} \|_2 \leq C e^{K|t|} \] are all finite, uniformly in $N$ (see section \ref{s:kernel}), we note that the terms in (\ref{eq:LK}), with the exception of the kinetic energy operator $\cK$, have the form of one of the operators $A_1, A_1^*, A_2, A_3$ analyzed in Lemma \ref{lm:L2K}. {F}rom (\ref{eq:A123}), (\ref{eq:A4}) it follows immediately that 
\[ \pm (\cL_{2,N}^{(K)} (t) - \cK) \leq C e^{K|t|} (\cN+1) \]
With Cauchy-Schwarz, we can bound the square of $(\cL_{2,N}^{(K)} - \cK)$ by the sum of terms like $A_i A_i^*$ and $A_i^* A_i$, with $i=1,2,3$. Lemma \ref{lm:L2K} implies therefore that 
\[ (\cL_{2,N}^{(K)} - \cK)^2 \leq C e^{K|t|} (\cN+1)^2 \] 

Since the commutators of $\cN$ with the terms on the r.h.s. of (\ref{eq:LK}) are either zero, or proportional to the same terms, we also obtain the bounds on the second line of (\ref{eq:prop21}).

Finally, since (see Section \ref{s:kernel}) \[ \| \dot{k}_{N,t} \|_2, \| \dot{s}_{N,t} \|_2, \| \dot{p}_{N,t} \|_2, \| \dot{r}_{N,t} \|_2, \| \nabla_x \dot{p}_{N,t} \|_2, \|
\Delta_x \dot{p}_{N,t} \|_2, \| \Delta_x \dot{r}_{N,t} \|_2 \leq C e^{K|t|} \] we also obtain the estimates on the last line of (\ref{eq:prop21}). 
%
\end{proof}

Next, we analyze the second term in (\ref{eq:G2-split}). We have 
\[ \begin{split}
T_t^* &\cG_{N}^{(2,V)} (t) T_t \\ = \; &
\int dx (N^{3\beta} V(N^\beta .) *|\phn_t|^2) (x) \langle s_x^N, s_x^N \rangle \\ &+\int dx dy N^{3\beta} V(N^\beta (x-y)) \phn_t (x) \bar{\ph}^N_t (y) \langle s_x^N,s_y^N \rangle \\ &+\text{Re } \int dx dy \, N^{3\beta} V(N^\beta (x-y)) \phn_t (x) \phn_t (y) \langle s_x^N , c_y^N \rangle \\ 
&+\frac{1}{2} \int dx dy N^{3\beta} V(N^\beta (x-y)) \left[ \phn_t (x) \phn_t (y) a^*_x a^*_y + \bar{\ph}^N_t (x) \bar{\ph}^N_t (y) a_x a_y \right] \\ &+ \cL^{(V)}_{2,N} (t) \end{split} \]
with $\cL_{2,N}^{(V)}$ given in (\ref{eq:LV}). The properties of the generator $\cL_{2,N}^{(V)}$ are analyzed in the next proposition, which together with Proposition \ref{prop:L2K}, implies Proposition \ref{prop:quadr}.
\begin{prop}\label{prop:L2hat}
Under the assumptions of Theorem \ref{thm:LN2}, the operator $\cL_{2,N}^{(V)} (t)$ defined in (\ref{eq:LV}) satisfies the bounds 
\[ \begin{split}
\pm \cL_{2,N}^{(V)} (t) &\leq C e^{K|t|} (\cN+1), \qquad  \hspace{.7cm} (\cL_{2,N}^{(V)})^2 (t) \leq C e^{K|t|} (\cN+1)^2 \\
\pm [\cN, \cL_{2,N}^{(V)} (t)] &\leq C e^{K|t|} (\cN+1), \qquad \pm[ \cN^2 , \cL_{2,N}^{(V)} (t) ] \leq C e^{K|t|} (\cN+1)^2 \\
\dot{\cL}_{2,N}^{(V)} (t) &\leq C e^{K|t|} (\cN+1), 
\qquad \hspace{.85cm} |\dot{\cL}_{2,N}^{(V)} (t)|^2 \leq C e^{K|t|} (\cN+1)^2 \end{split} \]
\end{prop}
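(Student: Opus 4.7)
The plan is to follow the same strategy as for Proposition \ref{prop:L2K}: decompose $\cL_{2,N}^{(V)}(t)$ into its constituent quadratic terms and bound each using Cauchy--Schwarz, the $L^1$-bound $\|N^{3\beta}V(N^\beta\cdot)\|_1 = b_0$, the $L^\infty$-bound $\|\phn_t\|_\infty \leq Ce^{K|t|}$ coming from propagation of $H^4$ regularity (Proposition \ref{prop:ph}), and the uniform Hilbert--Schmidt bounds on $s_{N,t}, p_{N,t}, \dot s_{N,t}, \dot p_{N,t}$ recorded in Section \ref{s:kernel}.

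First I would expand $c_x^N = \delta_x + p_x^N$ inside every summand of (\ref{eq:LV}) so that each term takes one of the following forms, for an appropriate weight $W(x)$ or $W(x,y)$ and for kernels $j_x^N, k_x^N \in \{s_x^N, p_x^N\}$:
\begin{equation*}
\int dx\, W(x) a^\sharp_x a^\sharp_x,\quad \int dxdy\, W(x,y) a^\sharp_x a^\sharp_y,\quad \int dx\, W(x) a^\sharp(j_x^N) a^\sharp_x,\quad \int dxdy\, W(x,y) a^\sharp(j_x^N) a^\sharp(k_y^N).
\end{equation*}
For the single-variable weight $W(x) = (N^{3\beta}V(N^\beta\cdot)*|\phn_t|^2)(x)$ one has $\|W\|_\infty \leq b_0\|\phn_t\|_\infty^2$; for the double-variable weights $W(x,y)$ built from $N^{3\beta}V(N^\beta(x-y))\phn_t(x)\bar{\ph}^N_t(y)$, splitting $V = \sqrt{V}\sqrt{V}$ and applying Cauchy--Schwarz in $(x,y)$ reduces each contribution to a product of $\|V\|_1$, $\|\phn_t\|_\infty^2$, and either $1$ or $\|j_{N,t}\|_2\|k_{N,t}\|_2$, multiplied by $\langle\psi,(\cN+1)\psi\rangle$; all such prefactors are of order $e^{K|t|}$ uniformly in $N$. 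The bound $(\cL_{2,N}^{(V)}(t))^2 \leq Ce^{K|t|}(\cN+1)^2$ then follows either by applying the same Cauchy--Schwarz argument to each product of two summands, or by bounding $|\langle\psi_1,\cL_{2,N}^{(V)}(t)\psi_2\rangle|$ and squaring.

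For the commutators $[\cN,\cL_{2,N}^{(V)}(t)]$ and $[\cN^2,\cL_{2,N}^{(V)}(t)]$, I would note that every summand of (\ref{eq:LV}) is a degree-two monomial in $a^\sharp$ shifting $\cN$ by $0$ or $\pm 2$, so $[\cN,\cL_{2,N}^{(V)}]$ has the same algebraic form as $\cL_{2,N}^{(V)}$ itself (with some summands killed and others multiplied by $\pm 2$) and is bounded verbatim; for $[\cN^2,\cdot]$ use $[\cN^2,X] = \cN[\cN,X] + [\cN,X]\cN$ together with the $(\cN+1)^2$ estimate. For $\dot\cL_{2,N}^{(V)}(t)$, differentiating through $\phn_t$ and through $c^N_x, s^N_x, p^N_x$ produces terms of the same form with one factor replaced by $\dot\phn_t,\dot c^N_x,\dot s^N_x$ or $\dot p^N_x$; the $L^\infty$ bound on $\dot\phn_t$ (via the Hartree equation and Sobolev embedding $H^2\hookrightarrow L^\infty$) and the Hilbert--Schmidt bounds on the dotted kernels give exactly the same estimates.

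The main obstacle, I expect, is the pair of terms on the last two lines of (\ref{eq:LV}) involving $a^*(p_x^N)a_y^*$ and $a^*(c_x^N)a^*(p_y^N)$ (and their hermitian conjugates), in which a creation operator $a_y^*$ appears bare, not paired with a smoothing kernel. Here one must absorb the potential into the kernel: writing the first of these as $\int dydz\,\widetilde{W}(z,y)a_z^*a_y^*$ with $\widetilde{W}(z,y) = \int dx\, N^{3\beta}V(N^\beta(x-y))\phn_t(x)\phn_t(y) p_{N,t}(z,x)$, the task reduces to showing $\|\widetilde{W}\|_{L^2}\leq Ce^{K|t|}$. This follows by Cauchy--Schwarz in $x$, using $\int dx\, N^{3\beta}V(N^\beta(x-y)) = b_0$ to bound one factor, then integrating the remaining $|\phn_t(x)|^2|p_{N,t}(z,x)|^2$ against the bounded convolution $(N^{3\beta}V(N^\beta\cdot)*|\phn_t|^2)(x)$ and using $\|p_{N,t}\|_2\leq Ce^{K|t|}$, in close analogy with the $\|u_{N,t}\|_2$ computation in the proof of Lemma \ref{lm:L2K}.
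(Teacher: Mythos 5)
Your proposal follows essentially the same strategy as the paper: expand $\cosh_{k_{N,t}} = 1 + p_{N,t}$ so that every term of $\cL_{2,N}^{(V)}$ becomes a $V$-weighted quadratic monomial, then bound each monomial via Cauchy--Schwarz, $\|N^{3\beta}V(N^\beta\cdot)\|_1 = b_0$, $\|\phn_t\|_\infty \leq Ce^{K|t|}$, and the Hilbert--Schmidt bounds on $s_{N,t}$, $p_{N,t}$ and their time derivatives. The paper packages exactly these estimates in Lemma~\ref{lm:L2hat}, which is the relevant auxiliary result here (not Lemma~\ref{lm:L2K}, which is the $V$-free analogue used for the kinetic part).

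Two points worth flagging. First, the terms you identify as the ``main obstacle''---the ones pairing a bare $a^*_y$ with $a^*(p_x^N)$---are not exceptional in the paper's framework: they are exactly of the form $A_{1,2}^*$ (with $a^\sharp = a$) in Lemma~\ref{lm:L2hat}, and the bounds (\ref{eq:Aj123}), (\ref{eq:Aj123-2}) already cover them. Your direct computation of $\|\widetilde W\|_{L^2}$ is valid and in fact reproduces the same estimate by a Schur-type argument, so it is a serviceable alternative but not a structural necessity. Second, and this is a genuine misconception: your fallback route to $(\cL_{2,N}^{(V)})^2 \leq Ce^{K|t|}(\cN+1)^2$, namely ``bounding $|\langle\psi_1,\cL_{2,N}^{(V)}(t)\psi_2\rangle|$ and squaring,'' does not work. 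The bilinear bound $|\langle\psi_1,A\psi_2\rangle| \leq C\|(\cN+1)^{1/2}\psi_1\|\,\|(\cN+1)^{1/2}\psi_2\|$ is equivalent to $\pm A \leq C(\cN+1)$, and operator inequalities are not preserved under squaring when the operators do not commute---so this does not yield $A^2 \leq C^2(\cN+1)^2$. What is actually needed is the stronger estimate $\|A\psi\| \leq C\|(\cN+1)\psi\|$, i.e., a bound on $A^*A$ and $AA^*$, which is precisely what (\ref{eq:Aj123-2}) supplies for each constituent $A_{i,j}$ and which one then combines by the triangle inequality. Your first alternative (Cauchy--Schwarz on products of summands) amounts to the same thing and does go through.
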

To prove Proposition \ref{prop:L2hat}, we will make use of bounds contained in the following lemma. 
\begin{lemma}\label{lm:L2hat}
Let $j_1, j_2 \in L^2 (\bR^3 \times \bR^3)$. We consider the operators 
\[ \begin{split} 
A_{1,1} &= \int dx dy N^{3\beta} V(N^\beta (x-y)) \ph_1 (x) \ph_2 (y)  a^\sharp (j_{1,x}) a^\sharp (j_{2,y}) \\
A_{1,2} &= \int dx dy N^{3\beta} V(N^\beta (x-y)) \ph_1 (x) \ph_2 (y)  a^\sharp (j_{1,x}) a_y  \\
A_{1,3} &= \int dx dy N^{3\beta} V(N^\beta (x-y)) \ph_1 (x) \ph_2 (y)  a^*_x a_y \end{split} \]
and 
\[ \begin{split} 
A_{2,1} &= \int dx dy N^{3\beta} V(N^\beta (x-y)) \ph_1 (x) \ph_2 (x) a^\sharp (j_{1,y}) a^\sharp (j_{2,y}) \\
A_{2,2} &= \int dx dy N^{3\beta} V(N^\beta (x-y)) \ph_1 (x) \ph_2 (x)   a^\sharp (j_{1,y}) a_y  \\
A_{2,3} &= \int dx dy N^{3\beta} V(N^\beta (x-y)) \ph_1 (x) \ph_2 (x) a^*_y a_y \end{split} \]
Then we have
\begin{equation}\label{eq:Aj123} 
\begin{split} 
| \langle \psi, A_{j,1} \psi \rangle | &\leq  C  \, \| j_1 \|_2 \| j_2 \|_2 \| \ph_1 \|_{H^2} \| \ph_2 \|_{H^2} \| (\cN+1)^{1/2} \psi \|^2 \\
| \langle \psi, A_{j,2} \psi \rangle | &\leq  C   \| j_1 \|_2 \| \ph_1 \|_{H^2} \| \ph_2 \|_{H^2} \| (\cN+1)^{1/2} \psi \|^2 \\
|\langle \psi, A_{j,3} \psi \rangle | & \leq C    \| \ph_1 \|_{H^2} \| \ph_2 \|_{H^2} \| (\cN+1)^{1/2} \psi \|^2 
\end{split}  
\end{equation}
for $j=1,2$ and all $\psi \in \cF$. Moreover,
\begin{equation}\label{eq:Aj123-2} \begin{split} 
A_{j,1}A_{j,1}^* + A_{j,1}^* A_{j,1} &\leq C  \| j_1\|^2_2 \|j_2 \|_2^2 \| \ph_1 \|^2_{H^2} \| \ph_2 \|^2_{H^2} \, (\cN+1)^2 \\
A_{j,2}A_{j,2}^* + A_{j,2}^* A_{j,2} &\leq C   \| j_1\|^2_2 \| \ph_1 \|^2_{H^2} \| \ph_2 \|^2_{H^2} \, (\cN+1)^2 \\
A_{j,3}A_{j,3}^* + A_{j,3}^* A_{j,3} &\leq C   \| \ph_1 \|^2_{H^2} \| \ph_2 \|^2_{H^2} \, (\cN+1)^2 
\end{split} \end{equation}
for $j=1,2$ and all $\psi \in \cF$. 
\end{lemma}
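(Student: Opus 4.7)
The plan is to follow exactly the playbook of Lemma \ref{lm:L2K}: estimate the expectation of each $A_{j,k}$ by pulling absolute values inside the spatial integral, applying the standard bounds $\|a(f)\psi\| \le \|f\|_2\|\mathcal N^{1/2}\psi\|$ and $\|a^*(f)\psi\| \le \|f\|_2\|(\mathcal N+1)^{1/2}\psi\|$ pointwise, then splitting the resulting integral by Cauchy--Schwarz. The crucial observation — which is what makes all constants $N$-independent — is that the rescaled potential $W_N(z) := N^{3\beta} V(N^\beta z)$ is an approximate Dirac mass but has $L^1$-norm equal to $\|V\|_1$, so $\int W_N(x-y)\,dy = \|V\|_1$ uniformly in $N$ and $x$. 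To convert $\|\ph_i\|_\infty$ into $\|\ph_i\|_{H^2}$ in the final estimates, I invoke the Sobolev embedding $H^2(\bR^3)\hookrightarrow L^\infty(\bR^3)$.

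For the $j=1$ family, consider $A_{1,3}$ first. Bounding $|\langle\psi,A_{1,3}\psi\rangle|$ by absolute values gives an integral of $W_N(x-y)|\ph_1(x)||\ph_2(y)|\|a_x\psi\|\|a_y\psi\|$; Cauchy--Schwarz splits this into the geometric mean of two integrals, and in each one the $x$- or $y$-integral of $W_N(x-y)$ yields $\|V\|_1$, while the remaining integral $\int |\ph_i(x)|^2 \|a_x\psi\|^2\,dx \le \|\ph_i\|_\infty^2\|\mathcal N^{1/2}\psi\|^2$. For $A_{1,2}$ and $A_{1,1}$ the same scheme applies, with one or two extra factors $\|a^\sharp(j_{i,x})\psi\| \le \|j_{i,x}\|_2\|(\mathcal N+1)^{1/2}\psi\|$ pointwise in $x$; after Cauchy--Schwarz the surviving integrals $\int \|j_{i,x}\|_2^2\,dx = \|j_i\|_2^2$ are finite, producing the stated $\|j_i\|_2$ factors.

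For the $j=2$ family, the only difference is that the factor $\ph_1(x)\ph_2(x)$ is diagonal. One first integrates the $y$-variable (or $x$-variable for $A_{2,3}$) through $W_N$, which again gives $\|V\|_1$ and leaves the remaining integral bounded by $\|\ph_1\|_\infty\|\ph_2\|_\infty$ together with a matrix element of $\mathcal N$ or $a^\sharp(j_{i,y})$; the same bounds follow.

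For the operator-square bounds \eqref{eq:Aj123-2}, I expand $A_{j,k}^*A_{j,k}$ and $A_{j,k}A_{j,k}^*$. The normally-ordered part is estimated by Cauchy--Schwarz in the double spatial integral (two copies of $W_N$), exactly as in the $A_2$ case of Lemma \ref{lm:L2K}, yielding bounds of the form $C\|V\|_1^2 \|\ph_1\|_\infty^2\|\ph_2\|_\infty^2\|j_1\|_2^2\|j_2\|_2^2\,(\mathcal N+1)^2$; commuting annihilation past creation (using CCR) produces lower-order contractions, e.g.\ $\int W_N(x-y)^2|\ph_1(x)\ph_2(y)|^2\,dx\,dy$-type corrections, which I control either by $\|W_N\|_1\|W_N\|_\infty$-type estimates absorbed by $\|\ph_i\|_\infty^2$ or more simply by noting that contracting two fields at coinciding points reduces one commutator to a scalar multiple of the lower-order $A$-type operator already estimated. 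The main bookkeeping obstacle is precisely this normal-ordering step for $A_{j,1}$, where the $a^\sharp a^\sharp$ structure can produce up to two commutator contractions; the cleanest way is to write, for instance, $A_{1,1}A_{1,1}^* = :A_{1,1}A_{1,1}^*: + \text{(contractions)}$, bound the Wick-ordered piece by Cauchy--Schwarz and the contractions by the $L^1$ property of $W_N$ combined with the pointwise estimates $\|j_{i,x}\|_2 \in L^2(dx)$. The $\|\ph_i\|_\infty$ factors are then replaced by $\|\ph_i\|_{H^2}$ via Sobolev, completing the proof.
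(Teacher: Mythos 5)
Your proposal follows essentially the same route as the paper: pull absolute values inside the spatial integral, apply $\|a^\sharp(f)\psi\|\le\|f\|_2\|(\cN+1)^{1/2}\psi\|$ pointwise, exploit the $N$-uniform $L^1$ bound $\int N^{3\beta}V(N^\beta z)\,dz=\|V\|_1$, split with Cauchy--Schwarz, and convert $\|\ph_i\|_\infty$ to $\|\ph_i\|_{H^2}$ via Sobolev embedding. Two remarks. First, a minor slip: for the $j=2$ family the orbitals $\ph_1(x)\ph_2(x)$ sit at $x$ while the field operators sit at $y$, so one integrates the \emph{$x$}-variable through $W_N$ in all three of $A_{2,1},A_{2,2},A_{2,3}$, not the $y$-variable for the first two as you wrote. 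Second, for the quadratic bound on $A_{j,1}$ you propose Wick ordering with explicit contraction bookkeeping; this is more machinery than needed, since $A_{j,1}$ is built from $a^\sharp(j_{i,\cdot})$ acting on $L^2$-arguments (not distribution-valued $a_x$), so the paper simply bounds $\|a^\sharp(j_{1,x})a^\sharp(j_{2,y})\psi\|\le\|j_{1,x}\|_2\|j_{2,y}\|_2\|(\cN+1)\psi\|$ and applies Cauchy--Schwarz directly on the quadruple integral. Normal ordering is genuinely required only for $A_{j,3}$ (and partly $A_{j,2}$), where $a_x^*$ and $a_y$ appear; there your treatment agrees with the paper's, and the contraction term is indeed a lower-order quadratic expression of the type already controlled.
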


\begin{proof}
We start with the bounds (\ref{eq:Aj123}), for the case $j=1$. We have
\[ \begin{split} | \langle \psi , A_{1,1} \psi \rangle | &\leq \int dx dy N^{3\beta} V(N^\beta (x-y)) |\ph_1 (x)| |\ph_2 (y)| \| a^\sharp (j_{1,x}) \psi \| \| a^\sharp (j_{2,y}) \psi \| \\ &\leq \| \ph_1 \|_\infty \| \ph_2 \|_\infty  \int dx dy N^{3\beta} V(N^\beta (x-y)) \| j_{1,x} \|_2 \| j_{2,y} \|_2 \| (\cN+1)^{1/2} \psi \|^2 \\ 
&\leq \| \ph_1 \|_{H^2} \|\ph_2 \|_{H^2}  \| (\cN+1)^{1/2} \psi \|^2\,  \left[ \int dx dy N^{3\beta} V(N^\beta (x-y)) \| j_{1,x} \|_2^2 \right]^{1/2} \\ &\hspace{4cm} \times  \left[ \int dx dy N^{3\beta} V(N^\beta (x-y)) \| j_{2,y} \|^2_2 \right]^{1/2} \\ &\leq C  \| j_1 \|_2 \| j_2 \|_2 \| \ph_1 \|_{H^2} \| \ph_2 \|_{H^2} \| (\cN+1)^{1/2} \psi \|^2 \end{split} \]
Moreover,
\[ \begin{split} 
|\langle \psi , A_{1,2} \psi \rangle | &\leq \int dx dy N^{3\beta} V(N^\beta (x-y)) |\ph_1 (x)||\ph_2 (y)| \| a^{\sharp} (j_{1,x}) \psi \| \| a_y \psi \| \\ &\leq C \| \ph_1 \|_\infty \| \ph_2 \|_\infty \left[ \int dx dy N^{3\beta} V(N^\beta (x-y)) \| j_{1,x} \|^2_2 \| (\cN+1)^{1/2} \psi \|^2 \right]^{1/2} \\ &\hspace{2cm} \times  \left[ \int dx dy N^{3\beta} V(N^\beta (x-y)) \| a_y \psi \|^2 \right]^{1/2}  \\ &\leq C \| j_1 \| \| \ph_1 \|_{H^2} \| \ph_2 \|_{H^2} \| (\cN+1)^{1/2} \psi \|^2 \end{split} \]
Finally, 
\[ \begin{split} | \langle \psi, A_{1,3} \psi \rangle | &\leq \int dx dy N^{3\beta} V(N^\beta (x-y)) |\ph_1 (x)| |\ph_2 (y)| \| a_y \psi \| \| a_x \psi \| \\ &\leq \| \ph_1 \|_\infty \| \ph_2 \|_\infty \int dx dy  N^{3\beta} V(N^\beta (x-y)) \| a_y \psi \|^2 \\ &\leq C e^{K|t|} \| \cN^{1/2} \psi \|^2 \end{split} \]
The terms $A_{2,j}$, with $j=1,2,3$ can be bounded similarly; we skip here the straightforward 
details. We switch now to the estimates (\ref{eq:Aj123-2}). Again, we consider the case $j=1$. We have
\[ \begin{split} 
\langle \psi, A_{1,1} A_{1,1}^* \psi \rangle \leq \; &\int dx dy dz dw N^{3\beta} V(N^\beta (x-y)) N^{3\beta} V(N^\beta (z-w)) \\ &\hspace{.2cm} \times |\ph_1 (x)| |\ph_2 (y)| |\ph_1 (z)| |\ph_2 (w)| \| a^\sharp (j_{1,x}) a^\sharp (j_{2,y}) \psi \| \| a^\sharp (j_{1,z}) a^\sharp (j_{2,w}) \psi \| \\ \leq \; &C \| \ph_1 \|^2_\infty \| \ph_2 \|_\infty^2 \int dx dydz dw N^{3\beta} V(N^\beta (x-y)) N^{3\beta} V(N^\beta (z-w)) \\ &\hspace{.2cm} \times \| j_{1,x} \|_2 \| j_{2,y} \|_2 \| j_{1,z} \|_2 \| j_{2,w} \|_2 \| (\cN+1) \psi \|^2 \\ \leq \; &C \| \ph_1 \|_{H^2}^2 \| \ph_2 \|_{H^2}^2 \| (\cN+1) \psi \|^2  \\ &\hspace{.2cm} \times \int dx dy dz dw  N^{3\beta} V(N^\beta (x-y))  N^{3\beta} V(N^\beta (z-w)) \| j_{1,x} \|^2_2 \| j_{2,w} \|^2_2 \\ \leq \; &C  \| j_1 \|^2 \| j_2 \|^2 \| \ph_1 \|_{H^2}^2 \| \ph_2 \|_{H^2}^2 \, \| (\cN+1) \psi \|^2 \end{split} \]
The contribution $A_{1,1}^* A_{1,1}$ can be estimated exactly in the same way. Let us now consider the term $A_{1,3} A_{1,3}^*$. We find
\[ \begin{split} \langle \psi & , A_{1,3} A_{1,3}^* \psi \rangle \\ = \; & \int dx dy dz dw N^{3\beta} V(N^\beta (x-y)) N^{3\beta} V(N^\beta (z-w)) \\ &\hspace{5cm} \times  \ph_1 (x) \ph_2 (y) \overline{\ph}_1 (z) \overline{\ph}_2 (w) \langle \psi , a_x^* a_y a_w^* a_z \psi \rangle \\ = \; & \int dx dy dz dw N^{3\beta} V(N^\beta (x-y)) N^{3\beta} V(N^\beta (z-w)) \\ &\hspace{5cm} \times   \ph_1 (x) \ph_2 (y) \overline{\ph}_1 (z) \overline{\ph}_2 (w) \langle \psi, a_x^* a_w^* a_y a_z \psi \rangle \\ &+ \int dx dy dz N^{3\beta} V(N^\beta (x-y)) N^{3\beta} V(N^\beta (z-y)) \ph_1 (x) |\ph_2 (y)|^2 \overline{\ph}_1 (z) \langle \psi, a_x^* a_z  \psi \rangle \end{split} \]
which leads us, by Cauchy-Schwarz, to $\langle \psi, A_{1,3} A_{1,3}^* \psi \rangle \leq C \| \ph_1 \|_{H^2}^2 \| \ph_2 \|_{H^2}^2 \| (\cN+1) \psi \|^2$. Since $A_{1,3}^*$ equals $A_{1,3}$, with $\ph_1$ and $\ph_2$ exchanged, the operator $A_{1,3}^* A_{1,3}$ can be bounded in the same way. Also the estimates for the terms $A_{1,2} A_{1,2}^*$, $A_{1,2}^* A_{1,2}$ and for the squares of the operators $A_{2,j}$, with $j=1,2,3$, can be obtained similarly. 
\end{proof}

\begin{proof}[Proof of Proposition \ref{prop:L2hat}]
Writing $\cosh_{k_{N,t}} = 1 + p_{N,t}$, and recalling from Lemma \ref{l:BoundsK} that $\sinh_{k_{N,t}}, p_{N,t} \in L^2 (\bR^3 \times \bR^3)$, with a norm bounded uniformly in $N$, we notice that $\mathcal{L}_{2,N}^{(V)}$ is a sum of terms, each of them having the form of one of the operators $A_{i,1}, A_{i,2}, A^*_{i,2}, A_{i,3}$, for an $i=1,2$. Since the solution $\phn_t$ of (\ref{eq:NLSN1}) satisfies $\| \phn_t \|_{H^2} < C e^{K|t|}$, the bounds (\ref{eq:Aj123}) in Lemma \ref{lm:L2hat} immediately imply that 
\[ \pm \cL^{(V)}_{2,N} (t)\leq C e^{K|t|} (\cN+1) \]
and (since by the assumptions on $\ph$ we have $\| \dot{\ph}^N_t \|_{H^2} \leq C e^{K |t|}$) also that 
\[ \begin{split} \pm [\cN, \cL^{(V)}_{2,N} (t) ] &\leq C e^{K|t|} (\cN+1), \qquad \quad \pm \dot{\cL}^{(V)}_{2,N} \leq  C e^{K|t|} (\cN+1) \, . \end{split} \]

{F}rom (\ref{eq:Aj123-2}), on the other hand, we find
\[ \begin{split} (\cL^{(V)}_{2,N})^2 &\leq C e^{K|t|} (\cN+1)^2 \\  (\dot{\cL}^{(V)}_{2,N})^2 &\leq C e^{K|t|} (\cN+1)^2 \\ \pm [\cN^2, \cL^{(V)}_{2,N} (t)] &\leq C e^{K|t|} (\cN+1)^2 \end{split} 
\] 
\end{proof}

\subsection{Analysis of $T_{N,t}^* \cG^{(3)}_N (t) T_{N,t}$}

We have
\[ \begin{split} 
T_t^* \cG_{N}^{(3)} T_t = \; & 
\frac{1}{\sqrt{N}} \int dx dy N^{3\beta} V(N^\beta (x-y)) \phn_t (y) \\ &\hspace{2cm} \times (a^* (c_x^{N}) + a(s_x^N)) (a^* (c_y^N) + a (s_y^N)) (a(c_x^{N}) + a^* (s_x^N)) \\ 
&+ \frac{1}{\sqrt{N}} \int dx dy N^{3\beta} V(N^\beta (x-y)) \bar{\ph}^N_t (y) \\ &\hspace{2cm} \times (a^* (c_x^{N}) + a(s_x^N)) (a (c_y^N) + a^* (s_y^N)) (a(c_x^{N}) + a^* (s_x^N))  
\end{split} \]
Writing the terms in normal order and decomposing 
\[ \langle s_x^N, c_y^N \rangle = \bar{s}_{N,t} (y,x) + \langle s_x^N, p_y^N \rangle = \bar{k}_{N,t} (y,x) + \bar{r}_{N,t} (y,x) + \langle s_x^N , p_y^N \rangle \]
we arrive at 
\begin{equation}\label{eq:G3N} T_t^* \cG_{N}^{(3)} T_t =  \frac{1}{\sqrt{N}} \int dx dy N^3 V(N(x-y)) \phn_t (y) \bar{k}_{N,t} (x,y) T_t^* a_x T_t+\text{h.c.}  + \cE_{3,N} (t)
\end{equation}
with 
\begin{equation}\label{eq:cE3N} 
\begin{split} 
\cE_{3,N} (t) = \; &\frac{1}{\sqrt{N}} \int dx dy N^{3\beta} V(N^\beta (x-y)) \phn_t (y) \\ &\hspace{1cm} \times \left[ a^* (c_x^{N}) a^* (c_y^N) a^* (s_x^N) + a^* (c_x^{N}) a^* (c_y^N) a(c_x^{N}) + a^* (c_x^{N}) a^* (s_x^N) a(s_y^N) \right.  \\ &\hspace{1.5cm}  + a^* (c_y^N) a^* (s_x^N) a (s_x^N)  + a^* (c_x^{N}) a(s_y^N) a (c_x^{N})   + a^* (c_y^N) a (s_x^N) a(c_x^{N})  \\ &\hspace{1.5cm} \left. + a^* (s_x^N) a(s_x^N) a(s_y^N) + a(s_x^N) a(s_y^N) a(c_x^{N})  \right]   
\\ &+ \frac{1}{\sqrt{N}} \int dx dy \, N^{3\beta} V(N^\beta (x-y)) \phn_t (y) \langle s_y^N ,s_x^N \rangle  (a^* (c_x^{N}) + a(s_x^N))   \\ &+ \frac{1}{\sqrt{N}} \int dx dy \, N^{3\beta} V(N^\beta (x-y)) \phn_t (y) \| s_x^N \|^2 (a^* (c_y^N) + a (s_y^N)) 
\\ &+ \frac{1}{\sqrt{N}} \int dx dy N^{3\beta} V(N^\beta (x-y)) \phn_t (y) \left[ \langle s_y^N, p_x^N \rangle + \bar{r}_{N,t} (x,y)  \right]\\ &\hspace{8cm} \times  (a (c_x^{N}) + a^* (s_x^N)) \\ &+ \text{h.c.}  
\end{split} 
\end{equation}
The properties of $\cE_{3,N} (t)$ are established in the next proposition. 
\begin{prop}\label{prop:cL3N}
Under the assumptions of Theorem \ref{thm:LN2}, we find, for every $\delta > 0$, a constant $C_\delta > 0$ such that 
\begin{equation}\label{eq:cL3N1} \begin{split} \pm \cE_{3,N} (t) &\leq \delta \cV_N + C \cN^2 / N + C_\delta e^{K|t|} (\cN+1) \\
\pm [\cN , \cE_{3,N} (t)] &\leq \delta \cV_N + C \cN^2 / N  + C_\delta e^{K|t|} (\cN+1) \\
\pm \dot{\cE}_{3,N} (t) &\leq \delta \cV_N + C \cN^2 /N + C_\delta e^{K|t|} (\cN+1)
\end{split}
\end{equation}
Moreover, we have
\begin{equation}\label{eq:cL3N2} 
\begin{split}
\left| \langle \psi_1, \cE_{3,N} (t) \psi_2 \rangle \right| &\leq \frac{Ce^{K|t|}}{N^{(1-\beta)/2}} \left[  \langle \psi_1, (\cK+\cN + 1) \psi_1 \rangle + \langle \psi_2, (\cK^2 + (\cN+1)^2 ) \psi_2 \rangle \right] 
\end{split} 
\end{equation}
for all $\psi_1, \psi_2 \in \cF$.
\end{prop}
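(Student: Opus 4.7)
The plan is to decompose $\cE_{3,N}(t)$ into (i) the eight normal-ordered cubic pieces in the first bracket of (\ref{eq:cE3N}), each carrying the $1/\sqrt{N}$ prefactor and the localized potential $N^{3\beta}V(N^\beta(x-y))$, and (ii) the three families of linear correction terms (plus h.c.) involving the contractions $\langle s_y^N, s_x^N\rangle$, $\|s_x^N\|^2$, and $\langle s_y^N, p_x^N\rangle + \bar r_{N,t}(x,y)$. Each class will be estimated separately, using the uniform bounds $\|k_{N,t}\|_2, \|s_{N,t}\|_2, \|p_{N,t}\|_2, \|r_{N,t}\|_2 \leq C e^{K|t|}$ obtained in Section~\ref{s:kernel}, the propagation of $H^2$-regularity for (\ref{eq:NLSN1}) yielding $\|\phn_t\|_{H^2}, \|\phnd_t\|_{H^2} \leq C e^{K|t|}$, the standard a-priori estimate $\|a^\sharp(g)\psi\|\leq\|g\|_2\|(\cN+1)^{1/2}\psi\|$, and the identity $\int N^{3\beta}V(N^\beta(x-y))\|a_x a_y\psi\|^2\,dx\,dy = 2N\langle\psi,\cV_N\psi\rangle$.

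For the operator inequalities (\ref{eq:cL3N1}), I would first split every factor $a^\sharp(c_x^N) = a_x^\sharp + a^\sharp(p_x^N)$ to separate ``naked'' summands (carrying only unsmoothed $a_x^\sharp, a_y^\sharp$) from ``smoothed'' summands (carrying at least one $a^\sharp(p_x^N)$ or $a^\sharp(s_x^N)$). For a naked summand such as $N^{-1/2}\int V_N(x-y)\phn_t(y)\,a_x^* a_y^* a_x$, a Cauchy--Schwarz with the splitting $V_N = V_N^{1/2}\cdot V_N^{1/2}$ combined with the identity above on one side and $\int V_N(x-y)\|a_x\psi\|^2\,dx\,dy \leq \|V\|_1\|\cN^{1/2}\psi\|^2$ on the other, together with the $1/\sqrt{N}$ prefactor and Young's inequality, yields the required $\delta\cV_N + C_\delta e^{K|t|}(\cN+1)$. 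Smoothed summands are bounded directly by the $L^2$-norms of $p_{N,t}, s_{N,t}$ and contribute only to $C_\delta e^{K|t|}(\cN+1)$; the $\cN^2/N$ correction absorbs configurations where the Cauchy--Schwarz is forced to pair two factors against $\cN$ simultaneously. The three linear correction families reduce, via $\|\langle s^N,s^N\rangle\|_\infty, \|\langle s^N,p^N\rangle\|_\infty \leq Ce^{K|t|}$ and $\|\bar r_{N,t}\|_2 \leq Ce^{K|t|}$, to operators of the form $N^{-1/2}a^*(h_t^N) + \text{h.c.}$ with $\|h_t^N\|_2\leq Ce^{K|t|}$, immediately controlled by $Ce^{K|t|}(\cN+1)$. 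The commutator and time-derivative bounds then follow by the same argument: $[\cN,\cdot]$ multiplies each normal-ordered monomial by its (bounded) net-creation integer, while $\dot\phn_t, \dot k_{N,t}, \dot s_{N,t}, \dot p_{N,t}, \dot r_{N,t}$ enjoy the same $L^2/H^2$ estimates with $e^{K|t|}$ constants.

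The bilinear form bound (\ref{eq:cL3N2}) is the main obstacle. The extra factor $N^{-(1-\beta)/2} = N^{-1/2}\cdot N^{\beta/2}$ beyond the naive $N^{-1/2}$ prefactor must be extracted by trading the singular $N^{3\beta}$ scaling of the potential against the kinetic energy, which is available to second order on $\psi_2$ through the $\cK^2$ term on the right-hand side. My key input is the pointwise bound $N^{3\beta}V(N^\beta(x-y)) \leq C N^\beta |x-y|^{-2}$, valid on the support of $V(N^\beta\cdot)$ because $|x-y|^{-2}\geq c N^{2\beta}$ there. Combined with the three-dimensional Hardy inequality $\int|f(x)|^2/|x-y|^2\,dx \leq C\|\nabla_x f\|_2^2$ applied sector by sector in Fock space (and summed using $\cN \sim n$), this yields
\[
\int N^{3\beta}V(N^\beta(x-y))\,\|a_y a_x \psi\|^2\,dx\,dy \leq C N^\beta\,\langle\psi,\cK\cN\psi\rangle.
\]
Applying this estimate against $\psi_2$ (where the $\cK^2$-control is available) and the elementary $\|V\|_1$-bound against $\psi_1$, a Cauchy--Schwarz on each naked cubic term gives
\[
|\langle\psi_1, A\psi_2\rangle| \leq C N^{(\beta-1)/2}\,\|\phn_t\|_\infty\,\|\cN^{1/2}\psi_1\|\,\|\cK^{1/2}\cN^{1/2}\psi_2\|,
\]
which reproduces the advertised rate once $\|\cK^{1/2}\cN^{1/2}\psi_2\|^2\leq\langle\psi_2,(\cK^2+\cN^2)\psi_2\rangle$ and $\|\cN^{1/2}\psi_1\|^2\leq\langle\psi_1,(\cK+\cN+1)\psi_1\rangle$ are inserted and Young's inequality applied. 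The smoothed cubic summands and the linear correction families are handled analogously, contributing the strictly better $N^{-1/2}$ rate. The main technical difficulty is to verify the Hardy-type estimate uniformly across all eight cubic structures (some of which require an intermediate decomposition to expose the naked $a_x^\sharp, a_y^\sharp$ onto which Hardy can act) and to organize the book-keeping so that every monomial in $\cE_{3,N}(t)$ fits into a single framework.
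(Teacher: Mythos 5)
Your treatment of the operator inequalities in (\ref{eq:cL3N1}) is close in spirit to the paper's. The paper groups the cubic monomials as $\cE_{3,N} = L + C_1 + C_2$, bounds $L$ and $C_1$ by $(\cN+1)$- and $(\cN+1)^2/N$-type terms using the analogues of Lemma~\ref{lm:L2hat} and Lemma~\ref{lm:LN4}, and handles $C_2$ via Cauchy--Schwarz against $\cV_N$ with Lemma~\ref{lm:LN4-cc}; your ``naked versus smoothed'' split and the $\sqrt{V_N}\cdot\sqrt{V_N}$ Cauchy--Schwarz do essentially the same work.

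However, your proof of the bilinear bound (\ref{eq:cL3N2}) has a genuine gap, and it is precisely at the step you flagged as the main technical difficulty. The right-hand side of (\ref{eq:cL3N2}) is \emph{asymmetric}: only a \emph{linear} expression $\langle\psi_1, (\cK+\cN+1)\psi_1\rangle$ is allowed on the $\psi_1$ side, while $\psi_2$ gets the quadratic $\cK^2+(\cN+1)^2$. For the monomials $a^*(c_x^N)a^*(c_y^N)a(c_x^N)$ and $a^*(c_x^N)a^*(c_y^N)a^*(s_x^N)$ (the terms collected into $C_2(t)$ in (\ref{eq:LC1C2})), the two naked operators that survive the $c=1+p$ expansion unavoidably act on $\psi_1$: for instance $\langle\psi_1, a_x^*a_y^*a_x\psi_2\rangle = \langle a_y a_x\psi_1, a_x\psi_2\rangle$. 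Your Hardy estimate $\int dx\,dy\, N^{3\beta}V(N^\beta(x-y))\|a_xa_y\psi\|^2 \leq CN^\beta\langle\psi,\cK\cN\psi\rangle$ would then produce $\langle\psi_1,\cK\cN\psi_1\rangle$, which is a \emph{product} of $\cK$ and $\cN$ and cannot be absorbed into the linear expression $\langle\psi_1,(\cK+\cN+1)\psi_1\rangle$. (Trying to move the pair onto $\psi_2$ by commutation fails: the order of the naked factors in these particular monomials is fixed, and swapping $\psi_1\leftrightarrow\psi_2$ would prove a different, symmetric estimate, which is not what is needed in Proposition~\ref{prop:UN-UN2}.)

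The paper avoids this by a more delicate integration by parts: writing $v=\nabla g$ with $\Delta g=V$ so that $\nabla\cdot v = V$, one has $N^{3\beta}V(N^\beta(x-y)) = N^{2\beta}\nabla_y\cdot v(N^\beta(x-y))$; integrating by parts in $y$ moves a \emph{single} derivative $\nabla_y$ onto the operator $a_y$ acting on $\psi_1$, producing the term $\|\nabla_y a_y\psi_1\|$, whose square integrates to $\langle\psi_1,\cK\psi_1\rangle$ --- linear in $\cK$, with no extra $\cN$ --- while the $(\cN+1)^2$ weight is carried by $\psi_2$. The gain $N^{\beta/2}$ then comes from $\|v\|_2\lesssim\|V\|_{6/5}$ and scaling, giving the rate $N^{(\beta-1)/2}$ exactly as claimed. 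Your pointwise bound $N^{3\beta}V(N^\beta\cdot)\leq CN^\beta|\cdot|^{-2}$ together with Hardy is a perfectly valid way to handle the h.c.\ monomials (where the naked pair falls on $\psi_2$, so $\cK\cN$ on $\psi_2$ is fine), and it recovers the same rate; but for $C_2(t)$ itself the Newton-potential integration by parts (or some equivalent device producing only one derivative on $\psi_1$) is the essential missing ingredient.
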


After applying the Cauchy-Schwarz inequality, we will estimate the cubic terms in $\cE_{3,N} (t)$ by quadratic and quartic terms. The quadratic terms can be controlled with Lemma \ref{lm:L2hat}. To bound the quartic terms, we will need the following three lemmas.  
\begin{lemma}\label{lm:LN4}
Let $V \in L^1 (\bR^3)$, $V \geq 0$ and $0 < \beta < 1$. Let $j_1, j_2 \in L^2 (\bR^3 \times \bR^3)$ with 
\[ M_i := \max \left[ \sup_x \int dy |j_i (x,y)|^2 , \sup_y \int dx |j_i (x,y)|^2 \right] < \infty \]
for $i=1,2$. Then we have
\[ \begin{split} \int dx dy N^{3\beta} V(N (x-y)) \| a^\sharp (j_{1,x}) a^\sharp (j_{2,y}) \psi \|^2 &\leq C \min (M_1 \|j_2 \|^2_2 , M_2 \| j_1 \|_2^2 ) \| (\cN+1) \psi \|^2 \\
\int dx dy N^{3\beta} V(N (x-y)) \| a^\sharp (j_{1,x}) a_y \psi \|^2 &\leq C M_1 \| (\cN+1) \psi \|^2 \end{split} \]
for all $\psi \in \cF$. These inequalities remain true if both operators applied to $\psi$ on the l.h.s. act on the same variable, i.e.  
\[ \begin{split} 
\int dx dy N^{3\beta} V(N (x-y)) \| a^\sharp (j_{1,x}) a^\sharp (j_{2,x}) \psi \|^2 &\leq C \min (M_1 \|j_2 \|^2_2 , M_2 \| j_1 \|_2^2 ) \| (\cN+1) \psi \|^2 
\\
\int dx dy N^{3\beta} V(N (x-y)) \| a^\sharp (j_{1,x}) a_x \psi \|^2 &\leq C M_1 \| (\cN+1) \psi \|^2 \end{split} \]
for all $\psi \in \cF$. 
\end{lemma}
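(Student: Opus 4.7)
The plan is to combine two standard ingredients: the Fock-space operator bounds $\|a(f)\eta\| \leq \|f\|_2 \|\mathcal{N}^{1/2}\eta\|$ and $\|a^*(f)\eta\| \leq \|f\|_2 \|(\mathcal{N}+1)^{1/2}\eta\|$, together with the observation that, under the scaling appearing in the statement, the potential $N^{3\beta}V(N^\beta\,\cdot\,)$ (which, reading dimensionally, is the one intended in the estimate) has $L^1$ norm equal to $\|V\|_1$ uniformly in $N$. The whole content of the lemma is then a careful Fubini argument: integrating out one variable using the $L^1$ bound on the rescaled $V$, and taking a supremum via $M_1$ or $M_2$ on the other.

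For the first inequality, I would apply the creation/annihilation bounds successively to obtain, pointwise in $x,y$,
\[
\|a^\sharp(j_{1,x})\,a^\sharp(j_{2,y})\,\psi\|^2 \leq C\,\|j_{1,x}\|_2^{2}\,\|j_{2,y}\|_2^{2}\,\|(\mathcal{N}+1)\psi\|^2,
\]
the constant absorbing the bounded number-shift coming from the choice of sharps. Plugging this in and integrating, I get
\[
\int dxdy\,N^{3\beta}V(N^\beta(x-y))\|j_{1,x}\|_2^{2}\|j_{2,y}\|_2^{2}\,\|(\mathcal{N}+1)\psi\|^2.
\]
Pulling out $\sup_y \|j_{2,y}\|_2^{2} \leq M_2$ and doing the $y$ integral first gives $\|V\|_1\|j_1\|_2^{2}M_2$; pulling out $\sup_x\|j_{1,x}\|_2^{2} \leq M_1$ and doing the $x$ integral first gives $\|V\|_1\|j_2\|_2^{2}M_1$. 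The minimum of the two bounds yields the stated inequality.

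For the second inequality, the essential trick is to keep $a_y$ as an operator rather than estimating it by $\mathcal{N}^{1/2}$ pointwise (which would lose an $L^2$-in-$y$ factor). Applying only $\|a^\sharp(j_{1,x})\eta\| \leq \|j_{1,x}\|_2 \|(\mathcal{N}+1)^{1/2}\eta\|$ with $\eta = a_y\psi$ yields
\[
\int dxdy\,N^{3\beta}V(N^\beta(x-y))\|j_{1,x}\|_2^{2}\,\|(\mathcal{N}+1)^{1/2}a_y\psi\|^{2}.
\]
Using $\sup_x\|j_{1,x}\|_2^{2}\leq M_1$ and $\int dy\,N^{3\beta}V(N^\beta(x-y)) = \|V\|_1$, this reduces to $CM_1 \int dy\,\|(\mathcal{N}+1)^{1/2}a_y\psi\|^{2}$. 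By the canonical commutation relations, $\int dy\,a_y^*(\mathcal{N}+1)a_y = \int dy\,a_y^*a_y\,\mathcal{N} = \mathcal{N}^2$, which gives the desired $\|(\mathcal{N}+1)\psi\|^{2}$.

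For the same-variable variants, the argument is simpler: after using the pointwise creation/annihilation bounds, one integrates out $y$ first via $\int dy\,N^{3\beta}V(N^\beta(x-y))=\|V\|_1$, producing an $x$-integral of either $\|j_{1,x}\|_2^{2}\|j_{2,x}\|_2^{2}$ or $\|j_{1,x}\|_2^{2}\|(\mathcal{N}+1)^{1/2}a_x\psi\|^{2}$; pulling the appropriate $M_i$ out of the $x$-supremum then leaves the remaining $L^2$-norm of $j_i$ (or $\int dx\,\|a_x\psi\|^{2}\leq \|\mathcal{N}^{1/2}\psi\|^{2}$) to close the estimate. I do not anticipate any real obstacle: the subtlety is purely book-keeping, namely deciding at each step which variable to integrate out against $\|V\|_1$ and which to bound via the $L^\infty L^2$ constants $M_i$ so that the stated symmetric form with $\min(M_1\|j_2\|_2^{2},M_2\|j_1\|_2^{2})$ emerges.
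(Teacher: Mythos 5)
Your proof is correct and follows essentially the same route as the paper: bound $\|a^\sharp(j_{1,x})a^\sharp(j_{2,y})\psi\|$ (resp.\ $\|a^\sharp(j_{1,x})a_y\psi\|$) by the standard Fock-space estimates, then integrate against the $L^1$-normalized potential, pulling out either $M_1$ or $M_2$ via the supremum and using $\int dy\,\|a_y\cN^{1/2}\psi\|^2\leq\|(\cN+1)\psi\|^2$ for the mixed term. Your observation that one must keep $a_y$ as an operator rather than estimate it pointwise is exactly the point the paper makes implicitly by writing $\|a_y\cN^{1/2}\psi\|^2$ instead of invoking $\|a_y\psi\|\leq\|\cN^{1/2}\psi\|$.
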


\begin{proof}
We observe that 
\[ \begin{split} \int dx dy N^{3\beta}& V(N (x-y)) \| a^\sharp (j_{1,x}) a^\sharp (j_{2,y}) \psi \|^2 \\
&\leq \int dx dy N^{3\beta} V(N (x-y))\|j_{1,x}\|_2^2 \|j_{2,y}\|_2^2\| (\cN+1)\psi \|^2 \\
&\leq C \min (M_1 \|j_2 \|^2_2 , M_2 \| j_1 \|_2^2 ) \| (\cN+1) \psi \|^2 \\ \end{split} \]
and that 
\[ \begin{split}\int dx dy N^{3\beta}& V(N (x-y)) \| a^\sharp (j_{1,x}) a_y \psi \|^2  \\
&\leq \int dx dy N^{3\beta} V(N (x-y))\|j_{1,x}\|_2^2\| a_y\cN^{1/2}\psi \|^2 \\
&\leq C M_1  \| (\cN+1) \psi \|^2 \\ \end{split} \]
The last two bounds are obtained similarly.
\end{proof}

In the next lemma, we control quartic terms where the arguments of creation and annihilation operators is the kernel $\cosh_{k_{N,t}}$. 
\begin{lemma}\label{lm:LN4-cc}
Let $0 < \beta < 1$, $V \in L^1 (\bR^3)$, $V \geq 0$, and $\cV_N$ be defined as in (\ref{eq:cVN}). Let $k_{N,t}$ be defined as in (\ref{eq:kNt}) and, as usual, let $c_x^{N} (y) =\cosh_{k_{N,t}} (y,x)$. Then, we have
\[ \frac{1}{2N} \int dx dy N^{3\beta} V(N^\beta (x-y)) a^* (c_x^{N}) a^* (c_y^N) a (c_y^N) a(c_x^{N}) = \cV_N + \widetilde{\cE} (t) \]
where, for every $\delta > 0$, we can find a constant $C_\delta > 0$ such that  
\[ \widetilde{\cE} (t) \leq \delta \cV_N + C_\delta  (\cN+1)^2 / N  \]
\end{lemma}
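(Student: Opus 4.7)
The plan is to write $\cosh_{k_{N,t}}(y,x) = \delta(y-x) + p_{N,t}(y,x)$, so that $a^\sharp(c_x^N) = a^\sharp_x + a^\sharp(p_x^N)$ in each slot. Substituting this in the quartic operator produces $2^4 = 16$ terms, one for each assignment of $\delta$ or $p^N$ to the four slots. The contribution with all four slots equal to $\delta$'s is
\[ \frac{1}{2N} \int dx\, dy \, N^{3\beta} V(N^\beta(x-y)) \, a_x^* a_y^* a_y a_x = \cV_N, \]
and the remaining $15$ terms, grouped suitably with their adjoints, form the self-adjoint error $\widetilde{\cE}(t)$. It thus remains to bound each of these $15$ terms, in expectation against a vector $\psi$, by $\delta \langle\psi,\cV_N\psi\rangle + C_\delta \langle\psi,(\cN+1)^2\psi\rangle / N$, which will yield the claimed operator inequality.

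The uniform strategy for each error term is to rewrite the expectation using
\[ \langle \psi, a^*(f_1) a^*(f_2) a(g_2) a(g_1) \psi\rangle = \langle a(f_2) a(f_1)\psi,\, a(g_2) a(g_1)\psi\rangle, \]
where each $f_i, g_i$ is either a $\delta$- or a $p^N$-kernel, and then apply Cauchy--Schwarz in $L^2(dx\,dy;\, N^{3\beta}V(N^\beta(x-y)))$. Both resulting factors are then bounded via Lemma \ref{lm:LN4}, using the uniform kernel bounds
\[ \|p_{N,t}\|_2 \leq C e^{K|t|}, \qquad M_p := \sup_x \|p_{N,t}(\cdot,x)\|_2^2 \leq C e^{K|t|}, \]
which follow from the properties of $k_{N,t}$ established in the section on kernel estimates (and are analogous to the bounds used in Section \ref{ss:q-terms} for $\sinh_{k_{N,t}}$ and $p_{N,t}$).

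Two cases arise. For terms in which both factor-vectors carry at least one $p^N$ (the $4$-$p^N$ term, the four $3$-$p^N$ terms, and the four $2$-$p^N$ terms with one $p^N$ on each side of the inner product), each factor is directly controlled by $\sqrt{C M_p}\|(\cN+1)\psi\|$ via Lemma \ref{lm:LN4}, so multiplication by the prefactor $1/(2N)$ yields a bound of order $C(\cN+1)^2/N$ without any further splitting. For terms in which one factor is $a_y a_x \psi$ (the four $1$-$p^N$ terms, together with the two $2$-$p^N$ patterns $p\,p\,\delta\,\delta$ and $\delta\,\delta\,p\,p$), Cauchy--Schwarz produces $\sqrt{2N\langle\psi,\cV_N\psi\rangle}$ on that side; after multiplication by $1/(2N)$, a Young's inequality $ab \leq \delta a^2 + b^2/(4\delta)$ yields the desired split $\delta\cV_N + C_\delta(\cN+1)^2/N$.

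The main obstacle is organizational: one must verify that each of the $15$ orderings falls into one of the two patterns above and that the $p^N$-kernels are correctly paired with the adjacent operators acting on $\psi$ when rewriting the inner product. Once the kernel bounds for $p_{N,t}$ (in particular the uniform boundedness of $M_p$) are in hand, each case reduces to a direct application of Lemma \ref{lm:LN4} together with Cauchy--Schwarz and at most one Young's step, so no deeper structural input is required.
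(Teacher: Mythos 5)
Your proposal is correct and follows essentially the same route as the paper. The paper's proof is a one-line sketch — expand $a(c^N_x) = a_x + a(p^N_x)$, apply Lemma \ref{lm:LN4} and Lemma \ref{l:BoundsK} — and you have simply spelled out that expansion and the case analysis of the resulting $15$ error terms, which is exactly what the paper leaves implicit. (Minor remark: the constant $C_\delta$ in the statement implicitly carries the $e^{K|t|}$ factor coming from $\|p_{N,t}\|_2$ and $M_p$; this does not affect the argument.)
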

\begin{proof}
The lemma follows expanding $ a(c^N_x)=a_x+ a(p^N_x)$ and $ a(c^N_y)=a_y+ a(p^N_y)$ and applying Lemma \ref{lm:LN4} and Lemma \ref{l:BoundsK}.
\end{proof}

Finally, we need a bound for the expectation of $\cV_N$ in terms of the square of the kinetic energy operator. This is the content of the next lemma.
\begin{lemma}\label{lm:VNK2}
Let $V\in L^1 (\bR^3)$, $V \geq 0$ and $V$ be radially symmetric, $0 < \beta < 1$. Then 
\[ \int dx dy N^{3\beta} V(N^\beta (x-y)) \| a_x a_y \psi \|^2 \leq C \int dx dy \| \nabla_x a_x \nabla_y a_y \psi \|^2 + C \int dx dy \| \nabla_x a_x a_y \psi \|^2 \]
\end{lemma}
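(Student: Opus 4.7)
The plan is to treat this as a purely analytic Sobolev-type inequality on Fock-space-valued symmetric functions of two variables. Setting $\Phi(x,y):=a_xa_y\psi\in\mathcal F$, one has $\Phi(x,y)=\Phi(y,x)$ by bosonic symmetry, and (Fock-sector by Fock-sector) the statement reduces to
\[
\int V_N(x-y)\,\|\Phi(x,y)\|^2\,dx\,dy \le C\!\int\|\nabla_x\nabla_y\Phi\|^2 dxdy + C\!\int\|\nabla_x\Phi\|^2 dxdy,
\]
uniformly in $N$, with $V_N(u)=N^{3\beta}V(N^\beta u)$. The crucial obstruction to a naive approach is that $\|V_N\|_{L^1}=\|V\|_1$ is uniform in $N$ while $\|V_N\|_{L^{3/2}}\sim N^\beta$ diverges, so a one-variable Sobolev bound would produce an unacceptable $N^\beta$ factor; the inequality must exploit smoothness in both $x$ and $y$ simultaneously.

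I would split $\Phi(x,y)=\Phi(y,y)+\bigl[\Phi(x,y)-\Phi(y,y)\bigr]$ and use $\|A+B\|^2\le 2\|A\|^2+2\|B\|^2$, yielding
\[
\int V_N(x-y)\|\Phi\|^2 dxdy \le 2\|V\|_1\int\|\Phi(y,y)\|^2 dy + 2\int V_N(x-y)\|\Phi(x,y)-\Phi(y,y)\|^2 dxdy.
\]
For the first (main) term, a trace-to-diagonal Sobolev estimate for bosonic-symmetric $\Phi$ gives
\[
\int\|\Phi(y,y)\|^2 dy \le C\int\|\nabla_x\nabla_y\Phi\|^2 dxdy + C\int\|\nabla_x\Phi\|^2 dxdy.
\]
This is verified on tensor products $\Phi(x,y)=g(x)g(y)$ via Gagliardo--Nirenberg $\|g\|_4^4\le C\|g\|_2\|\nabla g\|_2^3$ combined with AM--GM, which gives $\|g\|_4^4\le C(\|\nabla g\|_2^4+\|\nabla g\|_2^2\|g\|_2^2)$. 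The general case is handled by Fourier, using a Cauchy--Schwarz weight exploiting $\hat\Phi(p,q)=\hat\Phi(q,p)$. For the remainder, Taylor expansion $\Phi(x,y)-\Phi(y,y)=\int_0^1\nabla_1\Phi(y+s(x-y),y)\cdot(x-y)\,ds$ and Cauchy--Schwarz in $s$ yield
\[
\|\Phi(x,y)-\Phi(y,y)\|^2\le|x-y|^2\!\int_0^1\!\|\nabla_1\Phi(y+s(x-y),y)\|^2 ds,
\]
and after the substitution $z=y+s(x-y)$, using $\int V_N(u)|u|^2 du = N^{-2\beta}\int V(v)|v|^2 dv = O(N^{-2\beta})$, the remainder is bounded uniformly in $N$ by $C\int\|\nabla_x\Phi\|^2 dxdy$.

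The main obstacle will be the trace-to-diagonal inequality, i.e.\ the analytical fact that for a symmetric $\Phi$ on $\mathbb R^6$ one controls the restriction to the diagonal by the mixed-derivative combination $\|\nabla_x\nabla_y\Phi\|^2_{L^2}+\|\nabla_x\Phi\|_{L^2}^2$. In Fourier this amounts to bounding $\int|\int\hat\Phi(p,k-p)dp|^2 dk$ by $\int|p|^2(|q|^2+1)\|\hat\Phi(p,q)\|^2 dpdq$; the only way to control the asymmetric RHS weight by the symmetric LHS is through the bosonic symmetry $\hat\Phi(p,q)=\hat\Phi(q,p)$, which allows one to symmetrize the weight to $|p|^2|q|^2+\tfrac12(|p|^2+|q|^2)$ before applying Cauchy--Schwarz. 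The technical heart of the proof is choosing the Cauchy--Schwarz weight so that its dual integral converges uniformly in $k$; natural candidates like $w(p,k)=|p|^2(|k-p|^2+1)+|k-p|^2(|p|^2+1)$ produce an integrand $1/w(p,k)$ which requires a case-by-case analysis near the low-frequency region to verify uniform integrability. Once the trace inequality and the remainder estimate are assembled, summing the two contributions produces exactly the claimed bound.
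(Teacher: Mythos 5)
Your decomposition $\Phi(x,y)=\Phi(y,y)+[\Phi(x,y)-\Phi(y,y)]$ is a genuinely different strategy from the paper's, which does no splitting at all. The paper sets $g=-(4\pi)^{-1}|\cdot|^{-1}* V$, rewrites $N^{3\beta}V(N^\beta(x-y))$ as $-N^\beta\nabla_x\cdot\nabla_y\,g(N^\beta(x-y))$, integrates by parts once in $x$ and once in $y$, and then controls the resulting weight $N^\beta g(N^\beta\,\cdot\,)$ (bounded by $C/|\cdot|$ via Newton's theorem) through Hardy's inequality; both terms on the right-hand side fall out of the same computation. Your Fourier analysis of the diagonal piece is the correct idea for the trace estimate: exploiting $\hat\Phi(p,q)=\hat\Phi(q,p)$ to symmetrize the Fourier weight, and verifying $\sup_k\int dp/w(p,k)<\infty$ for $w(p,k)=2|p|^2|k-p|^2+|p|^2+|k-p|^2$, does close that part.

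However, the remainder estimate is wrong. The claim that $\int V_N(x-y)\|\Phi(x,y)-\Phi(y,y)\|^2\,dx\,dy$ is bounded uniformly in $N$ by $C\int\|\nabla_x\Phi\|^2$ fails: take $\Phi(x,y)=h(x)h(y)$ with $h(x)=\epsilon^{-3/2}\chi(x/\epsilon)$, $\chi\in C^\infty_c$, $\|\chi\|_2=1$, and $\epsilon=N^{-\beta}$. A direct rescaling gives
\[
\int V_N(x-y)\,|h(y)|^2\,|h(x)-h(y)|^2\,dx\,dy = N^{3\beta}\int V(u-v)\,|\chi(v)|^2\,|\chi(u)-\chi(v)|^2\,du\,dv ,
\]
while $\int\|\nabla_x\Phi\|^2=\|\nabla h\|_2^2\,\|h\|_2^2=N^{2\beta}\|\nabla\chi\|_2^2$, so the ratio grows like $N^\beta$. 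The source of the error is that after the substitution $z=y+s(x-y)$ the argument pair $(z,y)$ of $\nabla_1\Phi$ stays entangled with the rescaled kernel: one obtains a factor $s^{-5}\,V_N((z-y)/s)\,|z-y|^2$, and the $s$-integral yields a weight comparable to $(N^{\beta}|z-y|^{2})^{-1}$ on the region $|z-y|\le CN^{-\beta}$. That is a Hardy weight, not a bounded one, and controlling it costs a $\nabla_y$ derivative, bringing in $\int\|\nabla_x\nabla_y\Phi\|^2$. This is also dictated by regularity: the restriction $\Phi(y,y)$ is not even defined for a general $\Phi$ with only $\nabla_x\Phi\in L^2$, so the remainder cannot be controlled by the first-derivative norm alone. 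Since the mixed derivative is available on the right-hand side of the lemma, your decomposition can be repaired with an extra Hardy step, but the bound you wrote for the remainder is not correct as stated.
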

\begin{proof}
We define 
\begin{equation}\label{eq:deff} g (x) =-\frac{1}{4\pi}\int dy \frac{1}{|x-y|} V(y) 
\end{equation}
Then we have $\Delta g = V$ and therefore 
\[ \begin{split} 
\int dx &dy N^{3\beta} V(N^\beta (x-y)) \| a_x a_y \psi_2 \|^2 \\ = \; &\int dx dy N^{\beta} \nabla_x \cdot \nabla_y \left[ g (N^\beta (x-y)) \right] \langle a_x a_y \psi_2 , a_x a_y \psi_2 \rangle \\ = \; &\int dx dy N^\beta g (N^\beta (x-y)) \left[ \langle \nabla_x a_x \nabla_y a_y \psi_2, a_x a_y \psi_2 \rangle + \langle \nabla_x a_x a_y , a_x \nabla_y a_y \psi_2 \rangle  + \text{h.c.} \right] \\ \leq \; &N^\beta \left[ \int dx dy \| \nabla_x a_x \nabla_y a_y \psi_2 \|^2 \right]^{1/2}  \left[ \int dx dy |g (N^\beta (x-y))|^2 \| a_x a_y \psi_2 \|^2 \right]^{1/2} \\ &+  
N^\beta \int dx dy |g (N^\beta (x-y))| \| \nabla_x a_x a_y \psi_2 \|^2 \\  \leq \; &C \int dx dy \| \nabla_x a_x \nabla_y a_y \psi_2 \|^2  +  C \int dx dy \| \nabla_x a_x a_y \psi_2 \|^2 \end{split} \] 
Here we used the fact that (from Newton's theorem) $|g(x)| \leq C |x|^{-1}$.
\end{proof}

\begin{proof}[Proof of Proposition \ref{prop:cL3N}]
Let us decompose the operator $\cE_{3,N} (t)$ as 
\[ \cE_{3,N} (t) =  L (t) + C_1 (t) + C_2 (t)  \]
with 
\begin{equation}\label{eq:LC1C2}\begin{split} 
L (t) = & \frac{1}{\sqrt{N}} \int dx dy N^{3\beta} V( N^\beta (x-y)) \phn_t (y) \langle s_y^N, s_x^N \rangle (a^* (c_x^{N}) + a(s_x^N))  \\
&+ \frac{1}{\sqrt{N}} \int dx dy N^{3\beta} V( N^\beta (x-y)) \phn_t (y) \| s_x^N \|^2  (a^* (c_y^N) + a(s_y^N))  \\
&+ \frac{1}{\sqrt{N}} \int dx dy N^{3\beta} V( N^\beta (x-y)) \phn_t (y) ( \langle s_y^N, p_x^N \rangle + \bar{r}_{N,t}(x,y)) (a (c_x^{N}) + a^*(s_x^N)) \\ &+ \text{h.c.} \\
C_1 (t) = \; &\frac{1}{\sqrt{N}} \int dx dy N^{3\beta} V(N^\beta (x-y)) \phn_t (y) \\ & \times \left[ a^* (c_x^{N}) a^* (s_x^N) a(s_y^N) + a^* (c_y^N) a^* (s_x^N) a(s_x^N) + a^* (c_x^{N}) a(s_y^N) a(c_x^{N}) \right. \\  &\hspace{.4cm} \left. + a^* (c_y^N) a(s_x^N) a(c_x^{N}) + a^* (s_x^N) a(s_x^N) a (s_y^N) + a(s_x^N) a(s_y^N) a(c_x^{N}) \right] + \text{h.c.} \\
C_2 (t) = \; & \frac{1}{\sqrt{N}} \int dx dy N^{3\beta} V(N^\beta (x-y)) \phn_t (y) \\ &\hspace{1cm} \times\left( a^* (c_x^{N}) a^* (c_y^N) a^* (s_x^N) + a^* (c_x^{N}) a^* (c_y^N) a (c_x^{N}) \right) + \text{h.c.} \end{split} \end{equation} 

Let us begin with the linear terms in $L (t)$. Using Cauchy-Schwarz inequality and  Lemma \ref{lm:L2hat} (together with the estimates in Appendix \ref{s:kernel} for the Hilbert-Schmidt operators $\sinh_{k_{N,t}}, p_{N,t}, r_{N,t}$), it is easy to check that
\[ \Big| \langle \psi_1, L(t) \psi_2 \rangle \Big|
\leq C N^{-1/2} e^{K|t|} \left( \langle \psi_1, (\cN+1)  \psi_1\rangle  + \langle \psi_2 , (\cN +1) \psi_2 \rangle \right) \]
and, taking $\psi_1 = \psi_2$, that 
\[ \begin{split} 
\pm L (t) &\leq C N^{-1/2} e^{K|t|} (\cN+1) \\
\pm [ \cN, L(t) ] &\leq C N^{-1/2} e^{K|t|} (\cN+1), \qquad \pm \dot{L} (t) \leq C N^{-1/2} e^{K|t|} (\cN+1) \end{split} \]

Next, we control the terms in $C_1 (t)$. Decomposing $\cosh_{k_{N,t}} = 1 + p_{N,t}$ and applying Cauchy-Schwarz, it is clear that all  contributions to $\langle \psi_1, C_1 (t) \psi_2 \rangle$ can be bounded by a sum of terms of the form
\[ \begin{split}
\frac{1}{\sqrt{N}} &\left[ \int dx dy N^{3\beta} V(N^\beta (x-y)) |\phn_t (y)|^2 \| a^\sharp (j_{1,x}) \psi_1 \|^2 \right]^{1/2} \\ &\hspace{4cm} \times \left[ \int dx dy N^{3\beta} V(N^\beta (x-y)) \| a^\sharp (j_{2,x}) a (j_{3,y}) \psi_2 \|^2 \right]^{1/2} \, ,
\\
\frac{1}{\sqrt{N}} &\left[ \int dx dy N^{3\beta} V(N^\beta (x-y)) |\phn_t (y)|^2 \| a^\sharp (j_{1,x}) \psi_1 \|^2 \right]^{1/2} \\ &\hspace{4cm} \times \left[ \int dx dy N^{3\beta} V(N^\beta (x-y)) \| a^\sharp (j_{2,x}) a_y \psi_2 \|^2 \right]^{1/2} \, ,
\\
\frac{1}{\sqrt{N}} &\left[ \int dx dy N^{3\beta} V(N^\beta (x-y)) |\phn_t (y)|^2 \| a_x \psi_1 \|^2 \right]^{1/2} \\ &\hspace{4cm} \times \left[ 
 \int dx dy N^{3\beta} V(N^\beta (x-y)) \| a^\sharp (j_{2,x}) a^\sharp (j_{3,y}) \psi_2 \|^2 \right]^{1/2} \, ,
\\
\frac{1}{\sqrt{N}} &\left[ \int dx dy N^{3\beta} V(N^\beta (x-y)) |\phn_t (y)|^2 \| a_x \psi_1 \|^2 \right]^{1/2} \\ &\hspace{4cm} \times \left[ 
 \int dx dy N^{3\beta} V(N^\beta (x-y)) \| a^\sharp (j_{2,x}) a_y \psi_2 \|^2 \right]^{1/2} 
\end{split} \]
with appropriate $j_1, j_2, j_3 \in L^2 (\bR^3 \times \bR^3)$ satisfying $\| j_i \|_2 \leq C e^{K|t|}$ and 
\[ M_i = \max \left[ \sup_x \int dy |j_i (x,y)|^2 , \sup_y \int dx |j_i (x,y)|^2 \right] \leq C e^{K|t|} \]
for all $i=1,2,3$. Using Lemma \ref{lm:LN4}, we conclude therefore that 
\be \begin{split} 
|\langle \psi_1, C_1 (t) \psi_2 \rangle | &\leq C N^{-1/2} e^{K|t|} \| (\cN+1)^{1/2} \psi_1 \| \| (\cN+1) \psi_2 \| \\
& \leq C N^{-1/2} e^{K|t|} \left[ \langle \psi_1, (\cN + 1) \psi_1 \rangle + \langle \psi_2, (\cN+1)^2 \psi_2 \rangle \right] 
\label{eq:C1-bd} \end{split} \ee
Taking $\psi_1 = \psi_2$ the first inequality shows that, for all $\delta > 0$, there exists a constant $C_\delta > 0$ with  
\[ \begin{split} \pm C_1 (t) &\leq \delta \, \cN^2 /N + C_\delta e^{K|t|} (\cN+1) \\
\pm [\cN, C_1 (t) ] &\leq  \delta \, \cN^2 /N + C_\delta e^{K|t|} (\cN+1) \\
\pm \dot{C}_1 (t) &\leq \delta \, \cN^2 /N + C_\delta e^{K|t|} (\cN+1)  \end{split} \] 

Finally, we study the term $C_2 (t)$. Using Cauchy-Schwarz, we obtain   
\[ \begin{split} \Big| \frac{1}{\sqrt{N}} \int dx dy &N^{3\beta} V (N^\beta (x-y)) \phn_t (y) \langle \psi, a^* (c_x^{N}) a^* (c_y^N) a(c_x^{N}) \psi \rangle \Big| \\  \leq \; &\frac{\kappa}{N} \int dx dy  N^{3\beta} V(N^\beta (x-y)) \| a (c_x^{N}) a(c_y^N) \psi \|^2 
\\ &+ \frac{1}{\kappa} \int dx dy N^{3\beta} V(N^\beta (x-y)) |\phn_t (y)|^2 \| a (c_x^{N}) \psi \|^2 
\end{split} \]
for every $\kappa > 0$. With Lemma \ref{lm:LN4-cc}, we find, for every $\delta > 0$, a constant $C_\delta > 0$ such that 
\[\begin{split}  \pm \frac{1}{\sqrt{N}} \int dx dy N^{3\beta} V (N^\beta (x-y)) & \left[ \phn_t (y) a^* (c_x^{N}) a^* (c_y^N) a(c_x^{N}) +\text{h.c.} \right]  \\ &\hspace{2cm} \leq \delta \, \cV_N + C\cN^2 / N + C_\delta e^{K|t|} (\cN +1) \end{split} \]
Similarly, one can show that, for all $\delta > 0$, there exists $C_\delta > 0$ such that 
\[ \begin{split} 
\pm \frac{1}{\sqrt{N}} \int dx dy N^{3\beta} V (N^\beta (x-y)) &\left[ \phn_t (y) a^* (c_x^{N}) a^* (c_y^N) a^* (s_x^N) + \text{h.c.} \right] \\ &\hspace{2cm} \leq \delta \cV_N + C \cN^2 / N +  C_\delta e^{K|t|} (\cN+1)  \end{split} \]
We conclude that, for all $\delta > 0$ there exists $C_\delta > 0$ with 
\[ \begin{split} \pm C_2 (t) &\leq \delta \cV_N + C \cN^2 / N + C_\delta e^{K|t|} (\cN+1) \\
\pm \left[ \cN, C_2 (t) \right] &\leq \delta \cV_N + C\cN^2/N + C_\delta e^{K|t|} (\cN+1) \\
 \pm \dot{C}_2 (t) &\leq \delta \cV_N + C\cN^2/N + C_\delta e^{K|t|} (\cN+1)  \end{split} \]

To prove (\ref{eq:cL3N2}) for the term $C_2 (t)$, we need to proceed differently. We observe that, for any $\psi_1, \psi_2 \in \cF$, 
\begin{equation}\label{eq:C2i}
\begin{split} \Big| \frac{1}{\sqrt{N}} \int dx &dy N^{3\beta} V(N^\beta (x-y)) \phn_t (y) \langle \psi_1, a^* (c_x^{N}) a^* (c_y^N) a(c_x^{N}) \psi_2 \rangle \Big|   \\  \leq \; &\left| \frac{1}{\sqrt{N}} \int dx dy N^{3\beta} V(N^\beta (x-y)) \phn_t (y) \langle \psi_1, a^* (c_x^{N}) a^*_y a(c_x^{N})\psi_2 \rangle \right| \\ &+ \left| \frac{1}{\sqrt{N}} \int dx dy N^{3\beta} V(N^\beta (x-y)) \phn_t (y) \langle \psi_1, a^* (c_x^{N}) a^* (p_y^N) a(c_x^{N})\psi_2 \rangle \right| \end{split} \end{equation}
To bound the second term, we observe that
\[ \begin{split} 
\Big| \frac{1}{\sqrt{N}} &\int dx dy N^{3\beta} V(N^\beta (x-y)) \phn_t (y) \langle \psi_1 , a^* (c_x^{N}) a^* (p_y^N) a(c_x^{N}) \psi_2 \rangle \Big| \\ \leq \; &\frac{1}{\sqrt{N}} \int dx dy N^{3\beta} V(N^\beta (x-y)) |\phn_t (y)| \| p_y^N \|_2 \| a(c_x^{N}) \psi_1 \| \|a (c_x^{N}) (\cN+1)^{1/2} \psi_2 \| \\ \leq \; &\frac{Ce^{K|t|}}{\sqrt{N}} \left[ \langle  \langle \psi_1, \cN \psi_1 \rangle + \langle \psi_2 , (\cN+1)^2 \psi_2 \rangle \right] \end{split} \]
As for the first term on the r.h.s. of (\ref{eq:C2i}),
we recall the definition (\ref{eq:deff}) of the function $g$ with $\Delta g = V$. Putting $v = \nabla g$, we find that $\nabla \cdot v = V$. Hence
\[ \begin{split}
 \frac{1}{\sqrt{N}} \int dx & dy N^{3\beta} V(N^\beta (x-y)) \phn_t (y) \langle \psi_1, a^* (c_x^{N}) a^*_y a(c_x^{N})\psi_2 \rangle \\ = \; & \frac{1}{\sqrt{N}} \int dx dy N^{2\beta} \nabla_y \cdot \left[ v(N^\beta (x-y)) \right] \phn_t (y) \langle \psi_1, a^* (c_x^{N}) a^*_y a(c_x^{N}) \psi_2 \rangle  \\ = \; & \frac{1}{\sqrt{N}} \int dx dy N^{2\beta} v (N^\beta (x-y)) \cdot \nabla \phn_t (y) \langle \psi_1, a^* (c_x^{N}) a^*_y a(c_x^{N}) \psi_2 \rangle \\ &+ \frac{1}{\sqrt{N}} \int dx dy \phn_t (y) N^{2\beta} v (N^\beta (x-y)) \cdot \langle a (c_x^{N}) \nabla_y a_y \psi_1, a(c_x^{N}) \psi_2 \rangle
\end{split} \]
Therefore,
\begin{equation}\label{eq:C2ii} \begin{split} 
\Big| \frac{1}{\sqrt{N}} &\int dx dy N^{3\beta} V(N^\beta (x-y)) \phn_t (y) \langle \psi_1, a^* (c_x^{N}) a^*_y a(c_x^{N})\psi_2 \rangle \Big| \\ \leq \; &\frac{1}{\sqrt{N}} \int dx dy N^{2\beta} |v (N^\beta (x-y)) | \| a(c_x^{N}) (\cN+1)^{1/2} \psi_2 \| \\ & \times \left[ |\nabla \phn_t (y)| \| a(c_x^{N})  (\cN+1)^{-1/2} a_y \psi_1 \| + |\phn_t (y)| \| a (c_x^{N}) (\cN+1)^{-1/2} \nabla_y a_y \psi_1 \| \right]
\end{split} 
\end{equation}
The contribution of the first term in the parenthesis can be estimated by  
\[ \begin{split} \int dx &dy N^{2\beta} |v (N^\beta (x-y)) | |\nabla \phn_t (y)| \| a(c_x^{N})  (\cN+1)^{-1/2} a_y \psi_1 \| \| a(c_x^{N}) (\cN+1)^{1/2} \psi_2 \| \\ \leq \; &C e^{K|t|} \,  N^{2\beta} \left[ \int dx dy \| a(c_x^{N})  (\cN+1)^{-1/2} a_y \psi_1 \|^2 \right]^{1/2}\\ &\hspace{3cm} \times  \left[ \int dx dy |v (N^\beta (x-y))|^2 \| a(c_x^{N}) (\cN+1)^{1/2} \psi_2 \|^2 \right]^{1/2}  \\ \leq \; &C e^{K|t|} N^{\beta/2} \| \cN^{1/2} \psi_1 \| \| (\cN+1) \psi_2 \| 
\end{split} \]
Here we used the bounds $\| v \|_\infty \leq C \| V \|_3$ and $\| v \|_2 \leq C \| V \|_{6/5}$, that can be proven with the help of the Hardy-Littlewood-Sobolev inequality. 

As for the second term in the parenthesis on the r.h.s. of (\ref{eq:C2ii}), we have
\[ \begin{split} 
\int dx dy N^{2\beta} | v (N^\beta &(x-y))| |\phn_t (y)| \| a (c_x^{N}) (\cN+1)^{-1/2} \nabla_y a_y \psi_1 \| \| a(c_x^{N}) (\cN+1)^{-1/2} \psi_2 \| \\ \leq \; & C e^{K|t|} N^{2\beta} \left[ \int dx dy \| a (c_x^{N}) (\cN+1)^{-1/2} \nabla_y a_y \psi_1 \|^2 \right]^{1/2} \\ &\hspace{2cm} \times  \left[ \int dx dy |v (N^\beta (x-y))|^2 \| a (c_x^{N}) (\cN+1)^{1/2} \psi_2 \|^2 \right]^{1/2} \\ \leq \; & C e^{K|t|} N^{\beta/2} \left[ \langle \psi_1, \cK \psi_1 \rangle + \langle \psi_2, (\cN+1)^2 \psi_2 \rangle \right] \end{split} \]
We find that
\begin{equation}\label{eq:C2iii} \begin{split} 
\Big| \frac{1}{\sqrt{N}} \int dx &dy N^{3\beta} V(N^\beta (x-y)) \phn_t (y) \langle \psi_1, a^* (c_x^{N}) a^* (c_y^N) a(c_x^{N})\psi_2 \rangle \Big| \\ &\leq C e^{K|t|}  N^{(\beta-1)/2} \left[ \langle \psi_1, (\cK + \cN + 1) \psi_1 \rangle + \langle \psi_2, (\cN+1)^2 \psi_2 \rangle \right]  \end{split} 
\end{equation}
Since the other term entering in the definition of $C_2(t)$ in (\ref{eq:LC1C2}) can be bounded similarly and the hermitian conjugates can be treated using Lemma 3.10, we conclude that
\[ \left| \langle \psi_1, C_2 (t) \psi_2 \rangle \right| \leq C e^{K|t|} N^{(\beta-1)/2} \left[ \langle \psi_1, (\cK + \cN + 1) \psi_1 \rangle + \langle \psi_2, (\cK + \cN + 1)^2 \psi_2 \rangle \right] 
\]
\end{proof}

\subsection{Analysis of $T_{N,t}^* \cG^{(4)}_N (t) T_{N,t}$}

Finally, we consider the conjugation of the quartic terms. Expanding the products, and writing all 
terms in normal order, we obtain
\begin{equation}\label{eq:G4N} \begin{split} 2T_t^* &\cG_{N}^{(4)} (t) T_t \\ = \;&\frac{1}{N} \int dx dy \, N^{3\beta} V (N^\beta (x-y)) \left[ |\langle s_x^N , c_y^N \rangle|^2  +  |\langle s_x^N , s_y^N \rangle|^2   + \langle s_y^N , s_y^N \rangle \langle s_x^N ,s_x^N \rangle \right] \\
&+ \frac{1}{N} \int dx dy \, N^{3\beta} V(N^\beta (x-y)) \left[ \langle c_y^N, s_x^N \rangle a^* (c_x^{N}) a^* (c_y^N) + \langle s_x^N,c_y^N \rangle a(c_y^N) a(c_x^{N}) \right] \\ 
&+ 2\cV_N + \cE_{4,N} (t) \end{split} 
\end{equation}
where we defined $\cE_{4,N} (t) = \cE_{4,N}^{(2)} (t) + \cE_{4,N}^{(4)} (t)$ with 
\[\begin{split} 
\cE^{(2)}_{4,N} (t) = \; &\int dx dy \, N^2 V (N (x-y)) \\ & \times \Big[ 2 \langle s_y^N, s_y^N \rangle a^* (c_x^{N})  a^* (s_x^N) + 2 \langle s_y^N, s_y^N \rangle a (s_x^N) a (c_x^{N}) 
+ 2 \langle s_y^N, s_x^N \rangle a^* (c_x^{N})  a^* (s_y^N)   \\ &\hspace{.5cm}
+ 2\langle s_x^N , s_y^N \rangle a (s_y^N) a (c_x^{N}) + 2 \langle s_y^N, s_y^N \rangle   a^* (c_x^{N}) a (c_x^{N}) 
+ 2 \langle s_y^N , s_x^N \rangle a^* (c_x^{N}) a(c_y^N) 
  \\ &\hspace{.5cm} + 2 \langle s_x^N , s_y^N \rangle  a^* (s_x^N)  a (s_y^N) 
+ 2 \langle s_y^N , s_y^N \rangle  a^* (s_x^N)  a (s_x^N) +  \langle c_y^N , s_x^N \rangle  a^* (c_x^{N})   a (s_y^N)   \\ &\hspace{.5cm} + \langle s_x^N, c_y^N \rangle a^* (s_y^N) a (c_x^{N}) 
+  \langle s_x^N , c_y^N \rangle a^* (s_y^N) a^* (s_x^N)   + \langle c_y^N , s_x^N \rangle a (s_x^N) a (s_y^N)  
  \\ &\hspace{.5cm} + \langle s_x^N , c_y^N \rangle a^* (s_x^N) a(c_y^N)  + \langle c_y^N, s_x^N \rangle  a^* (c_y^N) a (s_x^N) \Big]
\end{split} \]
and 
\begin{equation}\label{eq:decoVN} 
\cE^{(4)}_{4,N} (t) = \; \cE_{4,N}^{(4,1)} (t) + \cE_{4,N}^{(4,2)} (t) 
\end{equation}
Here, the term  
\[ \begin{split} 2N \cE_{4,N}^{(4,1)} (t) = \; &\int dx dy N^{3\beta} V(N^\beta (x-y)) \\ & \times \Big[ a^* (p_x^N) a^* (c_y^N) a(c_y^N) a(c_x^{N})  + a_x^* a^* (p_y^N) a (c_y^N) a(c_x^{N}) \\ &\hspace{4cm} + a_x^* a_y^* a (p_y^N) a(c_x^{N}) + a_x^* a_y^* a_y a(p_x^N)  \Big] 
\end{split} \]
contains the contributions arising from $a^* (c_x^{N}) a^* (c_y^N) a(c_y^N) a(c_x^{N})$, after removing $\cV_N$, while  
\begin{equation}\label{eq:E442} \begin{split}
2&N\cE_{4,N}^{(4,2)} (t) \\ = & \int dx dy \, N^{3\beta} V (N^\beta (x-y)) \Big[ a^* (c_x^{N}) a^* (c_y^N) a^* (s_y^N) a^* (s_x^N) + a^* (c_x^{N}) a^* (c_y^N) a^* (s_y^N) a (c_x^{N})  \\ & + a^* (c_x^{N}) a^* (c_y^N) a^* (s_x^N) a(c_y^N)  + a^* (c_x^{N})  a^* (s_y^N) a^* (s_x^N) a (s_y^N) + a^* (c_x^{N}) a^* (s_y^N) a (s_y^N) a (c_x^{N})  \\ &  + a^* (c_x^{N})  a^* (s_x^N)  a (s_y^N) a(c_y^N) + a^* (c_x^{N}) a (s_y^N) a (c_y^N) a (c_x^{N})  + a^* (c_y^N) a^* (s_y^N) a^* (s_x^N)  a (s_x^N)  \\ & + a^* (c_y^N) a^* (s_y^N)  a (s_x^N) a (c_x^{N}) +  a^* (c_y^N)  a^* (s_x^N) a (s_x^N) a(c_y^N)  + a^* (c_y^N)  a (s_x^N)  a (c_y^N) a (c_x^{N})  \\ &  + a^* (s_y^N)   a^* (s_x^N) a (s_x^N) a (s_y^N) + a^* (s_y^N)  a (s_x^N) a (s_y^N) a (c_x^{N})  + a^* (s_x^N) a (s_x^N) a (s_y^N) a(c_y^N)   \\ &  + a (s_x^N) a (s_y^N) a (c_y^N) a (c_x^{N}) \Big] \end{split} 
\end{equation}

\begin{prop}\label{prop:G4N}
Under the assumptions of Theorem \ref{thm:LN2}, there exists, for all $\delta > 0$, a constant $C_\delta > 0$ such that 
\[ \begin{split} \pm \cE_{4,N} (t) &\leq \delta \cV_N +  C_\delta e^{K|t|} (\cN+1)^2 / N + C_\delta N^{-1+\beta} e^{K|t|}(\cN+1) \\
\pm \left[ \cN, \cE_{4,N} (t) \right] & \leq  \delta 
\cV_N +  C_\delta e^{K|t|} (\cN+1)^2 / N + C_\delta N^{-1+\beta} e^{K|t|}(\cN+1) \\
\pm \dot{\cE}_4 (t) &\leq  \delta \cV_N +  C_\delta e^{K|t|} (\cN+1)^2 / N + C_\delta N^{-1+\beta} e^{K|t|}(\cN+1)\end{split} \]
Furthermore,  
\begin{equation}\label{eq:VN12}  \begin{split} |\langle \psi_1 , \cE_{4,N} (t) \psi_2 \rangle | &\leq C N^{-(1-\beta)/2} e^{K|t|} \Big[ \langle \psi_1, \cV_N \psi_1 \rangle +  \langle \psi_1 , ((\cN + 1)^2 / N) \psi_1 \rangle \\ &\hspace{3cm} + \langle \psi_1 , (\cN +1)\psi_1 \rangle +  \langle \psi_2, (\cK + \cN+1)^2 \psi_2 \rangle \Big] 
\end{split} 
\end{equation}
for all $\psi_1, \psi_2 \in \cF$. 
\end{prop}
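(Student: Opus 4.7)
\medskip

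\noindent\textit{Proof plan.} The strategy is to treat the three families $\cE_{4,N}^{(2)}$, $\cE_{4,N}^{(4,1)}$ and $\cE_{4,N}^{(4,2)}$ separately, using the bounds on $\sinh_{k_{N,t}}$, $p_{N,t}$, $r_{N,t}$ from Section~\ref{s:kernel} together with Lemmas \ref{lm:L2hat}, \ref{lm:LN4}, \ref{lm:LN4-cc} and the integration-by-parts trick already used in the proof of Proposition \ref{prop:cL3N}. All time derivatives $\dot{s}_{N,t}$, $\dot{p}_{N,t}$, $\dot{r}_{N,t}$ and commutators with $\cN$ behave like the undifferentiated kernels, so the three bounds for $\cE_{4,N}$, $[\cN,\cE_{4,N}]$ and $\dot{\cE}_{4,N}$ will all follow from the same mechanism and it suffices to bound $\cE_{4,N}$ itself.

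For $\cE_{4,N}^{(2)}(t)$ each summand is of the form $N^{-1}\int dx dy\, N^{3\beta}V(N^\beta(x-y))\, F(x,y)\, a^\sharp(j_{1,x/y})\,a^\sharp(j_{2,x/y})$, where $F(x,y)$ is one of the scalar traces $\langle s_y^N,s_y^N\rangle$, $\langle s_x^N,s_y^N\rangle$, $\langle c_y^N,s_x^N\rangle$. Each such $F$ is uniformly bounded by $Ce^{K|t|}$ (it is a matrix element of products of Hilbert--Schmidt operators whose norms are controlled in Section~\ref{s:kernel}). Putting $F$ aside, what remains has exactly the structure of $A_{2,1},A_{2,2},A_{2,3}$ (with $\ph_1(x)\ph_2(x)$ replaced by the smooth factor $1$), so Lemma \ref{lm:L2hat} gives $\pm\cE_{4,N}^{(2)}(t)\le C N^{-1} e^{K|t|}(\cN+1)$, which is stronger than the required bound.

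For $\cE_{4,N}^{(4,1)}(t)$ each of the four terms contains exactly one factor involving $p_x^N$ or $p_y^N$. Splitting this factor off via Cauchy--Schwarz, I would bound each term by a combination of
\[
\kappa\,\frac{1}{N}\!\int\! dxdy\, N^{3\beta}V(N^\beta(x-y))\|a_y a(c_x^N)\psi\|^2
\;+\;\kappa^{-1}\frac{1}{N}\!\int\! dxdy\, N^{3\beta}V(N^\beta(x-y))\|a^\sharp(p_y^N)a(c_x^N)\psi\|^2,
\]
and analogous expressions. The first integral reproduces $\cV_N$ up to the application of Lemma \ref{lm:LN4-cc} (which controls the quartic term with four $c^N$'s by $\cV_N+(\cN+1)^2/N$); the second, via Lemma \ref{lm:LN4}, is bounded by $\|p_{N,t}\|_2^2 \|V\|_1\,(\cN+1)^2/N\le Ce^{K|t|}(\cN+1)^2/N$. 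Choosing $\kappa$ small proves the operator bound with an arbitrary $\delta\cV_N$. The same splitting, applied in a mixed way to two vectors $\psi_1,\psi_2$, yields the first contribution to \eqref{eq:VN12}.

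The main work is the quartic part $\cE_{4,N}^{(4,2)}(t)$, since most of its summands still contain two creation operators (e.g.\ $a^*(c_x^N)a^*(c_y^N)a^*(s_y^N)a^*(s_x^N)$) and cannot be bounded by a naive Cauchy--Schwarz of $\cV_N$ against itself. For the operator bounds I would again Cauchy--Schwarz each term into the product of
\[
\frac{\kappa}{N}\!\int\! dxdy\, N^{3\beta}V(N^\beta(x-y))\,\|a^\sharp(h_{1,x})a^\sharp(h_{2,y})\psi\|^2
\]
(bounded by $\delta\cV_N+C_\delta(\cN+1)^2/N$ using Lemma \ref{lm:LN4-cc} when $h_i=c^N$, and by $Ce^{K|t|}(\cN+1)^2/N$ when at least one $h_i=s^N$, $p^N$, via Lemma \ref{lm:LN4} and the $L^2$--bounds) and a second factor of the same form with $\kappa^{-1}$; terms where a bare $a^*_y$ or $a_y$ appears contribute $N^{-1+\beta}e^{K|t|}(\cN+1)$ after the $L^\infty$ bound $\int dy\, N^{3\beta}V(N^\beta(x-y))\le C N^{3\beta-3}$ is extracted with suitable H\"older. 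The hardest piece is establishing the sharp estimate \eqref{eq:VN12} with the gain $N^{-(1-\beta)/2}$: here I would borrow the integration-by-parts argument from the bound on $C_2(t)$ in Proposition \ref{prop:cL3N}. Namely, I write $N^{3\beta}V(N^\beta(x-y)) = N^{2\beta}\nabla_y\!\cdot\! v(N^\beta(x-y))$ with $v=\nabla g$ and $\Delta g = V$, move the $y$-derivative onto either a smooth kernel ($\nabla_y s_{N,t}(z,y)$, $\nabla_y c_{N,t}(z,y)$, or $\nabla\phn_t(y)$) or onto the bare operator $a_y$, producing either a factor $\cK^{1/2}$ on $\psi_2$ or smooth symbols estimated in $L^2$ by $\|v\|_2\le C\|V\|_{6/5}$. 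The resulting loss $N^{\beta/2}$ compared to the trivial Cauchy--Schwarz is compensated by the gain $N^{-1}$ in front, giving the prefactor $N^{(\beta-1)/2}$. This is the step I expect to require the most care, since one has to keep track of which operator in the quartic expression absorbs the derivative without producing unbounded commutators; the presence of three $a^*$'s in some summands forces the derivative to land on the smooth $\phn_t$ or on the kernels $s_{N,t}, c_{N,t}$ rather than on $a_y$, and the desired bound on $\cV_N$ in $\psi_1$ emerges by Cauchy--Schwarz only once the integration-by-parts has moved the singularity away.
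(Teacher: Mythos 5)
The proposal follows the paper's decomposition into $\cE_{4,N}^{(2)}$, $\cE_{4,N}^{(4,1)}$, $\cE_{4,N}^{(4,2)}$ and the use of Lemmas~\ref{lm:L2hat}, \ref{lm:LN4}, \ref{lm:LN4-cc}, but there are two genuine gaps.

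First, the treatment of $\cE_{4,N}^{(2)}(t)$ rests on the assertion that the scalar factors $F(x,y)$ --- in particular $\langle c_y^N, s_x^N\rangle$ --- are uniformly bounded by $Ce^{K|t|}$. That is false. Decompose $\langle c_y^N, s_x^N\rangle = \bar k_{N,t}(y,x) + \bar r_{N,t}(y,x) + \langle p_y^N, s_x^N\rangle$: the last two pieces are indeed uniformly bounded (Lemma~\ref{l:BoundsK}(iii)), but $k_{N,t}(x,y)$ is not --- by Lemma~\ref{lm:propomega} it satisfies only $|k_{N,t}(x,y)|\leq C/|x-y|$, so pointwise it grows like $N^\beta$ on the scale of the interaction. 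The paper therefore splits $\cE_{4,N}^{(2)}$ further: the contributions proportional to $k_{N,t}(x,y)$ are estimated by absorbing the $|x-y|^{-1}$ singularity into the potential, which costs a factor $N^\beta$ and produces the $C_\delta N^{-1+\beta}e^{K|t|}(\cN+1)$ term that appears explicitly in the statement. Your claimed $CN^{-1}e^{K|t|}(\cN+1)$ bound for $\cE_{4,N}^{(2)}$ is therefore too strong and the mechanism that generates the $N^{-1+\beta}$ error is missing from your argument (you instead attribute it, incorrectly, to some bare-operator contribution in the quartic part and to an incorrect estimate $\int dy\, N^{3\beta}V(N^\beta(x-y))\leq CN^{3\beta-3}$, which should be $O(1)$).

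Second, for the bilinear-form estimate \eqref{eq:VN12} on $\cE_{4,N}^{(4,2)}$, you propose to revive the integration-by-parts argument used for the cubic term $C_2(t)$. The paper does not do this, and the reasoning that motivates your choice is off: the quartic terms already produce the prefactor $N^{-1/2}$ by the unbalanced Cauchy--Schwarz ($\kappa=N^{-1/2}$, $\kappa^{-1}=N^{1/2}$) together with the $N^{-1}$ factor in front. The actual tool the paper needs here --- and which your proof never invokes --- is Lemma~\ref{lm:VNK2}: after Cauchy--Schwarz, one of the two factors is $\langle\psi_2,\cV_N\psi_2\rangle$, and that has to be converted into $\langle\psi_2,(\cK+\cN+1)^2\psi_2\rangle$ so that the final bound matches the hypothesis $\langle\xi_N,\cK^2\xi_N\rangle\leq C$. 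Without Lemma~\ref{lm:VNK2} the estimate leaves a $\cV_N$ on $\psi_2$, which is not controllable by the assumptions. The $N^{-(1-\beta)/2}$ prefactor in \eqref{eq:VN12} is not a sharp rate that needs to be extracted from the quartic part; it is simply a crude common bound for $\max(N^{-(1-\beta)}, N^{-1/2})$ across all pieces.
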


\begin{proof}
We start with the terms in $\cE_{4,N}^{(2)} (t)$. We write
\[ \cE_{4,N}^{(2)} (t) = \cE_{4,N}^{(2,1)} (t) + \cE_{4,N}^{(2,2)} (t) \]
with 
\[ \begin{split} \cE_{4,N}^{(2,1)} (t) = &\frac{1}{N} \int dx dy \, N^{3\beta} V (N^\beta (x-y)) \\ &\times \Big[ 2 \langle s_y^N, s_y^N \rangle a^* (c_x^{N})  a^* (s_x^N) + 2 \langle s_y^N, s_y^N \rangle a (s_x^N) a (c_x^{N}) 
+ 2 \langle s_y^N, s_x^N \rangle a^* (c_x^{N})  a^* (s_y^N)   \\ &\hspace{.5cm}
+ 2\langle s_x^N , s_y^N \rangle a (s_y^N) a (c_x^{N}) + 2 \langle s_y^N, s_y^N \rangle   a^* (c_x^{N}) a (c_x^{N}) 
+ 2 \langle s_y^N , s_x^N \rangle a^* (c_x^{N}) a(c_y^N) 
  \\ &\hspace{.5cm} + 2 \langle s_x^N , s_y^N \rangle  a^* (s_x^N)  a (s_y^N) + 2 \langle s_y^N , s_y^N \rangle  a^* (s_x^N)  a (s_x^N) \Big] \\
\cE_{4,N}^{(2,2)} (t) = &\frac{1}{N} \int dx dy \, N^{3\beta} V (N^\beta (x-y))  \Big[ \langle c_y^N , s_x^N \rangle  a^* (c_x^{N})   a (s_y^N)   \\ &\hspace{.5cm} + \langle s_x^N, c_y^N \rangle a^* (s_y^N) a (c_x^{N}) 
+  \langle s_x^N , c_y^N \rangle a^* (s_y^N) a^* (s_x^N)   + \langle c_y^N , s_x^N \rangle a (s_x^N) a (s_y^N)  
  \\ &\hspace{.5cm} + \langle s_x^N , c_y^N \rangle a^* (s_x^N) a(c_y^N)  + \langle c_y^N, s_x^N \rangle  a^* (c_y^N) a (s_x^N) \Big]
\end{split} 
\]
Using Lemma \ref{lm:L2hat} and the bounds for the Hilbert-Schmidt operator $\sinh_{k_{N,t}}$, we easily find that 
\[ \begin{split} \left| \langle \psi_1, \cE_{4,N}^{(2,1)} (t) \psi_2 \rangle \right| &\leq C N^{-1} e^{K|t|} \| (\cN+1)^{1/2} \psi_1 \| \| (\cN+1)^{1/2} \psi_2 \| \\ &\leq C N^{-1} e^{K|t|} \left[ \langle \psi_1, (\cN+1) \psi_1 \rangle + \langle \psi_2 , (\cN+1) \psi_2 \rangle  \right]  \end{split} \]
Taking $\psi_1 = \psi_2$, we immediately obtain  
\[ \begin{split} \pm \cE_{4,N}^{(2,1)} (t) &\leq C N^{-1} e^{K|t|} (\cN +1)  \\ 
\pm \left[ \cN, \cE_{4,N}^{(2,1)} (t) \right] &\leq C N^{-1} e^{K|t|} (\cN+1) \\
\pm \dot{\cE}_4^{(2,1)} (t) &\leq C N^{-1} e^{K|t|} (\cN+1) \end{split} \]

Let us now consider the terms in $\cE_{4,N}^{(2,2)} (t)$. They all contain the inner product $\langle c_x^{N}, s_y^N \rangle = k_{N,t} (x,y) + r_{N,t} (x,y) + \langle p_x^N, s_y^N \rangle$. Since $|r_{N,t}(y,x)|$ and $|\langle p_x^N, s_y^N \rangle|$ are bounded, uniformly in $N$, the contributions arising from these terms can be dealt with, as we did above, for the term $\cE_{4,N}^{(2,1)} (t)$. For the contributions proportional to $k_{N,t} (x,y)$, we use the bound $|k_{N,t} (x,y)| \leq C |x-y|^{-1}$. Paying the price of an additional factor $N^\beta$, we can deal with these terms as we did above, just replacing the potential $V(x)$ with $V(x)/|x|$ (which is still integrable, by assumption). We conclude that
\[ \begin{split} |\langle \psi_1, \cE_{4,N}^{(2,2)} (t) \psi_2 \rangle | &\leq C N^{-(1-\beta)} e^{K|t|} \| (\cN+1)^{1/2} \psi_1 \| \| (\cN+1)^{1/2} \psi_2 \| \\ &\leq C N^{-(1-\beta)} e^{K|t|} \left[ \langle \psi_1, (\cN+1) \psi_1 \rangle + \langle \psi_2 , (\cN+1) \psi_2 \rangle  \right]  \end{split} \]
As usual, we also obtain
\[ \begin{split} \pm \cE_{4,N}^{(2,2)} (t) &\leq C N^{-(1-\beta)} e^{K|t|}  (\cN +1) \\
\pm \left[ \cN , \cE_{4,N}^{(2,2)} (t) \right] &\leq C N^{-(1-\beta)} e^{K|t|} (\cN +1) \\
\pm \dot{\cE}_4^{(2,2)} (t) &\leq C N^{-(1-\beta)} e^{K|t|} (\cN +1) \end{split} \]

Finally, we consider the quartic terms in $\cE_{4,N}^{(4)} (t)$. Recall the decomposition (\ref{eq:decoVN}). {F}rom Lemma \ref{lm:LN4-cc}, we find for every $\delta > 0$ a constant $C_\delta > 0$ such that
\[ \begin{split} 
\pm \cE_{4,N}^{(4,1)} (t) &\leq \delta \cV_N  + C_\delta e^{K|t|} (\cN+1)^2 / N \\
\pm \dot{\cE}_4^{(4,1)} (t) &\leq \delta  \cV_N  + C_\delta e^{K|t|} (\cN+1)^2 / N \end{split} \]
(notice that $[\cN, \cE_{4,N}^{(4,1)}] =0$). In order to show that $\cE_{4,N}^{(4)}$ satisfies the estimate (\ref{eq:VN12}), we remark that, from Lemma~\ref{lm:VNK2},
\[ \begin{split} |\langle \psi_1, \cV_N \psi_2 \rangle | &\leq \langle \psi_1, \cV_N \psi_1 \rangle^{1/2} \langle \psi_2, \cV_N \psi_2 \rangle^{1/2} \\
&\leq \frac{1}{\sqrt{N}} \langle \psi_1, \cV_N \psi_1 \rangle + \frac{1}{\sqrt{N}} \int dx dy N^{3\beta} V(N^\beta (x-y))  \langle \psi_2 , a_x^* a_y^* a_y a_x  \psi_2 \rangle \\
&\leq \frac{1}{\sqrt{N}} \left[ \langle \psi_1, \cV_N \psi_1 \rangle +  \langle \psi_2, (\cK + \cN+1)^2 \psi_2 \rangle \right] 
\end{split} \]  
Analogously, with Lemma \ref{lm:LN4-cc}, we find 
 \[ \begin{split} \Big| \frac{1}{N} &\int dx dy N^{3\beta} V(N^\beta (x-y)) \langle \psi_1, a^* (c_x^{N}) a^* (c_y^N) a (c_y^N) a (c_x^{N}) \psi_2 \rangle \Big| 
\\ &\leq \left[ \langle \psi_1, \cV_N \psi_1 \rangle + \frac{Ce^{K|t|}}{N} \| (\cN+1) \psi_1 \|^2 \right]^{\frac{1}{2}}  \left[ \langle \psi_2, \cV_N \psi_2 \rangle + \frac{Ce^{K|t|}}{N} \| (\cN+1) \psi_2 \|^2 \right]^{\frac{1}{2}} \\
&\leq  \frac{Ce^{K|t|}}{\sqrt{N}} \Big[ \langle \psi_1, \cV_N \psi_1 \rangle+ \frac{1}{N} \| (\cN+1) \psi_1 \|^2 + \langle \psi_2, (\cK+\cN+1)^2 \psi_2 \rangle \Big]
\end{split} \]
The last two equations show that
\[ | \langle \psi_1, \cE_{4,N}^{(4,1)} (t) \psi_2 \rangle| \leq C N^{-1/2} e^{K|t|} \Big[ \langle \psi_1, \cV_N \psi_1 \rangle+ \frac{1}{N} \| (\cN+1) \psi_1 \|^2 + \langle \psi_2, (\cK+\cN+1)^2 \psi_2 \rangle \Big]  \]

We switch now to $\cE_{4,N}^{(4,2)}$, defined in (\ref{eq:E442}), which can be further decomposed as 
\[
\cE_{4,N}^{(4,2)} (t) = A_1 (t) + A_2 (t) + A_3 (t)  \]
with
\[ \begin{split}  A_1 (t) = \; & \frac{1}{N} \int dx dy \, N^{3\beta} V (N^\beta (x-y)) a^* (s_y^N) a^* (s_x^N) a (s_x^N) a (s_y^N) \\ \; &+ \frac{1}{N} \int dx dy \, N^{3\beta} V (N^\beta (x-y)) \\ &\times \Big[ a^* (c_x^{N})  a^* (s_y^N) a^* (s_x^N) a (s_y^N) + a^* (c_y^N) a^* (s_y^N) a^* (s_x^N)  a (s_x^N) \\ &\hspace{1cm} + a^* (s_y^N)  a (s_x^N) a (s_y^N) a (c_x^{N})  + a^* (s_x^N) a (s_x^N) a (s_y^N) a(c_y^N) \Big] \\ &+ \frac{1}{N} \int dx dy \, N^{3\beta} V (N^\beta (x-y)) \\ &\times \Big[ a^* (c_x^{N}) a^* (s_y^N) a (s_y^N) a (c_x^{N}) + a^* (c_x^{N})  a^* (s_x^N)  a (s_y^N) a(c_y^N)  \\ &\hspace{1cm} + a^* (c_y^N) a^* (s_y^N)  a (s_x^N) a (c_x^{N}) +  a^* (c_y^N)  a^* (s_x^N) a (s_x^N) a(c_y^N) \Big] , 
\end{split} \] 
and 
\[ \begin{split} A_2 (t) = \; & \frac{1}{N} \int dx dy \, N^{3\beta} V (N^\beta (x-y)) \\ & \hspace{1cm} \times \Big[ a^* (c_x^{N}) a^* (c_y^N) a^* (s_y^N) a^* (s_x^N)  + a (s_x^N) a (s_y^N) a (c_y^N) a (c_x^{N}) \Big] \\
 A_3 (t) = \; & \frac{1}{N} \int dx dy \, N^{3\beta} V (N^\beta (x-y)) \\ &\times \Big[ a^* (c_x^{N}) a^* (c_y^N) a^* (s_y^N) a (c_x^{N})  + a^* (c_x^{N}) a^* (c_y^N) a^* (s_x^N) a(c_y^N) \\ &\hspace{1cm} + a^* (c_x^{N}) a (s_y^N) a (c_y^N) a (c_x^{N})  + a^* (c_y^N)  a (s_x^N)  a (c_y^N) a (c_x^{N}) \Big] \end{split} 
\]
In $A_1 (t)$, we collected terms with $0$, $1$, and some of those with $2$ kernels $\cosh_{k_{N,t}}$ (namely the terms with $2$ kernels that can be separated by Cauchy-Schwarz). In $A_2 (t)$ we collected all other terms with $2$ factors $\cosh_{k_{N,t}}$. In $A_3 (t)$, we included all terms with $3$ factors $\cosh_{k_{N,t}}$. Terms in $A_1 (t)$ can be controlled with Lemma \ref{lm:LN4}. We find 
\[ \begin{split} |\langle \psi_1, A_1 (t) \psi_2 \rangle | &\leq C N^{-1} e^{K|t|} \| (\cN+1) \psi_1 \|  \| (\cN+1) \psi_2 \| \\ &\leq C N^{-1/2} e^{K|t|} \left[ \frac{1}{N} \| (\cN+1) \psi_1 \|^2 + \| (\cN+1) \psi_2 \|^2 \right] \end{split} \]
Taking $\psi_1 = \psi_2$, we get 
\[ \begin{split}  \pm A_1 (t) &\leq C e^{K|t|} (\cN+1)^2 / N 
\\ \pm \left[ \cN, A_1 (t) \right] &\leq Ce^{K|t|} (\cN+1)^2 / N \\
\pm \dot{A}_1 (t) &\leq C e^{K|t|} (\cN+1)^2 / N 
\end{split}
\]

Next, we consider terms in $A_2 (t)$. By Cauchy-Schwarz and Lemma \ref{lm:LN4-cc}, we have
\begin{equation}\label{eq:E42b} \begin{split} 
\Big| \frac{1}{N} \int dx &dy N^{3\beta} V(N^\beta (x-y)) \langle \psi_1, a^* (c_x^{N}) a^* (c_y^N) a^* (s_y^N) a^* (s_x^N) \psi_2 \rangle \Big| \\ \leq\; &\frac{\delta}{N} \int dx dy N^{3\beta} V(N^\beta (x-y)) \| a (c_x^{N}) a(c_y^N) \psi_1 \|^2 \\ &+ \frac{1}{\delta N} \int dx dy N^{3\beta} V(N^\beta (x-y)) \| a^* (s_x^N) a^* (s_y^N) \psi_2 \|^2 \\
\leq \;& 2\delta \langle\psi_1, \cV_N \psi_1 \rangle + \frac{C \delta e^{K|t|}}{N} \| (\cN+1) \psi_1 \|^2 + 
\frac{C e^{K|t|}}{\delta N} \| (\cN+1) \psi_2 \|^2 \end{split} \end{equation}
Hence, choosing $\delta = N^{-1/2}$, we find 
\[ \begin{split} \Big| \frac{1}{N} \int dx &dy N^{3\beta} V(N^\beta (x-y)) \langle \psi_1, a^* (c_x^{N}) a^* (c_y^N) a^* (s_y^N) a^* (s_x^N) \psi_2 \rangle \Big| \\ \leq \; &\frac{Ce^{K|t|}}{\sqrt{N}} \left[ \langle \psi_1, \cV_N \psi_1 \rangle + \frac{1}{N} \| (\cN+1) \psi_1 \|^2 + \langle \psi_2 , ( \cN + 1)^2 \psi_2 \rangle \right] \end{split} \]
The hermitian conjugated term can be bounded using Lemma \ref{lm:VNK2}; we get 
\[ \begin{split} \Big| \frac{1}{N} \int dx &dy N^{3\beta} V(N^\beta (x-y)) \langle \psi_1,  a (s_x^N) a (s_y^N) a (c_y^N) a (c_x^{N}) \psi_2 \rangle \Big| \\ \leq \; &\frac{Ce^{K|t|}}{\sqrt{N}} \left[ \frac{1}{N} \| (\cN+1) \psi_1 \|^2 + \langle \psi_2 , (\cK + \cN + 1)^2 \psi_2 \rangle \right] \end{split} \]
Hence, we find
\[\begin{split} |\langle \psi_1, A_2 (t) \psi_2 \rangle | &\leq \frac{Ce^{K|t|}}{\sqrt{N}} \Big[ \langle \psi_1, \cV_N \psi_1 \rangle + \frac{1}{N} \| (\cN+1) \psi_1 \|^2   + \langle \psi_2 , (\cK + \cN + 1)^2 \psi_2 \rangle \Big] \end{split} \]
Going back to (\ref{eq:E42b}) and choosing $\psi_1 = \psi_2$, we obtain that for every $\delta > 0$, there exists $C_\delta > 0$ with 
\[ \begin{split} 
 \pm A_2 (t) \leq & \delta \cV_N + C_\delta e^{K|t|} (\cN+1)^2 / N \\
\pm \left[ \cN , A_2 (t) \right] \leq &  \delta 
\cV_N + C_\delta e^{K|t|} (\cN+1)^2 / N  \\
\pm \dot{A}_2 (t) \leq & \delta \cV_N + C_\delta e^{K|t|} (\cN+1)^2 / N  \end{split} \]

Finally, we bound the term $A_3$. To this end, we observe that 
\[ \begin{split} 
\Big| \frac{1}{N} \int dx dy &N^{3\beta} V(N^\beta (x-y)) \langle \psi_1, a^* (c_x^{N}) a^* (c_y^N) a^* (s_y^N)  a (c_x^{N}) \psi_2 \rangle \Big| 
\\ \leq \; &\frac{\delta}{N} \int dx dy N^{3\beta} V(N^\beta (x-y)) \| a (c_x^{N}) a (c_y^N) \psi_1 \|^2 \\ &+ \frac{1}{\delta N} \int dx dy N^{3\beta} V(N^\beta (x-y)) \| a^* (s_y^N) a (c_x^{N}) \psi_2 \|^2 
\\ \leq \; &\frac{2\delta}{N} \int dx dy N^{3\beta} V(N^\beta (x-y)) \| a_x a_y \psi_1 \|^2 \\ &+ \frac{C\delta e^{K|t|}}{N} \| (\cN+1) \psi_1 \|^2 + \frac{Ce^{K|t|}}{\delta N} \| (\cN+1) \psi_2 \|^2 
\end{split} \]
Similarly, the complex conjugated term can be controlled by 
\[ \begin{split} 
\Big| \frac{1}{N} \int dx dy &N^{3\beta} V(N^\beta (x-y)) \langle \psi_1, a^* (c_x^{N}) a (s_y^N) a (c_x^{N})  a (c_y^N) \psi_2 \rangle \Big| \\  \leq \; &\delta \langle \psi_2, \cV_N \psi_2 \rangle + \frac{C \delta e^{K|t|}}{N} \| (\cN+1) \psi_2 \|^2 + \frac{C e^{K|t|}}{\delta N} \| (\cN+1) \psi_1 \|^2  \end{split} \]
Hence, we obtain that, for every $\delta_1, \delta_2 > 0$, 
\[ \begin{split} 
\Big| \langle \psi_1, A_3 (t) \psi_2 \rangle \Big| \leq \; &\frac{\delta_1}{N} \int dx dy N^{3\beta} V(N^\beta (x-y)) \| a_x a_y \psi_1 \|^2 \\ &+ \frac{\delta_2}{N} \int dx dy N^{3\beta} V(N^\beta (x-y)) \| a_x a_y \psi_2 \|^2 \\ &+ \frac{Ce^{K|t|}}{N} \left( \delta_2 + \frac{1}{\delta_1} \right) \| (\cN+1) \psi_2 \|^2 + \frac{Ce^{K|t|}}{N} \left( \delta_1 + \frac{1}{\delta_2} \right) \| (\cN+1) \psi_1 \|^2 \end{split}  \]
Choosing $\delta_1 = N^{-1/2}$ and $\delta_2 = N^{1/2}$, we find
\[ \begin{split} 
\Big| \langle \psi_1, A_3 (t) \psi_2 \rangle \Big| \leq \frac{Ce^{K|t|}}{\sqrt{N}} \left[ \langle \psi_1, \cV_N \psi_1 \rangle + \frac{1}{N} \langle \psi_1, (\cN+1)^2 \psi_1 \rangle + \langle \psi_2, (\cK + \cN+1)^2 \psi_2 \rangle \right] \end{split}  \]
If instead we set $\delta_1 = \delta_2 = \delta \leq 1$, we obtain, with $\psi_1 = \psi_2$, that for every $\delta > 0$ there exists $C_\delta > 0$ with 
\[ \begin{split} 
\pm A_3 (t) \leq \; &\delta\cV_N + C_\delta e^{K|t|} (\cN+1)^2 / N  \\
\pm \left[ \cN , A_3 (t) \right] \leq & \delta \cV_N + C_\delta e^{K|t|} (\cN+1)^2 / N  \\
\pm \dot{A}_3 (t) \leq &\delta  \cV_N + C_\delta e^{K|t|} (\cN+1)^2 / N \end{split} \]
\end{proof}

\subsection{Analysis of $(i \dpr_t T_{N,t}^*) T_{N,t}$}

We set
\[
B (t) = \frac 12 \int \di x \di y \Big(k_{N,t}(x,y) a^*_x a^*_y - \bar{k}_{N,t}(x,y) a_x a_y \Big)
\]
and
\[
\dot B(t) = \frac 12 \int \di x \di y \Big( \dot k_{N,t}(x,y) a^*_x a^*_y - \bar{ \dot k}_{N,t}(x,y) a_x a_y \Big)
\]
Then $T_{N,t} = \exp(B)$. Defining $\text{ad}_B^0(C)=C$ and $\text{ad}_B^{n+1}(C)=[B, \text{ad}_B^n(C)]$ we have  
\be \label{TTstar}
(i \dpr_t T^*_{N,t}) T_{N,t} = -\int_0^1 d\l e^{-\l B(t)} \dot{B}(t)e^{\l B(t)} = \sum_{n \geq 0} \frac{(-1)^{n+1}}{(n+1)!} \text{ad}_B^n(\dot B)
\ee
on the domain $\DD(\NN)$ of the number of particle operator (this can be shown as in \cite[Lemma 2.3]{BdOS12}).  
In the next lemma, we collect some bounds for the operators $\text{ad}_B^n (\dot{B})$, whose proof 
can be found in \cite[Lemma 6.9]{BdOS12}.
\begin{lemma} \label{lemma:fni}
For each $n \in \NNN$ there exist $f_{n,1}, f_{n,2} \in L^2(\RRR^3 \times \RRR^3)$ such that
\begin{align} \label{TTstar2} 
\text{ad}_B^n(\dot B) &= \frac 12 \int \di x \di y \Big(f_{n,1}(x,y)a^*_y a^*_x + f_{n,2}(x,y)a_x a_y \Big)\,, \qquad  \text{ for all $n$ even and} \non \\
\text{ad}_B^n(\dot B) &= \frac 12 \int \di x \di y \Big(f_{n,1}(x,y)a^*_x a_y + f_{n,2}(x,y)a_x a^*_y \Big)\,, \qquad \text{for all odd $n$}\,,
\end{align}
where
\be \label{fni}
\|f_{n,i}\|_{L^2 \times L^2} \leq 2^n \|k_{N,t}\|_2^n\, \|\dot k_{N,t}\|_2
\ee
for all $n \geq 0$ and $i=1,2$,
\be  \label{fnidot}
 \|\dot f_{n,i}\|_2 \leq \begin{cases}
\| \ddot k_{N,t} \|_2 & \text{if $n=0$} \\
4^n \|k_{N,t} \|_2^{n-1} \Big(\|\ddot k_{N,t}\|_2 \|k_{N,t}\|_2 + \|\dot k_{N,t}\|_2^2 \Big) & \text{if $n \geq 1$}\,,
\end{cases}
\ee
and
\be \label{fnidot2}
\begin{split} 
\int \di x |f_{n,i}(x,x)| &\leq 2^n \|k_{N,t} \|_2^n \|\dot k_{N,t}\|_2\,, \\  \int \di x |\dot f_{n,i}(x,x)| &\leq 4^n \|k_{N,t} \|_2^{n-1} \Big( \|\dot k_{N,t}\|_2^2 + \|\ddot k_{N,t} \|_2 \|k_{N,t}\| \Big) \end{split} 
\ee
for all $n \geq 1$.
\end{lemma}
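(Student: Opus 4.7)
The plan is to prove the structural decomposition \eqref{TTstar2} and the bounds \eqref{fni}, \eqref{fnidot}, \eqref{fnidot2} together by induction on $n$. For the base case $n=0$ we have $\text{ad}_B^0(\dot B) = \dot B$, which already has the even form with $f_{0,1} = \dot k_{N,t}$ and $f_{0,2} = -\bar{\dot k}_{N,t}$, so $\|f_{0,i}\|_2 = \|\dot k_{N,t}\|_2$ verifies \eqref{fni}. For the induction step, I would compute $[B(t), \text{ad}_B^n(\dot B)]$ directly using the canonical commutation relations. The key algebraic observation is that $B(t)$ contains only ``purely creating'' terms $a^* a^*$ and ``purely annihilating'' terms $aa$, so that commutation with an even-type operator $\int(h_1\, a^*a^* + h_2\, aa)$ produces an odd-type operator $\int(g_1\, a^*a + g_2\, aa^*)$ and vice versa. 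The parity alternation in \eqref{TTstar2} follows.

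To extract the bounds, one uses identities such as
\[
[a_x^* a_y^*,\, a_u a_v] = -\delta(x-u)\, a_v a_y^* - \delta(x-v)\, a_u a_y^* - \delta(y-u)\, a_x^* a_v - \delta(y-v)\, a_x^* a_u
\]
and its adjoint. Integration against kernels then yields that each new kernel $f_{n+1,i}$ is a sum of a bounded number of compositions of $k_{N,t}$ or $\bar k_{N,t}$ with $f_{n,j}$, e.g.\ $\int dz\, k_{N,t}(x,z)\, f_{n,j}(z,y)$. The Hilbert--Schmidt composition inequality $\|A\circ B\|_{L^2\times L^2} \leq \|A\|_2 \|B\|_2$ applied to each such composition, combined with the fact that there are at most four such terms per iteration (which, together with the factor $1/2$ in the definition of $B$, contributes a factor $2$ per step), gives $\|f_{n+1,i}\|_2 \leq 2 \|k_{N,t}\|_2 \|f_{n,i}\|_2$ and hence \eqref{fni}.

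The diagonal bound in \eqref{fnidot2} is obtained by observing that $\int dx\, f_{n,i}(x,x)$ equals, up to signs, a trace of a composition of Hilbert--Schmidt operators; applying $|\operatorname{tr}(A_1 \cdots A_{n+1})| \leq \|A_1\|_{\rm HS}\cdots \|A_{n+1}\|_{\rm HS}$ with one factor set to the HS operator with kernel $\dot k_{N,t}$ and the remaining $n$ factors equal to the HS operator with kernel $k_{N,t}$ (or its adjoint) yields precisely $2^n\|k_{N,t}\|_2^n\|\dot k_{N,t}\|_2$.

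For the time-derivative estimates, I would apply the Leibniz rule to the same explicit kernel compositions. Since $f_{n,i}$ is a sum of compositions built from $n$ factors of $k_{N,t}$ (or $\bar k_{N,t}$) and exactly one factor of $\dot k_{N,t}$ (or $\bar{\dot k}_{N,t}$), differentiating in $t$ produces either a term in which one of the $k_{N,t}$ factors is replaced by $\dot k_{N,t}$, bounded by $\|k_{N,t}\|_2^{n-1}\|\dot k_{N,t}\|_2^2$, or a term in which the $\dot k_{N,t}$ factor is replaced by $\ddot k_{N,t}$, bounded by $\|k_{N,t}\|_2^n \|\ddot k_{N,t}\|_2$. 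There are at most $n+1$ such contributions, and collecting factors yields the constant $4^n$ in \eqref{fnidot}. The diagonal derivative bound is analogous, using the trace-class estimate as above. The main obstacle is the bookkeeping: tracking precisely which of the four commutator contributions in the CCR identity feeds into $f_{n+1,1}$ versus $f_{n+1,2}$, and verifying that each term can indeed be written as an iterated Hilbert--Schmidt composition so that the operator-norm inequalities apply cleanly. This bookkeeping is carried out in detail in \cite[Lemma 6.9]{BdOS12}, and the argument transfers verbatim here, the only required input being that $k_{N,t}$, $\dot k_{N,t}$, $\ddot k_{N,t} \in L^2(\bR^3 \times \bR^3)$ with norms bounded uniformly in $N$ (established in Section \ref{s:kernel}).
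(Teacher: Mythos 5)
Your proposal matches the paper's approach: the paper does not prove this lemma itself but cites \cite[Lemma~6.9]{BdOS12}, and your outline (induction on $n$, parity alternation from the CCR, Hilbert--Schmidt composition bounds for \eqref{fni}, trace estimates for the diagonal, Leibniz rule for the time derivatives) is precisely the argument given there. The only imprecision is a minor normal-ordering slip in your displayed commutator identity (one should have $a_y^* a_v$ and $a_u a_x^*$, not $a_v a_y^*$ and $a_x^* a_u$), but since the lemma's odd-$n$ form already allows both $a^*a$ and $aa^*$ orderings and you explicitly defer the bookkeeping to the cited reference, this does not affect the correctness of the outline.
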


Using the bounds from the last lemma, we can now control the operator $(i\partial_t T_{N,t}^*) T_{N,t}$, appearing in the generator $\cL_N (t)$.
\begin{prop}\label{prop:TstarT}
Under the assumptions of Theorem \ref{thm:LN2},
we have 
\[ \begin{split} 
\pm ( i\dpr_t T^*_{N,t}) T_{N,t} &\leq C e^{K|t|} (\cN + 1), \qquad \hspace{.7cm} (( i\dpr_t T^*_{N,t}) T_{N,t} )^2 \leq C e^{K|t|} (\cN +1)^2 \\
\pm \left[ \cN ,( i\dpr_t T^*_{N,t}) T_{N,t}   \right] &\leq  C e^{K|t|} (\cN +1), \qquad \pm [\cN^2 , ( i\dpr_t T^*_{N,t}) T_{N,t}  ] \leq C e^{K|t|} (\cN+1)^2
 \\
\pm \dpr_t [( i\dpr_t T^*_{N,t}) T_{N,t}]  &\leq C e^{K|t|} (\cN+1), \qquad \hspace{.2cm} \left[ \dpr_t[( i\dpr_t T^*_{N,t}) T_{N,t}]  \right]^2 \leq C e^{K|t|} (\cN+1)^2 \end{split} \]
\end{prop}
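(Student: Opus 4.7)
The plan is to expand $(i\partial_t T^*_{N,t})T_{N,t}$ using the Duhamel-type series \eqref{TTstar} and to control each term $\text{ad}_B^n(\dot B)$ individually, then sum. By Lemma~\ref{lemma:fni}, each $\text{ad}_B^n(\dot B)$ is a purely quadratic expression in creation and annihilation operators with kernels $f_{n,1}, f_{n,2} \in L^2(\RRR^3\times\RRR^3)$. Hence each term falls under the scope of the quadratic estimates already developed in Lemma~\ref{lm:L2K} (and its variants from Lemma~\ref{lm:L2hat}), which give, for any quadratic operator of the form $\int dx\,dy\,f(x,y) a^\sharp_x a^\sharp_y$, bounds of the type $\pm A \leq C\|f\|_2 (\cN+1)$ and $A^*A + AA^* \leq C\|f\|_2^2 (\cN+1)^2$. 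For the odd-$n$ terms, which contain $a^*_x a_y$ and $a_x a^*_y$, one normal-orders using $a_x a_y^* = a_y^* a_x + \delta(x-y)$, picking up a scalar contribution controlled by $\int |f_{n,i}(x,x)|\,dx$ as bounded in \eqref{fnidot2}.

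Combining these quadratic estimates with the kernel bounds \eqref{fni}, and with the a priori estimates
\[
\|k_{N,t}\|_2,\ \|\dot k_{N,t}\|_2,\ \|\ddot k_{N,t}\|_2 \leq C e^{K|t|}
\]
that follow from Section~\ref{s:kernel} (using $\ph\in H^4$ and the properties of $\omega_{N,\ell}$ from Lemma~\ref{lm:propomega}), one obtains for each $n$
\[
\pm\, \text{ad}_B^n(\dot B) \leq C (2C e^{K|t|})^n\, e^{K|t|}\, (\cN+1).
\]
The weights $1/(n+1)!$ in the series \eqref{TTstar} then make the sum converge absolutely in the operator sense on $\DD(\cN)$, yielding $\pm(i\partial_t T^*_{N,t})T_{N,t} \leq C e^{K|t|}(\cN+1)$. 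The bound on the square is obtained identically: expanding the product of two series and applying Cauchy--Schwarz term by term, each product $[\text{ad}_B^n(\dot B)][\text{ad}_B^m(\dot B)]$ is controlled by $C(2Ce^{K|t|})^{n+m}e^{2K|t|}(\cN+1)^2$, and the double series $\sum_{n,m}\frac{(2Ce^{K|t|})^{n+m}}{(n+1)!(m+1)!}$ converges.

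For the commutators with $\cN$ and $\cN^2$, note from \eqref{TTstar2} that for even $n$, $\text{ad}_B^n(\dot B)$ is the sum of a pure $a^*a^*$ and a pure $aa$ term, so $[\cN, \text{ad}_B^n(\dot B)]$ is again a quadratic operator of the same form with the same kernel (up to the factor $\pm 2$), while for odd $n$ it is of the form $a^*a + aa^*$ which commutes with $\cN$ up to a scalar. Hence $[\cN,\text{ad}_B^n(\dot B)]$ and $[\cN^2,\text{ad}_B^n(\dot B)]$ satisfy the same estimates as $\text{ad}_B^n(\dot B)$ and $\text{ad}_B^n(\dot B) \cdot (\cN+1)$ respectively, and the series argument above applies verbatim.

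The time-derivative bound is handled by differentiating \eqref{TTstar} term by term; the derivative of $\text{ad}_B^n(\dot B)$ is again a quadratic operator with kernels $\dot f_{n,i}$, bounded in $L^2$ by \eqref{fnidot} and pointwise on the diagonal by \eqref{fnidot2}. Using again $\|\ddot k_{N,t}\|_2 \leq C e^{K|t|}$, each term is controlled by $C(4Ce^{K|t|})^{n-1} e^{K|t|} (\cN+1)$, which after division by $(n+1)!$ yields a convergent series with the claimed bound. The main technical point is ensuring that the growth factor $4^n\|k_{N,t}\|_2^{n-1}$ in \eqref{fnidot} is still dominated by $1/(n+1)!$, which holds because $\|k_{N,t}\|_2$ is bounded uniformly in $N$ on any compact time interval; this is really the only non-routine verification, and it reduces to the Hilbert--Schmidt bound on $k_{N,t}$ from Section~\ref{s:kernel}.
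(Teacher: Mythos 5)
Your plan follows the same route as the paper: expand via \eqref{TTstar}, control each $\text{ad}_B^n(\dot B)$ through the kernel bounds of Lemma~\ref{lemma:fni} and elementary quadratic estimates, then sum the factorially weighted series. The structure of the commutator and time-derivative arguments is also the same. However, there is one genuine slip that breaks the stated conclusion if read literally.

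You assert $\|k_{N,t}\|_2 \leq C e^{K|t|}$ and then bound $\pm\,\text{ad}_B^n(\dot B) \leq C(2Ce^{K|t|})^n e^{K|t|}(\cN+1)$. Summing $\sum_n (2Ce^{K|t|})^n/(n+1)!$ does converge, but it converges to something of size $\exp(c\,e^{K|t|})$, i.e.\ doubly exponential in $t$, not $Ce^{K|t|}$ as you claim. The correct input, from \eqref{eq:kN-HS} and Lemma~\ref{l:BoundsK}(i), is the \emph{uniform} bound $\|k_{N,t}\|_2 \leq C$ (uniformly in $t$ and $N$; this uses only the conserved $H^1$-norm of $\phn_t$, via Gagliardo--Nirenberg for $\|\phn_t\|_4$). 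Only $\|\dot k_{N,t}\|_2$ and $\|\ddot k_{N,t}\|_2$ grow like $e^{K|t|}$, and crucially these appear to the first power in \eqref{fni}, \eqref{fnidot}, and \eqref{fnidot2}, whereas $\|k_{N,t}\|_2$ appears raised to the $n$-th power. With the sharp constant bound on $\|k_{N,t}\|_2$, the sum $\sum_n(2C)^n/(n+1)!$ is a finite number, and the overall estimate is indeed $Ce^{K|t|}(\cN+1)$. The same remark applies to the double series for the square and to the $\dot f_{n,i}$ series for the time derivative, where you use the factor $4^n\|k_{N,t}\|_2^{n-1}$ from \eqref{fnidot}; again this only closes because $\|k_{N,t}\|_2$ is uniformly bounded, not merely bounded on compact time intervals. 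The rest of your argument (normal ordering for odd $n$ with the diagonal term controlled by $\int|f_{n,2}(x,x)|dx$, the observation that $[\cN,\text{ad}_B^n(\dot B)]$ is again quadratic of the same type, writing $[\cN^2,A]=\cN[\cN,A]+[\cN,A]\cN$) is correct and matches the paper's proof.
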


\begin{proof}
For any $f_1, f_2 \in L^2(\RRR^3 \times \RRR^3)$ with $\int \di x |f_2(x,x)|<\io$ we have 
\[
\Big| \big \langle \ps, \int \di x \di y \big( f_1(x,y) a^*_x a^*_y + f_2(x,y) a_x a_y \big) \ps  \big \rangle \Big| \leq \big(\|f_1\|_2 + \|f_2\|_2 \big) \bmedia{\ps, (\NN+1) \ps}\,,
\]
and
\begin{align*}
\Big| \big \langle \ps, \int \di x \di y \big( f_1(x,y) a^*_x a_y &+ f_2(x,y) a_x a^*_y \big) \ps  \big \rangle \Big| \non \\ & \leq \big(\|f_1\|_2 + \|f_2\|_2 \big) \bmedia{\ps, \NN \ps} + \int \di x |f_2(x,x)| \|\ps\|^2\,.
\end{align*}
{F}rom Lemma \ref{lemma:fni} and \eqref{TTstar}, we find therefore 
\begin{align}
| \langle \ps, &(i\dpr_t T^*_{N,t})T_{N,t} \ps \rangle | \\ &\leq \sum_{n\geq 0} \frac{1}{(n +1)!} | \bmedia{\ps, \text{ad}_B^{n}(\dot B) \ps}| \non \\
& \leq  \sum_{n \geq 0} \frac{(2 \|k_{N,t}\|_2)^{n}}{(n +1)!} \|\dot k_{N,t}\|_2  \|(\NN+1)^{1/2}\ps\|^2 + \sum_{n \geq 1} \frac{(2 \|k_{N,t}\|_2)^{n}}{(2n)!} \|\dot k_{N,t}\|_2  \|\ps\|^2  \non \\
& \leq e^{2 \|k_{N,t}\|_{2}} \|\dot k_{N,t}\|_2 \|(\NN+1)^{1/2}\ps\|^2 \non \\& 
\leq C e^{K|t|} \|(\NN+1)^{1/2}\ps\|^2
\end{align}
The bound
\[ \left| \langle \psi, [ \cN , (i\partial_t T_{N,t}^*) T_{N,t} ] \psi \rangle \right| \leq C e^{K|t|} \| (\cN+1)^{1/2} \psi \|^2 \] 
follows similarly, since the commutator with $\NN$ is just eliminating the contributions $\text{ad}^n_B(\dot B)$ for all odd $n$. 

To prove the bound for $[(i\partial_t T_{N,t}^*) T_{N,t}]^2$, we use again \eqref{TTstar}: 
\begin{equation}
\label{eq:cl0} 
\begin{split}
\bmedia{\ps, [(i \dpr_t T_{N,t}^*) T_{N,t}]^2 \ps} & = \sum_{n \geq 0}\,\sum_{m \geq 0}  
\frac{1}{(n+1)!}\, \frac{1}{(m+1)!}\,
\bmedia{\ps, \text{ad}_B^n(\dot B)  \text{ad}_B^m(\dot B) \ps} \\ &\leq \sum_{n,m} \frac{\| \text{ad}_B^n (\dot{B}) \psi \|}{(n+1)!} \frac{ \| \text{ad}_B^m (\dot{B}) \psi \|}{(m+1)!}
\end{split} \end{equation}
We claim that 
\begin{equation}\label{eq:cl1} \| \text{ad}_B^n (\dot{B}) \psi \|^2 = \left\langle \psi, \left[ \text{ad}_B^n (\dot{B}) \right]^2 \psi \right\rangle \leq C \left( \| f_{n,1} \|_2 + \| f_{n,2} \|_2 \right)^2 \langle \psi, (\cN+1)^2 \psi \rangle \end{equation}
for $n$ even and 
\begin{equation}\label{eq:cl2} \begin{split} \| \text{ad}_B^n (\dot{B}) \psi \|^2 =\; & \left\langle \psi, \left[ \text{ad}_B^n (\dot{B}) \right]^2 \psi \right\rangle \\ \leq \; &C \left( \| f_{n,1} \|_2 + \| f_{n,2} \|_2 \right)^2 \langle \psi, (\cN+1)^2 \psi \rangle + C \left[ \int dx |f_{n,2} (x,x) | \right]^2 \| \psi \|^2 \end{split} \end{equation}
for $n$ odd. The bound for $n$ even follows because
\begin{equation}\label{eq:ad2} \begin{split} \langle \psi, \left[ \text{ad}_B^n (\dot{B}) \right]^2 \psi \rangle = \; &\int dx dy dx' dy' \, f_{n,1} (x,y) f_{n,1} (x' , y') \langle \psi, a^*_x a^*_y a_{x'}^* a_{y'}^* \psi \rangle  \\  &+\int dx dy dx' dy' \, f_{n,2} (x,y) f_{n,2} (x', y') \langle \psi, a_x a_y a_{x'} a_{y'} \psi \rangle \\ &+ \int dx dy dx' dy' \, f_{n,1} (x,y) f_{n,2} (x' , y') a^*_x a^*_y a_{x'} a_{y'} \\ &+ \int dx dy dx' dy' \, f_{n,2} (x,y) f_{n,1} (x' , y') \langle \psi , a_x a_y a_{x'}^* a_{y'}^* \psi \rangle 
\end{split} 
\end{equation}
The absolute value of the first term is bounded by 
\[ \begin{split} \int dx dx' \, \| a_x a_{x'} \psi \| &\| a^* (f_{n,1,x}) a^* (f_{n,1,x'}) \psi \| \\ \leq \; &C \| (\cN+1) \psi \| \int dx dx' \| f_{n,1,x} \|_2 \| f_{n,1,x'} \|_2 \| a_x a_{x'} \psi \| \\ \leq \; &C \| f_{n,1} \|_2^2 \| (\cN+1) \psi \|^2 \end{split} \]
The contribution of the second term on the r.h.s. of (\ref{eq:ad2}) can be estimated similarly. The third term, on the other hand, is bounded by 
\[ \begin{split} 
\int dx dy dx' dy' |f_{n,1} (x,y) | &|f_{n,2} (x', y')| \| a_x a_y \psi \| \| a_{x'} a_{y'} \psi \| \\ \leq
\; &\int dx dy dx' dy' \left[ |f_{n,1} (x,y) |^2  \| a_{x'} a_{y'} \psi \|^2 +  |f_{n,2} (x', y')|^2 \| a_x a_y \psi \|^2 \right] \\ \leq \; &(\| f_{n,1} \|_2^2 + \| f_{n,2} \|^2) \| (\cN+1) \psi \|^2 
\end{split} \]
To control the last term on the r.h.s. of (\ref{eq:ad2}), we observe that
\[ \begin{split} \int dx dy dx' &dy' f_{n,2} (x,y) f_{n,1} (x' , y') \langle \psi , a_x a_y a_{x'}^* a_{y'}^* \psi \rangle 
\\ = \;& \int dx dy dx' dy' f_{n,2} (x,y) f_{n,1} (x' , y') \langle a_{x'} \psi , a_x a_y a_{y'}^* \psi \rangle  \\ &+ \int dx dy dy' f_{n,2} (x,y) f_{n,1} (x , y') \langle \psi, a_y  a_{y'}^* \psi \rangle \\ &+ \int dx dy dy' f_{n,2} (x,y) f_{n,1} (y , y') \langle \psi, a_x a^*_{y'} \psi \rangle \\ = \;& \int dx dy dx' f_{n,2} (x,y)  \langle a_{x'} \psi , a_x a_y a^* (f_{n,1,x'}) \psi \rangle  \\ &+ \int dx dy dy' f_{n,2} (x,y) f_{n,1} (x , y') \langle \psi, a_{y'}^* a_y \psi \rangle  + \| \psi \|^2 \int dx dy f_{n,1} (x,y) f_{n,2} (x,y)  
\\ &+ \int dx dy dy' f_{n,2} (x,y) f_{n,1} (y , y') \langle \psi, a^*_{y'} a_x \psi \rangle + \| \psi \|^2 \int dx dy f_{n,2} (x,y) f_{n,1} (y,x) \end{split} \]
This implies that 
\[ \begin{split} \Big| \int dx dy dx' &dy' f_{n,2} (x,y) f_{n,1} (x' , y') \langle \psi , a_x a_y a_{x'}^* a_{y'}^* \psi \rangle  \Big| 
\\ = \;& \int dx dy dx' |f_{n,2} (x,y)| \| a_{x'} (\cN+1)^{1/2} \psi \| \| a_x a_y a^* (f_{n,1,x'}) (\cN+1)^{-1/2} \psi \|   \\ &+ \int dx dy dy' |f_{n,2} (x,y)| |f_{n,1} (x , y')| \| a_y \psi \| \| a_{y'} \psi \| + \| \psi \|^2 \| f_{n,1} \|_2 \| f_{n,2} \|_2   
\\ &+ \int dx dy dy' |f_{n,2} (x,y)| |f_{n,1} (y , y')| \| a_{y'} \psi \| \| a_x \psi \|  + \| \psi \|^2 \| f_{n,2} \|_2 \| f_{n,1} \|_2  \\
\leq \; & C (\|f_{n,1} \|_2^2 + \| f_{n,2} \|_2^2) \| (\cN+1) \psi \|^2 \end{split} \]
and concludes the proof of (\ref{eq:cl1}). Eq. (\ref{eq:cl2}) can be shown analogously. Combined, they imply that 
\[ \| \text{ad}_B^n (\dot{B}) \psi \| \leq C e^{K|t|} \| (\cN+1) \psi \| \]
and thus, inserting in (\ref{eq:cl0}), that 
\[ \left[ (i\partial_t T_{N,t}^*) T_{N,t} \right]^2 \leq C e^{K|t|} (\cN+1)^2 \]

Let us now consider the bound for the commutator $[\cN^2, (i\partial_t T_{N,t}^*)T_{N,t}]$ in Proposition \ref{prop:TstarT}. We observe that
\[ \begin{split} 
\Big| \langle  \psi, \big[ \cN^2 , (i\partial_t T_{N,t}^*) T_{N,t} \big] \psi \rangle \Big| = \; & \left| \langle \psi , \cN [ \cN , (i\partial_t T_{N,t}^*) T_{N,t} ] \psi \rangle + \langle \psi, [ \cN , (i\partial_t T_{N,t}^*) T_{N,t} ] \cN \psi \rangle \right| 
\\ 
\leq \; &2 \sum_{n \text{ even}} \frac{1}{(n+1)!} \left| \langle \psi, \cN \text{ad}_B^n (\dot{B}) \psi \rangle \right| \\
\leq \; &C \sum_{n \text{ even}} \frac{1}{(n+1)!} (\| f_{n,1} \|_2 + \| f_{n,2} \|_2 ) \| (\cN+1) \psi \|^2 
\\
\leq \; &C e^{K|t|} \| (\cN+1) \psi \|^2 
\end{split} \]

Finally, we observe that the bounds involving the time-derivative of $(i\partial_t T_{N,t}^*)T_{N,t}$ in Proposition \ref{prop:TstarT} can be proven similarly, taking the time derivative of the expressions for $\text{ad}_B^n(\dot B)$, and using the bounds for $\|\dot f_{n,i}\|_2$ and $\int \di x |\dot f_{n,i}(x,x)|$ in  \eqref{fnidot} and \eqref{fnidot2} and, finally, using the estimate for $\|\ddot k_{N,t}\|_2$ proven in Appendix \ref{s:kernel}.
\end{proof}

\subsection{Proof of Theorem \ref{thm:LN2}}
\label{s:pf-thm3}

Finally, we combine all results of the previous subsections to prove Theorem \ref{thm:LN2}. 

{F}rom (\ref{eq:G0N}), (\ref{eq:G1N}), Prop. \ref{prop:quadr}, (\ref{eq:G3N}), (\ref{eq:G4N}) and (\ref{TTstar}), we conclude that 
\begin{equation}\label{eq:LNtfin} \begin{split} \cL_N (t) = &\; \eta_N (t) + (i\partial_t T_{N,t}^*)T_{N,t} +  \cL^{(2)}_{2,N} (t) + \cV_N + \cE_{3,N} (t) + \cE_{4,N} (t) \\ & + \text{A}_1 + \text{A}_2  \end{split} 
\end{equation}
with
\begin{equation} \label{eq:LNtfin2} 
\begin{split} \text{A}_1 =  \; &\sqrt{N} \, T_{N,t}^* \left[  a^* ((N^{3\beta} V(N^\beta .) \omega_{N,\ell} * |\phn_t|^2) \phn_t) + \text{h.c.} \right] T_{N,t} \\ 
&+ \frac{1}{\sqrt{N}} \int dx dy N^{3\b} V(N^\b(x-y)) \left[ \bar{\ph}^N_t (y) k_{N,t} (x,y) T_{N,t}^* a^*_x T_{N,t}  + \text{h.c.} \right]  \\ 
\text{A}_2 = \; & \int dx dy \, (-\Delta_x k_{N,t} (x,y)) a_x^* a_y^* + \int dx dy \, (-\Delta_x k_{N,t} (x,y)) a_x a_y \\  &+  \frac{1}{2} \int dx dy N^{3\beta} V(N^\beta (x-y)) \left[ \phn_t (x) \phn_t (y) a^*_x a^*_y + \bar{\ph}^N_t (x) \bar{\ph}^N_t (y) a_x a_y \right]  \\  &+ \frac{1}{2N} \int dx dy \, N^{3\beta} V(N^\beta (x-y)) \\ &\hspace{3.5cm} \times \left[ \langle c_y^N, s_x^N \rangle a^* (c_x^{N}) a^* (c_y^N) + \langle s_x^N,c_y^N \rangle a(c_y^N) a(c_x^{N}) \right]  
\end{split} 
\end{equation}
We observe that, by the definition of $k_{N,t}$, 
\[ \begin{split} 
\text{A}_1 = \; &\sqrt{N} \int dx dy N^{3\beta} V(N^\beta (x-y)) \phn_t (y) \o_{N,\ell}(x-y)\\
&\hspace{4cm}\times\left[ \overline{\ph}^N_t (y) \bar{\ph}^N_t (x) - \bar{\ph}^N_t ((x+y)/2)^2 \right] T_t^* a_x T_t  + \text{h.c.}  \end{split} \]
Using the bounds in Lemma \ref{lm:propomega}, we find 
\[ \begin{split} \Big\| \int dy N^{3\beta} V(N^\beta (.-y)) \phn_t (y) \o_{N,\ell}(.-y)\ph^N(y) &\left[ \overline{\ph}^N_t (y) \overline{\ph}_t^N (.) - \overline{\ph}^N_t ((.+y)/2)^2 \right] \Big\|_2 \\ &\hspace{3cm} \leq C N^{-1} e^{K|t|} \end{split}  \]
and therefore  
\begin{equation}\label{eq:b1}
  | \langle \psi_1, \text{A}_1  \psi_2 \rangle | \leq  C N^{-1/2} e^{K|t|} \| (\cN+1)^{1/2} \psi_1 \| \| (\cN+1)^{1/2} \psi_2 \| \end{equation}

Let us now consider the second term in (\ref{eq:LNtfin2}). Using the permutation symmetry of $\cF$, we find 
\begin{equation}\label{eq:canc1} \begin{split} 
\int dx dy &(\Delta_x k_{N,t}) (x,y) a_x^* a_y^* \\ = \; &\int dx dy (\Delta_y k_{N,t}) (x,y) a_x^* a_y^* = \frac{1}{2} \int dx dy \left[ (\Delta_x + \Delta_y) k_{N,t} (x,y) \right] a_x^* a_y^* \\ = \; &\frac{1}{2} \int dx dy \left[ \left( 2\Delta_{x-y} + \frac{1}{2} \Delta_{(x+y)/2} \right) k_{N,t} (x,y) \right] a_x^* a_y^* \\ = \; &-N \int dx dy (\Delta \omega_{N,\ell}) (x-y) (\phn_t ((x+y)/2))^2 a_x^* a_y^* \\ &- \frac{N}{2} \int dx dy \, \omega_{N,\ell} (x-y) \\ &\hspace{1.5cm} \times \left[ \phn_t ((x+y)/2) \Delta \phn_t ((x+y)/2) + |\nabla \phn_t ((x+y)/2)|^2 \right] a_x^* a_y^* \end{split} \end{equation}
Moreover, we have 
\begin{equation}\label{eq:canc2} 
\begin{split} \frac{1}{2} \int dx dy &N^{3\beta} V(N^\beta (x-y))  \phn_t (x) \phn_t (y) a_x^* a_y^* \\ &= \frac{1}{2} \int dx dy N^{3\beta} V(N^\beta (x-y)) \phn_t ((x+y)/2)^2 a_x^* a_y^* + B_1 \end{split} \end{equation}
where 
\[ B_1 = - \frac{1}{2} \int dx dy N^{3\beta} V(N^\beta (x-y)) \left[ \phn_t ((x+y)/2)^2 - \phn_t (x) \phn_t (y) \right] a_x^* a_y^* \]
and 
\[ \begin{split} 
\frac{1}{2N} \int dx &dy N^{3\beta} V(N^\beta (x-y)) \langle c_y^N, s_x^N \rangle a^* (c_x^N) a^* (c_y^N) \\ = \; &-\frac{1}{2} \int dx dy N^{3\beta} V(N^\beta (x-y)) \omega_{N,\ell} (x-y) (\phn_t ((x+y)/2))^2 a_x^* a_y^* +  B_2 \end{split} \]
with
\[ \begin{split} B_2 = \; &\frac{1}{2N} \int dx dy N^{3\beta} V(N^\beta (x-y)) [ \langle p_y^N , s_x^N \rangle + r_{N,t} (y,x)] a^* (c_x^N) a^* (c_y^N) \\ & - \frac{1}{2} \int dx dy N^{3\beta} V(N^\beta (x-y)) \omega_{N,\ell} (x-y) (\phn_t ((x+y)/2))^2 a^* (p_x^N) a^*_y \\ &- \frac{1}{2} \int dx dy N^{3\beta} V(N^\beta (x-y)) \omega_{N,\ell} (x-y) (\phn_t ((x+y)/2))^2 a^* (c_x^{N}) a^* (p_y^N) \end{split} \]
By the definition of $f_{N,\ell} = 1 - \omega_{N,t}$, we conclude that 
\begin{equation}\label{eq:A2} \begin{split} 
A_2 = &\; \frac{N}{2} \int dx dy \, \omega_{N,\ell} (x-y) \\ &\hspace{1.5cm} \times \left[ \phn_t ((x+y)/2) \Delta \phn_t ((x+y)/2) + |\nabla \phn_t ((x+y)/2)|^2 \right] a_x^* a_y^* \\
&+ N \lambda_{N,\ell} \int dx dy {\bf 1} (|x-y| \leq \ell)  (\phn_t ((x+y)/2))^2 a_x^* a_y^* \\ 
&+ B_1 + B_2  +  B_3  + \text{h.c.}  \end{split} \end{equation}
with 
\[ B_3 = -N \lambda_{N,\ell} \int dx dy \, \omega_{N,\ell} (x-y) (\phn_t ((x+y)/2))^2 a_x^* a_y^* \]

Next, we show that $B_1, B_2, B_3$ are small. To control $B_1$, we define $\ff_{N,t}$ through 
\[  |x-y|\,\ff_{N,t}(x,y) :=\phn_t (x) \phn_t (y)-\ph^2_t ((x+y)/2)\]
We set $W (x) = V(x) |x|$, we define 
\[ g (x) = -\frac{1}{4\pi} \int dy \frac{1}{|x-y|} W  (y) \]
and $v = \nabla g$; then we have $\nabla \cdot v = W$. 
Integrating by parts, we find 
\begin{equation}\label{eq:B1-bd} \begin{split} 
& |\langle \psi_1, B_1 \psi_2 \rangle|= \\
& \qquad =  \frac 1 {2N^\beta}\,\Big|\int \di x\,\di y\,N^{3\b} V(N^\b(x-y)) N^\b |x-y| \, \ff_{t,N}(x,y) \bmedia{\ps_1,a^*_x a^*_y\, \ps_2} \Big| \\
& \qquad \leq  \frac 1 {2N^\b} \int \di x\,\|a_x  \ps_1 \| \,\| a^*(N^{2\beta} v(N^\beta(x- \cdot)) \nabla_x \ff_{t,N}(x,\cdot) )\,  \ps_2 \|  \\
& \qquad \qquad +\frac 1 {2N^\b} \int \di x\,\| \nabla_x  a_x \ps_1\|\, \|a^*(N^{2\beta} v(N^\beta(x-\cdot)) \ff_{t,N}(x,\cdot))\,  \ps_2\|\end{split}
\end{equation}
Since $\| v \|_2 \leq C \| |.| V \|_{6/5}$ and $\| \phn_t \|_{H^2} \leq C e^{K|t|}$, we obtain
\[\begin{split}
\| N^{2\b} v(N^\b(x-y)) \ff_{t,N}(x,y)\|_2 & \leq C N^{\b/2} e^{K|t|}  \\
\|N^{2\b} v(N^\b(x-y)) \nabla_x \ff_{t,N}(x,y) \|_2 & \leq C N^{\b /2} e^{K|t|} \end{split} \]
{F}rom (\ref{eq:B1-bd}),  we conclude that
\[ |\langle \psi_1, (B_1 + B_1^*) \psi_2 \rangle| \leq  C N^{-\beta /2} e^{K|t|} \left[ \langle \psi_1, (\cN + \cK + 1) \psi_1 \rangle + \langle \psi_2 , (\cN + \cK + 1) \psi_2 \rangle \right] \]

To bound $B_2$, we notice first that
\[ \begin{split} \Big| \frac{1}{2N} \int dx dy N^{3\beta} &V(N^\beta (x-y)) \left[ \langle p_y^N, s_x^N \rangle + r_{N,t} (y;x) \right] \langle \psi_1, a^* (c_x^{N}) a^* (c_y^N) \psi_2 \rangle\Big| \\ &
 \leq  \frac{C e^{K|t|}}{ N^{1/2}} \left[ \bmedia{ \ps_1, \left[ \VV_N + (\cN+1)^2/N \right]\ps_1} + \langle \psi_2 , \psi_2 \rangle \right]   \end{split}  \]
as it easily follows from Lemma \ref{lm:LN4-cc}. Moreover, we find 
\[ \begin{split} 
\Big| \frac{1}{2} \int dx dy &N^{3\beta} V(N^\beta (x-y)) \omega_{N,\ell} (x-y) (\phn_t ((x+y)/2))^2 \langle \psi_1, a^* (p_x^N) a_y^* \psi_2 \rangle \Big| \\ \leq \; &\frac{e^{K|t|}}{2N} \int dx dy \, N^{3\beta} V(N^\beta (x-y))  \frac{1}{|x-y|} \| p_x^N \|_2 \, \| a_y \psi_1 \| \| (\cN+1)^{1/2} \psi_2 \| 
\\ \leq \; &C N^{\beta-1} e^{K|t|} \| (\cN+1)^{1/2} \psi_1 \| \| (\cN+1)^{1/2} \psi_2 \| \end{split} \]
and an analogous estimate for the term proportional to $a^* (c_x^N) a^* (p_y^N)$. Hence, we conclude that 
\[ \begin{split} 
|\langle \psi_1, &(B_2 + B_2^*) \psi_2 \rangle| \\  \leq \; &C \max \left[ N^{\beta-1}, N^{-1/2} \right] \, e^{K|t|}  \\ &\times \left[ \langle \psi_1, (\cV_N + (\cN+1)^2/N + 1) \psi_1 \rangle + \langle \psi_2 , (\cV_N + (\cN+1)^2/N + 1)\psi_2 \rangle \right] \end{split} \]

Finally, we control $B_3$. {F}rom the bounds in Lemma \ref{lm:propomega}, we immediately find that
\[ | \langle \psi_1, (B_3+B_3^*) \psi_2 \rangle | \leq C N^{-1}  \left[ \langle \psi_1, (\cN+1) \psi_1 \rangle + \langle \psi_2, (\cN+1) \psi_2 \rangle \right] \]

Combining (\ref{eq:LNtfin}) with (\ref{eq:b1}), (\ref{eq:A2}) and with the bounds for $B_1, B_2, B_3$, we find
\[ \cL_N (t) = \eta_N (t) + \cL_{2,N} (t) + \cV_N  + \cE_N (t) \]
with 
\begin{equation}\label{eq:new}
\begin{split}
\cL_{2,N} (t) = &\; (i\partial_tT_{N,t}^*)T_{N,t} + \cL^{(2)}_{2,N} (t) \\
&+\frac{N}{2} \int dx dy \, \omega_{N,\ell} (x-y) \\ &\hspace{.2cm} \times \left[( \phn_t ((x+y)/2) \Delta \phn_t ((x+y)/2) + |\nabla \phn_t ((x+y)/2)|^2)  a_x^* a_y^*+\text{h.c.}\right]\\
&+ N \lambda_{\ell, N} \int dx dy \, {\bf 1} (|x-y| \leq \ell) \left[ (\phn_t ((x+y)/2))^2 a_x^* a_y^* + \text{h.c.} \right] 
\end{split} \end{equation}
and where $\cE_N (t) = \cE_{N,3} (t) + \cE_{N,4} (t) + A_1 + B_1 + B_2 +B_3$ satisfies the estimates (\ref{eq:est-cE1}), (\ref{eq:est-cE2}) in Theorem \ref{thm:LN2}. 

To complete the proof of Theorem \ref{thm:LN2}, we still have to show that the last two terms on the r.h.s. of (\ref{eq:new}) satisfy the bounds (\ref{eq:est-cL2N}). But this fact follows easily, since, from Lemma \ref{lm:propomega}, $N\omega_{N,\ell} (x-y) \leq C {\bf 1} (|x-y| \leq \ell)/ |x-y|$ is square-integrable and $N\lambda_{N,\ell} \leq C$ is of order one.

\section{Growth of Number of Particles and Energy}

We apply the estimates of the Section \ref{s:est} to show a bound for the growth of the expectation of the number of particles with respect to the fluctuation dynamics generated by $\cL_N (t)$. Moreover, we prove bounds for the growth of the expectation of the number of particles, of the kinetic energy and of their square with respect to the dynamics generated by the quadratic part of the generator $\cL_{2,N} (t)$.


We need, first of all, to compare kinetic and potential energy operators with the generators $\cL_N (t)$ of the full fluctuation dynamics and with its quadratic part $\cL_{2,N}$. The following proposition follows directly from Theorem \ref{thm:LN2}. 
\begin{prop}\label{prop:compLKV}
Under the same assumptions as in Proposition \ref{prop:UN-UN2}, there exists a constant 
$C > 0$ such that 
\begin{equation}\label{eq:bd-LKVN} 
\begin{split}
\frac{1}{2} (\cK + \cV_N )- C e^{K|t|} \left[ \cN+1 + \cN^2/N \right] &\leq \cL_N (t) - \eta_N (t) \\ &\leq 2 (\cK +  \cV_N) + C e^{K|t|} \left[ \cN + 1 + \cN^2 /N \right]
\end{split}
\end{equation}
Moreover, we have 
\begin{equation}\label{eq:L2K}  \cK  - C e^{K|t|} (\cN+1) \leq \cL_{2,N} (t) \leq \cK + C e^{K|t|} (\cN + 1) \end{equation}
and 
\begin{equation}\label{eq:L2K2} \frac{1}{2} \cK^2 - C e^{K|t|} (\cN+1)^2 \leq \cL^2_{2,N} (t) \leq 2\cK^2 + C e^{K|t|} (\cN+1)^2 \end{equation}
\end{prop}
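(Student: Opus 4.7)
The plan is to extract all three inequalities as direct consequences of Theorem \ref{thm:LN2}, which provides the decomposition
\[ \cL_N(t) = \eta_N(t) + \cL_{2,N}(t) + \cV_N + \cE_N(t) \]
together with operator bounds on the quadratic generator $\cL_{2,N}(t)$ and on the remainder $\cE_N(t)$. No new computation is required; it is only a matter of combining the estimates from (\ref{eq:est-cE1}) and (\ref{eq:est-cL2N}) in the right way.

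The bound (\ref{eq:L2K}) is literally the first estimate on the first line of (\ref{eq:est-cL2N}), so it can just be quoted. For (\ref{eq:L2K2}), I would write $\cL_{2,N}(t) = \cK + (\cL_{2,N}(t) - \cK)$ and apply the elementary operator inequality $(A+B)^2 \leq 2A^2 + 2B^2$ valid for self-adjoint $A,B$, together with $(\cL_{2,N}(t) - \cK)^2 \leq Ce^{K|t|}(\cN+1)^2$ from (\ref{eq:est-cL2N}), to obtain the upper bound $\cL_{2,N}^2(t) \leq 2\cK^2 + Ce^{K|t|}(\cN+1)^2$. The matching lower bound follows by writing $\cK = \cL_{2,N}(t) - (\cL_{2,N}(t) - \cK)$, squaring in the same fashion, and rearranging to isolate $\cL_{2,N}^2(t)$.

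For (\ref{eq:bd-LKVN}) I would substitute the decomposition of $\cL_N(t)$ and combine (\ref{eq:L2K}) with the first bound in (\ref{eq:est-cE1}). For the upper bound, choosing $\delta = 1$ in the estimate on $\pm\cE_N(t)$ yields
\[ \cL_N(t) - \eta_N(t) \leq \cK + 2\cV_N + Ce^{K|t|}\bigl(\cN + 1 + \cN^2/N\bigr) \leq 2(\cK + \cV_N) + Ce^{K|t|}\bigl(\cN + 1 + \cN^2/N\bigr). \]
For the lower bound, I would pick $\delta = 1/2$ so that a positive fraction of $\cV_N$ survives, and trivially bound $\cK \geq \tfrac{1}{2}\cK$, giving
\[ \cL_N(t) - \eta_N(t) \geq \cK + \tfrac{1}{2}\cV_N - Ce^{K|t|}\bigl(\cN + 1 + \cN^2/N\bigr) \geq \tfrac{1}{2}(\cK + \cV_N) - Ce^{K|t|}\bigl(\cN + 1 + \cN^2/N\bigr). \]
Since all of the analytic work has already been absorbed into Theorem \ref{thm:LN2}, the only step that requires any thought is the choice of the small parameter $\delta$ in the bound on $\cE_N(t)$, and no genuine obstacle is expected.
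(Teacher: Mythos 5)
Your proposal is correct and follows precisely the route the paper takes: the paper simply remarks that Proposition \ref{prop:compLKV} ``follows directly from Theorem \ref{thm:LN2},'' and your argument supplies exactly the routine steps (quoting \eqref{eq:est-cL2N}, using $(A+B)^2 \leq 2A^2 + 2B^2$, and choosing $\delta$ in \eqref{eq:est-cE1}) that this remark leaves implicit. No discrepancy with the paper's reasoning.
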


Next, we control the growth of the expectation of the number of particles and of the energy with respect to the fluctuation dynamics $\cU_N (t;s)$. 
\begin{theorem}\label{tm:gn}
Under the same assumptions as in Proposition \ref{prop:UN-UN2}, there exist constants $C,c_1, c_2 > 0$ such that 
\[ \begin{split} \langle \cU_N (t;0) \psi, \cN \cU_N (t;0) \psi \rangle &\leq C \exp (c_1 \exp (c_2 |t|)) \langle \psi , \left( \cN +  \cN^2/N + \cH_N \right) \psi \rangle \\
\left| \langle \cU_N (t;0) \psi, (\cL_N (t) - \eta_N (t)) \cU_N (t;0) \psi \rangle \right| &\leq C \exp (c_1 \exp (c_2 |t|)) \langle \psi , \left( \cN +  \cN^2/N + \cH_N \right) \psi \rangle 
\end{split}
\]
\end{theorem}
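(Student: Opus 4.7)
The plan is a Gronwall argument for an auxiliary functional that simultaneously controls $\langle \cN\rangle_{\xi_t}$, $\langle \cN^2/N\rangle_{\xi_t}$, and $\langle \mathcal{W}(t)\rangle_{\xi_t}$, where $\xi_t:=\cU_N(t;0)\psi$ and $\mathcal{W}(t):=\cL_N(t)-\eta_N(t)$. Concretely, set
\[
\Phi(t) := \langle \xi_t, \cN \xi_t\rangle + \frac{1}{N}\langle \xi_t, \cN^2\xi_t\rangle + \langle \xi_t, \mathcal{W}(t)\xi_t\rangle + D\, e^{K|t|},
\]
with $D,K$ large enough that Proposition \ref{prop:compLKV} forces $\Phi(t)\ge 0$ (the lower bound $\mathcal{W}(t)\ge\tfrac{1}{2}(\cK+\cV_N)-Ce^{K|t|}(\cN+1+\cN^2/N)$ is reabsorbed by $D\,e^{K|t|}$ together with suitable multiples of $\cN$ and $\cN^2/N$, after a time-dependent rescaling of the first two summands if necessary). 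At $t=0$ the upper bound $\mathcal{W}(0)\le 2\cH_N + C(\cN+1+\cN^2/N)$ from Proposition \ref{prop:compLKV} and the hypothesis on $\psi$ give $\Phi(0)\le C\langle\psi,(\cN+\cN^2/N+\cH_N)\psi\rangle$.

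The next step is to estimate $\partial_t\Phi(t)$. Since $\cU_N$ is generated by $\cL_N(t)$, $\partial_t\langle\xi_t,O(t)\xi_t\rangle=\langle\xi_t,i[\cL_N(t),O(t)]\xi_t\rangle+\langle\xi_t,\dot O(t)\xi_t\rangle$. For $O=\cN$ the commutator reduces to $i[\cL_{2,N}(t)+\cE_N(t),\cN]$ (since $\cV_N$ and $\eta_N(t)$ are $\cN$-preserving), and Theorem \ref{thm:LN2} yields, for any $\delta>0$,
\[
|\partial_t\langle\cN\rangle_{\xi_t}|\le \delta\,\langle\cV_N\rangle_{\xi_t}+C\langle\cN^2/N\rangle_{\xi_t}+C_\delta\, e^{K|t|}\langle\cN+1\rangle_{\xi_t}.
\]
The $\cV_N$ contribution is then dominated, through Proposition \ref{prop:compLKV}, by $2\langle\mathcal{W}(t)\rangle_{\xi_t}+Ce^{K|t|}\langle\cN+1+\cN^2/N\rangle_{\xi_t}$, and hence absorbed into $\Phi(t)$. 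For $O=\mathcal{W}(t)$ the commutator vanishes identically ($\cL_N(t)$ commutes with itself, $\eta_N(t)$ is a c-number), so
\[
\partial_t\langle\mathcal{W}(t)\rangle_{\xi_t}=\langle\dot{\cL}_{2,N}(t)\rangle_{\xi_t}+\langle\dot{\cE}_N(t)\rangle_{\xi_t}-\dot\eta_N(t),
\]
controlled by the bounds on $\dot{\cL}_{2,N}$ and $\dot{\cE}_N$ in Theorem \ref{thm:LN2} together with a direct estimate $|\dot\eta_N(t)|\le Ce^{K|t|}$ obtained from \eqref{eq:etaN} and the Sobolev bounds on $\phn_t$ provided by Proposition \ref{prop:ph}.

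The most delicate piece is $O=\cN^2/N$: the commutator $[\cN^2,\cL_{2,N}(t)]/N$ is explicitly estimated in Theorem \ref{thm:LN2}, whereas $[\cN^2,\cE_N(t)]/N$ is \emph{not} on the list. The analogous bound $\pm[\cN^2,\cE_N(t)]/N\le \delta\,\cV_N+C(\cN+1)^2/N+C_\delta e^{K|t|}(\cN+1)$ can however be established by revisiting the term-by-term decomposition $\cE_N=\cE_{3,N}+\cE_{4,N}+A_1+B_1+B_2+B_3$ assembled in the proof of Theorem \ref{thm:LN2}: each summand is a normal-ordered monomial of degree at most four in creation/annihilation operators, hence $[\cN^2,\cdot]$ acts as multiplication by a polynomial of degree one in $\cN$, and the Cauchy-Schwarz and kernel estimates of subsections 3.3-3.4 can be reproduced with one extra factor of $\cN$ absorbed via Lemmas \ref{lm:LN4} and \ref{lm:LN4-cc}. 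This is the main technical obstacle of the proof; once cleared, the estimate has precisely the same structure as the bounds on $\pm[\cN,\cE_N(t)]$ and fits into the Gronwall scheme.

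Combining the three derivative bounds and re-expressing the errors in terms of $\Phi(t)$ through Proposition \ref{prop:compLKV}, one arrives at $|\partial_t\Phi(t)|\le Ce^{K|t|}\Phi(t)$. Gronwall then gives $\Phi(t)\le\Phi(0)\exp\!\bigl(\int_0^{|t|}Ce^{Ks}\,ds\bigr)\le C\Phi(0)\exp(c_1\exp(c_2|t|))$. The first conclusion of the theorem follows because $\langle\cN\rangle_{\xi_t}\le\Phi(t)$; the second because $\langle\mathcal{W}(t)\rangle_{\xi_t}\le\Phi(t)$ from above, while the lower bound $\mathcal{W}(t)\ge-Ce^{K|t|}(\cN+1+\cN^2/N)$ of Proposition \ref{prop:compLKV} gives $\langle\mathcal{W}(t)\rangle_{\xi_t}\ge -Ce^{K|t|}\Phi(t)$, so that $|\langle\mathcal{W}(t)\rangle_{\xi_t}|\le Ce^{K|t|}\Phi(t)\le C\exp(c_1\exp(c_2|t|))\,\Phi(0)$.
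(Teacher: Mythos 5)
The plan is sound in outline — a Gronwall argument for a functional combining $\langle\cN\rangle_{\xi_t}$ and $\langle\cL_N(t)-\eta_N(t)\rangle_{\xi_t}$, using the commutator and time-derivative estimates from Theorem~\ref{thm:LN2} and Proposition~\ref{prop:compLKV} — but the way you handle the $\cN^2/N$ term introduces a genuine gap, and your proposed fix does not work.

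You include $\langle\cN^2/N\rangle_{\xi_t}$ as a third summand in $\Phi(t)$, which forces you to differentiate it along the flow and hence to control $[\cN^2,\cE_N(t)]/N$. You correctly notice this bound is not in Theorem~\ref{thm:LN2}, and you propose that $\pm[\cN^2,\cE_N(t)]/N\le \delta\,\cV_N+C(\cN+1)^2/N+C_\delta e^{K|t|}(\cN+1)$ can be recovered by redoing the term-by-term estimates ``with one extra factor of $\cN$''. This is not correct. Take for instance the cubic piece $C_2(t)$ of $\cE_{3,N}(t)$, which changes particle number by $\pm1$ or $\pm3$; then $[\cN^2,C_2]$ is (up to bounded coefficients) $\cN C_2+C_2\cN$. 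Repeating the paper's Cauchy--Schwarz step for $\langle\cN\psi,C_2\psi\rangle$, the extra $\cN$ lands on the $\cV_N$ side, and after including the additional $1/N$, the first resulting term is of the order $\frac{\kappa}{N^{2}}\int\! dxdy\, N^{3\beta}V(N^\beta(x-y))\,\|a(c^N_x)a(c^N_y)\cN\psi\|^2\approx \kappa\,\langle\psi,\cV_N\cN^2/N^2\,\psi\rangle$. Since $[\cN,\cV_N]=0$, this is $\kappa\,\langle\psi,\cV_N\,\cN^2/N^2\,\psi\rangle$, and this quantity is \emph{not} bounded by $\delta\langle\psi,\cV_N\psi\rangle+C\langle\psi,(\cN+1)^2/N\psi\rangle+C_\delta e^{K|t|}\langle\psi,(\cN+1)\psi\rangle$ for fixed $\delta,C$: testing against a state concentrated in a high particle-number sector with large potential energy, the left side scales like the \emph{product} of $\langle\cV_N\rangle$ and $\langle\cN^2/N^2\rangle$ while the right side scales like their sum. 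So the claimed operator inequality is false, and the Gronwall scheme as you have set it up cannot close.

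The key idea you are missing is Lemma~\ref{lm:unu}, which the paper invokes precisely to avoid this problem. Rather than including $\cN^2/N$ in the Gronwall functional, the paper keeps only $Ce^{K|t|}(\cN+1)+(\cL_N(t)-\eta_N(t))$. When $\cN^2/N$ appears in the commutator bound $\pm[\cN,\cE_N(t)]\leq\delta\cV_N+C\cN^2/N+C_\delta e^{K|t|}(\cN+1)$, it is estimated \emph{a posteriori} via Lemma~\ref{lm:unu}: dividing by $N$, one gets $\langle\xi_t,\cN^2/N\,\xi_t\rangle\leq C\langle\xi_t,\cN\xi_t\rangle+C\langle\psi,[(\cN+1)+(\cN+1)^2/N]\psi\rangle$, i.e.\ a term already present in the Gronwall functional plus a constant depending only on the initial data. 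This completely sidesteps any need to differentiate $\cN^2/N$ along $\cU_N$, and in particular any need for a bound on $[\cN^2,\cE_N(t)]$. Your argument should be repaired by dropping the $\cN^2/N$ summand from $\Phi(t)$ and invoking Lemma~\ref{lm:unu} where $\cN^2/N$ arises in the error estimates.
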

To prove Theorem \ref{tm:gn}, we use the following lemma, whose proof can be found in~$\cite{BdOS12}$.
\begin{lemma}\label{lm:unu}
 Let the fluctuation dynamics $\cU_N (t;s)$ be defined as in (\ref{eq:flucd}). Then there exists a constant $C > 0$ such that
 \[ \cU_N^* (t;0) \cN^2 \, \cU_N (t;0) \leq C \big( N  \, \cU_N^* (t;0) \cN \, \cU_N (t;0) + N (\cN+1) + (\cN+1)^2 \big).
 \]
\end{lemma}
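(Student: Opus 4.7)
The plan is to exploit the fact that the many-body Hamiltonian $\cH_N$ commutes with $\cN$, so that all the time dependence of $\cU_N^*(t;0)\cN^k\cU_N(t;0)$ is carried by the Weyl and Bogoliubov transformations $W(\sqrt{N}\phn_t) T_{N,t}$. From the definition of $\cU_N(t;0)$ one writes
\[
\cU_N^*(t;0)\,\cN^2\,\cU_N(t;0) = T_{N,0}^* W^*(\sqrt{N}\ph)\,e^{i\cH_N t}\,Y_t^2\,e^{-i\cH_N t}\,W(\sqrt{N}\ph) T_{N,0}\,,
\]
with $Y_t := W(\sqrt{N}\phn_t)T_{N,t}\,\cN\,T_{N,t}^* W^*(\sqrt{N}\phn_t)$, so that the lemma is reduced to an operator inequality for $Y_t^2$ on Fock space.

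I would first compute $Y_t = \int \tilde a_x^* \tilde a_x\,dx$ with $\tilde a_x := W(\sqrt{N}\phn_t) T_{N,t}\,a_x\,T_{N,t}^* W^*(\sqrt{N}\phn_t) = a(c^N_x) - a^*(s^N_x) + \alpha_x$, where $\alpha_x := -\sqrt{N}\langle c^N_x, \phn_t\rangle + \sqrt{N}\,\overline{\langle s^N_x, \phn_t\rangle}$ is a c-number of order $\sqrt{N}$. The modified operators $\tilde a_x$ still obey the CCR. Normal ordering produces a decomposition $Y_t = C_t + A_t + B_t$ in which $C_t = \|s_{N,t}\|_{\rm HS}^2 + \int |\alpha_x|^2\,dx$ is a positive c-number of size $O(N)$, $A_t$ is quadratic in $a, a^*$ with $\pm A_t \leq C(\cN+1)$ and $A_t^2 \leq C(\cN+1)^2$, and $B_t$ is linear in $a, a^*$ with $\pm B_t \leq C\sqrt{N}\,(\cN+1)^{1/2}$ and $B_t^2 \leq CN(\cN+1)$. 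These quadratic-form bounds follow from the Hilbert--Schmidt estimates for $\sinh_{k_{N,t}}$ and $\cosh_{k_{N,t}}-1$ in Section \ref{s:kernel}, combined with the standard CCR manipulations used in Lemmas \ref{lm:L2K}--\ref{lm:L2hat}.

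The core operator inequality is
\[
Y_t^2 \leq CN\,Y_t + C(\cN+1)^2 + CN(\cN+1),
\]
which I would derive by writing $Y_t^2 = 2C_t\,Y_t - C_t^2 + (Y_t - C_t)^2$ and estimating $(Y_t - C_t)^2 = (A_t + B_t)^2 \leq C(\cN+1)^2 + CN(\cN+1)$ via Cauchy--Schwarz (using the bounds on $A_t^2$, $B_t^2$ and the standard $\pm\{A_t,B_t\} \leq A_t^2 + B_t^2$). Since $(\cN+1)^p$ commutes with $\cH_N$ for $p=1,2$, conjugating both sides by $e^{-i\cH_N t}$ preserves the inequality; conjugating finally by $W(\sqrt{N}\ph) T_{N,0}$ converts the first term on the right into $CN\,\cU_N^*(t;0)\cN\cU_N(t;0)$, while the remaining contributions $T_{N,0}^* W^*(\sqrt{N}\ph)(\cN+1)^p W(\sqrt{N}\ph) T_{N,0}$ are bounded back in terms of $(\cN+1)^p$ by applying the same decomposition and operator inequality at time $t=0$.

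The main obstacle is keeping the constants under control. A naive expansion of $Y_t^2 = (C_t + A_t + B_t)^2$ followed by Cauchy--Schwarz on the cross term $2C_t B_t$ generates an $O(N^2)$ c-number contribution that would spoil the bound. The rearrangement $Y_t^2 = 2C_t\,Y_t + (Y_t - C_t)^2 - C_t^2$ is essential because the negative $-C_t^2$ exactly cancels this leading $N^2$ growth. The delicate balance between the $O(N)$ scale of $C_t$ and the $O(\sqrt{N})$ scale of $B_t$ is enforced by the Bogoliubov identity $\cosh_{k_{N,t}}^2 - \sinh_{k_{N,t}}^2 = 1$, which links the coefficients in $C_t$ with those in $B_t$ and guarantees that the right-hand side involves only $N(\cN+1)$ and $(\cN+1)^2$.
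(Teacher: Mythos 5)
Your reduction to an operator inequality for $Y_t = W(\sqrt{N} \phn_t) T_{N,t}\, \cN\, T_{N,t}^* W^*(\sqrt{N}\phn_t)$, the normal‑ordered decomposition $Y_t = C_t + A_t + B_t$, and the rearrangement $Y_t^2 = 2C_t Y_t - C_t^2 + (Y_t - C_t)^2$ are all fine. The argument breaks at the very last step: it is \emph{not} true that $T_{N,0}^* W^*(\sqrt{N}\ph)(\cN+1)^p W(\sqrt{N}\ph) T_{N,0}$ can be ``bounded back in terms of $(\cN+1)^p$''. The Weyl operator shifts the number operator by the condensate size, $W^*(\sqrt{N}\ph)\,\cN\, W(\sqrt{N}\ph) = \cN + \sqrt{N}\,(a(\ph)+a^*(\ph)) + N$, so on vectors with few particles this conjugated operator has expectation of order $N$ and its square of order $N^2$ (e.g.\ $\langle \Omega, T_{N,0}^* W^* (\cN+1)^2 W T_{N,0}\,\Omega\rangle \sim N^2$). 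Hence conjugating your inequality $Y_t^2 \leq CNY_t + C(\cN+1)^2 + CN(\cN+1)$ back by $e^{-i\cH_N t} W(\sqrt{N}\ph) T_{N,0}$ leaves an irreducible additive $CN^2$ on the right-hand side. That is strictly weaker than the lemma: on the vacuum the claimed right-hand side is only $O(N)$, since $\langle \cU_N\Omega, \cN\, \cU_N\Omega\rangle$ stays bounded in $N$, and the extra $N^2$ would wreck the application in Theorem \ref{tm:gn}, where $\cN^2/N$ must be converted into an $O(1)$ quantity. The Bogoliubov identity $\cosh^2-\sinh^2=1$ that you invoke at the end does not address this; the problem is not inside $Y_t^2$, it is in undoing the Weyl shift afterwards.

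The missing idea (this is how the cited reference \cite{BdOS12} proceeds) is to subtract the condensate size \emph{inside the conserved quantity, before squaring}. Set $Z_s := T_{N,s}^* W^*(\sqrt{N}\phn_s)\, \cN\, W(\sqrt{N}\phn_s) T_{N,s}$ — note this is the conjugation in the direction opposite to your $Y_t$. Since $[\cH_N,\cN]=0$ one has the exact identities $\cU_N^*(t;0)\, Z_t\, \cU_N(t;0) = Z_0$ and therefore $\cU_N^*(t;0)\, (Z_t - N)^2\, \cU_N(t;0) = (Z_0-N)^2$. The point is that $Z_s - N = T_{N,s}^* \cN T_{N,s} + \sqrt{N}\, T_{N,s}^*\bigl(a(\phn_s)+a^*(\phn_s)\bigr) T_{N,s}$ carries no $O(N)$ c-number, so $(Z_0-N)^2 \leq C\bigl((\cN+1)^2 + N(\cN+1)\bigr)$, while $(Z_t-N)^2 \geq \tfrac12 (T_{N,t}^*\cN T_{N,t})^2 - CN(\cN+1) \geq c\,\cN^2 - CN(\cN+1) - C$, using $T(\cN+1)^2T^* \leq C(\cN+1)^2$ for Bogoliubov transformations with uniformly bounded Hilbert--Schmidt kernel. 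Combining the two sides and bounding the cross terms by $N\,\cU_N^*(\cN+1)\cU_N$ gives exactly the stated inequality, with no $N^2$. The cancellation you need is the one between the two $+N$ shifts at times $0$ and $t$ inside the conserved operator $Z_s$; it is lost as soon as you estimate $(\cN+1)^p$ in the condensate frame, as your last step does.
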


\begin{proof}[Proof of Theorem \ref{tm:gn}]
From Theorem \ref{thm:LN2}, \eqref{eq:bd-LKVN} and Lemma \ref{lm:unu}, we obtain
\[ \begin{split} \frac{d}{dt} \langle \cU_{N} &(t;0) \psi, \cN \cU_{N} (t;0) \psi \rangle  \\ = \; & \langle \cU_{N} (t;0) \psi, [i\cN, \cL_{N} (t) ] \cU_{N} (t;0) \psi \rangle  \\ 
\leq \; &C e^{K|t|} \expv{\UU_N(t;0)\psi}{\NN\UU_N(t;0)\psi} + \langle \cU_N (t;0) \psi, (\cL_N (t) - \eta_N (t)) \, \cU_N (t;0) \psi \rangle \\ &+ C e^{K|t|} \left\langle \psi, \left[ (\cN+1) + \cN^2/N \right] \psi \right\rangle \end{split} \]
and
\[ \begin{split}  \frac{d}{dt} \langle \cU_{N} &(t;0) \psi, (\cL_N(t) - \eta_N (t)) \, \cU_{N} (t;0) \psi \rangle \\ = \; & \langle \cU_{N} (t;0) \psi, (\dot \cL_N(t) - \dot{\eta}_N (t))\, \cU_{N} (t;0) \psi \rangle \\
\leq \; &C e^{K|t|} \expv{\UU_N(t;0)\psi}{\NN\UU_N(t;0)\psi} + \langle \cU_N (t;0) \psi, (\cL_N (t) - \eta_N (t)) \, \cU_N (t;0) \psi \rangle \\ &+ C e^{K|t|} \left\langle \psi, \left[ (\cN+1) + \cN^2/N \right] \psi \right\rangle \end{split} \]
We conclude that
\[ \begin{split} 
\frac{d}{dt} & \left\langle \cU_N (t;0) \psi, \left[ C e^{K |t|} (\cN + 1) + (\cL_N (t) - \eta_N (t)) \right] \cU_N (t;0) \psi \right\rangle \\ &\leq D e^{K |t|} \left\langle \cU_N (t;0) \psi, \left[ C e^{K|t|} (\cN + 1) + (\cL_N (t) - \eta_N (t)) \right] \cU_N (t;0) \psi \right\rangle \\ &\quad + D e^{2K|t|} \left\langle \psi, \left[ (\cN+1) + \cN^2/N \right] \psi \right\rangle \end{split} \]
for appropriate constants $C,K$ (chosen so, that the operator $C\exp (K|t|) (\cN+1) + (\cL_N (t) - \eta_N (t))$ is non-negative) and $D > 0$. Gronwall's Lemma implies that 
\[ \begin{split}  &\left\langle \cU_N (t;0) \psi, \left[ C e^{K |t|} (\cN + 1) + (\cL_N (t) - \eta_N (t)) \right] \cU_N (t;0) \psi \right\rangle \\ &\hspace{3cm} \leq C \exp (c_1 \exp (c_2 |t|))  \left\langle \psi, \left[ (\cN+1) + \cN^2/N + \cH_N \right] \psi \right\rangle \end{split} \]
which completes the proof of the theorem.
\end{proof}

Finally, we control growth of number of particles, energy and their squares w.r.t. quadratic fluctuation dynamics. 
\begin{theorem}\label{thm:5}
Under the same assumptions as in Proposition \ref{prop:UN-UN2}, there exist constants $C,c_1,c_2 > 0$ such that 
\begin{equation}\label{eq:bds-U2} \begin{split} \langle \cU_{2,N} (t;0) \psi, \cN \cU_{2,N} (t;0) \psi \rangle &\leq C \exp (c_1 \exp (c_2 |t|)) \langle \psi, (\cN+1) \psi \rangle \\
\langle \cU_{2,N} (t;0) \psi, \cN^2 \cU_{2,N} (t;0) \psi \rangle &\leq C \exp (c_1 \exp (c_2 |t|)) \langle \psi, (\cN+1)^2 \psi \rangle  \\
\langle \cU_{2,N} (t;0) \psi, \cL_{2,N}^2 (t) \cU_{2,N} (t;0) \psi \rangle &\leq C \exp (c_1 \exp (c_2 |t|)) \langle \psi, (\cK + \cN +1)^2 \psi \rangle \end{split} \end{equation}
\end{theorem}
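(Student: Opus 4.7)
Set $\phi_t = \cU_{2,N}(t;0)\psi$. For each of the three bounds in (\ref{eq:bds-U2}) I would differentiate in $t$ and close a Gronwall inequality using the operator bounds on $\cL_{2,N}$, its commutators and its time derivative collected in Theorem~\ref{thm:LN2}. The starting point is that, since $\cU_{2,N}$ is generated by the self-adjoint family $\cL_{2,N}(t)$, for any self-adjoint family $A(t)$ one has
\[ \frac{d}{dt}\langle\phi_t, A(t)\phi_t\rangle = i\langle\phi_t, [\cL_{2,N}(t), A(t)]\phi_t\rangle + \langle\phi_t, \dot{A}(t)\phi_t\rangle. \]

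Apply this first with $A = \cN$: the second term vanishes and the bound $\pm i[\cL_{2,N}(t),\cN]\leq Ce^{K|t|}(\cN+1)$ turns the derivative into a linear differential inequality for $\langle\phi_t,(\cN+1)\phi_t\rangle$, so Gronwall yields the first bound. For $A = \cN^2$ the same scheme applies, using $\pm i[\cL_{2,N}(t),\cN^2]\leq Ce^{K|t|}(\cN+1)^2$; the elementary pointwise spectral inequality $(\cN+1)^2\leq 3(\cN^2+1)$, valid because $\cN$ has non-negative integer spectrum, lets the right-hand side be absorbed into $\langle\phi_t,\cN^2\phi_t\rangle+1$ so that Gronwall closes and gives the second bound.

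The third bound requires genuine care. Taking $A(t) = \cL_{2,N}^2(t)$, the commutator $[\cL_{2,N}(t),\cL_{2,N}^2(t)]$ vanishes, so only the explicit time derivative survives and one is left with $2\,\mathrm{Re}\,\langle\phi_t, \dot{\cL}_{2,N}(t)\cL_{2,N}(t)\phi_t\rangle$. By Cauchy--Schwarz and the bound $|\dot{\cL}_{2,N}(t)|^2\leq Ce^{K|t|}(\cN+1)^2$ from Theorem~\ref{thm:LN2}, this quantity is controlled by $Ce^{K|t|}\|(\cN+1)\phi_t\|^2 + \|\cL_{2,N}(t)\phi_t\|^2$. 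One obtains therefore a Gronwall-type inequality for $\langle\phi_t,\cL_{2,N}^2(t)\phi_t\rangle$ whose inhomogeneity is exactly $\langle\phi_t,(\cN+1)^2\phi_t\rangle$. Plugging in the $\cN^2$ estimate already proved controls this inhomogeneity by $C\exp(c_1\exp(c_2|t|))\langle\psi,(\cN+1)^2\psi\rangle$. The initial datum is estimated via (\ref{eq:L2K2}) together with the fact that $\cK$ commutes with $\cN$, which yields $(\cK+\cN+1)^2\geq \cK^2+(\cN+1)^2$ and hence $\langle\psi,\cL_{2,N}^2(0)\psi\rangle\leq C\langle\psi,(\cK+\cN+1)^2\psi\rangle$. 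A final Gronwall then gives the third estimate in the required form.

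The main obstacle is the third bound: $\cL_{2,N}^2(t)$ itself cannot be dominated by $(\cN+1)^2$ (it contains genuine $\cK^2$-contributions), so a single Gronwall on $\cL_{2,N}^2$ alone cannot close. The structural observation that makes the scheme work is that, although $\cL_{2,N}$ is unbounded relative to $\cN$, its time derivative $\dot{\cL}_{2,N}$ is bounded purely by $(\cN+1)$, without any $\cK$; this allows the right-hand side of the Gronwall for $\cL_{2,N}^2$ to be driven only by the $\cN^2$-growth estimate. The proof must therefore be executed in the order $\cN \to \cN^2 \to \cL_{2,N}^2$, each step feeding the next.
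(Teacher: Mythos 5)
Your proposal matches the paper's proof essentially step for step: Gronwall for $\cN$ using $\pm[\cN,\cL_{2,N}]\leq Ce^{K|t|}(\cN+1)$, the same for $\cN^2$ using $\pm[\cN^2,\cL_{2,N}]\leq Ce^{K|t|}(\cN+1)^2$, and then a Gronwall for $\cL_{2,N}^2(t)$ driven by $\dot{\cL}_{2,N}\cL_{2,N}+\hc$, Cauchy--Schwarz, the bound $\dot{\cL}_{2,N}^2\leq Ce^{K|t|}(\cN+1)^2$, the already-proved $\cN^2$ estimate, and finally \eqref{eq:L2K2} for the initial datum. Your concluding remark — that $\dot{\cL}_{2,N}$ is controlled by $(\cN+1)$ alone, with no $\cK$, which is what lets the $\cL_{2,N}^2$ Gronwall close — is exactly the structural point the paper exploits.
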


{\it Remark.} The last bound in (\ref{eq:bds-U2}) , combined with (\ref{eq:L2K2}), also implies that 
\[ \langle \cU_{2,N} (t;0) \psi, \cK^2 \, \cU_{2,N} (t;0) \psi \rangle \leq C \exp (c_1 \exp (c_2 |t|)) \langle \psi, (\cK + \cN +1)^2 \psi \rangle \]

\begin{proof}
{F}rom Theorem \ref{thm:LN2}, we obtain 
\[ \begin{split} \frac{d}{dt} \langle \cU_{2,N} (t;0) \psi, (\cN+1) \cU_{2,N} (t;0) \psi \rangle &= \langle \cU_{2,N} (t;0) \psi, [i\cN, \cL_{2,N} (t) ] \cU_{2,N} (t;0) \psi \rangle  \\ &\leq C e^{K|t|} \langle \cU_{2,N} (t;0) \psi, (\cN+1) \cU_{2,N} (t;0) \psi \rangle \end{split} \]
Gronwall's lemma implies that
\[ \langle \cU_{2,N} (t;0) \psi, \cN \cU_{2,N} (t;0) \psi \rangle \leq C \exp (c_1 \exp (c_2 |t|)) \langle \psi, (\cN+1) \psi \rangle \] 

The estimate for $\cN^2$ follows analogously. Finally, we compute, using again the bounds in Theorem \ref{thm:LN2},
\[ \begin{split} 
\frac{d}{dt} \,  \langle \cU_{2,N} &(t;0) \psi, \cL_{2,N}^2 (t) \cU_{2,N} (t;0) \psi \rangle \\
&= \langle \cU_{2,N} (t;0) \psi, \left[ \dot{\cL}_{2,N} (t) \cL_{2,N} (t) + \text{h.c.} \right] \cU_{2,N} (t;0) \psi \rangle \\ &\leq \langle \cU_{2,N} (t;0) \psi, \cL_{2,N}^2 (t) \cU_{2,N} (t;0) \psi \rangle + \langle \cU_{2,N} (t;0) \psi, \dot{\cL}_{2,N}^2 (t) \cU_{2,N} (t;0) \psi \rangle \\ &\leq \langle \cU_{2,N} (t;0) \psi, \cL_{2,N}^2 (t) \cU_{2,N} (t;0) \psi \rangle + C e^{K|t|} \langle \cU_{2,N} (t;0) \psi, (\cN+1)^2 \cU_{2,N} (t;0) \psi \rangle \\ &\leq \langle \cU_{2,N} (t;0) \psi, \cL_{2,N}^2 (t) \cU_{2,N} (t;0) \psi \rangle + C \exp (c \exp (c|t|)) \langle \psi, (\cN+1)^2 \psi \rangle \end{split} \]
Gronwall's lemma and equation \eqref{eq:L2K2} imply the last bound in (\ref{eq:bds-U2}). 
\end{proof}

With the help of Theorem \ref{tm:gn} and Theorem \ref{thm:5}, we can now conclude the proof of Proposition \ref{prop:UN-UN2}. 

\begin{proof}[Proof of Proposition \ref{prop:UN-UN2}] 
We have 
\[ \begin{split} \frac{d}{dt}&\left\| \cU_N (t;0) \xi_N - e^{-i \int_0^t \eta_N (s) ds} \, \cU_{2,N} (t;0) \xi_N \right\| ^2 \\
&=2\Im\scal{\UU_N(t,0)\xi_N}{ (\LL_N(t) -\LL_{2,N}(t)-\eta_N (t))\,e^{-i \int_0^t \eta_N (s) ds} \, \cU_{2,N} (t;0) \xi_N}  \\
&\leq C|\scal{\UU_N(t,0)\xi_N}{\mathcal{E}(t) \,\cU_{2,N} (t;0) \xi_N} |+C | \scal{\UU_N(t,0)\xi_N}{\cV_N \,\cU_{2,N} (t;0) \xi_N}| \\
&\leq C N^{-\alpha} e^{K|t|} \Big[ \langle \cU_N (t;0) \xi_N, (\cK + \cV_N + \cN + 1) \cU_N (t;0) \xi_N \rangle \\ &\hspace{4cm} + \langle \cU_{2,N} (t;0) \xi_N, (\cK^2 + \cN^2 + 1) \cU_{2,N} (t;0) \xi_N \rangle \Big] \end{split} \]
where $\alpha = \min (\beta/2, (1-\beta)/2)$ and where we used the bounds in Theorem \ref{thm:LN2} and Lemma \ref{lm:VNK2}. Applying Proposition \ref{prop:compLKV}, and then Theorems \ref{tm:gn} and \ref{thm:5}, we obtain 
\[ \begin{split} \frac{d}{dt}&\left\| \cU_N (t;0) \xi_N - e^{-i \int_0^t \eta_N (s) ds} \, \cU_{2,N} (t;0) \xi_N \right\| ^2 \\ & \leq C \exp (c_1 \exp (c_2 |t|)) N^{-\alpha} \langle \xi_N, (\cN^2 + \cK^2 + \cV_N) \xi_N \rangle \end{split} \]
Integrating over time, we conclude the proof of Proposition \ref{prop:UN-UN2}.
\end{proof}

\section{Comparison with the limiting fluctuation dynamics}    
\label{Uio}

Next, we prove Theorem \ref{thm:main2}. Proceeding similarly as in Proposition \ref{prop:UN-UN2}, we observe that 
\[ \begin{split} 
e^{-i\cH_N t} W(\sqrt{N} \ph) T_{N,0} \xi_N &- e^{-i \int_0^t \eta_N (s) ds} W(\sqrt{N} \phn_t) T_{N,t} \cU_{2,\infty} \xi_N \\ &= W(\sqrt{N} \phn_t) T_{N,t} \left[ \cU_N (t;0) \xi_N - e^{-i \int_0^t \eta_N (s) ds} \cU_{2,\infty} (t;0) \xi_N \right] \end{split} \]
{F}rom the result of Theorem \ref{thm:main1}, it is enough to compare $\cU_{2,N} (t;0) \xi_N$ with $\cU_{2,\infty} (t;0) \xi_N$. To this end, we need to compare the two generators $\cL_{2,N} (t)$ and $\cL_{2,\infty} (t)$ defined in (\ref{eq:L2N0}) and (\ref{eq:L2inf0}); we do so in the next four lemmas. 

\begin{lemma}\label{lm:k}
Recall the definition of $\cL_{2,N}^{(K)}$ and $\cL_2^{(K)} (t)$ given in (\ref{eq:LK}) and after (\ref{eq:L2inf0}). Under the same assumptions as in Theorem \ref{thm:main2}, with $\alpha = \min (\beta/2, (1-\beta)/2)$, we have 
\begin{equation}\label{eq:l1-K} 
\begin{split} &\left| \langle \psi_1, \left( \cL_{2,N}^{(K)} (t) - \cL_2^{(K)} (t) \right) \psi_2 \rangle \right| \\ &\hspace{2cm} \leq CN^{-\alpha} \exp (c_1 \exp (c_2 |t|)) \, \| (\cN+1)^{1/2} \psi_1 \| \, \| (\cN+1)^{1/2} \psi_2 \| \end{split} 
\end{equation}
\end{lemma}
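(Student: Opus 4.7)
The strategy is to expand $\cL_{2,N}^{(K)}(t) - \cL_2^{(K)}(t)$ as a finite telescoping sum of bilinear expressions in creation and annihilation operators, each having the algebraic shape of one of the terms in (\ref{eq:LK}) but with exactly one of the kernels $k_{N,t}, s_{N,t}, p_{N,t}, r_{N,t}$ (or one of their first or second spatial derivatives) replaced by its difference with the limiting kernel $k_t, s_t, p_t, r_t$, while the remaining factor is kept in its $N$-dependent form. The bare kinetic operator $\int dx \, \nabla_x a_x^* \nabla_x a_x$ is independent of $N$ and cancels exactly. Every remaining term fits the pattern of $A_1, A_2, A_3$ in Lemma \ref{lm:L2K}, so the bilinear version of (\ref{eq:A123}) gives
\[ \left| \langle \psi_1, (\cL_{2,N}^{(K)}(t) - \cL_2^{(K)}(t)) \psi_2 \rangle \right| \leq C\Big(\textstyle\sum_i \| \delta j_i \|_2\Big) \| (\cN+1)^{1/2}\psi_1 \| \| (\cN+1)^{1/2}\psi_2 \|, \]
where the $\delta j_i$ range over the finitely many relevant kernel differences, each kept together with a uniformly $L^2$-bounded companion kernel.

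The problem thus reduces to establishing, at rate $CN^{-\alpha}\exp(c_1\exp(c_2|t|))$, the $L^2(\bR^3\times \bR^3)$ bounds for each of $k_{N,t}-k_t$, $s_{N,t}-s_t$, $p_{N,t}-p_t$, $r_{N,t}-r_t$ and the derivatives $\nabla(k_{N,t}-k_t)$, $\nabla(p_{N,t}-p_t)$, $\Delta(p_{N,t}-p_t)$, $\Delta(r_{N,t}-r_t)$. For $k_{N,t}-k_t$, I would split
\[ k_{N,t}(x;y)-k_t(x;y) = -\bigl(N\omega_{N,\ell}-\omega_\ell^{\mathrm{asymp}}\bigr)(x-y)\,\ph_t^2\!\bigl(\tfrac{x+y}{2}\bigr) - N\omega_{N,\ell}(x-y)\bigl((\phn_t)^2-\ph_t^2\bigr)\!\bigl(\tfrac{x+y}{2}\bigr) . \]
The first piece is controlled by a refinement of Lemma \ref{lm:propomega}: both functions solve scattering-type problems on $\{|x|\leq \ell\}$ whose difference has size $O(N^{-\beta})$ away from the origin, and matches at the boundary; when integrated against $\ph_t^2$ over the ball this yields an $L^2$ error of order $N^{-\beta/2}$. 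The second piece is handled via $\|\phn_t - \ph_t\|_{H^2} \leq C N^{-(1-\beta)}\exp(c_1\exp(c_2|t|))$, which follows by Gronwall comparison of (\ref{eq:NLSN1}) and (\ref{eq:NLS0}) after noting $\int N^{3\beta}V(N^\beta\cdot)f_{N,\ell} = b_0 + O(N^{-(1-\beta)})$. Balancing the two gives $\alpha = \min(\beta/2,(1-\beta)/2)$. The differences $s_{N,t}-s_t$, $p_{N,t}-p_t$, $r_{N,t}-r_t$ are then expanded as power series in $k_{N,t}\bar{k}_{N,t}$; each individual term contains at least one factor $k_{N,t}-k_t$, and the series is controlled term by term using that $\|k_{N,t}\|_2$ and $\|k_t\|_2$ are bounded uniformly in $N$ and $t$ in a compact time window (with an exponential weight otherwise).

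The main obstacle is the derivative versions $\|\nabla(k_{N,t}-k_t)\|_2$, $\|\Delta(p_{N,t}-p_t)\|_2$, $\|\Delta(r_{N,t}-r_t)\|_2$, which pick up one and two extra factors of $|x-y|^{-1}$ near the origin, precisely where $N\omega_{N,\ell}$ is peaked and where the pointwise control on $N\omega_{N,\ell}-\omega_\ell^{\mathrm{asymp}}$ is weakest. To circumvent this, I would split the radial integration into a near region $\{|x-y|\leq N^{-\beta}\}$, treated using the pointwise and gradient bounds of Lemma \ref{lm:propomega}~(iii), and a far region $\{N^{-\beta}<|x-y|\leq \ell\}$, where one exploits the explicit form of $\omega_\ell^{\mathrm{asymp}}$ in (\ref{eq:oasym0}) and the smooth Neumann solution on that annulus. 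The bookkeeping must recover $N^{-\alpha}$ in both regions without loss; once this is secured, inserting the resulting kernel estimates into the bilinear bound of the first paragraph completes the proof.
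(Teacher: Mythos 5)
You have the right global plan (telescope the difference into finitely many bilinear operators of the $A_1,A_2,A_3$ type and insert kernel-difference estimates into the bounds of Lemma~\ref{lm:L2K}), and your discussion of $k_{N,t}-k_t$ and of $\|\phn_t-\ph_t\|_{H^2}$ is essentially what the paper does. But your list of kernel differences contains $\nabla(k_{N,t}-k_t)$, and this is where the argument breaks: that quantity is \emph{not} finite in $L^2(\bR^3\times\bR^3)$. Indeed $\nabla\o_\ell^{\text{asymp}}(x)\sim \frac{b_0}{8\pi}|x|^{-2}$ near the origin, and $\int_{|x|\leq\ell}|x|^{-4}\,dx=\infty$ in three dimensions, so $\nabla_1 k_t\notin L^2$; moreover $\|\nabla_1 k_{N,t}\|_2\sim N^{\beta/2}$ diverges (Lemma~\ref{lm:deri}~(i)). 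No splitting of the radial integration into a near region $\{|x-y|\leq N^{-\beta}\}$ and a far region can salvage a quantity that is identically $+\infty$. The far-region piece already diverges logarithmically at the inner boundary, and the divergence is inherited by the difference because the two singularities are both of order $|x|^{-2}$ but multiplied by different condensate profiles.

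The term for which you would need this, namely
\[
B=\int dx\,\nabla_x a^*(k^N_x)\nabla_x a(k^N_x)-\int dx\,\nabla_x a^*(k_x)\nabla_x a(k_x),
\]
therefore cannot be handled by telescoping into $\nabla a^*(k^N-k)\nabla a(k^N)+\nabla a^*(k)\nabla a(k^N-k)$, since neither factor $\nabla(k^N-k)$ is an $L^2$ kernel. The paper's resolution is structurally different: it first performs the $x$-integration, rewriting $B$ as an $A_1$-type operator $\int dy\,a^*(u^N_y-u_y)a_y$ with the \emph{contracted} kernel $u_{N,t}(y,z)=\int dx\,\nabla_x k_{N,t}(x,y)\nabla_x\bar{k}_{N,t}(x,z)$, which \emph{is} Hilbert--Schmidt because of the extra convolution. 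It then bounds $\|u_{N,t}-u_t\|_2$ by integrating by parts so that the derivative never falls on the difference $k_{N,t}-k_t$, but only on a full $k_{N,t}$; the resulting $\Delta k_{N,t}$ is controlled via the Neumann scattering equation (\ref{rescaled_scattChi}), which turns $-\Delta\o_{N,\ell}$ into $\tfrac12 N^{3\beta-1}V(N^\beta\cdot)f_{N,\ell}-\l_{N,\ell}f_{N,\ell}$ and gives the crucial estimate $\sup_{z}\int dy\,|\Delta k_{N,t}(z,y)|\leq Ce^{K|t|}$ (an $L^1$-in-one-variable bound, not an $L^2$ one). Only $\|k_{N,t}-k_t\|_2$ ever appears; $\nabla(k_{N,t}-k_t)$ never does. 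Until this device is incorporated, your proof has a genuine gap at precisely the term you flagged as the main obstacle.
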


\begin{proof}
{F}rom (\ref{eq:LK}), all contributions to the difference $\cL^{(K)}_{2,N} - \cL^{(K)}_2$ have the form
\begin{equation}\label{eq:A1A2} \begin{split} 
A_1 &= \int dx a^\sharp (m_{x}) a_x \qquad \text{or } \qquad A_1^* = \int dx a_x^* a^\sharp (m_x)
\\
A_2 &= \int dx a^\sharp (m_{x}) a^\sharp (j_{x}) \qquad \text{or } \qquad A_2^* = \int a^\sharp (j_x) a^\sharp (m_x)  
\end{split} 
\end{equation}
Here $m,j \in L^2 (\bR^3 \times \bR^3)$ with $\| j \|_2 \leq C e^{K|t|}$ and
\begin{equation}\label{eq:diff-bd} \| m \|_2 \leq C N^{-\alpha}  \exp (c_1 \exp (c_2 |t|)) 
\end{equation}
Eq. (\ref{eq:l1-K}) follows therefore from the bounds 
\[ \left| \langle \psi_1, A_1 \psi_2 \rangle \right| \leq C \| m \|_2 \| (\cN+1)^{1/2} \psi_1 \| \, \| (\cN+1)^{1/2} \psi_2 \| \]
and 
\[ \left| \langle \psi_1, A_2 \psi_2 \rangle \right| \leq C \| m \|_2 \| j \|_2 \| (\cN+1)^{1/2} \psi_1 \| \, \| (\cN+1)^{1/2} \psi_2 \| \]
which can be proven as in Lemma \ref{lm:L2K}.

When we consider the different terms in (\ref{eq:LK}), the kernel $m$ is a difference like $k_{N,t} - k_t$, $p_{N,t} - p_t$, $r_{N,t}- r_t$, $\nabla p_{N,t} - \nabla p_t$, $\Delta p_{N,t} - \Delta p_t$ or $\Delta r_{N,t} - \Delta r_{t}$. In these cases, (\ref{eq:diff-bd}) follows from the results of Appendix \ref{s:kernel}. 

Notice that also the contribution
\begin{equation}\label{eq:termB} B = \int dx \nabla_x a^* (k^{N}_x) \nabla_x a(k^{N}_x) - \int dx \nabla_x a^* (k_x) \nabla_x a(k_x) \end{equation}
can be written in the form $A_1$ in (\ref{eq:A1A2}), with $m= u_{N,t} - u$ and 
\[ \begin{split} u_{N,t} (x,y) &= \int dz \nabla_x k_{N,t} (x,z) \nabla_x \overline{k}_{N,t} (x,y) , \qquad u_t (x,y) = \int dz \nabla_x k_t (x,z) \nabla_x \overline{k}_t (x,y) \end{split} \]
To prove that (\ref{eq:diff-bd}) holds in this case, we 
estimate
\begin{align}
 \norml{u_{N,t}-u_t}{2}^2=&\int dx dy |u_{N,t}(x,y)-u_t(x,y)|^2 \non\\
 \leq \; &C \int dx dy \left|\int dz \nabla_z k_{N,t}(z,y)( \nabla_z\bar k_{N,t}(x,z)-\nabla_z\bar k_t(x,z))\right|^2 \label{eq:diffU}\\
  &+C\int dx dy \left|\int dz ( \nabla_z k_{N,t}(z,y)-\nabla_zk_t(z,y))\nabla_z \bar k_{t}(x,z)\right|^2 \non
 \end{align}
Integrating by parts in the first term, we obtain 
\[\begin{split}
  \int dx dy& \left|\int dz \nabla_z k_{N,t}(z,y)( \nabla_z\bar k_{N,t}(x,z)-\nabla_z\bar k_t(x,z))\right|^2 \\
\leq&\int dx dy \left|\int dz \,\Delta_z k_{N,t}(z,y)( \bar k_{N,t}(x,z)-\bar k_t(x,z))\right|^2 \\
 \leq&\int dx dydz_1dz_2 \,\left|\Delta_{z_1} k_{N,t}(z_1,y)\right|\left|\Delta_{z_2} k_{N,t}(z_2,y)\right|\\
 &\hspace{2cm}\times\left|  k_{N,t}(x,z_1)- k_t(x,z_1)\right|\left|  k_{N,t}(x,z_2)- k_t(x,z_2)\right| 
 \end{split}
\]
By Cauchy-Schwarz inequality, we find  
\begin{equation}\label{eq:CSuu}\begin{split}
 \int &dx dy \left|\int dz \nabla_z k_{N,t}(z,y)( \nabla_z\bar k_{N,t}(x,z)-\nabla_z\bar k_t(x,z))\right|^2 \\
  \leq & \; \int dxdz_1\left|  k_{N,t}(x,z_1)- k_t(x,z_1)\right|^2\sup_{z_1}\int dy \left|\Delta_{z_1} k_{N,t}(z_1,y)\right|\sup_{y}\int dz_2 \left|\Delta_{z_2} k_{N,t}(z_2,y)\right|\\
  \leq & \; Ce^{K|t|} \norml{k_{N,t}-k_t}{2}^2
 \end{split}
\end{equation}
In the last inequality, we observed that 
\[\begin{split}
 \Delta_{z_2} & k_{N,t}(z_2,y) \\ =&-N \Delta \omega_{N,\ell}(z_2-y) (\phn_t ((z_2+y)/2))^2 \\
 &-\frac{N}{2} \o_{N,\ell}(z_2-y) \left[ \phn_t ((z_2+y)/2) \Delta \phn_t ((z_2+y)/2)) + (\nabla \phn_t )^2 ((z_2+y)/2) \right] 
\\ &-2N\nabla\o_{N,\ell} (z_2 -y) \phn_t ((z_2+y)/2) \nabla \phn_t ((z_2+y)/2)
 \end{split}
\]
Using the scattering equation for $f_{N,\ell} = 1-\omega_{N,\ell}$, we get 
\[\begin{split}
 \Delta_{z_2} &k_{N,t}(z_2,y) \\= & \; \frac
{N^{3\beta}}{2} V(N^\beta(z_2-y))f_{N,\ell}(z_2-y)(\phn_t((z_2+y)/2))^2\\
&-N\lambda_{N,\ell}f_{N,\ell}(z_2-y)\mathbf{1}(|z_2-y|\leq\ell) (\phn_t((z_2+y)/2))^2 \\
 &-\frac{N}{2} \o_{N,\ell}(z_2-y)\left[ \phn_t ((z_2+y)/2) \Delta\phn_t((z_2+y)/2)) + (\nabla \phn_t)^2 ((z_2+y)/2) \right] \\ &-2N \nabla \o_{N,\ell} (z_2 - y) \phn_t ((z_2+y)/2) \nabla\phn_t ((z_2+y)/2) \end{split}
\]
With the bounds from Lemma \ref{lm:propomega} and Proposition \ref{prop:ph}, we conclude that
\[ \sup_{z_2} \int dy \, |\Delta k_{N,t} (z_2 , y)| \leq C \| \phn_t \|_{H^2}^2 \leq C e^{K|t|}  \]

The results of Appendix \ref{s:kernel}, combined with (\ref{eq:CSuu}), imply that
\[ \begin{split} \int dx dy &\left|\int dz \nabla_z k_{N,t}(z,y)( \nabla_z\bar k_{N,t}(x,z)-\nabla_z\bar k_t(x,z))\right|^2 \leq C N^{-2\alpha} \exp (c_1 \exp (c_2 |t|)) \end{split} \]
Proceeding analogously to bound the second term on the r.h.s. of (\ref{eq:diffU}), we obtain 
\[ \| u_{N,t} - u_t \|_2 \leq C \exp (c_1 \exp (c_2 |t|)) \left[ N^{-1+\beta} + N^{-\beta/2} \right] \]
Hence, as claimed, also the term (\ref{eq:termB}) can be written as the term $A_1$ in (\ref{eq:A1A2}), with the kernel $m=u_N - u$ satisfying (\ref{eq:diff-bd}).
\end{proof}

\begin{lemma}\label{lm:v}
Recall the definition of $\cL_{2,N}^{(V)}$ and $\cL_2^{(V)} (t)$ given in (\ref{eq:LV}) and after (\ref{eq:L2inf0}). Under the same assumptions as in Theorem \ref{thm:main2}, with $\alpha = \min (\beta/2, (1-\beta)/2)$, we have  
\begin{equation*}
\begin{split} 
&\left| \langle \psi_1, \left( \cL_{2,N}^{(V)} (t) - \cL_2^{(V)} (t) \right) \psi_2 \rangle \right| \\ &\hspace{2.5cm}  \leq C N^{-\alpha} \exp (c_1 \exp (c_2 |t|)) \| (\cN+1)^{1/2} \psi_1 \| \| (\cN+\cK+ 1)^{1/2} \psi_2 \| 
\end{split}
\end{equation*}
(In fact, the kinetic energy operator $\cK$ could also be placed on $\psi_1$.)
\end{lemma}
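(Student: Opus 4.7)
\emph{Proof proposal.} The plan is to expand $\cL_{2,N}^{(V)}(t) - \cL_{2}^{(V)}(t)$ as a telescoping sum in which, inside each term of \eqref{eq:LV}, we replace one $N$-dependent ingredient---the condensate $\phn_t$ or one of the kernels $c^N_x,s^N_x,p^N_x$---by its limiting counterpart ($\ph_t$, $c_x,s_x,p_x$) one factor at a time, keeping the remaining ingredients in their $N$-dependent form. Each resulting summand has the structure of one of the operators $A_{i,j}$ of Lemma \ref{lm:L2hat}, but with a \emph{small} ingredient: a difference like $\phn_t-\ph_t$, $s_{N,t}-s_t$ or $p_{N,t}-p_t$.

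The two quantitative inputs I will use are: (i) from Proposition \ref{prop:ph}, the convergence rate
$\|\phn_t-\ph_t\|_{H^2}\leq CN^{-\alpha}e^{K|t|}$,
obtained by subtracting \eqref{eq:NLSN} and \eqref{eq:NLS0} and exploiting that $N^{3\beta}V(N^\beta\cdot)f_{N,\ell}\rightharpoonup b_0\,\delta$ at rate $N^{-\alpha}$; and (ii) the Hilbert--Schmidt bounds $\|s_{N,t}-s_t\|_2,\|p_{N,t}-p_t\|_2\leq CN^{-\alpha}e^{K|t|}$ from Appendix \ref{s:kernel} (and since $\|\,\cdot\,\|_\infty\leq C\|\,\cdot\,\|_{H^2}$, the same rate holds for $\|\phn_t-\ph_t\|_\infty$, which is what actually enters the bounds of Lemma \ref{lm:L2hat}). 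For the first four groups of terms in \eqref{eq:LV}---those where each creation/annihilation operator already has an $L^2$-test function---the bound is immediate: the estimates \eqref{eq:Aj123} applied to each telescoped summand give a contribution majorized by
$$
C e^{K|t|}\bigl(\|\phn_t-\ph_t\|_{H^2}+\|p_{N,t}-p_t\|_2+\|s_{N,t}-s_t\|_2\bigr) \|(\cN+1)^{1/2}\psi_1\|\,\|(\cN+1)^{1/2}\psi_2\|,
$$
which is of the required order $N^{-\alpha}e^{K|t|}$ and does not involve the kinetic energy.

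The delicate terms are groups five and six of \eqref{eq:LV}, which contain operators of the form $a^*(p_x^N)a_y^*$ and $a^*(c_x^N)a^*(p_y^N)=a_x^*a^*(p_y^N)+a^*(p_x^N)a^*(p_y^N)$, together with their adjoints. The contributions in which both factors carry an $L^2$ kernel ($p$ on both sides) are handled exactly as above. The remaining pieces, containing a factor $a_y^*$ or $a_x^*$ at a single point, must be rewritten as double creations with an \emph{effective} kernel; for instance,
$$
\int dx\,dy\; N^{3\beta}V(N^\beta(x-y))\phn_t(x)\phn_t(y)\,a^*(p_x^N)a_y^* = \int dz\,dy\;\widetilde K_N(z,y)\,a_z^*a_y^*,
$$
with $\widetilde K_N(z,y)=\int dx\,N^{3\beta}V(N^\beta(x-y))\phn_t(x)\phn_t(y)p_{N,t}(z,x)$. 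A Cauchy--Schwarz estimate using $\|V\|_1<\infty$, together with the inputs (i)--(ii), yields $\|\widetilde K_N-\widetilde K_\infty\|_2\leq CN^{-\alpha}e^{K|t|}$, and then the standard bound for a double creation provides the required control in terms of $\|(\cN+1)^{1/2}\psi_2\|$. The kinetic energy enters only in the subset of these terms that needs to be processed by integration by parts in the manner of the cubic term $C_2$ in the proof of Proposition \ref{prop:cL3N}: writing $N^{3\beta}V(N^\beta(x-y))=N^{2\beta}\nabla_y\!\cdot v(N^\beta(x-y))$ with $\nabla\!\cdot v=V$ and integrating by parts produces a $\nabla_y a_y$, whose contribution is controlled by placing $\cK^{1/2}$ on one side of the inner product to absorb the resulting factor $N^{\beta/2}$. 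The symmetry between $\psi_1$ and $\psi_2$ under adjunction explains the remark that $\cK$ can equally well be placed on $\psi_1$.

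The main obstacle is the careful bookkeeping in the integration-by-parts step for the non-$L^2$ pieces of groups five and six: one must verify that the net decay after all cancellations is $N^{-\alpha}$ with $\alpha=\min(\beta/2,(1-\beta)/2)$, and that a single power of $\cK$ on one of the test states is sufficient (no higher Sobolev norm is needed). Once this is in place, summing the finitely many telescoped contributions gives the bound of the lemma.
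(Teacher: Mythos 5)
Your telescoping setup and the handling of the pieces where both field operators carry an $L^2$ kernel are sound, but the proposal misses where the kinetic energy actually comes from, and the integration-by-parts device you suggest for the ``delicate'' part would not provide the required smallness.

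First, each $a^\sharp(c_x^N)$, upon writing $\cosh_{k_{N,t}} = 1 + p_{N,t}$, decomposes as $a_x^\sharp + a^\sharp(p_x^N)$, so bare operators $a_x^\sharp$ occur in \emph{all} six groups of (\ref{eq:LV}), not only in groups five and six. In particular group two contains the piece
$\int dx\,dy\,N^{3\beta}V(N^\beta(x-y))\,\phn_t(x)\bar\phn_t(y)\,a_x^*a_y$,
in which \emph{both} operators are bare and sit at different points. This is the only type (the $A_{1,3}$ of Lemma \ref{lm:L2hat}) that forces $\cK$ into the estimate: when $N^{3\beta}V(N^\beta(x-y))$ is compared with $b_0\,\delta(x-y)$ via the change of variables $y\mapsto x+yN^{-\beta}$, there is no smooth $L^2$ kernel to absorb the resulting spatial shift, and one must Taylor-expand
$a_{x+yN^{-\beta}}\psi_2 - a_x\psi_2 = N^{-\beta}\int_0^1 d\lambda\, y\cdot \nabla_x a_{x+\lambda yN^{-\beta}}\psi_2$,
paying $\|\cK^{1/2}\psi_2\|$ but gaining the essential factor $N^{-\beta}$ (this is exactly (\ref{eq:III-A13})). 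In groups five and six (and in the $A_{1,2}$-type pieces from groups three and four), the shift lands on the kernel $p^N$ or $s^N$, whose regularity ($\|\nabla_2 p_{N,t}\|_2\leq C$, $\|\nabla_2 s_{N,t}\|_2\leq CN^{\beta/2}$ from Lemma \ref{lm:deri}) produces the decay with no reference to $\cK$; those are not the delicate terms. Second, your telescoping as stated is incomplete: it replaces $\phn_t$ and the kernels $c^N,s^N,p^N$ but omits the separate comparison of the interaction kernel $N^{3\beta}V(N^\beta(x-y))$ with $b_0\delta(x-y)$; this missing step is precisely the one that generates both $\cK$ and the $N^{-\beta/2}$ part of the rate $\alpha=\min(\beta/2,(1-\beta)/2)$.

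The technique you invoke for the delicate term also does not work here. Writing $N^{3\beta}V(N^\beta(x-y))=N^{2\beta}\nabla_y\cdot v(N^\beta(x-y))$ and integrating by parts, as in the treatment of $C_2$ inside the proof of Proposition~\ref{prop:cL3N}, yields a factor $\|N^{2\beta}v(N^\beta\cdot)\|_2\sim N^{\beta/2}$ which \emph{grows} with $N$; that is acceptable there because $C_2$ carries an explicit $N^{-1/2}$ prefactor, giving $N^{(\beta-1)/2}<1$. The quadratic generator has no such prefactor, so the same manipulation leaves an estimate of size $N^{\beta/2}$, not a small one. Nor can one move $\nabla\cdot$ across the difference $N^{3\beta}V(N^\beta\cdot)-b_0\delta$: the corresponding vector field inherits a $|z|^{-2}$ singularity at the origin from the Green's function of the delta and fails to be square integrable. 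The smallness must instead come from the fact that $|x-y|\lesssim N^{-\beta}$ on the support of $V(N^\beta\cdot)$, exploited through the change of variables and Taylor expansions of both $\phn_t$ and the field operator, as carried out in (\ref{eq:term3}), (\ref{eq:A13}) and (\ref{eq:III-A13}).
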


\begin{proof}
{F}rom (\ref{eq:LV}), we observe that all terms in $\cL_{2,N}^{(V)} (t)$ have one of the following forms:
\begin{equation}\label{eq:A11-A23} 
\begin{split} 
A_{1,1}^N &= \int dx dy N^{3\beta} V(N^\beta (x-y)) \phn_t (x) \phn_t (y)  a^\sharp (j^N_{1,x}) a^\sharp (j^N_{2,y}) \\
A_{1,2}^N &= \int dx dy N^{3\beta} V(N^\beta (x-y)) \phn_t (x) \phn_t (y)  a^\sharp (j^N_{1,x}) a_y  \\
A_{1,3}^N &= \int dx dy N^{3\beta} V(N^\beta (x-y)) \phn_t (x) \phn_t (y)  a^*_x a_y \\ 
A_{2,1}^N &= \int dx dy N^{3\beta} V(N^\beta (x-y)) \phn_t (x) \phn_t (x) a^\sharp (j^N_{1,y}) a^\sharp (j^N_{2,y}) \\
A_{2,2}^N &= \int dx dy N^{3\beta} V(N^\beta (x-y)) \phn_t (x) \phn_t (x)   a^\sharp (j^N_{1,y}) a_y  \\
A_{2,3}^N &= \int dx dy N^{3\beta} V(N^\beta (x-y)) \phn_t (x) \phn_t (x) a^*_y a_y \end{split} \end{equation}
(or possibly, the form of the adjoint of $A_{1,2}^N$ or $A_{2,2}^N$). Here $\phn_t$ denotes the solution of the $N$-dependent nonlinear Schr\"odinger equation (\ref{eq:NLSN1}), and $j^N_1, j^N_2 \in L^2 (\bR^3 \times \bR^3)$ being either the operator $\sinh_{k_{N,t}}$ or $p_{N,t} = \cosh_{k_{N,t}} - 1$. (In fact, some of the $\ph_t^N$ factors should be replaced by $\bar \ph^N_t$, but this does not affect our analysis).

To estimate the difference $\cL_{2,N}^{(V)} (t) - \cL_2^{(V)} (t)$, we need to compare the terms in (\ref{eq:A11-A23}) with their formal limits:
\[ \begin{split} 
A_{1,1} &= b_0 \int dx dy \, \delta (x-y) \, \ph_t (x) \ph_t (y)  a^\sharp (j_{1,x}) a^\sharp (j_{2,y}) \\
A_{1,2} &= b_0 \int dx dy \, \delta (x-y) \ph_t (x) \ph_t (y)  a^\sharp (j_{1,x}) a_y  \\
A_{1,3} &= b_0 \int dx dy \,\delta (x-y) \ph_t (x) \ph_t (y)  a^*_x a_y \\ 
A_{2,1} &= b_0\int dx dy  \d (x-y) \ph_t (x) \ph_t (x) a^\sharp (j_{1,y}) a^\sharp (j_{2,y}) \\
A_{2,2} &= b_0\int dx dy \d(x-y) \ph_t (x) \ph_t (x)   a^\sharp (j_{1,y}) a_y  \\
A_{2,3} &= b_0 \int dx dy \d(x-y)\ph_t (x) \ph_t (x) a^*_y a_y \end{split} \]
where $\ph_t$ is the solution of the limiting nonlinear Schr\"odinger equation (\ref{eq:NLS0}) and $j_1, j_2 \in L^2 (\bR^3 \times \bR^3)$ are either $\sinh (k_t)$ or $p_t = \cosh_{k_t} - 1$, with $k_t$ given by (\ref{eq:kell}). Note that, from the results of Appendix \ref{s:kernel}, we always have 
\begin{equation}\label{eq:jj}
\| j_i - j_i^N \|_2 \leq C N^{-\alpha} \exp (c_1 \exp (c_2 |t|)) 
\end{equation}
for $i=1,2$. The lemma will follow if we can prove that 
\[\begin{split} &\left| \langle \psi_1,  (A_{i,j}^N - A_{i,j}) \psi_2 \rangle \right| \\ &\hspace{2.5cm} \leq C N^{-\alpha} \exp (c_1 \exp (c_2 |t|)) \| (\cN + 1)^{1/2} \psi_1 \| \, \| (\cN+\cK+1)^{1/2} \psi_2 \|    \end{split} \]
for all $i=1,2$ and $j=1,2,3$. 

We start comparing $A_{1,1}^N$ with $A_{1,1}$. To this end, we observe that 
\begin{equation}
\label{6.70}
\begin{split}
A_{1,1}^N &- A_{1,1} \\ &=  \int \di x \di y N^{3\b} V(N^\b(x-y))\phn_t(x) \phn_t(y) a^\sharp(j_{1,x}^N) a^\sharp(j_{2,y}^N -j_{2,y}) \\
& +\int \di x \di y N^{3\b} V(N^\b(x-y)) \phn_t(x) \phn_t(y) a^\sharp(j_{1,x}^N - j_{1,x}) a^\sharp(j_{2,y})    \\
&+ \int \di x \di y \Big[N^{3\b} V(N^\b(x-y)) - b_0 \delta(x-y)\Big]\phn_t(x) \phn_t(y) a^\sharp(j_{1,x}) a^\sharp(j_{2,y})  \\
& +  \int \di x \di y\,  b_0 \d(x-y) \Big(\phn_t(x) \phn_t(y) -\ph_t(x) \ph_t(y)\Big) a^\sharp(j_{1,x}) a^\sharp(j_{2,y})  
\end{split}
\end{equation}
The contribution from the first term can be bounded by 
\begin{equation}\label{eq:term1} \begin{split} 
\Big| \int \di x &\di y \, N^{3\b} V(N^\b(x-y))\phn_t(x) \phn_t(y) \langle \psi_1, a^\sharp (j_{1,x}^N) a^\sharp(j^N_{2,y} -j_{2,y}) \ps_2 \rangle \Big| \\
&\leq C \| \phn_t\|_{H^2}^2 \, \|j_1^N\|_2 \, \|j_2^N -j_2\|_2\, \|(\NN+1)^{1/2} \ps_1\|^2  \|(\NN+1)^{1/2} \ps_2\| \\
& \leq C N^{-\alpha} \exp(c_1 \exp(c_2|t|)) \|(\NN+1)^{1/2}\ps_1\| \|(\NN+1)^{1/2}\ps_2\| 
\end{split}\end{equation}
Also the contribution from the second term can be bounded similarly. The fourth term on the r.h.s. of (\ref{6.70}), on the other hand, is bounded by 
\begin{equation}\label{eq:term4} \begin{split} 
\int \di x  \Big| ( &\phn_t(x))^2 - \ph_t^2(x) \Big| \,  \left| \langle \psi_1,  a^\sharp(j_{1,x}) a^\sharp(j_{2,x})  \ps_2 \rangle \right|  \\
& \leq C \big(\|\phn_t\|_{\io} + \|\ph_t\|_{\io} \big) \|\phn_t - \ph_t\|_\io \int \di x \,\|a^\sharp(j_{1,x})\ps_1\|\,\|a^\sharp(j_{2,x})  \ps_2\| \\
& \leq  C N^{-\alpha} \exp(c_1 \exp(c_2|t|))\, \|(\NN+1)^{1/2}  \ps_1\|  \|(\NN+1)^{1/2} \ps_2\| 
\end{split} \end{equation}
Finally, we have to bound the contribution arising from the third term on the r.h.s. of (\ref{6.70}). 
We have
\[ \begin{split} 
\int dx dy \, &(N^{3\beta} V(N^\beta (x-y)) - b_0 \delta (x-y)) \phn_t (x) \phn_t (y) \langle \psi_1, a^\sharp (j_{1,x}) a^\sharp (j_{2,y}) \psi_2 \rangle \\ = \; &\int dx dy V(y) \Big[ \phn_t (x) \phn_t (x+y/N^\beta)  \langle \psi_1, a^\sharp (j_{1,x}) a^\sharp (j_{2,x+y/N^\beta}) \psi_2 \rangle \\ &\hspace{6cm} - 
\phn_t (x)^2  \langle \psi_1, a^\sharp (j_{1,x}) a^\sharp (j_{2,x}) \psi_2 \rangle \Big] 
\end{split} \]
and thus
\begin{equation}\label{eq:term3} \begin{split} 
\Big| \int &dx dy \, (N^{3\beta} V(N^\beta (x-y)) - b_0 \delta (x-y)) \phn_t (x) \phn_t (y) \langle \psi_1, a^\sharp (j_{1,x}) a^\sharp (j_{2,y}) \psi_2 \rangle \Big| \\ \leq \; &\int dx dy V(y) |\phn_t (x)| \, |\phn_t (x+y/N^\beta)  - \phn_t (x)| \|  a^\sharp (j_{1,x}) \psi_1 \| \| a^\sharp (j_{2,x+y/N^\beta}) \psi_2 \|  \\ &+ \int dx dy V(y) |\phn_t (x)|^2 
\| a^\sharp (j_{1,x}) \psi_1 \| \| a^\sharp (j_{2,x+y/N^\beta} - j_{2,x}) \psi_2 \| 
\end{split} \end{equation}
The first term can be easily controlled with Proposition \ref{prop:ph}. We find  
\[  \begin{split} \int dx dy V(y) |\phn_t (x)| & \, |\phn_t (x+ y/N^\beta)  - \phn_t (x)| \|  a^\sharp (j_{1,x}) \psi_1 \| \| a^\sharp (j_{2,x+y/N^\beta}) \psi_2 \| \\ &\leq C N^{-\beta} \exp (c_1 \exp (c_2 |t|)) \| (\cN+1)^{1/2} \psi_1 \| \| (\cN+1)^{1/2} \psi_2 \| \end{split} \]
As for the second term, we estimate it by 
\[ \begin{split} 
\int dx dy &V(y) |\phn_t (x)|^2 
\| a^\sharp (j_{1,x}) \psi_1 \| \| a^\sharp (j_{2,x+y/N^\beta} - j_{2,x}) \psi_2 \| \\  &\leq C N^{-\beta} e^{K|t|} \| (\cN+1)^{1/2} \psi_1 \| \, \| (\cN+1)^{1/2} \psi_2 \|  \int dx dy \\ &\hspace{3cm} \times \int_0^1 d\lambda V(y) |y| |\phn_t (x)|^2 \|j_{1,x} \|_2 \, \| \nabla_x j_{2,x+\lambda y N^{-\beta}} \|_2 \\
&\leq C N^{-\beta} e^{K|t|} \| j_1 \|_2 \| \nabla_2 j_2 \|_2 \| (\cN+1)^{1/2} \psi_1 \| \, \| (\cN+1)^{1/2} \psi_2 \|\end{split} \]
We are interested in $j_2 = \sinh_{k_t}$ and $j_2 = p_t = \cosh_{k_t} - 1$. In both cases we have $\| \nabla_2 j_2 \|_2 \leq C N^{\beta/2}$ (in the case $j_2 = p_t$, we actually have the better bound $\| \nabla_2 j_2 \|_2 \leq C$; see Appendix \ref{s:kernel}). Hence, we conclude that 
\[ \begin{split} 
\int dx dy V(y) |\phn_t (x)|^2 
\| a^\sharp (j_{1,x}) \psi_1 \| &\| a^\sharp (j_{2,x+y/N^\beta} - j_{2,x}) \psi_2 \| \\  &\leq C N^{-\beta/2} e^{K|t|}  \| (\cN+1)^{1/2} \psi_1 \| \, \| (\cN+1)^{1/2} \psi_2 \| \end{split} \]
and therefore, inserting in (\ref{eq:term3}), that 
\[\begin{split}  \Big| \int dx dy \, (N^{3\beta} &V(N^\beta (x-y)) - b_0 \delta (x-y)) \phn_t (x) \phn_t (y) \langle \psi_1, a^\sharp (j_{1,x}) a^\sharp (j_{2,y}) \psi_2 \rangle \Big| \\ \leq \; &C N^{-\beta/2} \exp (c_1 \exp (c_2 |t|)) \| (\cN+1)^{1/2} \psi_1 \| \, \| (\cN+1)^{1/2} \psi_2 \| \, . 
\end{split} \]
Together with (\ref{eq:term1}) and (\ref{eq:term4}), we obtain from (\ref{6.70}) that
\[ \begin{split} |\langle \psi_1, &(A_{1,1}^N - A_{1,1}) \psi_2 \rangle| \leq C N^{-\alpha} \exp (c_1 \exp (c_2 |t|)) \| (\cN + 1)^{1/2} \psi_1 \| \, \| (\cN + 1)^{1/2} \psi_2 \| \end{split} \]
Similarly, we can show that 
\[ \begin{split} |\langle \psi_1, &(A_{1,2}^N - A_{1,2}) \psi_2 \rangle| \leq C N^{-\alpha} 
\exp (c_1 \exp (c_2 |t|))  \| (\cN + 1)^{1/2} \psi_1 \| \, \| (\cN + 1)^{1/2} \psi_2 \| \end{split} \]
In fact, the contribution 
\[ \begin{split} 
\Big| \int dx dy &\left( N^{3\beta} V(N^\beta (x-y)) - b_0 \delta (x-y) \right) \phn_t (x) \phn_t (y) \langle \psi_1, a^\sharp (j_{1,x}) a_y \psi_2 \rangle  \Big| 
\\ \leq \; &\int dx dy V(x) \Big|  \phn_t (y) \ph_t (y+x N^{-\beta}) \langle \psi_1, a^\sharp (j_{1,y+xN^{-\beta}})  a_y \psi_2 \rangle \\ &\hspace{5cm} - \phn_t (y) \ph_t (y) \langle \psi_1, a^\sharp (j_{1,y})  a_y \psi_2 \rangle \Big| \\ 
\leq\; & \int dx dy V(x) |\phn_t (y)| \, \left| \ph_t (y+x N^{-\beta}) - \phn (y)\right| \, \| a^\sharp (j_{1,y+x N^{-\beta}}) \psi_1 \| \| a_y \psi_2 \| 
\\ &+ \int dx dy V(x) |\phn_t (y)|^2 \| a^\sharp (j_{1,y+x N^{-\beta}} - j_{1,y}) \psi_1 \| \| a_y \psi_2 \|
\end{split} \]
can be dealt with as the third term on the r.h.s. of (\ref{6.70}). 

Next, we consider the difference $A_{1,3}^N - A_{1,3}$. We write 
\begin{equation}\label{eq:A13} \begin{split}
\Big| \langle \psi_1, &(A_{1,3}^N - A_{1,3}) \psi_2 \rangle \Big| \\ \leq \; &\int dxdy N^{3\beta} V(N^\beta (x-y)) |\phn_t (x) - \ph_t (x)| \, |\phn_t (y)| \,  \| a_x \psi_1 \| \| a_y \psi_2 \| \\ &+   \int dxdy N^{3\beta} V(N^\beta (x-y)) |\ph_t (x)| |\phn_t (y) - \ph_t (y)| \,  \| a_x \psi_1 \| \| a_y \psi_2 \|
\\ &+ \int dx dy N^{3\beta} V(N^\beta (x-y)) \left[ \ph_t (x) \ph_t (y) \langle a_x \psi_1 , a_y \psi_2 \rangle - \ph_t (x)^2 \langle a_x \psi_1, a_x \psi_2 \rangle \right] \\ = \; &\text{I} + \text{II} + \text{III} 
\end{split} \end{equation}
We can control the first term on the r.h.s. by
\[ \begin{split} 
\text{I} &\leq \| \phn_t  - \ph_t \|_\infty \| \ph_t \|_\infty \| (\cN+1)^{1/2} \psi_1 \| \| (\cN+1)^{1/2} \psi_2 \| \\ &\leq C N^{-\alpha} \exp (c_1 \exp (c_2 |t|)) \| (\cN+1)^{1/2} \psi_1 \| \| (\cN+1)^{1/2} \psi_2 \|    \end{split} \]
The second term on the r.h.s. of (\ref{eq:A13}) can be controlled analogously. As for the third term, with a change of variables we can write 
\begin{equation}\label{eq:III-A13} \begin{split} 
\text{III} \leq \; &\int dx dy V(y) |\ph_t (x)| |\ph_t (x+yN^{-\beta}) - \ph_t (x)| \| a_x \psi_1 \| \| a_{x+yN^{-\beta}} \psi_2 \| \\ &+ \int dx dy V(y) |\ph_t (x)|^2 |\langle a_x \psi_1, (a_{x+yN^{-\beta}} - a_x) \psi_2 \rangle |  \\
\leq \; &C e^{K|t|} N^{-\beta} \| (\cN+1)^{1/2} \psi_1 \| \| (\cN+1)^{1/2} \psi_2 \| \\ &+ \int dx dy \int_0^1 d\lambda V(y) |y| N^{-\beta} |\ph_t (x)|^2 \| a_x \psi_1 \| \| \nabla_x a_{x+\lambda y N^{-\beta}} \psi_2 \| \\
\leq \; &C e^{K|t|} N^{-\beta} \| (\cN+1)^{1/2} \psi_1 \| \| (\cK + \cN +1)^{1/2} \psi_2 \|
\end{split} 
\end{equation}

The bounds for the terms 
\begin{equation}\label{eq:A2i} \left| \langle \psi_1, (A_{2,i}^N - A_{2,i}) \psi_2 \rangle \right| \end{equation}
for $i=1,2,3$ can be obtained similarly. In this case, since both field operators depend on the same integration variable $y$, the difference between $N^{3\beta} V(N^\beta (x-y))$ and $b_0 \delta (x-y)$ can be controlled through the difference between $\phn_t (x)$ and $\phn_t (x+y N^{-\beta})$. Smallness here follows from the regularity of $\phn_t$, there is no need to use the kinetic energy of $\psi_1$ and $\psi_2$ (in contrast with the term $A_{1,3}^N - A_{1,3}$). To illustrate this point, let us briefly consider the term
\begin{equation}\label{eq:A23} \begin{split} \Big| \langle \psi_1, (A_{2,3}^N -& A_{2,3}) \psi_2 \rangle \Big| \\ \leq \; &\left| \int dx dy (N^{3\beta} V(N^\beta (x-y)) - b_0 \delta (x-y)) \, \phn_t (x)^2 \langle a_y \psi_1 , a_y \psi_2 \rangle \right| \\
&+ b_0 \int dx \, \left| |\phn_t (x)|^2 - |\ph_t (x)|^2 \right| \| a_x \psi_1 \| \| a_x \psi_2 \| \end{split} \end{equation}
The second term can be bounded by 
\[ \begin{split} b_0 \left[ \| \phn_t \|_\infty + \| \ph_t \|_\infty \right] \, &\| \phn_t - \ph_t \|_\infty \left( \int dx \,  \| a_x \psi_1 \|^2 \right)^{1/2} \, \left( \int dx \,  \| a_x \psi_2 \|^2 \right)^{1/2} \\ &\leq C N^{-\alpha} \exp (c_1 \exp (c_2 |t|)) \| \cN^{1/2}  \psi_1 \| \|\cN^{1/2} \psi_2 \| 
\end{split} \]
As for the first term on the r.h.s. of (\ref{eq:A23}), it can be estimated by
\[ \begin{split} 
\int dx dy \, V(x) &\left| \phn_t (y+x N^{-\beta})^2 - \phn_t (y)^2 \right| \, \| a_y \psi_1 \| \| a_y \psi_2 \| \\ \leq \; & C N^{-\beta} e^{K|t|} \int dx dy \int_0^1 d\lambda \, V(x) |x|  |\nabla \phn_t (y + \lambda x N^{-\beta})| \| a_y \psi_1 \| \| a_y \psi_2 \| \\ \leq \; &C N^{-\beta} e^{K|t|} \| \cN^{1/2} \psi_1 \| \| \cN^{1/2} \psi_2 \| \end{split} \]
From the last two estimates, we conclude that
\[ |\langle \psi_1, (A_{2,3}^N - A_{2,3}) \psi_2 \rangle| \leq C N^{-\alpha}  \exp (c_1 \exp (c_2 |t|)) \| \cN^{1/2} \psi_1 \| \| \cN^{1/2} \psi_2 \|  \]
Analogously, we can also control (\ref{eq:A2i}), with $i=1,2$.
\end{proof}

\begin{lemma}\label{lm:t}
Recall the definitions (\ref{eq:TN}) and (\ref{eq:Tt}) of $T_{N,t}$ and $T_t$. 
Under the same assumptions as in Theorem \ref{thm:main2}, with $\alpha = \min (\beta/2, (1-\beta)/2)$, we have 
\[ \begin{split} &\left| \langle \psi_1, \left( (i\partial_t T_{N,t}^*) T_{N,t} - (i\partial_t T_t^*) T_t \right) \psi_2 \rangle \right| \\ &\hspace{2cm} \leq C \exp (c_1 \exp (c_2 |t|)) N^{-\alpha} \| (\cN+1)^{1/2} \psi_1 \| \| (\cN+1)^{1/2} \psi_2 \| \end{split} \]
\end{lemma}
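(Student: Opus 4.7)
My plan is to adapt the expansion used in Proposition \ref{prop:TstarT} to measure the gap between the $N$-dependent and limiting versions of $(i\partial_t T^*)T$. The starting point is the Duhamel identity (\ref{TTstar}), which upon subtracting the two expressions yields
\[ (i\partial_t T^*_{N,t})T_{N,t} - (i\partial_t T^*_t)T_t = \sum_{n\geq 0} \frac{(-1)^{n+1}}{(n+1)!} \Big( \text{ad}_{B^N}^n(\dot B^N) - \text{ad}_{B^\infty}^n(\dot B^\infty) \Big), \]
where $B^N(t)$ is built from $k_{N,t}$ as in Section \ref{s:est} and $B^\infty(t)$ is the analogous operator with $k_{N,t}$ replaced by $k_t$ (and similarly for the dotted versions).

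Next I would show that, for each $n$, the kernels $f_{n,i}^N$ and $f_{n,i}^\infty$ appearing in Lemma \ref{lemma:fni} applied to the $N$-dependent and limiting Bogoliubov generators satisfy
\[ \| f_{n,i}^N - f_{n,i}^\infty \|_2 \leq (n+1)\, 4^n\, C\, e^{Kn|t|}\, N^{-\alpha} \exp(c_1 \exp(c_2 |t|)), \]
and an analogous estimate for the diagonal integrals $\int dx\, |f_{n,i}^N(x,x) - f_{n,i}^\infty(x,x)|$ for odd $n$. This is a telescoping argument: $\text{ad}_B^n(\dot B)$ can be written as a linear combination of contracted products of $n$ factors of $k$ (or $\bar k$) and one factor of $\dot k$ (or $\bar{\dot k}$), so each difference of such products decomposes into $n+1$ pieces in which exactly one factor is replaced by a difference $(k_{N,t} - k_t)$ or $(\dot k_{N,t} - \dot k_t)$. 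The uniform bounds $\| k_{N,t} \|_2, \| k_t \|_2, \| \dot k_{N,t} \|_2, \| \dot k_t \|_2 \leq C e^{K|t|}$, together with the convergence estimates $\| k_{N,t} - k_t \|_2, \| \dot k_{N,t} - \dot k_t \|_2 \leq C N^{-\alpha} \exp(c_1 \exp(c_2 |t|))$ provided by Appendix \ref{s:kernel}, then close the induction.

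With these kernel bounds, I would apply the quadratic form inequalities used in the proof of Proposition \ref{prop:TstarT} (namely, the Cauchy--Schwarz--type estimates for operators of the form $\int f(x,y) a^\sharp_x a^\sharp_y$ and $\int f(x,y) a^\sharp_x a_y$) to bound each summand by
\[ \frac{C (n+1)\,4^n\, e^{Kn|t|}}{(n+1)!}\, N^{-\alpha} \exp(c_1 \exp(c_2 |t|))\, \| (\cN+1)^{1/2}\psi_1 \|\, \| (\cN+1)^{1/2}\psi_2 \|. \]
The presence of $1/(n+1)!$ guarantees absolute convergence of the resulting series, yielding the claim.

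The main obstacle I expect is the careful verification of the telescoping estimate for the kernels $f_{n,i}^N - f_{n,i}^\infty$ with the right $n$-dependent prefactor (so that summing against $1/(n+1)!$ still converges), and ensuring that the diagonal trace terms appearing for odd $n$ admit an analogous telescoping. Everything else is a routine repetition of the bounds already contained in Lemma \ref{lemma:fni} and Proposition \ref{prop:TstarT}, using convergence rates of $k_{N,t}, \dot k_{N,t}$ to their limits that are worked out in Appendix \ref{s:kernel}.
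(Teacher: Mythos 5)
Your proposal is correct and follows essentially the same route as the paper: expand $(i\partial_t T^*)T$ via the nested-commutator series, write each difference $\text{ad}_{B_N}^n(\dot B_N) - \text{ad}_B^n(\dot B)$ in terms of kernels $f_{n,i}^N - f_{n,i}$, telescope the kernel products one factor at a time, and sum against $1/(n+1)!$. The paper implements the telescoping as an explicit induction on $n$ (inequalities (\ref{T.3}) and (\ref{T.4})), which is the same idea packaged slightly differently. One small remark: your intermediate prefactor $e^{Kn|t|}$ is a harmless overestimate — since $\|k_{N,t}\|_2,\|k_t\|_2$ are in fact bounded uniformly in $t$ (see (\ref{eq:kN-HS})), only a single factor $e^{K|t|}$ enters from $\|\dot k_{N,t}\|_2$; but even with your cruder bound the series converges to an expression of the allowed form $\exp(c_1\exp(c_2|t|))$, so the conclusion stands.
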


\begin{proof}
We use the notation
\[ B_N (t) = \frac 12 \int \di x \di y \Big( k_{N,t}(x,y) a^*_x a^*_y - \lis{k}_{N,t}(x,y) a_x a_y \Big)
\]
and 
\[ B (t) = \frac 12 \int \di x \di y \Big( k_{t}(x,y) a^*_x a^*_y - \lis{k}_{t}(x,y) a_x a_y \Big)
\]
Then we have $T_{N,t} = \exp (B_N (t))$ and $T_t = \exp (B(t))$. We will make use of the representations
\[ \begin{split}  (i\partial_t T_{N,t}^*)T_{N,t} &= \sum_{n \geq 0} \frac{(-1)^{n+1}}{(n+1)!} \text{ad}_{B_N (t)}^n (\dot{B}_N (t)), \qquad
 (i\partial_t T_{t}^*)T_{t} = \sum_{n \geq 0} \frac{(-1)^{n+1}}{(n+1)!} \text{ad}_{B (t)}^n (\dot{B} (t)) \end{split} \]
with
\[ \begin{split} \text{ad}_{B_N (t)} (\dot{B}_N (t)) = \frac{1}{2} \int dx dy \left( f^N_{n,1} (x,y) a_x^* a_y^* + f^N_{n,2} (x,y) a_x a_y \right) \\
\text{ad}_{B (t)} (\dot{B} (t)) = \frac{1}{2} \int dx dy \left( f_{n,1} (x,y) a_x^* a_y^* + f_{n,2} (x,y) a_x a_y \right) 
\end{split} \]
where $f_{n,i}^N$ and $f_{n,i}$, for $i=1,2$ and $n \in \bN$ satisfy the bounds from Lemma \ref{lemma:fni} (once with $k_{N,t}$ and once with the limiting kernel $k_t$). We find
\begin{equation}\label{eq:TN-T} \begin{split}
\big| \langle \ps_1,  \big( [i &\dpr_t T^*_{N,t}] T_{N,t} - [i \dpr_t T_{t}^*] T_{t} \big) \ps_2 \rangle  \big| \\
\leq \; &\sum_{n \geq 0} \frac{1}{(n+1)!} \big| \bmedia{\ps_1, (\text{ad}_{B_N (t)}^n(\dot B_N (t)) - \text{ad}_{B (t)}^n(\dot B (t)))\ps_2} \big| \\
\leq \; &\sum_{n \geq 0} \frac{1}{(n+1)!}\big( \| f_{n,1}^N - f_{n,1}\|_2 + \| f_{n,2}^N - f_{n,2}\|_2 \big) \|(\NN+1)^{1/2} \ps_1\| \|(\NN+1)^{1/2} \ps_2\| \\
&+ \| \psi_1 \| \| \psi_2 \|  \sum_{n \geq 1} \frac{1}{(2n)!} \int \di x  \, \left| f^N_{2n-1,2}(x,x)- f_{2n-1,2}(x,x) \right| \end{split} \end{equation}

Next, we claim that 
\begin{equation}
\begin{split}
\|f_{n,i}^N - f_{n,i}\|_2  \leq \; &\big(2 \|k_t\|_2\big)^n\|\dot k_{N,t} - \dot k_t\|_2 \\
& + 2 \|k_{N,t} -  k_t\|_2 \|\dot k_{N,t}\| \sum_{j =0}^{n-1} \big(2 \|k_{N,t}\|\big)^j \big(2 \|k_t\|\big)^{n-1-j} \label{T.3} 
\end{split} \end{equation}
and that 
\begin{equation}
\begin{split}
\int \di x |f_{n,i}^N(x,x) - f_{n,i}(x,x)|  \leq \; & \big(2 \|k_t\|_2\big)^n\|\dot k_{N,t} - \dot k_t\|_2 \\ & + 2 \|k_{N,t} -  k_t\|_2 \|\dot k_{N,t}\| \sum_{j =0}^{n-1} \big(2 \|k_{N,t}\|\big)^j \big(2 \|k_t\|\big)^{n-1-j} \label{T.4}
\end{split} 
\end{equation}
for all $n \in \mathbb{N}$ (with the convention that the sum disappears if $n=0$). These bounds, together with the estimates in the appendix \ref{s:kernel}, complete the proof of the lemma. 

It remains to show (\ref{T.3}) and (\ref{T.4}). We proceed by induction over $n \in \bN$. 
For $n=0$, we have \[ f_{0,1}^N (x,y) = \dot{k}_{N,t} (x,y), \quad f_{0,2}^N (x,y) = \lis{\dot{ k}}_{N,t} (x,y) \] and analogously for $f_{0,1}$ and $f_{0,2}$; the bounds (\ref{T.3}) and (\ref{T.4}) are clearly satisfied. Let us now assume that (\ref{T.3}) and (\ref{T.4}) hold for some $n \in \bN$. We prove they also hold for $(n+1)$. We consider first the case $n$ even. By Lemma \ref{lemma:fni} we have
\begin{align*}
f_{n+1,1}(x,z) &= -\frac 12 \int \di y \Big( k_{t}(x,y) \big(f_{n,2}(z,y) + f_{n,2}(y,z) \big) \\ &\hspace{5cm} +  \lis{k_{t}(y,z)} \big(f_{n,2}(x,y) + f_{n,2}(y,x) \big) \Big) \non \\
f_{n+1,2}(x,z) &= -\frac 12 \int \di y \Big( k_{t}(y,z) \big(f_{n,1}(x,y) + f_{n,1}(y,x) \big) \\ &\hspace{5cm} +  \lis{k_{t}(x,y)} \big(f_{n,1}(z,y) + f_{n,1}(y,z) \big) \Big)\,,
\end{align*}
and similarly for $f_{n+1,1}^N(x,z) $ and $f_{n+1,2}^N(x,z)$. Therefore 
\begin{align*}
\|f_{n+1,1}^N - f_{n+1,1}\|_2 \leq \; &\frac{1}{2} \left[ \int dx dz \left| \int dy \big(k_{N,t}(x,y) f^N_{n,2}(z,y) - k_{t}(x,y) f_{n,2}(z,y) \big) \right|^2 \right]^{1/2} \\ &+\frac{1}{2} \left[ \int dx dz \left|\int \di y \big( k_{N,t}(x,y) f^N_{n,2}(y,z) - k_{t}(x,y) f_{n,2}(y,z) \big) \right|^2 \right]^{1/2} 
\\ &+ \frac{1}{2} \left[ \int dx dz  \left| \int \di y \big(\lis{k_{N,t}(y,z)} f^N_{n,2}(x,y) - \lis{k_{t}(y,z)} f_{n,2}(x,y) \big) \right|^2 \right]^{1/2} \\
&+\frac{1}{2} \left[ \int dx dz \left| \int \di y \big(\lis{k_{N,t}(y,z)} f^N_{n,2}(y,x) - \lis{k_{t}(y,z)} f_{n,2}(y,x)  \big) \right|^2 \right]^{1/2} 
\end{align*}
By the Cauchy-Schwarz inequality, we find 
\[
\|f_{n+1,1}^N - f_{n+1,1}\|_2 \leq   2\Big(\|k_{N,t}  - k_{t}\|_2\| f^N_{n,2}\|_2 + \|k_t\|_2 \| f^N_{n,2}- f_{n,2}\|_2  \Big)
\]
Using the induction assumption \eqref{T.3} and the bound $\| f^N_{n,i}\|_2 \leq (2 \|k_{N,t}\|_2)^n \|\dot k_{N,t}\|_2 $ from Lemma \ref{lemma:fni}, we obtain \eqref{T.3} with $n$ replaced by $(n+1)$. Moreover, we notice that 
\[ \begin{split}
\int dx \, &\left| f^N_{n+1,1} (x,x) - f_{n+1,1} (x,x) \right| \\ = \; &\frac{1}{2} \int dx \, \Big| \int dy (k_{N,t} (x,y) + \lis{k}_{N,t} (y,x)) (f^N_{n,2} (x,y) + f^N_{n,2} (y,x)) \\ &\hspace{4cm} - \int dy (k_{t} (x,y) + \lis{k}_{t} (y,x)) (f_{n,2} (x,y) + f_{n,2} (y,x)) \Big| \\
\leq \; & \int dx dy \, |k_{N,t} (x,y) - k_t (x,y)| \left( |f_{n,2}^N (x,y)| + |f_{n,2}^N (y,x)| \right) \\ &+ 2\int dx dy \, |k_{t} (x,y)| \, |f_{n,2}^N (x,y) - f_{n,2} (x,y)| \\
\leq \; & 2 \| k_{N,t} - k_t \|_2 \| f_{n,2}^N \|_2 +  2\| f_{n,2}^N - f_{n,2} \|_2 \| k_t \|_2 
\end{split} \]
Again, the induction assumption, combined with the bounds of Lemma \ref{lemma:fni} imply (\ref{T.4}), with $n$ replaced by $(n+1)$. The bounds for $i=2$ and for $n$ odd can be proven similarly. 
\end{proof}

Finally, we compare the last two terms on the r.h.s. of (\ref{eq:L2N0}) with the corresponding terms on the r.h.s. of (\ref{eq:L2inf0}). 
\begin{lemma}\label{lm:d}
Under the same assumptions as in Theorem \ref{thm:main2}, with $\alpha = \min (\beta/2, (1-\beta)/2)$, we let  
\[ \begin{split} D^N_{1} &= \int dx dy \, N \omega_{N,\ell} (x-y) \phi_N ((x+y)/2) a_x^* a_y^* \\ 
D^N_{2} &= N \lambda_{N,\ell} \int dx dy {\bf 1} (|x-y| \leq \ell) \phn_t ((x+y)/2)^2 a_x^* a_y^* \end{split} 
\]
and 
\[ \begin{split} 
D_1 &= \int dx dy \, \omega_{\ell}^{\text{asymp}} (x-y) \phi ((x+y)/2) a_x^* a_y^* \\ 
D_{2} &= \frac{3b_0}{8\pi \ell^3} \int dx dy {\bf 1} (|x-y| \leq \ell) \ph_t ((x+y)/2)^2 a_x^* a_y^* \end{split} 
\]
with 
\[ \phi_N (x) = \phn_t (x) \Delta \phn_t (x) + |\nabla \phn_t (x)|^2 \quad \text{ and } \quad \phi (x) = \ph_t (x) \Delta \ph_t (x) + |\nabla \ph_t (x)|^2 \]
Then we have
\[ \left| \langle \psi_1, (D_{i,N} - D_i) \psi_2 \rangle \right| \leq C N^{-\a} \exp (c_1 \exp (c_2 |t|)) \| (\cN+1)^{1/2} \psi_1 \| \| (\cN+1)^{1/2} \psi_2 \| \] 
for $i=1,2$.
\end{lemma}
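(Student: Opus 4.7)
\textbf{Proof plan for Lemma \ref{lm:d}.} The strategy is to write each difference $D_i^N - D_i$ as a pairing $\int dx\,dy\, h_i(x,y)\, a_x^* a_y^*$ and bound its $L^2$-norm by $C N^{-\alpha}\exp(c_1\exp(c_2|t|))$. Once this is done, the assertion follows from the elementary bound
\[
\Bigl| \bigl\langle \psi_1, \int dx\,dy\, h(x,y)\, a_x^* a_y^*\, \psi_2 \bigr\rangle \Bigr|
 \leq  \|h\|_{L^2(\bR^3\times\bR^3)}\, \|(\cN+1)^{1/2}\psi_1\|\, \|(\cN+1)^{1/2}\psi_2\|,
\]
obtained by placing one $a^*$ on $\psi_1$ by duality, viewing $\int dx\, h(x,y)a_x^*\psi_2 = a^*(h(\cdot,y))\psi_2$, applying $\|a^*(f)\psi\|\leq \|f\|_2\,\|(\cN+1)^{1/2}\psi\|$, and concluding with Cauchy--Schwarz in $y$.

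For $D_2^N-D_2$, I will use the change of variables $u=(x+y)/2$, $v=x-y$ (with $dx\,dy = du\,dv$ and integration over $|v|\leq \ell$) to write
\[
\|h_2\|_2^2  =  \tfrac{4\pi\ell^3}{3}\, \Bigl\| N\lambda_{N,\ell}\,(\phn_t)^2 \; -\; \tfrac{3b_0}{8\pi\ell^3}\,\ph_t^2 \Bigr\|_{L^2(\bR^3)}^2 ,
\]
and then split the integrand as $(N\lambda_{N,\ell}-\tfrac{3b_0}{8\pi\ell^3})(\phn_t)^2 + \tfrac{3b_0}{8\pi\ell^3}((\phn_t)^2-\ph_t^2)$. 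The first piece is $O(N^{-(1-\beta)})$ by Lemma \ref{lm:propomega} i) and boundedness of $\|\phn_t\|_{L^4}^2$; the second is controlled by $\|\phn_t+\ph_t\|_\infty\,\|\phn_t-\ph_t\|_2$, which is $O(N^{-\alpha})$ by the standard stability analysis of (\ref{eq:NLSN}) versus (\ref{eq:NLS0}) recorded in Proposition~\ref{prop:ph}.

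For $D_1^N-D_1$, the same change of variables gives
\[
\|h_1\|_2^2  \leq  2\,\|N\omega_{N,\ell}-\omega_\ell^{\rm asymp}\|_{L^2(B_\ell)}^2\,\|\phi_N\|_{L^2}^2  +  2\,\|\omega_\ell^{\rm asymp}\|_{L^2(B_\ell)}^2\,\|\phi_N-\phi\|_{L^2}^2.
\]
The factor $\|\phi_N\|_{L^2}\leq \|\phn_t\|_\infty \|\Delta \phn_t\|_2 + \|\nabla\phn_t\|_{L^4}^2$ is $O(\exp(c_1\exp(c_2|t|)))$ by the $H^4$-propagation established in Proposition \ref{prop:ph}, and $\|\omega_\ell^{\rm asymp}\|_{L^2(B_\ell)}$ is a finite constant by (\ref{eq:oasym0}). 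The term $\|\phi_N-\phi\|_{L^2}$ is small of order $N^{-\alpha}$: writing $\phi_N-\phi=\phn_t(\Delta\phn_t-\Delta\ph_t)+(\phn_t-\ph_t)\Delta\ph_t+(\nabla\phn_t+\nabla\ph_t)\cdot(\nabla\phn_t-\nabla\ph_t)$, one reduces to $H^2$-closeness of $\phn_t$ and $\ph_t$, which again follows from comparing the nonlinearities in (\ref{eq:NLSN1}) and (\ref{eq:NLS0}) using $N^{3\beta}V(N^\beta\cdot)f_{N,\ell}\to 8\pi a_0\,\delta$ versus $b_0\,\delta$ together with the energy and $H^2$ estimates of Proposition \ref{prop:ph}.

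\textbf{Main obstacle.} The only nontrivial ingredient is the quantitative $L^2$-convergence $\|N\omega_{N,\ell}-\omega_\ell^{\rm asymp}\|_{L^2(B_\ell)}\leq C N^{-\alpha}$. This is not a direct statement in Lemma \ref{lm:propomega}; it requires comparing the Neumann ground state $f_{N,\ell}$ of the rescaled potential with the explicit function $1-\omega_\ell^{\rm asymp}$ that solves the corresponding problem with $V$ replaced by the point interaction $b_0\,\delta$. I would obtain this by integrating the eigenvalue equation (\ref{eq:Nf}) against $|x|^{-1}$ (or against the difference $N\omega_{N,\ell}-\omega_\ell^{\rm asymp}$ itself) to exploit cancellation of the singular parts, then using (\ref{S.iii}) and the bound $|N\lambda_{N,\ell}\ell^3-3b_0/(8\pi)|=O(N^{-(1-\beta)})$ to control the remainder. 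The expected rate is $N^{-(1-\beta)/2}$, which is consistent with the stated $\alpha=\min(\beta/2,(1-\beta)/2)$. This scattering estimate is the analytical heart of the argument and is presumably carried out in Appendix \ref{s:scattering}.
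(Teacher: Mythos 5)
Your overall strategy -- write $D_i^N - D_i$ as $\int h_i(x,y)\,a_x^*a_y^*$, bound $\|h_i\|_{L^2}$ after the change of variables $u=(x+y)/2$, $v=x-y$, split off the error in $N\omega_{N,\ell}-\omega_\ell^{\rm asymp}$ (respectively $N\lambda_{N,\ell}-3b_0/(8\pi\ell^3)$) from the error in $\phi_N-\phi$ -- is exactly the decomposition the paper uses (see the splitting at (\ref{6.I7}) and the subsequent application of (\ref{eq:HS-bd})), and your reductions of $\|\phi_N\|_2$, $\|\phi_N-\phi\|_2$, $\|(\phn_t)^2-\ph_t^2\|_2$ to Proposition~\ref{prop:ph} are sound.

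The genuine gap is in the one ingredient you flag yourself: the quantitative closeness of $N\omega_{N,\ell}$ to $\omega_\ell^{\rm asymp}$. First, this does not live in Appendix~\ref{s:scattering}; the paper records it as the pointwise bound (\ref{eq:o-oasy}), $|N\omega_{N,\ell}(x)-\omega_\ell^{\rm asymp}(x)|\le C/(N^{1-\beta}|x|)$ for $|x|>RN^{-\beta}$, stated in Appendix~\ref{s:kernel}, and indeed it is this \emph{weighted, pointwise} estimate (not an unweighted $L^2$ bound) that the paper inserts into the integral $\int|N\omega_{N,\ell}-\omega_\ell^{\rm asymp}|^2|\phi_N|^2$. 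Second, and more importantly, your proposed derivation -- testing the eigenvalue equation (\ref{eq:Nf}) against $|x|^{-1}$ to "exploit cancellation of singular parts" -- is not what the paper does and is not obviously workable: the relevant cancellation between $N\omega_{N,\ell}$ and $\omega_\ell^{\rm asymp}$ happens in the region $RN^{-\beta}<|x|<\ell$ where $V$ vanishes, and the paper gets it for free because there $f_{N,\ell}$ satisfies the \emph{free} Helmholtz equation $-\Delta f=\lambda_{N,\ell}f$ with Neumann data at $|x|=\ell$, which has the explicit trigonometric form recorded just above (\ref{eq:o-oasy}); Taylor expansion in $\lambda_{N,\ell}^{1/2}=O(N^{-1/2})$ then produces exactly $\omega_\ell^{\rm asymp}(x)+O(N^{\beta-1}/|x|)$. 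Your weak-form idea would have to reproduce this without the explicit solution, and I do not see how it closes. You also need to handle the complementary region $|x|\le RN^{-\beta}$ separately (there (\ref{eq:o-oasy}) fails), using the crude bounds $N\omega_{N,\ell},\,\omega_\ell^{\rm asymp}\lesssim 1/|x|$ and the smallness of that ball, which gives the $N^{-\beta/2}$ contribution to $\alpha$ -- this is implicit in the paper's two-term treatment of the analogous integral in (\ref{eq:k-k0}) and must be made explicit in your argument too.

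A small side remark: for $0<\beta<1$ both $N^{3\beta}V(N^\beta\cdot)f_{N,\ell}$ and $b_0\,\delta$ give the \emph{same} limiting NLS with coupling $b_0$ (not $8\pi a_0$; the scattering length only emerges at $\beta=1$). This does not affect your reduction, since you invoke Proposition~\ref{prop:ph}, but the parenthetical "$8\pi a_0\,\delta$ versus $b_0\,\delta$" is misleading.
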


\begin{proof}
We have 
\begin{equation}
\begin{split}
D^N_1 - D_1 = \; &\int \di x\,\di y\,\big(N \o_{\ell,N}(x-y) - \o^\text{asymp}_\ell(x-y) \big) \phi_N(x+y) a^*_xa^*_y \\
& + \int \di x\,\di y\,\o^\text{asymp}_\ell(x-y) \left(  \phi_N(x+y) -  \phi(x+y) \right)a^*_xa^*_y \label{6.I7}
\end{split}\end{equation}
To estimate $|\langle \psi_1, (D_1^N - D_1) \psi_2 \rangle|$, we observe that, for any Hilbert-Schmidt operator $A$ on $L^2 (\bR^3)$, with the integral kernel $A(x,y)$, we have (with the usual notation $A_x (y) = A(x,y)$)
\begin{equation}\label{eq:HS-bd} 
\begin{split} \left| \int dx dy A(x,y) \langle \psi_1 , a_x^* a_y^* \psi_2 \rangle \right| = \; &\left| \int dx \langle a_x \psi_1 , a^* (A_x) \psi_2 \rangle \right|
\\ \leq \; &\int dx \| a_x \psi_1 \| \| A_x \|_2 \| (\cN+1)^{1/2} \psi_2 \| \\ \leq \; &\| A \|_\text{HS} \| (\cN+1)^{1/2} \psi_1 \| \| (\cN+1)^{1/2} \psi_2 \| \end{split} \end{equation}
Hence, we find 
\[ \begin{split} 
\Big| \langle \psi_1, (D_1^N - &D_1) \psi_2 \rangle \Big| \\ \leq \; &\| (\cN+1)^{1/2} \psi_1 \| \| (\cN+1)^{1/2} \psi_2 \| \\ &\times \Big[ \int dx dy |N \o_{N,\ell} (x-y) - \o_\ell^\text{asymp} (x-y)|^2 |\phi_N ((x+y)/2)|^2 \\ &\hspace{2cm} + \int dx dy \o^\text{asymp}_\ell (x-y)^2 |\phi_N ((x+y)/2) - \phi ((x+y)/2)|^2 \Big]^{1/2}  \\
\leq \; &C N^{-\alpha} \exp (c_1 \exp (c_2 |t|))  \| (\cN+1)^{1/2} \psi_1 \| \| (\cN+1)^{1/2} \psi_2 \| \end{split} \]

To control the difference $D_2^N -D_2$, we write 
\begin{equation*}
\begin{split}
(D^N_2 - D_2) = \;& \big(N\l_{\ell,N} - \l_\ell \big)\,\int \di x\,\di y\,{\bf 1} (|x-y| \leq \ell)\,\phiq{x+y}^2\,a^*_xa^*_y  \\
& +  \l_\ell \int \di x\,\di y\,{\bf 1} (|x-y| \leq \ell)\,\Big( \phiq{x+y}^2- \ph_t^2((x+y)/2) \Big)\,a^*_xa^*_y 
\end{split} \end{equation*}
where we defined $\lambda_\ell = 3b_0/8\pi\ell^3$.
Hence, with (\ref{eq:HS-bd}), we find 
\[ \begin{split}  \Big| \langle \psi_1, &(D_2^N - D_2) \psi_2 \rangle \Big| \\\leq \; &C \| (\cN+1)^{1/2} \psi_1 \| \| (\cN+1)^{1/2} \psi_2 \|\\ &\hspace{1cm} \times  \left[ \left| N \lambda_{N,\ell} - \frac{3b_0}{8\pi \ell^3} \right|+ C e^{K|t|} \| \phn_t - \ph_t \|_2 \right]\\  \leq \; &C N^{-\alpha} \exp (c_1 \exp (c_2 |t|))  \,  \| (\cN+1)^{1/2} \psi_1 \| \| (\cN+1)^{1/2} \psi_2 \|
\end{split} \]
\end{proof}
We are now ready to show Theorem \ref{thm:main2}.
\begin{proof}[Proof of Theorem \ref{thm:main2}]
We write
\[ \begin{split} 
&\left\| e^{-i \cH_N t} W(\sqrt{N} \ph) T_{N,0} \xi_N - e^{-i \int_0^t \eta_N (s) ds} \, W (\sqrt{N} \phn_t) T_{N,t} \cU_{2,\infty} (t) \xi_N \right\| \\
& \hspace{1cm}\leq\|\cU_N (t;0) \xi_N - e^{-i \int_0^t \eta_N (s) ds} \cU_{2,\infty} (t;0) \xi_N\| \\
& \hspace{1cm}\leq\|\cU_N (t;0) \xi_N - e^{-i \int_0^t \eta_N (s) ds} \cU_{2,N} (t;0) \xi_N\|+\|\cU_{2,N} (t;0) \xi_N - \cU_{2,\infty} (t;0) \xi_N\| \end{split} \]
The first term can be estimated with Proposition \ref{prop:UN-UN2}. To bound the second term, we apply Lemmas \ref{lm:k}, \ref{lm:v}, \ref{lm:t}, \ref{lm:d}.  We find 
\[ \begin{split} \frac{d}{dt}&\|\cU_{2,N} (t;0) \xi_N - \cU_{2,\infty} (t;0) \xi_N\|^2 \\
&\leq C|\scal{\cU_{2,N} (t;0) \xi_N}{(\LL_{2,N}(t) - \LL_{2,\infty}(t) )\,\cU_{2,N} (t;0) \xi_N} | \\
&\leq C N^{-\alpha} \exp (c_1 \exp (c_2 |t|)) \\ &\hspace{.5cm} \times  \left[ 
\langle \cU_{2,N} (t;0) \xi_N, (\cN + \cK + 1) \cU_{2,N} (t;0) \xi_N \rangle + \langle \cU_{2,\infty} (t;0) \xi_N, (\cN  + 1) \cU_{2,\infty} (t;0) \xi_N \rangle \right] \\
&\leq C N^{-\alpha} \exp (c_1 \exp (c_2 |t|)) 
\langle \xi_N, (\cN + \cK + 1)  \xi_N \rangle 
\end{split} \]
Theorem \ref{thm:main2} follows integrating this bound over time. In the last step, we used Theorem \ref{thm:5} and the fact that, exactly as in Theorem \ref{thm:5}, we can also control the growth of the expectation of $\cN$  with respect to the limiting fluctuation dynamics $\cU_{2,\infty}$, i.e. there exist constants $C,c_1, c_2 > 0$ such that 
\[  \langle \cU_{2,\infty} (t;0) \xi_N, \cN \cU_{2,\infty} (t;0) \xi_N \rangle \leq C \exp (c_1 \exp (c_2 |t|)) \langle \xi_N, (\cN+1) \xi_N \rangle \]
\end{proof}

\appendix

\section{Properties of the Scattering Function}\label{s:scattering}

In this appendix, we give a proof of Lemma \ref{lm:propomega}, containing basic properties of the ground state $f_{N,\ell}$ of the Neumann problem 
\[ \left[ -\Delta + \frac{1}{2N} N^{3\beta} V(N^\beta .) \right] f_{N,\ell} = \l_{N,\ell} f_{N,\ell} \]
on the sphere $|x| \leq \ell$. 

\begin{proof}[Proof of Lemma \ref{lm:propomega}] 
To prove an upper bound on $\lambda_{N,\ell}$, we use a constant trial function. With the notation \[ \frak{h} = -\Delta + \frac{1}{2N} N^{3\beta} V(N^\beta x) \, , \] we find
\[ \l_{N,\ell} \leq \frac{\langle 1, \frak{h} 1 \rangle}{\langle 1, 1 \rangle} = \frac{1}{2N} \frac{\int_{|x| \leq \ell} N^{3\beta} V (N^\beta x) dx}{\int_{|x| \leq \ell} dx} = \frac{3b_0}{8\pi N \ell^3} \]
for $\ell \gg N^{-\beta}$. Before proving the lower bound in \eqref{energy}, we prove parts ii) and iii) of the Lemma. 

Part ii) can be proven as in \cite[Lemma A.1]{ESY0}; we skip the details. As for part iii), we observe that, from (\ref{rescaled_scattChi}), we can write 
\begin{equation}\label{eq:ome-exp} \begin{split} \o_{N, \ell}(x) &= C \int \di y \frac{1}{|x-y|} \big(\,N^{3\b-1} V(N^{\b}y) f_{N, \ell}(y) - \l_{N, \ell} f_{N,\ell} \big) \\ &\leq  \frac{C}{N} \int \di y \frac{1}{|x-y|}  N^{3\b} V(N^{\b}y) f_{N, \ell}(y) \end{split} \end{equation}
The Hardy-Littlewood-Sobolev inequality implies that 
\[ \begin{split} 
(|x| +N^{-\beta}) \, \omega_{N,\ell} (x) \leq \; &\frac{C}{N} \int dy \, \frac{|x|+N^{-\beta}}{|x-y|} N^{3\beta} V(N^{\beta} y) \\ \leq \; &\frac{C}{N} \| V \|_1 + \frac{C}{N^{1+\beta}} \int dy \frac{1}{|x-y|} N^{3\beta} V (N^\beta y) (N^\beta |y| + 1) \\
\leq &\frac{C}{N} \left[ \| V \|_1 + \|(|.|+1) V(.)\|_{3/2} \right] \end{split} \]
Furthermore, taking the gradient of (\ref{eq:ome-exp}), we can also estimate
\[\begin{split}
 \big|(N^{-2\b}& + |x|^2)  \nabla \o_{N, \ell}(x) \big|  \\
 &\leq C \int \di y \Big(\frac{N^{-2\b}}{|x-y|^2} + 1 + \frac{|y|^2}{|x-y|^2} \Big) \big(N^{3\b-1} V(N^{\b}y) + \l_{N, \ell} {\bf 1}(|y| \leq \ell) \big) f_{N, \ell}(y)  \non \\[3pt]
& \leq C \big(\|N^{3\b-1} V(N^{\b}\cdot)\|_{1} + N^{-2\beta} \|N^{3\b-1} V(N^{\b} . ) (N^{2\b} |.|^2 + 1) \|_{3} \big)  \non \\
& \qquad + C \l_{N, \ell} \big( N^{-2\b}\|{\bf 1}(|y| \leq \ell)\|_{3} + \|{\bf 1}(|y| \leq \ell)\|_{1} + \||y|^2 {\bf 1}(|y| \leq \ell)\|_{3} \big)  \non \\[3pt]
&   \leq \frac{C}{N} \big(\|V\|_{3} + \|V\|_{1} + \||.|^2 V(.)\|_{3} +1 \big)
 \leq \frac {C'} N\end{split}
\]
for any $\ell \gg N^{-\b}$.

Finally, we have to prove a lower bound for the eigenvalue $\lambda_{N,\ell}$. To this end, we use the estimates $0 \leq \omega_{N,\ell} (x) \leq C/(N|x|)$, established in part ii) and iii). We have
\[ \begin{split}  
\lambda_{N,\ell} = \;&\frac{\langle f_{N,\ell}, \frak{h} f_{N,\ell} \rangle}{\langle f_{N,\ell}, f_{N,\ell} \rangle} \\ \geq \; & \frac{1}{\int_{|x| \leq \ell} \left( 1 + \omega_{N,\ell}^2 (x) \right) dx} \frac{1}{2N} \int_{|x| \leq \ell} N^{3\beta} V (N^\beta x)  \left( 1- 2\omega_{N,\ell} (x) \right)  \\
\geq \; & \frac{1}{\frac{4}{3} \pi \ell^3 + \frac{C\ell}{N^2}} \frac{1}{2N} \left[ b_0 - \frac{C}{N^{1-\beta}} \| V (.) / |.| \|_1 \right] \\
\geq \; & \frac{3b_0}{8\pi N \ell^3} \left(1 - \frac{C}{N^{1-\beta}} \right) \end{split} \]
for all $\ell \gg N^{-\beta}$ and $0 < \beta < 1$. 
\end{proof}

\section{Properties of Nonlinear Schr\"odinger Equations} \label{phi}

For a given initial data $\varphi\in H^1(\mathbb{R}^3)$, we define $\phn_t$ as the solution of the modified time dependent nonlinear Schr\"odinger equation
\be
 i\partial_t \phn_t(x)=-\Delta \phn_t(x)+(N^{3\beta}V(N^\beta\cdot)f_{N,\ell}(\cdot)*|\phn_t|^2)(x)\phn_t(x)\label{MGP}
\ee
where $f_{N,\ell}$ is the solution of the scattering equation \eqref{rescaled_scattChi}. Moreover, we denote by $\ph_t$ the solution of the limiting equation 
\be
 i\partial_t \ph_t=-\Delta \ph_t+b_0|\ph_t|^2\ph_t\label{GP}
\ee
where $b_0=\int dx \,V(x)$. 

We need some standard facts concerning the well-posedness and the propagation of higher Sobolev regularity for these equations. Moreover, we need to estimate their difference in the limit of large $N$. This is the content of the next proposition. 
\begin{prop}\label{prop:ph}
Let $V \in L^1 \cap L^3 (\bR^3, (1+|x|^6) dx)$ be non-negative and spherically symetric. Let $\ph \in H^1 (\bR^3)$ with $\| \ph \|_2 = 1$.
\begin{itemize}
\item[i)] There exist unique global solutions $\phn_.$ and $\ph_.$ in $C(\bR; H^1 (\bR^3))$ of (\ref{MGP}) and, respectively, of (\ref{GP}) with initial data $\ph$. The solutions are such that $\| \ph_t \|_2 = \| \phn_t \|_2 = \| \ph \|_2 =1$ and 
\[ \| \ph_t \|_{H^1} , \| \phn_t \|_{H^1} \leq C \]
for a constant $C > 0$ and all $t \in \bR$. 
\item[ii)] Under the additional assumption that $\ph \in H^n (\bR^3)$, for an integer $n \in \bN$, then $\ph_t, \phn_t \in H^n (\bR^3)$ for all $t \in \bR$ and there exist constants $C > 0$ (depending on $\| \ph \|_{H^n}$ and on $n$) and $K > 0$ (depending only on $\| \ph \|_{H^1}$ and on $n$) such that 
\[ \| \ph_t \|_{H^n} , \| \phn_t \|_{H^n} \leq C e^{K|t|} \] for all $t \in \bR$.
\item[iii)] Let $\ph \in H^4 (\bR^3)$. Then there exists $C > 0$ (depending on $\| \ph \|_{H^4}$) and $K > 0$ (depending only on $\| \ph \|_{H^1}$) such that 
\[ \| \dot{\ph}_t \|_{H^2}, \| \ddot{\ph}_t \|_2 \leq C e^{K|t|} \]
\item[iv)] Let $\ph \in H^2 (\bR^3)$. Then there exist constants $C,c_1 > 0$ (depending on $\| \ph \|_{H^2}$) and $c_2 > 0$ (depending only on $\| \ph \|_{H^1}$) such that 
\[ \| \ph_t - \phn_t \|_2 \leq C N^{-\gamma} \exp (c_1 \exp (c_2 |t|)) \qquad \text{with } \gamma = \min (\beta , 1-\beta). \]
\item[v)] Let $\ph \in H^4 (\bR^3)$. Then there exist constants $C,c_1 > 0$ (depending on $\| \ph \|_{H^4}$)
and $c_2 > 0$ (depending only on $\| \ph \|_{H^1}$) such that 
\[ \| \ph_t - \phn_t \|_{H^2}, \| \dot\ph_t - \dot\phn_t \|_2 \leq C N^{-\gamma} \exp (c_1 \exp (c_2 |t|)) \qquad \text{with } \gamma = \min (\beta, 1-\beta)\]
\end{itemize}
\end{prop}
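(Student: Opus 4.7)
\medskip
\noindent\textbf{Proof proposal.}
My plan is to treat the five items in order, exploiting the conservation laws of both NLS equations and then a Gronwall comparison at the end. For (i), the $L^2$-norm is conserved by gauge invariance, while the energy
\[ E_N(\ph) = \int |\nabla\ph|^2 + \tfrac12 \int (N^{3\beta}V(N^\beta\cdot)f_{N,\ell}*|\ph|^2)|\ph|^2 \]
(respectively its limit $E(\ph) = \int|\nabla\ph|^2 + (b_0/2)\int|\ph|^4$) is also conserved and non-negative because $V\geq 0$ and $0\leq f_{N,\ell}\leq 1$. Since $\|N^{3\beta}V(N^\beta\cdot)f_{N,\ell}\|_1$ is bounded uniformly in $N$ by Lemma \ref{lm:propomega}, local well-posedness in $H^1(\bR^3)$ follows from a standard contraction argument (Kato, Ginibre--Velo), and the energy bound then yields global existence with $\|\phn_t\|_{H^1}+\|\ph_t\|_{H^1}\leq C$ uniformly in $t$ and $N$.

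For (ii), propagation of $H^n$-regularity is obtained by a Gronwall argument after differentiating the equation $n$ times; using Leibniz and the fact that $H^2(\bR^3)\hookrightarrow L^\infty(\bR^3)$, the cubic term yields $\|(N^{3\beta}V(N^\beta\cdot)f_{N,\ell}*|\phn_t|^2)\phn_t\|_{H^n}\leq C(\|\phn_t\|_{H^1})\,\|\phn_t\|_{H^n}$ with a constant independent of $N$. For (iii), we differentiate (\ref{GP}) in time: $\dot\ph_t = i\Delta\ph_t - ib_0|\ph_t|^2\ph_t$ and $\ddot\ph_t = i\Delta\dot\ph_t - ib_0\partial_t(|\ph_t|^2\ph_t)$. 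The bounds $\|\dot\ph_t\|_{H^2}\leq C\|\ph_t\|_{H^4}+C\|\ph_t\|_{H^2}^3$ and $\|\ddot\ph_t\|_2\leq C\|\dot\ph_t\|_{H^2}+C\|\ph_t\|_{H^2}^2\|\dot\ph_t\|_2$ follow from the algebra property of $H^2$ in three dimensions; combining with (ii) produces the claimed $Ce^{K|t|}$ growth.

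For (iv), set $w_t=\ph_t-\phn_t$. Then
\[ i\partial_t w_t = -\Delta w_t + b_0\bigl(|\ph_t|^2\ph_t - |\phn_t|^2\phn_t\bigr) + R_N(t) \]
with $R_N(t) = \bigl[b_0|\phn_t|^2 - (N^{3\beta}V(N^\beta\cdot)f_{N,\ell}*|\phn_t|^2)\bigr]\phn_t$. The difference of cubic terms gives a Lipschitz estimate $\|b_0(|\ph_t|^2\ph_t - |\phn_t|^2\phn_t)\|_2 \leq C(\|\ph_t\|_\infty^2+\|\phn_t\|_\infty^2)\|w_t\|_2$. To bound $\|R_N(t)\|_2$, split
\[ R_N(t) = \bigl[b_0|\phn_t|^2 - (N^{3\beta}V(N^\beta\cdot)*|\phn_t|^2)\bigr]\phn_t + \bigl[N^{3\beta}V(N^\beta\cdot)\omega_{N,\ell}*|\phn_t|^2\bigr]\phn_t. \]
The first piece is controlled via a Taylor expansion in the convolution variable: since $\phn_t\in H^2\hookrightarrow C^0$ with $\|\nabla|\phn_t|^2\|_\infty\lesssim e^{K|t|}$, the scaling gives $\|\cdot\|_2\leq CN^{-\beta}e^{K|t|}$. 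The second piece is bounded using the estimate $\omega_{N,\ell}(y)\leq C/(N(|y|+N^{-\beta}))$ from Lemma \ref{lm:propomega}, which yields $\|N^{3\beta}V(N^\beta\cdot)\omega_{N,\ell}\|_1\leq CN^{\beta-1}$ after a change of variables. Gronwall's lemma with the Lipschitz constant growing like $e^{K|t|}$ then produces the double-exponential factor $\exp(c_1\exp(c_2|t|))$ multiplying $N^{-\gamma}$ with $\gamma=\min(\beta,1-\beta)$.

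For (v), the same strategy is applied to $\|w_t\|_{H^2}$ by differentiating the equation for $w_t$ twice in space; now the control of derivatives of $R_N(t)$ requires $\phn_t\in H^4$ to pass one or two derivatives onto $|\phn_t|^2$, and one uses $\|N^{3\beta}V(N^\beta\cdot)\omega_{N,\ell}\|_1\leq CN^{\beta-1}$ again for the correlation-structure contribution. The bound on $\|\dot\ph_t-\dot\phn_t\|_2$ is then immediate from the two evolution equations themselves, since $\dot w_t = i\Delta w_t + \text{(nonlinear error)}$ with the nonlinear error already controlled in $L^2$ by the $H^2$-estimate on $w_t$ just obtained. I expect the main obstacle to lie in (v): carefully distributing derivatives via Leibniz onto both the cubic difference and the error term $R_N$ while keeping the dependence on $N$ optimal at the threshold $\gamma=\min(\beta,1-\beta)$, and verifying that the necessary $H^4$-Sobolev norms of $\phn_t$ remain under control uniformly in $N$ (which they do, by (ii)).
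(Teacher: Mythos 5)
Your proposal follows essentially the same route as the paper: conservation of mass and energy for (i), propagation of $H^n$-regularity by a Gronwall argument for (ii), time-differentiation plus the algebra property of $H^2(\bR^3)$ for (iii), and for (iv)–(v) a Duhamel/Gronwall comparison of the two flows with the error term split into the approximate-identity part and the $\omega_{N,\ell}$ part (the paper in fact cites \cite{BdOS12} for (i)–(iv) and only writes out (v) explicitly, but the mechanism is the same). One small imprecision worth fixing: in (iv) you invoke $\|\nabla|\phn_t|^2\|_\infty\lesssim e^{K|t|}$ under the assumption $\phn_t\in H^2$, but $H^2(\bR^3)$ does not embed in $C^1$, so $\nabla\phn_t$ need not be bounded; the Taylor-expansion bound should instead be run with $\|\,|\phn_t(\cdot)|^2-|\phn_t(\cdot-y)|^2\,\|_2\leq |y|\,\|\nabla|\phn_t|^2\|_2\leq C|y|\,\|\phn_t\|_\infty\|\phn_t\|_{H^1}$, pulling the surviving factor $\|\phn_t\|_\infty$ out; the conclusion $\|\cdot\|_2\leq CN^{-\beta}e^{K|t|}$ and hence $\gamma=\min(\beta,1-\beta)$ is unaffected.
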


\begin{proof}
The proof of i)-iii) is quite standard and can be found, for example, in \cite[Prop. 3.1]{BdOS12}. Also the proof of the bound 
\[ \| \ph_t - \phn_t \|_2 \leq C N^{-\gamma} \exp (c_1 \exp (c_2 |t|)) , \qquad \text{with } \gamma = \min (\beta, 1-\beta)  \]
can be found in \cite[Prop. 3.1]{BdOS12}, up to the observation that 
\[ \left| \int dx N^{3\beta} V(N^\beta x) f_{N,\ell} (x) - b_0 \right| \leq C N^{-1+\beta} \]
as follows from Lemma \ref{lm:propomega}, writing $f_{N,\ell} = 1- \omega_{N,\ell}$. To prove v), we observe that 
\[ \ph_t - \phn_t = i \int_0^t ds \, e^{-i\Delta (t-s)} \left[ (U*|\phn_s|^2) \phn_s - b_0 |\ph_s|^2 \ph_s \right] \]
with $U_N (x) = N^{3\beta} V(N^\beta x) f_{N,\ell} (x)$. We estimate
\[ \begin{split} 
\| \ph_t &- \phn_t \|_{H^2} \\ \leq \; &\sum_{|\alpha_1| + |\alpha_2| + |\alpha_3| \leq 2} \int_0^t ds \, \left\| (U_N * \nabla^{\alpha_1} \phn_s \, \nabla^{\alpha_2} \overline{\ph}_s^{N}) \nabla^{\alpha_3} \ph_s^{N} - b_0 \nabla^{\alpha_1} \ph_s \nabla^{\alpha_2} \overline{\ph}_s \nabla^{\alpha_3} \ph_s \right\| \\
\leq \; &\sum_{|\alpha_1| + |\alpha_2| + |\alpha_3| \leq 2} \int_0^t ds \, \left\| (U_N * (\nabla^{\alpha_1} \phn_s - \nabla^{\alpha_1} \ph_s) \, \nabla^{\alpha_2} \overline{\ph}_s^{N}) \nabla^{\alpha_3} \phn_s \right\| \\&+ 
\sum_{|\alpha_1| + |\alpha_2| + |\alpha_3| \leq 2} \int_0^t ds \, \left\| (U_N * \nabla^{\alpha_1} \ph_s  \, (\nabla^{\alpha_2} \overline{\ph}_s^{N} - \nabla^{\alpha_2} \overline{\ph}_t)) \nabla^{\alpha_3} \ph_s^{N} \right\| \\
&+ \sum_{|\alpha_1| + |\alpha_2| + |\alpha_3| \leq 2} \int_0^t ds \, \left\| (U_N * \nabla^{\alpha_1} \ph_s  \, \nabla^{\alpha_2} \overline{\ph}_s) (\nabla^{\alpha_3} \ph_s^{N} - \nabla^{\alpha_3} \ph_s) \right\| \\
&+  \sum_{|\alpha_1| + |\alpha_2| + |\alpha_3| \leq 2} \int_0^t ds \, \left\| (U_N * \nabla^{\alpha_1} \ph_s  \, \nabla^{\alpha_2} \overline{\ph}_s) \nabla^{\alpha_3} \ph_s - b_0 \nabla^{\alpha_1} \ph_s  \, \nabla^{\alpha_2} \overline{\ph}_s \nabla^{\alpha_3} \ph_s \right\| 
\end{split}
\]
Hence, we obtain 
\[ \begin{split} \| \ph_t &- \phn_t \|_{H^2} \\ \leq \;& \sum_{|\alpha_1| + |\alpha_2| + |\alpha_3| \leq 2} \| U_N \|_1  \int_0^t ds \, \Big[ \| \nabla^{\alpha_3} \phn_s \|_\infty \| \nabla^{\alpha_2} \phn_s \|_\infty  \| \nabla^{\alpha_1} \phn_s - \nabla^{\alpha_1} \ph_s \|_2 \\ &\hspace{3cm} + \| \nabla^{\alpha_1} \ph_s \|_\infty \| \nabla^{\alpha_3} \phn_s \|_\infty  \| \nabla^{\alpha_2} \phn_s - \nabla^{\alpha_2} \ph_s \|_2 \\ &\hspace{3cm} + 
\| \nabla^{\alpha_1} \ph_s \|_\infty \| \nabla^{\alpha_2} \ph_s \|_\infty  \| \nabla^{\alpha_3} \phn_s - \nabla^{\alpha_3} \ph_s \|_2 \Big] \\ &+
\sum_{|\alpha_1| + |\alpha_2| + |\alpha_3| \leq 2} \int_0^t ds \, \| (U* \nabla^{\alpha_1} \ph_s \nabla^{\alpha_2} \overline{\ph}_s) \nabla^{\alpha_3} \ph_s - b_0 \nabla^{\alpha_1} \ph_s \nabla^{\alpha_2} \overline{\ph}_s \nabla^{\alpha_3} \ph_s \| \end{split} \]
and using the propagation of regularity from part ii) of the proposition, we conclude that 
\[ \| \ph_t - \phn_t \|_{H^2}  \leq C \int_0^t e^{K|s|} \| \ph_s - \phn_s \|_{H^2} + C N^{-\gamma} e^{K|t|}  \]
By Gronwall, we find 
\[ \|  \ph_t - \phn_t \|_{H^2} \leq C N^{-\gamma} \exp (c_1 \exp (c_2 |t|)) \]
The estimate for $\| \dot{\ph}_t - \dot{\ph}_t^{N} \|_2$ follows with the help of the equations for $\dot\ph_t$ and $\dot{\ph}^{N}_t$.
\end{proof}

\section{Properties of the operator $k_{N,t}$}\label{s:kernel}

We define the integral kernel 
\begin{equation}\label{eq:kNtA} k_{N,t} (x,y) = -N \omega_{N,\ell} (x-y) (\phn_t ((x+y)/2))^2 
\end{equation}
where $f_{N,\ell} = 1- \omega_{N,\ell}$ is the solution of the scattering equation (\ref{rescaled_scattChi}). 
Note that $k_{N,t}$ is the integral kernel of a Hilbert-Schmidt operator on $L^2 (\bR^3)$. In fact, using part iii) of Lemma \ref{lm:propomega}, we find 
\begin{equation}\label{eq:kN-HS} \begin{split}  \| k_{N,t} \|_\text{HS}^2 &= \int dx dy \, |k_{N,t} (x,y)|^2 = \int dxdy N^2 \omega_{N,\ell}^2 (x-y) |\phn_t ((x+y)/2)|^4 \\ &\leq C \int dx dy \, \frac{{\bf 1} (|x-y|\leq \ell)}{|x-y|^2} |\phn_t ((x+y)/2)|^4 \leq C \ell \end{split} \end{equation}

In the next five lemmas, we collect some properties of the kernel $k_{N,t}$ and of the operators $s_{N,t} = \sinh_{k_{N,t}}$, $p_{N,t} = \cosh_{k_{N,t}}-1$, $r_{N,t} = s_{N,t} - k_{N,t}$. 

\begin{lemma} \label{l:BoundsK}
Let $V$ be as in Lemma \ref{lm:propomega} and $k_{N,t}$ be defined as in (\ref{eq:kNtA}), with $\ph = \ph_{t=0}^N \in H^2 (\bR^3)$ (so that $\| \ph^N_t \|_{H^2} \leq C e^{K|t|}$ for all $t \in \bR$, by Proposition \ref{prop:ph}). Then 
\[ \begin{split}
(i) \qquad &\|k_{N,t}\|_2\,, \|p_{N,t}\|_2\,,\|s_{N,t}\|_2 \,,\|r_{N,t}\|_2   \leq C \non \\[3pt]
(ii) \qquad  &\sup_{x \in \bR^3} \|k_{N,t}(x, \cdot)\|_2\,,\sup_{x \in \bR^3} \|p_{N,t}(x, \cdot)\|_2 \leq C e^{K|t|} \,, \\ &\sup_{x \in \bR^3}\|s_{N,t}(x, \cdot)\|_2,\,\sup_{x \in \bR^3} \|r_{N,t}(x, \cdot)\|_2   \leq C e^{K|t|}  \non  \\[3pt]
(iii) \qquad &\sup_{x,y} |r_{N,t} (x;y)|, \, \sup_{x,y} |p_{N,t} (x,y)| \leq C e^{K|t|} 
 \non\end{split}
\]
\end{lemma}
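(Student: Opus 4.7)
The plan is to derive every bound from the single pointwise estimate
\[ |k_{N,t}(x,y)| \leq C\,\frac{\mathbf{1}(|x-y|\leq \ell)}{|x-y|}\,|\phn_t((x+y)/2)|^2, \]
which is immediate from the definition \eqref{eq:kNtA} of $k_{N,t}$ and the bound $N\o_{N,\ell}(z) \leq C/|z|$ on $\{|z|\leq\ell\}$ supplied by Lemma~\ref{lm:propomega}(iii), combined with $\|\phn_t\|_\infty \leq C\|\phn_t\|_{H^2} \leq Ce^{K|t|}$ (Sobolev embedding plus Proposition~\ref{prop:ph}).

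First I would dispatch the bounds on $k_{N,t}$ itself. The Hilbert--Schmidt estimate $\|k_{N,t}\|_2 \leq C$ is already the computation in \eqref{eq:kN-HS}, while the slice bound follows from
\[ \|k_{N,t}(x,\cdot)\|_2^2 \leq C \|\phn_t\|_\infty^4 \int_{|z|\leq\ell} \frac{dz}{|z|^2} \leq C\ell\,e^{K|t|}. \]
In particular $\|k_{N,t}\|_\mathrm{op} \leq \|k_{N,t}\|_2 \leq C$ will provide the uniform operator-norm control used in the rest of the argument.

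Next, for the bounds on $s_{N,t}$, $p_{N,t}$ and $r_{N,t}$ I would use the power series
\[ s_{N,t} = \sum_{n\geq 0}\frac{(k_{N,t}\bar k_{N,t})^n k_{N,t}}{(2n+1)!},\qquad p_{N,t} = \sum_{n\geq 1}\frac{(k_{N,t}\bar k_{N,t})^n}{(2n)!}, \qquad r_{N,t} = \sum_{n\geq 1}\frac{(k_{N,t}\bar k_{N,t})^n k_{N,t}}{(2n+1)!}. \]
The Hilbert--Schmidt bounds in (i) then follow from $\|AB\|_\mathrm{HS} \leq \|A\|_\mathrm{op}\|B\|_\mathrm{HS}$: each term in the series has HS norm at most $\|k_{N,t}\|_\mathrm{op}^{2n}\|k_{N,t}\|_\mathrm{HS} \leq C^{2n+1}$ (respectively $\|k_{N,t}\|_\mathrm{op}^{2n} \leq C^{2n}$), and the factorials secure convergence. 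For the slice bounds in (ii), the basic estimate
\[ |(AB)(x,y)| \leq \|A(x,\cdot)\|_2\,\|B(\cdot,y)\|_2, \qquad \text{hence} \qquad \|(AB)(x,\cdot)\|_2 \leq \|A(x,\cdot)\|_2\,\|B\|_\mathrm{HS}, \]
applied after factoring a leftmost $k_{N,t}$ out of each term, reduces every contribution to $\|k_{N,t}(x,\cdot)\|_2$ times a convergent numerical series.

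Finally, for the pointwise bounds in (iii) I would apply Cauchy--Schwarz twice, peeling off one factor of $k_{N,t}$ from each end: for $n\geq 1$,
\[ |(k_{N,t}\bar k_{N,t})^n(x,y)| \leq \|k_{N,t}(x,\cdot)\|_2 \cdot \|(\bar k_{N,t} k_{N,t})^{n-1}\bar k_{N,t}\|_\mathrm{op} \cdot \|k_{N,t}(\cdot,y)\|_2, \]
and analogously for $(k_{N,t}\bar k_{N,t})^n k_{N,t}$. Since $k_{N,t}(x,y) = k_{N,t}(y,x)$ (because $\o_{N,\ell}$ is even and $(x+y)/2$ is symmetric), $\|k_{N,t}(\cdot,y)\|_2 = \|k_{N,t}(y,\cdot)\|_2 \leq Ce^{K|t|}$, and the operator-norm factor is at most $C^{2n-1}$. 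Summation against $1/(2n)!$ or $1/(2n+1)!$ closes the bound. The only subtlety in this plan is bookkeeping: each time one factor is bounded by its slice norm (at the endpoints), one by its HS norm (when series convergence is needed), and the rest by the operator norm. There is no real obstacle beyond this organization.
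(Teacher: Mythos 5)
Your proof is correct and follows essentially the same route as the paper: every bound is reduced to the pointwise estimate on $k_{N,t}$ from Lemma~\ref{lm:propomega}, with the series for $s_{N,t}$, $p_{N,t}$, $r_{N,t}$ controlled by peeling off slice, Hilbert--Schmidt, and operator norms. The only minor differences are that the paper uses Hardy's inequality for the slice bound in (ii) where you integrate over the $\ell$-cutoff directly, and your displayed pointwise bound for $(k_{N,t}\bar k_{N,t})^n$ has an indexing slip (after peeling a factor off each end the middle is $(\bar k_{N,t} k_{N,t})^{n-1}$, not $(\bar k_{N,t} k_{N,t})^{n-1}\bar k_{N,t}$) --- neither affects the validity of the argument.
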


\vskip 0.5cm

\begin{proof} 
We start with the property (i). The bound for $k_{N,t}$ is proven in (\ref{eq:kN-HS}) above. The bounds for $\|p_{N,t}\|_2$, $\|r_{N,t}\|_2$ and $\|s_{N,t}\|_2$ follow from 
\[ 
\|K_1 K_2\|_2 \leq \|K_1 \|_2 \,\|K_2\|_2 
\]
for any two kernels $K_1,\,K_2 \in L^2(\RRR^3 \times \RRR^3)$ 
(equivalently, $\| K_1 K_2 \|_\text{HS} \leq \| K_1 \|_\text{HS} \| K_2 \|_\text{HS}$ for any Hilbert-Schmidt operators $K_1, K_2$). 

Next, we show (ii). We observe that
\[ \begin{split} \| k_{N,t} (x,.) \|_2^2 &= \int dy \, |k_{N,t} (x,y)|^2 \\ &\leq C \int dy \frac{1}{|x-y|^2} |\phn_t ((x+y)/2)|^4 \\ &\leq \int dy \, \left| \nabla |\phn_t (y)|^2 \right|^2 \leq C \| \phn_t \|_{H^1}^2 \| \phn_t \|_\infty^2 \leq C e^{K|t|} \end{split} \]
using Hardy's inequality. To prove the bound for $\sup_x \| p_{N,t} (x,.) \|_2$, we use 
\[ \begin{split} 
\| (k_{N,t} \overline{k}_{N,t})^n (x,.) \|_2^2  =\; &\int dy \left| \int dz \, k_{N,t} (x,z) \left[ (\overline{k}_{N,t} k_{N,t})^{n-1} \overline{k}_{N,t} \right] (z,y) \right|^2  \\ \leq \; &\int dy dz dw \, 
|k_{N,t} (x,z)|^2 \, \left| \left[ (\overline{k}_{N,t} k_{N,t})^{n-1} \overline{k}_{N,t} \right] (w,y) \right|^2 \\ \leq \; & \| k_{N,t} (x,.) \|_2^2 \, \| (\overline{k}_{N,t} k_{N,t})^{n-1} \overline{k}_{N,t}  \|_2^2 
\leq \; \| k_{N,t} (x,.) \|_2^2 \, \| k_{N,t} \|_2^{4n-2}
\end{split} \]
This gives
\[ \sup_x \| p_{N,t} (x,) \|_2 \leq C e^{\| k_{N,t} \|} \sup_x \| k_{N,t} (x,.) \|_2  \leq C  e^{K|t|} \]
as claimed. The bounds for $\sup_x \| r_{N,t} (x,.) \|_2$ and $\sup_x \| s_{N,t} (x,.) \|_2$ can be proven similarly. 

Finally, we show iii). From the definition of $r_{N,t}$, we find
\[ \begin{split}
r_{N,t} (x,y) = \; & \sum_{n=1}^\infty \frac{1}{(2n+1)!} (k_{N,t} \overline{k}_{N,t})^n k_{N,t} (x,y) \\ = \; &\sum_{n=1}^\infty \frac{1}{(2n+1)!} \int dz dw \, k_{N,t} (x;z) \left[ (\overline{k}_{N,t} k_{N,t})^{n-1} \overline{k}_{N,t}\right](z,w)\, k_{N,t} (w,y)
\end{split} \]
and therefore
\[\begin{split} 
|r_{N,t} (x,y)| \leq \; &\sum_{n=1}^\infty \frac{1}{(2n+1)!} \left[ \int dw dz |k_{N,t} (x,z)|^2 |k_{N,t} (w,y)|^2 \right]^{1/2} \\ &\hspace{4cm} \times \left[ \int dw dz \, |(\overline{k}_{N,t} k_{N,t})^{n-1} \overline{k}_{N,t} (z,w)|^2 \right]^{1/2} \\ \leq \; &\sum_{n=1}^\infty \frac{1}{(2n+1)!} \| k_{N,t} \|_2^{2n-1} \, \| k_{N,t} (x,.) \|_2 \| k_{N,t} (.,y) \|_2 \leq C e^{K|t|} \end{split} \]
for every $x,y \in \bR^3$. The bound for $p_{N,t}$ can be proven analogously. 
\end{proof}

\begin{lemma}\label{lm:deri}
Let $V$ be as in Lemma \ref{lm:propomega} and $k_{N,t}$ be defined as in (\ref{eq:kNtA}), with $\ph = \ph_{t=0}^N \in H^2 (\bR^3)$ (so that $\| \ph^N_t \|_{H^2} \leq C e^{K|t|}$ for all $t \in \bR$, by Proposition \ref{prop:ph}). Then 
\[ \begin{split} 
(i) \qquad & \| \nabla_1 k_{N,t} \|_2, \| \nabla_2 k_{N,t} \|_2 \leq C N^{\beta/2} + C e^{K|t|} \\ 
(ii) \qquad & \| \nabla_1 p_{N,t} \|_2, \| \nabla_2 p_{N,t} \|_2, \| \nabla_1 r_{N,t} \|_2, \| \nabla_2 r_{N,t} \|_2 \leq C \\
(iii) \qquad & \| \Delta_1 r_{N,t} \|_2, \| \Delta_2 r_{N,t} \|_2, \| \Delta_1 p_{N,t} \|_2, \| \Delta_2 p_{N,t} \|_2 \leq C e^{K|t|} \end{split} \]
\end{lemma}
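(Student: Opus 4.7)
The three claims differ in how much singular behavior they can accommodate, so each uses a distinct ingredient. For (i), I differentiate the kernel (\ref{eq:kNtA}) directly: by the chain rule, $\nabla_1 k_{N,t}(x,y)$ is the sum of $-N\nabla\omega_{N,\ell}(x-y)(\phn_t((x+y)/2))^2$ and $-\tfrac12 N\omega_{N,\ell}(x-y)\phn_t((x+y)/2)(\nabla\phn_t)((x+y)/2)$. Using the pointwise bounds from Lemma \ref{lm:propomega}(iii), together with $\|\phn_t\|_{H^2}\leq C e^{K|t|}$ from Proposition \ref{prop:ph}, the $L^2$-norm of the first contribution is bounded by $\|\phn_t\|_\infty^2 (\int_{|u|\leq\ell} du/(|u|^2+N^{-2\beta})^2)^{1/2} \ell^{3/2}$, and the inner integral is of order $N^\beta$, giving the $CN^{\beta/2}$ piece; the second contribution is controlled by $\|\phn_t\|_\infty\|\nabla\phn_t\|_\infty(\int_{|u|\leq\ell} du/(|u|+N^{-\beta})^2)^{1/2}\ell^{3/2}$, which is uniformly $O(e^{K|t|})$. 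The bound on $\nabla_2 k_{N,t}$ follows from the symmetry $k_{N,t}(x,y)=k_{N,t}(y,x)$.

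For (ii), the point is that $p_{N,t}=\cosh_{k_{N,t}}-I$ and $r_{N,t}=\sinh_{k_{N,t}}-k_{N,t}$ each begin at order $k_{N,t}\bar k_{N,t}$ in the expansion, so the singular $\nabla\omega_{N,\ell}$-factor is regularized by an additional convolution with $\bar k_{N,t}$. Consider the leading term $\nabla_1(k_{N,t}\bar k_{N,t})(x,y) = \int (\nabla_x k_{N,t})(x,z)\,\bar k_{N,t}(z,y)\,dz$. The dangerous piece, containing $-N\nabla\omega_{N,\ell}(x-z)=+N\nabla_z\omega_{N,\ell}(x-z)$, is integrated by parts in $z$ (no boundary contribution since $\omega_{N,\ell}(\cdot)=0$ on $\partial B_\ell$), moving the derivative onto the smoother factor $(\phn_t((x+z)/2))^2\bar k_{N,t}(z,y)$. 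One then obtains an integral with weight $N\omega_{N,\ell}(x-z)$ against either $\nabla\phn_t\cdot \bar k_{N,t}(z,y)$ or $(\phn_t)^2\nabla_z \bar k_{N,t}(z,y)$; using Lemma \ref{lm:propomega}(iii), the resulting pointwise estimates take the form $\int dz/[(|x-z|+N^{-\beta})(|z-y|+N^{-\beta})^\kappa]$ with $\kappa\in\{1,2\}$, each uniformly bounded in $N$ on $|x|,|y|\leq\ell$, and therefore their squares are integrable in $(x,y)$. The higher-order terms in the power series of $p_{N,t}$ and $r_{N,t}$ are controlled by bounding $\|\nabla_1(k_{N,t}\bar k_{N,t})\|_2$ or $\|\nabla_1 k_{N,t}(k_{N,t}\bar k_{N,t})\|_2$ (reusing the above analysis) in Hilbert–Schmidt norm, then pairing with successive factors of $k_{N,t}\bar k_{N,t}$ whose Hilbert–Schmidt norms are uniformly bounded by Lemma \ref{l:BoundsK}.

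For (iii), the crucial input is the scattering equation (\ref{rescaled_scattChi}), which yields
\[ N\Delta\omega_{N,\ell}(x) = -\tfrac12 N^{3\beta}V(N^\beta x)f_{N,\ell}(x) + N\lambda_{N,\ell} f_{N,\ell}(x)\mathbf{1}(|x|\leq\ell). \]
The first term has $L^1$-norm bounded by $\tfrac12\|V\|_1=b_0/2$ and concentrates on scale $N^{-\beta}$, acting essentially as a mollified $\delta$; the second is uniformly $L^\infty$-bounded by Lemma \ref{lm:propomega}(i) since $N\lambda_{N,\ell}\leq C$. Expanding $r_{N,t} = \tfrac{1}{6}k_{N,t}\bar k_{N,t}k_{N,t} + \ldots$, one computes $\Delta_x$ of the leftmost $k_{N,t}(x,z_1)$, producing a main contribution $-N\Delta\omega_{N,\ell}(x-z_1)(\phn_t((x+z_1)/2))^2$ times $(\bar k_{N,t} k_{N,t})(z_1,y)$, and cross contributions involving $N\nabla\omega_{N,\ell}(x-z_1)\cdot \nabla\phn_t$ and $N\omega_{N,\ell}(x-z_1)\Delta\phn_t$. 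Substituting the scattering equation in the main term, performing the integration against the $\delta$-like factor $N^{3\beta}V(N^\beta(x-z_1))$, and using the pointwise bound $|(\bar k_{N,t} k_{N,t})(x,y)|\leq C e^{K|t|}$ established in part (ii), yields the desired $L^2$-control. The cross terms reduce to the estimates handled in part (ii). The bounds for $p_{N,t}$ are obtained analogously, and higher-order terms in the expansions of $p_{N,t}$ and $r_{N,t}$ contribute subleading corrections by Lemma \ref{l:BoundsK}.

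The main technical obstacle will be making the pointwise estimates in part (ii) sharp enough to avoid logarithmic blow-ups in the Hilbert–Schmidt integrals; here the cutoff $|\cdot|\leq \ell$ on the support of $\omega_{N,\ell}$ and the bounded $L^\infty$-norms of $\phn_t$ and its derivatives will absorb any such logarithms into the $t$-dependent constants.
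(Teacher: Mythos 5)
Your proposal is correct in its broad outline but takes a genuinely different route from the paper in parts (ii) and (iii), and contains two imprecisions worth flagging.

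For (i) you do essentially what the paper does. One slip: your factor $\ell^{3/2}$ does not belong. After the change of variables $u=x-y$, $v=(x+y)/2$, the $v$-integral runs over all of $\bR^3$, not over a ball; the correct bound on that factor is $\|(\phn_t)^2\|_{L^2}=\|\phn_t\|_{L^4}^2\leq \|\phn_t\|_\infty\|\phn_t\|_2$, with no $\ell$-dependence. This is cosmetic and does not affect the result.

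For (ii) the paper does not integrate by parts. It proves $\sup_x\int\di y\,|\nabla_x k_{N,t}(x,y)|\leq C$ (Hardy's inequality controls the $1/|x-y|^2$ singularity), and then a Schur-type/Fubini estimate gives $\|\nabla_1(k_{N,t}\bar k_{N,t})\|_{\text{HS}}\leq\|k_{N,t}\|_2\bigl[\sup_x\int\di y|\nabla_x k_{N,t}|\bigr]^{1/2}\bigl[\sup_y\int\di x|\nabla_x k_{N,t}|\bigr]^{1/2}$; this is cleaner and sidesteps the issue your plan runs into. Your IBP produces, in the worst case ($\kappa=2$), the pointwise quantity $\int\di z\,[(|x-z|+N^{-\beta})(|z-y|+N^{-\beta})^2]^{-1}$, and your claim that this is ``uniformly bounded in $N$'' is false: for $|x-y|\lesssim N^{-\beta}$ the integral scales like $\beta\log N$, and for $N^{-\beta}\lesssim|x-y|$ it behaves like $\log(\ell/|x-y|)$. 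Your conclusion survives because these logarithms are square-integrable in $x-y$ on $\{|x-y|\leq 2\ell\}$ and the $\log N$ factor sits in a region of volume $O(N^{-3\beta})$, so the Hilbert--Schmidt norm is still $O(1)$ -- but the intermediate pointwise claim needs to be corrected, and your remark about absorbing logarithms ``into the $t$-dependent constants'' doesn't address this (the logarithm is in $N$ or in $|x-y|$, not in $t$). The paper's Schur-test route avoids this bookkeeping altogether. Also note that IBP here requires $\omega_{N,\ell}\in C^1(\bR^3)$; this holds because of the Neumann boundary condition ($\nabla\omega_{N,\ell}=0$ on $\partial B_\ell$ and $\omega_{N,\ell}\equiv 0$ outside), which is worth saying explicitly.

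For (iii), the paper again does not invoke the scattering equation inside Lemma \ref{lm:deri}: it integrates by parts once, trading $\Delta\omega_{N,\ell}$ for $\nabla\omega_{N,\ell}\cdot\nabla_z(\cdot)$, and then estimates three resulting terms I, II, III with Schur/HLS bounds. Your alternative -- substituting $N\Delta\omega_{N,\ell}=-\tfrac12 N^{3\beta}V(N^\beta\cdot)f_{N,\ell}+N\lambda_{N,\ell}f_{N,\ell}\mathbf{1}_{B_\ell}$ so that the singular part becomes a mollified $\delta$ and the remainder is $O(1)$ in $L^\infty$ -- is a legitimately different route, and in fact is exactly the device the paper uses later, in Lemma \ref{lm:k}, to handle $\Delta_{z_2}k_{N,t}$. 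Two small inaccuracies: the uniform pointwise bound on $(\bar k_{N,t}k_{N,t})(x,y)$ comes from Lemma \ref{l:BoundsK}(ii)--(iii) (via Cauchy--Schwarz in $z$ with $\sup_x\|k_{N,t}(x,\cdot)\|_2$), not from your part (ii); and when estimating $\|\Delta_1 p_{N,t}\|_2$ you should emphasize that the $L^2$ control really requires the Hilbert--Schmidt norm $\|\bar k_{N,t}k_{N,t}\|_{\text{HS}}$ rather than just a pointwise bound. Both are easy fixes. Overall the scattering-equation route buys a structurally cleaner identification of where the $N^{3\beta}V$ scale enters, at the cost of carrying along the extra $N\lambda_{N,\ell}$ term, whereas the paper's IBP avoids the equation but creates more terms to estimate.
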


\begin{proof}
From the definition of $k_{N,t}$, we find
\[ \begin{split}\nabla_1 k_{N,t} (x,y) = \; &-N \nabla \omega_{N,\ell} (x-y) (\phn_t ((x+y)/2))^2 \\ &- N \omega_{N,\ell} (x-y) \phn_t ((x+y)/2) \nabla \phn_t ((x+y)/2) \end{split} \]
Hence
\[ \begin{split} 
\int dx dy |\nabla_1 &k_{N,t} (x,y)|^2  \\ \leq \; &C\int dx dy \frac{1}{(|x-y|^2 + N^{-2\beta})^2} |\phn_t ((x+y)/2)|^4 \\ &+ C\int dx dy \frac{1}{(|x-y|+N^{-\beta})^2} |\nabla \phn_t ((x+y)/2)|^2 |\phn_t ((x+y_/2)|^2 \\ \leq \; &C N^\beta + C e^{K|t|}  \end{split} \]
where we used the bounds from Lemma \ref{l:BoundsK}. 

To bound the $\| \nabla_1 p_{N,t} \|_2, \|\nabla_1 r_{N,t}\|_2$, we need an estimate for $\| \nabla_1 (k_{N,t} \overline{k}_{N,t}) \|_2$. To this end, we notice that 
\[ \begin{split} 
\sup_x \int dy \, &|\nabla_x k_{N,t} (x,y)|\\ &\leq C \sup_x \int dy \left[ \frac{|\phn_t ((x+y)/2)|^2}{|x-y|^2} + \frac{|\phn_t ((x+y)/2)| |\nabla \phn_t ((x+y)/2)|}{|x-y|} \right] \\
&\leq C \sup_x \int dy \, \left[ \frac{|\phn_t ((x+y)/2)|^2}{|x-y|^2} + |\nabla \phn_t ((x+y)/2)|^2 \right] \leq C \end{split} 
\]
by Hardy's inequality. Analogously, we find 
\[ \sup_y \int dx |\nabla_x k_{N,t} (x,y)| \leq C \]
Hence, we obtain 
\begin{equation}\label{eq:nabkk} 
\begin{split} 
\int dx dy &\big|\nabla_1 k_{N,t} \overline{k}_{N,t} (x,y) \big|^2 \\  \leq \; & \int dx dy dz_1 dz_2 |\nabla_x k_{N,t} (x,z_1)| |k_{N,t} (z_1 , y)| |\nabla_x k_{N,t} (x,z_2)| 
|k_{N,t} (z_2, y)| \\
\leq \; & \int dx dy dz_1 d z_2 |\nabla_x k_{N,t} (x,z_1)| |\nabla_x k_{N,t} (x,z_2)| |k_{N,t} (z_2, y)|^2 \\
\leq \;& \| k_{N,t} \|_2^2 \left[ \sup_x \int dy |\nabla_x k_{N,t} (x,y)| \right] \left[ \sup_{y} \int dx |\nabla_x k_{N,t} (x,y)| \right] \leq C 
\end{split} 
\end{equation}
Now, we are ready to bound $\| \nabla_1 p_{N,t} \|_2$. {F}rom  the definition, we have 
\[ \nabla_1 p_{N,t} = \nabla_1 (k_{N,t} \bar k_{N,t}) \Big[ \sum_{n=1}^\io \frac{1}{(2n)!} (k_{N,t} \bar k_{N,t})^{n-1}\Big]  \]
and therefore
\[ 
\|\nabla_1 p_{N,t}\|_2 \leq e^{\|k_{N,t}\|_2}\,\|\nabla (k_{N,t} \bar k_{N,t})\|_2 \leq C  
\]
Similarly, one can show the bounds for $\nabla_2 p_{N,t}$, $\nabla_1 r_{N,t}$ and $\nabla_2 r_{N,t}$. 

As for the estimates involving second derivatives, we claim that 
\begin{equation}\label{eq:D1kk-1} \|\D_1 (k_{N,t} \bar k_{N,t})\|_{2} \leq C e^{K|t|}  \end{equation}
In fact,
\begin{equation} \begin{split} \Delta_x k_{N,t} \overline{k}_{N,t} (x,y) = \; & \int dz \Delta_x k_{N,t} (x,z) k_{N,t} (z,y) \\ = \; &- N 
\int dz (\Delta \omega_{N,\ell}) (x-z) (\phn_t ((x+z)/2))^2 k_{N,t} (z,y) \\ &-2N \int dz (\nabla \omega_{N,\ell}) (x-z) \left[ \nabla_x (\phn_t ((x+z)/2))^2 \right] k_{N,t} (z,y) \\ &-N \int dz \omega_{N,\ell} (x-z) \left[ \Delta_x (\phn_t ((x+z)/2))^2 \right] k_{N,t} (z,y) \end{split} \end{equation}
and therefore
\begin{equation}\label{eq:D1kk-2} \begin{split}
\Delta_x k_{N,t}\overline{k}_{N,t} (x,y) = \; &- N 
\int dz (\nabla \omega_{N,\ell}) (x-z) (\phn_t ((x+z)/2))^2 \nabla_z k_{N,t} (z,y) \\ &-3N \int dz (\nabla \omega_{N,\ell}) (x-z) \left[ \nabla_x (\phn_t ((x+z)/2))^2 \right] k_{N,t} (z,y) \\ &-N 
\int dz \omega_{N,\ell} (x-z) \left[ \Delta_x (\phn_t ((x+z)/2))^2 \right] k_{N,t} (z,y) \\
=\; & \text{I} + \text{II} + \text{III} 
\end{split} \end{equation}
where we used integration by parts. The $L^2$-norm of first term  can be bounded by 
\begin{equation}\label{eq:termI} \begin{split} 
\| \text{I} \|^2_2 \leq \; & \int dx dz_1 dz_2 \, \frac{{\bf 1}(|x-z_1| \leq \ell) {\bf 1} (|x-z_2| \leq \ell)}{|x-z_1|^2 |x-z_2|^2} \, \\ &\hspace{.3cm}  \times  |\phn_t ((x+z_1)/2)|^2 |\phn_t ((x+z_2)/2)|^2 \int dy \, |\nabla_{z_1} k_{N,t} (z_1, y)| |\nabla_{z_2} k_{N,t} (z_2,y)| \end{split} \end{equation}
Notice that
\[ \begin{split}  \int dy \, &|\nabla_{z_1} k_{N,t} (z_1, y)| |\nabla_{z_2} k_{N,t} (z_2,y)| \\ \leq \; &\int dy \left[ \frac{|\phn_t ((z_1+y)/2)|^2}{|z_1 -y|^2} + |\nabla \phn_t ((z_1 + y)/2)|^2 \right]  \\ & \hspace{3cm} \times \left[ \frac{|\phn_t ((z_2+y)/2)|^2}{|z_2 -y|^2} + |\nabla \phn_t ((z_2 + y)/2)|^2 \right]  \\ \leq \; &C e^{K|t|}  \left[ \frac{1}{|z_1 - z_2|} + 1 \right] \end{split} \]
Inserting this bound on the r.h.s. of (\ref{eq:termI}) and shifting the integration variables, we conclude that 
\[ \begin{split} 
\| \text{I} \|^2_2 \leq \; &C e^{K|t|}  \int dz_1 dz_2 \, \frac{{\bf 1}(|z_1| \leq \ell) {\bf 1}(|z_2| \leq \ell)}{|z_1|^2 |z_2|^2} \left[\frac{1}{|z_1 - z_2|} + 1 \right] \\ &\hspace{4cm} \times  \sup_{z_1, z_2} \int dx |\phn_t (x+z_1/2)|^2 |\phn_t (x+z_2/2)|^2 \\
\leq \; &C e^{K|t|}  
 \end{split} \]
with the Hardy-Littelwood-Sobolev inequality. The second term on the r.h.s. of (\ref{eq:D1kk-2}) can be controlled by 
\[ \begin{split} \| \text{II} \|_2^2 \leq \; & C \| k_{N,t} \|_2^2 \, \sup_{z_1} \int dx \, \frac{1}{|x-z_1|^2} \left[ |\nabla_x \phn_t ((x+z_1)/2)|^2 + |\phn_t ((x+z_1)/2)|^2 \right] \\ &\hspace{.5cm} \times \sup_{x} \int dz_2 \frac{1}{|x-z_2|^2} \left[ |\nabla_x \phn_t ((x+z_2)/2)|^2 + |\phn_t ((x+z_2)/2)|^2 \right] \\ \leq \;&C e^{K|t|}  \end{split} \]
Finally, the third term on the r.h.s. of (\ref{eq:D1kk-2}) is bounded by 
\[ \begin{split} 
\| \text{III} \|_2^2 \leq \; &C \int dx dy dz_1 dz_2 \frac{1}{|x-z_1| |x-z_2|} |k_{N,t} (z_1 , y)| |k_{N,t} (z_2 ,y)| \\ &\hspace{.5cm} \times \left[ |\Delta \phn_t ((x+z_1/2))| |\phn_t ((x+z_1)/2)| + |\nabla \phn_t ((x+z_1)/2)|^2 \right]\\ &\hspace{.5cm} \times \left[ |\Delta \phn_t ((x+z_2/2))| |\phn_t ((x+z_2)/2)| + |\nabla \phn_t ((x+z_2)/2)|^2 \right]  \\
\leq \; &C \| k_{N,t} \|_2^2 \, \sup_{z_1} \int dx \Big[ |\Delta \phn_t ((x+z_1/2))|^2 +\frac{|\phn_t ((x+z_1)/2)|^2}{|x-z_1|^2} \\ &\hspace{8cm} + \frac{|\nabla \phn_t ((x+z_1)/2)|^2}{|x-z_1|} \Big] 
\\ & \times 
\sup_x \int dz_2 \Big[ |\Delta \phn_t ((x+z_2/2))|^2 + \frac{|\phn_t ((x+z_2)/2)|^2}{|x-z_2|^2} \\ &\hspace{8cm} + \frac{|\nabla \phn_t ((x+z_2)/2)|^2}{|x-z_2|} \Big] 
\\ \leq \; &C e^{K|t|}  \end{split} \]
This concludes the proof of (\ref{eq:D1kk-1}). {F}rom (\ref{eq:D1kk-1}), we immediately obtain
\[  \| \Delta_1 p_{N,t} \|_2 \leq C e^{\| k_{N,t} \|_2} \| \Delta (k_{N,t} \overline{k}_{N,t}) \|_2 \leq C e^{K|t|}  \] 
and 
\[  \| \Delta_1 r_{N,t} \|_2 \leq C \| k_{N,t} \|_2 \, e^{\| k_{N,t} \|_2} \| \Delta (k_{N,t} \overline{k}_{N,t}) \|_2 \leq C e^{K|t|}  \] 
as claimed. 
\end{proof}

\begin{lemma} \label{BoundsKdot}
Let $V$ be as in Lemma \ref{lm:propomega} and $k_{N,t}$ be defined as in (\ref{eq:kNtA}), with $\ph = \ph_{t=0}^N \in H^4 (\bR^3)$ (so that $\| \ph^N_t \|_{H^4} \leq C e^{K|t|}$ for all $t \in \bR$, by Proposition \ref{prop:ph}). Then 
\[\begin{split}
(i) \qquad &\|\dot k_{N,t}\|_2 \leq C e^{K|t|}
\\
(ii) \qquad & \|\ddot k_{N,t}\|_2 \leq C e^{K|t|} \\
(iii) \qquad & \|\dot p_{N,t} \|_2,\,\|\dot s_{N,t} \|_2 \leq C e^{K|t|}  \end{split}
\]
\end{lemma}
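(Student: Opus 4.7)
\medskip

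\noindent\textbf{Proof plan for Lemma \ref{BoundsKdot}.}

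The plan is to estimate everything by differentiating the explicit formula $k_{N,t}(x,y) = -N\omega_{N,\ell}(x-y)(\phn_t((x+y)/2))^2$ in $t$, keeping the $\omega_{N,\ell}$ factor as the only source of singularity (as in the static case of Lemma \ref{l:BoundsK}) and using Proposition \ref{prop:ph} to control the time derivatives of $\phn_t$ in the Sobolev norms we need. The key ingredient throughout is the pointwise bound $N\omega_{N,\ell}(x-y)\leq C\,{\bf 1}(|x-y|\leq\ell)/(|x-y|+N^{-\beta})$ from Lemma \ref{lm:propomega}(iii), which gives $\int du\,{\bf 1}(|u|\leq\ell)/|u|^2 \leq C\ell$ after change of variables.

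For part (i), differentiating gives
\[
\dot k_{N,t}(x,y) = -2N\omega_{N,\ell}(x-y)\,\phn_t((x+y)/2)\,\dot{\ph}^N_t((x+y)/2).
\]
Squaring, using the bound on $N\omega_{N,\ell}$, and changing variables to $u=x-y$, $v=(x+y)/2$, we get
\[
\|\dot k_{N,t}\|_2^2 \leq C\Big(\int du\,\tfrac{{\bf 1}(|u|\leq \ell)}{|u|^2}\Big)\,\|\phn_t\|_\infty^2\,\|\dot{\ph}^N_t\|_2^2 \leq C e^{K|t|},
\]
since by Proposition \ref{prop:ph} we have $\|\phn_t\|_\infty\leq C\|\phn_t\|_{H^2}\leq Ce^{K|t|}$ and $\|\dot{\ph}^N_t\|_2\leq Ce^{K|t|}$ (which only requires $\ph\in H^2$).

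For part (ii), differentiating once more produces three kinds of terms:
\[
\ddot k_{N,t}(x,y) = -2N\omega_{N,\ell}(x-y)\Big[\,(\dot{\ph}^N_t)^2 + \phn_t\,\ddot{\ph}^N_t\,\Big]((x+y)/2).
\]
The same change-of-variables argument reduces the $L^2$-norm squared to $C\ell\big(\|\dot{\ph}^N_t\|_4^4 + \|\phn_t\|_\infty^2\|\ddot{\ph}^N_t\|_2^2\big)$. By Proposition \ref{prop:ph}(iii), which uses the full assumption $\ph\in H^4$, both $\|\dot{\ph}^N_t\|_{H^2}$ and $\|\ddot{\ph}^N_t\|_2$ are bounded by $Ce^{K|t|}$, and $\|\dot{\ph}^N_t\|_4\leq C\|\dot{\ph}^N_t\|_{H^1}\leq Ce^{K|t|}$ by Sobolev embedding. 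This yields $\|\ddot k_{N,t}\|_2\leq Ce^{K|t|}$. The $\ph\in H^4$ hypothesis is precisely what makes this step work; this is the only point where the regularity assumption in Theorem \ref{thm:LN2} is really used for these auxiliary kernels.

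For part (iii), we use the series representations
\[
p_{N,t} = \sum_{n\geq 1}\frac{(k_{N,t}\overline{k}_{N,t})^n}{(2n)!}, \qquad s_{N,t} = \sum_{n\geq 0}\frac{(k_{N,t}\overline{k}_{N,t})^n k_{N,t}}{(2n+1)!},
\]
and differentiate term by term. Each $\dot{(k_{N,t}\overline{k}_{N,t})^n}$ is a sum of $2n$ products of Hilbert-Schmidt kernels, in each of which exactly one factor of $k_{N,t}$ or $\overline{k}_{N,t}$ is replaced by $\dot k_{N,t}$ or $\dot{\overline{k}}_{N,t}$. Using the submultiplicativity $\|AB\|_2\leq\|A\|_2\|B\|_2$ for Hilbert-Schmidt kernels together with the bounds $\|k_{N,t}\|_2\leq C$ (Lemma \ref{l:BoundsK}) and $\|\dot k_{N,t}\|_2\leq Ce^{K|t|}$ from part (i), we obtain
\[
\|\dot p_{N,t}\|_2 \leq \sum_{n\geq 1}\frac{2n}{(2n)!}\|k_{N,t}\|_2^{2n-1}\|\dot k_{N,t}\|_2 \leq e^{2\|k_{N,t}\|_2}\|\dot k_{N,t}\|_2 \leq Ce^{K|t|},
\]
and the analogous bound for $\|\dot s_{N,t}\|_2$. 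No step here is delicate; the whole lemma is a straightforward consequence of (i) once the power-series machinery is set up, and the only conceptual obstacle is justifying the required regularity of $\dot{\ph}^N_t$ and $\ddot{\ph}^N_t$ in part (ii), which is precisely the content of Proposition \ref{prop:ph}(iii).
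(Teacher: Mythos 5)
Your proof is correct and follows essentially the same route as the paper: differentiate the explicit kernel, reduce via the change of variables $u=x-y$, $v=(x+y)/2$ to the scattering-function bound from Lemma \ref{lm:propomega}(iii) and Sobolev bounds on $\phn_t$, $\dot\ph^N_t$, $\ddot\ph^N_t$ from Proposition \ref{prop:ph}, and pass to $\dot p_{N,t}$, $\dot s_{N,t}$ via the power series and Hilbert--Schmidt submultiplicativity. The only cosmetic difference is that for part (iii) you spell out the term-by-term differentiation that the paper simply cites from \cite{BdOS12}, obtaining the slightly coarser but equally adequate factor $e^{2\|k_{N,t}\|_2}$.
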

\begin{proof}
The bounds (i) and (ii) follow from 
\[\begin{split}
\|\dot k_{N,t}\|_2^2 & = 4 \int \di x \di y |N \omega_{\ell,N}(x-y)|^2 |\phis{x+y}|^2 | \dotphis{x+y}|^2  \\
& \leq C \| \phn_t \|^2_\infty  \| \dot{\ph}^N_t \|_2^2 \leq C e^{K|t|} \end{split} \]
and 
\[ \begin{split} 
\|\ddot k_{N,t}\|_2^2 & \leq C \int \di x \di y |N \omega_{\ell,N}(x-y)|^2 \\ &\hspace{1cm} \times \big[|\phis{x+y}|^2 | \ddotphis{x+y}|^2 + | \dotphis{x+y}|^4 \Big]  \\
& \leq C \left[ \| \ph_t \|_{H^2}^2 \| \ddot \ph^N_t \|^2_2 + \| \dot\ph^N_t \|^4_{H^1} \right] \leq C e^{K|t|} \end{split} \]

The bounds (iii) follow from (see \cite{BdOS12}) 
 \[
  \| \dot p_{N,t} \|_2 \leq \| \dot k_{N,t} \|_2 \, \cosh (\| k_{N,t} \|_2), \qquad \text{and } \qquad  \| \dot s_{N,t} \|_2 \leq \| \dot k_{N,t} \|_2 \, \sinh (\| k_{N,t} \|_2) \]
\end{proof}

\begin{lemma} \label{BoundsKxdot}
Let $V$ be as in Lemma \ref{lm:propomega} and $k_{N,t}$ be defined as in (\ref{eq:kNtA}), with $\ph = \ph_{t=0}^N \in H^4 (\bR^3)$ (so that $\| \ph^N_t \|_{H^4} \leq C e^{K|t|}$ for all $t \in \bR$, by Proposition \ref{prop:ph}). Then 
\[\begin{split}
(i) \quad & \sup_{x \in \bR^3} \|\dot k_{N,t} \, (\cdot, x)\|_2 \leq C e^{K|t|} \\
(ii) \quad & \sup_{x \in \bR^3} \|\dot p_{N,t} (\cdot,x) \|_2,\, \sup_{x \in \bR^3} \|\dot r_{N,t} (\cdot, x) \|_2 , \, \sup_{x \in \bR^3} \| \dot s_{N,t} (\cdot, x) \|_2 \leq C e^{K|t|} \\
(iii) \quad & \sup_{x,y \in \bR^3} |\dot r_{N,t} (x,y)| \leq C e^{K|t|} 
 \label{BoundsKdot2}
\end{split}
\]
\end{lemma}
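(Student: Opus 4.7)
The plan is to mirror the structure of Lemma \ref{l:BoundsK} (pointwise/sup bounds) while using the kinetic-type bounds from Lemma \ref{BoundsKdot} (time derivatives), always passing time derivatives onto at most one factor at a time via the Leibniz rule.

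For part (i), I would differentiate the explicit formula (\ref{eq:kNtA}), obtaining
$\dot k_{N,t}(x,y) = -2 N\omega_{N,\ell}(x-y)\,\phn_t((x+y)/2)\dot\ph^N_t((x+y)/2)$.
Using the pointwise bound $N\omega_{N,\ell}(z)\leq C/|z|$ from Lemma \ref{lm:propomega}, and the change of variables $z=(x+y)/2$, I would bound $\|\dot k_{N,t}(\cdot,x)\|_2^2$ by $C\int dz\,|x-z|^{-2}|\phn_t(z)|^2|\dot\ph^N_t(z)|^2$. Hardy's inequality then gives $\leq C\|\nabla(\phn_t\dot\ph^N_t)\|_2^2\leq C\|\phn_t\|_{H^2}^2\|\dot\ph^N_t\|_{H^2}^2 \leq Ce^{K|t|}$ by Proposition \ref{prop:ph}.

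For part (ii), I would use the series expansions $\dot p_{N,t}=\sum_{n\geq 1}\frac{1}{(2n)!}\frac{d}{dt}(k_{N,t}\bar k_{N,t})^n$ and similarly for $\dot s_{N,t},\dot r_{N,t}$, and apply the Leibniz rule so that in each term exactly one of the factors is $\dot k_{N,t}$ or $\dot{\bar k}_{N,t}$. For an operator product $B=A_1A_2\cdots A_m$ the key inequality is
\[
\sup_x\|B(\cdot,x)\|_2\;\leq\;\|A_1\|_{\mathrm{HS}}\cdots\|A_{m-1}\|_{\mathrm{HS}}\;\sup_x\|A_m(\cdot,x)\|_2,
\]
which comes from Cauchy-Schwarz applied slice-by-slice. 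Since Lemma \ref{l:BoundsK}(i)--(ii) and Lemma \ref{BoundsKdot} already provide uniform bounds $\|k_{N,t}\|_2,\|\dot k_{N,t}\|_2\leq Ce^{K|t|}$ and $\sup_x\|k_{N,t}(\cdot,x)\|_2\leq Ce^{K|t|}$, and since part (i) takes care of $\sup_x\|\dot k_{N,t}(\cdot,x)\|_2\leq Ce^{K|t|}$, each term in the series is bounded by $(Ce^{K|t|})^{m}$; the combinatorial prefactor (choosing on which of the $2n$ or $2n+1$ positions the derivative falls) is then killed by the factorial, giving a convergent series bounded by $Ce^{K|t|}$.

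For part (iii), the strategy is the one already used for the pointwise bound on $r_{N,t}$ in Lemma \ref{l:BoundsK}(iii): write
\[
\dot r_{N,t}(x,y)=\sum_{n\geq 1}\tfrac{1}{(2n+1)!}\int\!dz\,dw\;\bigl[\text{derivative of }k_{N,t}(x,z)\,M_n(z,w)\,k_{N,t}(w,y)\bigr],
\]
with $M_n=(\bar k_{N,t}k_{N,t})^{n-1}\bar k_{N,t}$, and bound via Cauchy-Schwarz by $\|A(x,\cdot)\|_2\|\widetilde M\|_{\mathrm{HS}}\|A'(\cdot,y)\|_2$, where $A,A'\in\{k_{N,t},\dot k_{N,t}\}$ and $\widetilde M$ is either $M_n$ or $M_n$ with one factor replaced by its time derivative. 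All three factors are bounded by $Ce^{K|t|}$ uniformly in $x,y$ using Lemma \ref{l:BoundsK}(ii), Lemma \ref{BoundsKdot} and part (i); the factorials absorb the combinatorics and the geometric growth. The only mildly delicate step is the bookkeeping in part (ii), ensuring that the factor $A_m$ sitting in the rightmost slot is always one whose sup-in-$x$ of its $L^2_\cdot$-norm is controlled, which is the case here because such bounds are available both for $k_{N,t}$ and for $\dot k_{N,t}$.
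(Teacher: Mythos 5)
Your proposal is correct and follows exactly the route the paper intends: the paper's own proof consists of the single remark that the bounds can be proven as in Lemma \ref{l:BoundsK}, observing that $\dot k_{N,t}$ has the same structure as $k_{N,t}$ with $(\ph^N_t)^2$ replaced by $2\ph^N_t\dot\ph^N_t$, which is precisely what you carry out. Your filled-in details (Hardy's inequality for (i), the slice-by-slice Cauchy–Schwarz inequality $\sup_x\|A_1\cdots A_m(\cdot,x)\|_2\le\|A_1\|_{\mathrm{HS}}\cdots\|A_{m-1}\|_{\mathrm{HS}}\sup_x\|A_m(\cdot,x)\|_2$ together with the Leibniz rule for (ii), and the two-sided Cauchy–Schwarz bound for (iii), using the symmetry of $k_{N,t}(x,y)$ so that row and column slice norms agree) match the structure of Lemma \ref{l:BoundsK} faithfully, and the factorial prefactors indeed absorb the $O(2n)$ Leibniz multiplicity, leaving a convergent series bounded by $Ce^{K|t|}$.
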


\begin{proof}
The bounds can be proven as in Lemma \ref{l:BoundsK}, taking into account the fact that the kernel of $\dot{k}_{N,t}$ is similar to the one of $k_{N,t}$, just with $\ph^N_t ((x+y)/2)^2$ replaced by $2 \phn_t ((x+y)/2) \dot{\ph}^N_t ((x+y)/2)$. 
\end{proof}

\begin{lemma}   
Let $V$ be as in Lemma \ref{lm:propomega} and $k_{N,t}$ be defined as in (\ref{eq:kNtA}), with $\ph = \ph_{t=0}^N \in H^4 (\bR^3)$ (so that $\| \ph^N_t \|_{H^4} \leq C e^{K|t|}$ for all $t \in \bR$, by Proposition \ref{prop:ph}). Then 
\[\begin{split}
(i) \qquad & \|\nabla \dot p_{N,t}\|_2, \| \nabla_1 \dot r_{N,t} \|_2 \leq  C e^{K|t|} ,   \\
(ii) \qquad &\|\D_1 \dot r_{N,t} \|_{2}\,, \|\D_1 \dot p_{N,t} \|_{2} \leq C e^{K|t|} \end{split}
\]
\end{lemma}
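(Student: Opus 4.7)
The plan is to reduce both bounds to estimates on spatial derivatives of operator products of the form $k_{N,t}\bar{k}_{N,t}$ in which exactly one factor is replaced by its time derivative $\dot{k}_{N,t}$ or $\dot{\bar{k}}_{N,t}$. Since
\[
\dot{p}_{N,t} = \sum_{n \geq 1} \frac{1}{(2n)!} \sum_{j=0}^{2n-1} M_j^{(n)}, \qquad
\dot{r}_{N,t} = \sum_{n \geq 1} \frac{1}{(2n+1)!} \sum_{j=0}^{2n} M_j^{(n)},
\]
where each $M_j^{(n)}$ is a product of $n$ factors $k_{N,t}$ and either $n$ or $n+1$ factors $\bar{k}_{N,t}$, with exactly one of them decorated by a time dot, the two series can be estimated term by term using the Hilbert-Schmidt product rule $\|K_1K_2\|_2 \le \|K_1\|_2\|K_2\|_2$, provided we control the norms of $\nabla_1$ or $\Delta_1$ applied to a \emph{single} differentiated-undifferentiated product. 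All remaining undecorated factors contribute uniform Hilbert-Schmidt bounds $\|k_{N,t}\|_2, \|\dot k_{N,t}\|_2 \le Ce^{K|t|}$ (Lemma \ref{l:BoundsK} and Lemma \ref{BoundsKdot}), and the summation produces only an $e^{K|t|}$ factor.

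For the first-derivative bounds, I would establish the analog of estimate \eqref{eq:nabkk}, namely
\[
\|\nabla_1(\dot k_{N,t}\bar k_{N,t})\|_2, \quad \|\nabla_1(k_{N,t}\dot{\bar k}_{N,t})\|_2 \leq Ce^{K|t|}.
\]
The argument is identical to that in Lemma \ref{lm:deri}(ii): we use Cauchy-Schwarz in the form $\int|\nabla_x k_{N,t}(x,z_1)||\nabla_x k_{N,t}(x,z_2)|\le \|k_{N,t}\|_2^2 \cdot \sup_x\int|\nabla_x k_{N,t}(x,y)|dy \cdot \sup_y\int|\nabla_x k_{N,t}(x,y)|dx$, and the same type of $L^1$-uniform bound applies to $\nabla_x \dot k_{N,t}$, because $\dot k_{N,t}$ has the structure of $k_{N,t}$ with $(\phn_t)^2$ replaced by $2\phn_t\dot\phn_t$; by Proposition \ref{prop:ph}, $\dot\phn_t \in H^2$ with norm $\le Ce^{K|t|}$, so Hardy's inequality gives $\sup_x\int|\nabla_x \dot k_{N,t}(x,y)|dy \le Ce^{K|t|}$. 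Decorating each factor in the sum defining $\dot p_{N,t}$, $\dot r_{N,t}$ by a single $\nabla_1$ on the leftmost position and summing then yields $\|\nabla_1\dot p_{N,t}\|_2, \|\nabla_1\dot r_{N,t}\|_2 \le Ce^{K|t|}$.

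For the Laplacian bounds, the strategy mirrors the proof of \eqref{eq:D1kk-1}: one writes $\Delta_1(\dot k_{N,t}\bar k_{N,t})$ (and the variant with the dot on the second factor) and integrates by parts to move one derivative from $\omega_{N,\ell}$ onto the neighbouring factor. This produces three types of terms analogous to $\text{I}, \text{II}, \text{III}$ in \eqref{eq:D1kk-2}: one with $\nabla\omega_{N,\ell}$ paired with $\nabla_z k_{N,t}(z,y)$, one with $\nabla\omega_{N,\ell}$ paired with $\nabla\phn_t$ (or $\nabla\dot\phn_t$), and one with $\omega_{N,\ell}$ and second-order derivatives of $\phn_t \dot\phn_t$. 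The Hardy-Littlewood-Sobolev inequality, together with $\|\dot\phn_t\|_{H^2}\le Ce^{K|t|}$ (Proposition \ref{prop:ph}(iii)), controls each by $Ce^{K|t|}$. Summing the series and using $\|k_{N,t}\|_2 \le C$ then produces $\|\Delta_1\dot p_{N,t}\|_2, \|\Delta_1\dot r_{N,t}\|_2 \le Ce^{K|t|}$.

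The main technical point to watch is the Laplacian bound: when $\Delta_1$ falls on the factor $\dot k_{N,t}$, one term contains $\Delta_x(\phn_t((x+z)/2)\dot\phn_t((x+z)/2))$, which needs $\dot\phn_t \in H^2$; this is exactly why the hypothesis $\ph \in H^4(\bR^3)$ has been imposed, since propagation of the Hartree-type equation (\ref{eq:NLSN1}) loses two derivatives when passing to $\dot\phn_t$, as recorded in Proposition \ref{prop:ph}(iii). All other potential sources of trouble---coincidence singularities in the Coulomb-type kernel generated by $\omega_{N,\ell}$ and pile-up of $N$ factors in the series---are absorbed by the uniform bounds of Lemma \ref{l:BoundsK}.
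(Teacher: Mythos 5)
Your proposal is correct and takes essentially the same route as the paper, whose own proof is a one-line remark reducing the statement to Lemma \ref{lm:deri} by observing that exactly one factor in each term of the differentiated series is a $\dot k_{N,t}$, whose kernel has the same structure as $k_{N,t}$ with $(\phn_t)^2$ replaced by $2\phn_t\dot{\ph}^N_t$. You correctly unpack what that reduction presupposes---the analogues of \eqref{eq:nabkk} and \eqref{eq:D1kk-1} with one dotted factor, the uniform Hilbert--Schmidt bounds from Lemmas \ref{l:BoundsK} and \ref{BoundsKdot}, and the role of $\ph\in H^4$ in guaranteeing $\dot{\ph}^N_t\in H^2$ via Proposition \ref{prop:ph}(iii).
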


\begin{proof}
Also in this case, the statements can be proven similarly as in Lemma \ref{lm:deri}, keeping in mind that the kernels of  $\dot p_{N,t}$ and $\dot r_{N,t}$ look exactly like the kernels of $p_{N,t}$ and $r_{N,t}$ with the only difference that in the series expansion defining we have to replace one factor of $k_{N,t}$ or $\overline{k}_{N,t}$ by its time-derivative (which simply means that in this one factor, we have to replace $\phn_t ((x+y)/2)^2$ by $2 \phn_t ((x+y)/2) \dot{\ph}^N_t ((x+y)/2)$).
\end{proof} 

Next, we establish the convergence of the kernel $k_{N,t}$ towards its limit, as $N \to \infty$.

We denote
\[\begin{split}
k_{t}(x,y) & = -\o_{\ell}^{\text{asymp}}(x-y)\,\ph_t^2({(x+y)}/2)\end{split}
\]
with $\o_{\ell}^{\text{asymp}}$ given by
\[ \o_{\ell}^{\text{asymp}}(x)= \frac{b_0}{8\pi} \left[  \frac{1}{|x|} - \frac{3}{2\ell} + \frac{x^2}{2\ell^3}\right] \]
for all $x \in \bR^3$ with $|x| \leq \ell$ and by $\o_\ell^\text{asymp} (x) = 0$ for $|x| > \ell$. Moreover, we define $s_t = \sinh (k_t)$, $p_t = \cosh (k_t) - 1$, $r_t = \sinh (k_t) - k_t$.

{F}rom (\ref{rescaled_scattChi}), we find that 
\[ \o_{N,\ell} (x) =  1- \frac{\sin (\lambda^{1/2}_{N,\ell} (|x|-\ell))}{\lambda^{1/2}_{N,\ell}|x|} - \frac{\ell \cos (\lambda_{N,\ell}^{1/2} (|x|-\ell))}{|x|} \]
for all $x \in \bR^3$ with $R N^{-\beta} \leq |x| \leq \ell$, where $R > 0$ is the radius of the support of $V$. This gives the bound
\begin{equation}\label{eq:o-oasy} \left|N \o_{N,\ell} (x) - \o_{\ell}^\text{asymp} (x) \right| \leq \frac{C}{N^{1-\beta} |x|} \end{equation}
for a constant $C$ depending on $\ell$, and for all $x \in \bR^3$ with $|x| > RN^{-\beta}$. 

We use (\ref{eq:o-oasy}) to bound the difference $k_{N,t} - k_t$.
\begin{lemma} 
Let $V$ be as in Lemma \ref{lm:propomega} and $k_{N,t}$ be defined as in (\ref{eq:kNtA}), with $\ph = \ph_{t=0}^N \in H^4 (\bR^3)$ (so that $\| \ph^N_t \|_{H^4} \leq C e^{K|t|}$ for all $t \in \bR$, by Proposition \ref{prop:ph}). Let $\delta = \min (\beta/2, 1-\beta)$. There exist constants $C,c_1, c_2 > 0$ such that 
\[ \begin{split} 
\| k_{N,t} - k_{t} \|_2, \| p_{N,t} - p_{t} \|_2, 
\| r_{N,t} - r_t \|_2  &\leq C N^{-\delta} \exp (c_1 \exp (c_2 |t|)) \\
\| \nabla \left( p_{N,t} - p_{t} \right) \|_2 , 
\| \nabla (r_{N,t} - r_{t}) \|_2  &\leq C N^{-\delta} \exp (c_1 \exp (c_2 |t|)) \\
\| \Delta \left( p_{N,t} - p_{t} \right) \|_2 ,
\| \Delta (r_{N,t} - r_{t}) \|_2 &\leq C N^{-\delta} \exp (c_1 \exp (c_2 |t|)) 
\end{split} \]
\end{lemma}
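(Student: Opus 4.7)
The plan is to reduce all estimates to a sharp control of $\|k_{N,t} - k_t\|_2$ (and of analogous differences of derivatives), and then use the Taylor expansions of $p_{N,t}, r_{N,t}$ and $p_t, r_t$ to propagate the estimate to all other quantities via telescoping. Everything will reduce to combining the pointwise bound
\[
|N\omega_{N,\ell}(x) - \omega_\ell^{\mathrm{asymp}}(x)| \leq \frac{C}{N^{1-\beta}|x|}, \qquad RN^{-\beta} \leq |x| \leq \ell,
\]
which follows from the explicit form of $\omega_{N,\ell}$ on the annulus and the eigenvalue estimate in Lemma~\ref{lm:propomega}, with the bound $\|\phn_t - \ph_t\|_{H^2} \leq CN^{-\gamma}\exp(c_1\exp(c_2|t|))$ with $\gamma = \min(\beta,1-\beta)$ provided by Proposition~\ref{prop:ph}(v).

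First, I will split
\[
k_{N,t} - k_t = \bigl[\omega_\ell^{\mathrm{asymp}}(x-y) - N\omega_{N,\ell}(x-y)\bigr](\phn_t((x+y)/2))^2 - \omega_\ell^{\mathrm{asymp}}(x-y)\bigl[\ph_t^2 - (\phn_t)^2\bigr].
\]
For the first piece, I split the integration region into $|x-y| \leq RN^{-\beta}$ and $|x-y| > RN^{-\beta}$. On the small region, I bound $N\omega_{N,\ell}$ and $\omega_\ell^{\mathrm{asymp}}$ separately by $C/|x-y|$ (using Lemma~\ref{lm:propomega}(iii)); the $L^2$-norm squared over this region is controlled by $\int_0^{RN^{-\beta}} dr \leq CN^{-\beta}$, giving $N^{-\beta/2}$ after the square root. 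On the large region the pointwise bound yields an $L^2$-norm squared bounded by $C N^{-2(1-\beta)}$, hence $N^{-(1-\beta)}$. Together these give $N^{-\delta}$ with $\delta = \min(\beta/2, 1-\beta)$. The second piece uses $\omega_\ell^{\mathrm{asymp}}(u) \leq C/|u|$ and Hardy to obtain $\|\cdot\|_2 \leq C \|\phn_t - \ph_t\|_{H^1}\|\phn_t + \ph_t\|_\infty \leq CN^{-\gamma}\exp(c_1\exp(c_2|t|))$.

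Next, the bounds for $p_{N,t} - p_t$ and $r_{N,t} - r_t$ follow from the series representations by telescoping. For each $n \geq 1$,
\[
(k_{N,t}\bar k_{N,t})^n - (k_t \bar k_t)^n = \sum_{j=0}^{n-1}(k_{N,t}\bar k_{N,t})^j\bigl[(k_{N,t}\bar k_{N,t}) - (k_t\bar k_t)\bigr](k_t \bar k_t)^{n-1-j},
\]
and since the difference in brackets can be written as $(k_{N,t} - k_t)\bar k_{N,t} + k_t(\bar k_{N,t} - \bar k_t)$, the submultiplicativity of the Hilbert–Schmidt norm and the uniform bounds of Lemma~\ref{l:BoundsK} give the desired exponential series bound, which converges since $\|k_{N,t}\|_2,\|k_t\|_2 \leq C$. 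A completely parallel computation handles $r_{N,t} - r_t$, with one extra factor of $k$ on the right. The gradient estimates go through the identity $\nabla_1 p_{N,t} = \nabla_1(k_{N,t}\bar k_{N,t})\sum_{n\geq 1}(k_{N,t}\bar k_{N,t})^{n-1}/(2n)!$, and then I compare $\nabla_1(k_{N,t}\bar k_{N,t})$ with $\nabla_1(k_t \bar k_t)$ by integration by parts, moving the derivative onto the integration variable as in (\ref{eq:nabkk}), so that only $\|k_{N,t} - k_t\|_2$ and the uniform bound $\sup_x \int dy\, |\nabla_x k_{N,t}(x,y)| \leq C$ are needed.

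The main obstacle will be the Laplacian estimate, which needs the structural identity used in Lemma~\ref{lm:deri}: after two integrations by parts one obtains a representation of $\Delta_1(k_{N,t}\bar k_{N,t})$ in which the singular Laplacian of $\omega_{N,\ell}$ is eliminated via the scattering equation $(-\Delta + \tfrac{1}{2}N^{3\beta-1}V(N^\beta\cdot))f_{N,\ell} = \lambda_{N,\ell}f_{N,\ell}$. I will compare the resulting expression term by term with its limiting counterpart (where $\tfrac{1}{2}N^{3\beta}V(N^\beta\cdot)f_{N,\ell} - N\lambda_{N,\ell}f_{N,\ell}\mathbf{1}_{|x|\leq\ell}$ is replaced by $\tfrac{3b_0}{8\pi\ell^3}\mathbf{1}_{|x|\leq\ell} - b_0\delta$ times appropriate powers of $\ph_t$). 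Each term is either uniformly bounded as in Lemma~\ref{lm:deri} and paired with the Hilbert–Schmidt small difference $\|k_{N,t} - k_t\|_2$, or is itself of order $N^{-\delta}$ by the pointwise approximation arguments used for $\|k_{N,t} - k_t\|_2$ itself (region decomposition $|x-y|\lessgtr RN^{-\beta}$). Combined with Hardy and Hardy–Littlewood–Sobolev as in (\ref{eq:termI}), this yields $\|\Delta(p_{N,t} - p_t)\|_2 + \|\Delta(r_{N,t} - r_t)\|_2 \leq CN^{-\delta}\exp(c_1\exp(c_2|t|))$. The gradient estimates then follow either as intermediate cases of the same argument, or by interpolation between the $L^2$ and Laplacian bounds.
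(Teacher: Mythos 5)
Your proposal matches the paper's proof in all essential respects: the same decomposition of $k_{N,t}-k_t$ into the $\omega$-difference and $\varphi$-difference pieces, the same $|x-y|\lessgtr RN^{-\beta}$ region split giving the $N^{-\beta/2}$ and $N^{-(1-\beta)}$ contributions, the same telescoping in the power series for $p$ and $r$, the integration-by-parts trick $\nabla_x=(\nabla_x+\nabla_z)-\nabla_z$ for the gradient, and the use of the scattering equation for the uniform bound $\sup_x\int dy\,|\Delta_x k_{N,t}(x,y)|$ in the Laplacian estimate. Two places in your outline compress things in a way worth spelling out when you write this up. First, in the gradient step, the claim that ``only $\|k_{N,t}-k_t\|_2$ and the uniform bound $\sup_x\int dy\,|\nabla_x k_{N,t}(x,y)|\leq C$ are needed'' is not quite complete: after the $(\nabla_x+\nabla_z)-\nabla_z$ split, the $(\nabla_x+\nabla_z)$ piece does not reduce to $\|k_{N,t}-k_t\|_2$ but needs a fresh run of the region-decomposition argument applied to integrals like $\int|N\omega_{N,\ell}-\omega_\ell^{\mathrm{asymp}}|^2|\phn_t|^2|\nabla\phn_t|^2$ and $\int|\omega_\ell^{\mathrm{asymp}}|^2|\nabla\phn_t-\nabla\ph_t|^2$ (you acknowledge this explicitly for the Laplacian, so it is likely just a compression). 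Second, be careful about the parenthetical remark on the limiting expression: $-\Delta\omega_\ell^{\mathrm{asymp}}=\tfrac{b_0}{2}\delta-\tfrac{3b_0}{8\pi\ell^3}\mathbf{1}_{|x|\leq\ell}$ (your signs and the factor $\tfrac12$ are off), and more importantly, $\Delta\omega_\ell^{\mathrm{asymp}}$ contains a $\delta$-singularity with no scattering equation available on the limiting side, so a literal term-by-term comparison of $\Delta k_{N,t}$ with $\Delta k_t$ would be awkward. The paper sidesteps this by splitting $\Delta(k_{N,t}\bar k_{N,t}-k_t\bar k_t)=\Delta k_{N,t}(\bar k_{N,t}-\bar k_t)+\Delta(k_{N,t}-k_t)\bar k_t$, invoking the scattering equation only for the uniform bound on the first piece, and handling the second piece entirely via the double integration-by-parts so that derivatives never fall on either $\omega_{N,\ell}$ or $\omega_\ell^{\mathrm{asymp}}$ directly (except as $\Delta_z\bar k_t$, whose $\delta$ contribution is harmless since it simply evaluates $k_{N,t}-k_t$ at $z=y$ and is controlled by its HS norm).
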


\begin{proof}
We estimate
\begin{equation}\label{eq:k-k0} \begin{split}
 \| k_{N,t} -k_t\|_2^2 \leq \; &2\int  dx dy \, \left| N \o_{N,\ell}(x-y)  - \o_{\ell}^{\text{asymp}}(x-y) \right|^2 \, |\phn_t ((x+y)/2)|^2   \\  
 \; &+2 \int dx dy \, | \o_{\ell}^{\text{asymp}}(x-y)|^2 \, \Big| \phn_t ((x+y)/2)^2- \ph_t({(x+y)}/2)^2  \Big|^2 \\
= \; &\text{I} + \text{II} \end{split}
\end{equation}

Using (\ref{eq:o-oasy}), we find 
\[\begin{split}
|\text{I}| \leq \; &C N^{-2(1-\beta)} \int dx dy \frac{{\bf 1} (|x-y| \leq \ell)}{|x-y|^2} \, |\phn_t ((x+y)/2)|^2 \\ &+ C \int_{|x-y| \leq R N^{-\beta}} dx dy \frac{1}{|x-y|^2} |\phn_t ((x+y)/2)|^2 \\ \leq \; &C N^{-2+2\beta} +C  N^{-\beta}  \end{split} \]

On the other hand, using Proposition \ref{prop:ph}, we find 
\[ \begin{split} |\text{II}| &\leq C \int dx dy \frac{{\bf 1} (|x-y| \leq \ell)}{|x-y|^2} \Big| \phn_t ((x+y)/2)^2 - \ph_t ((x+y)/2)^2  \Big|^2 \\ &\leq C \int dR \, \Big| \phn_t (R)^2- \ph_t (R)^2   \Big|^2 \\ 
&\leq C \int dR \, |\phn_t (R) - \ph_t (R) |^2 \left( | \phn_t (R) |^2 + | \ph_t (R) |^2 \right)
\\ &\leq C e^{K|t|} \, \| \phn_t - \ph_t \|^2_2 \\ &\leq C N^{-2\gamma} \exp (c_1 \exp (c_2 |t|)) \qquad \text{with } \gamma = \min (\beta, 1-\beta) \end{split} \]
Hence, we conclude that
\[ \| k_{N,t} - k_t \|_2 \leq C N^{-\delta} \exp (c_1 \exp (c_2 |t|))  \]
as claimed. Furthermore, we have 
\begin{equation} \label{eq:p-pN} \begin{split}
&p_{N,t} - p_t = 
\sum_{n \geq 1} \frac{1}{(2n)!} \sum_{j=0}^{n-1} (k_t \bar k_t)^j (k_{N,t} \bar k_{N,t} - k_t \bar k_t)(k_{N,t} \bar k_{N,t} )^{n-1-j} \end{split}
\end{equation}
This implies that
\[ \| p_{N,t} - p_t \|_2 \leq C  e^{\| k_t \|_2 + \| k_{N,t} \|_2} \| k_{N,t} - k_t \|_2 \leq C N^{-\delta} \exp (c_1 \exp (c_2 |t|)) \]
Similarly, we find
\[ \|  r_{N,t} - r_t \|_2 \leq C N^{-\delta} \exp (c_1 \exp (c_2 |t|))  \]

{F}rom (\ref{eq:p-pN}), we find
\[\begin{split}
\nabla \big( p_{N,t} - p_t \big)  = \; &\nabla   (k_{N,t} \bar k_{N,t} - k_t \bar k_t) \Big[\sum_{n \geq 1} \frac{1}{(2n)!}(k_{N,t} \bar k_{N,t} )^{n-1} \Big] \\
&+ 
\nabla (k_t \bar k_t) \Big[ \sum_{n \geq 2} \frac{1}{(2n)!} \sum_{j=1}^{n-1} (k_t \bar k_t)^{j-1} (k_{N,t} \bar k_{N,t} - k_t \bar k_t)(k_{N,t} \bar k_{N,t} )^{n-1-j} \Big]  \end{split}
\]
Therefore 
\begin{equation}\label{eq:nab-pp}
\begin{split}
\|\nabla \big( p_{N,t} &- p_t \big)\|_2  \\ \leq \; & \|\nabla (k_{N,t} \bar k_{N,t} - k_t \bar k_t)\|_2 \;\sum_{n \geq 1} \frac{1}{(2n)!}  \|k_{N,t}\|^{2(n-1)}_2  \\
&+\|\nabla  (k_t \bar k_t)\|_2   \|k_{N,t} \bar k_{N,t} - k \bar k\|_2  \sum_{n \geq 1} \frac{1}{(2n)!}   \sum_{j=1}^{n-1} \|k_t\|_2^{2j}\,\|k_{N,t}\|^{2(n-1-j)}_2 \\
\leq \; &  \|\nabla (k_{N,t} \bar k_{N,t} - k_t \bar k_t )\|_2\,e^{\|k_{N,t}\|_2} + 
 \|k_{N,t} \bar k_{N,t} - k_t \bar k_t\|_2\, \|\nabla  (k_t \bar k_t)\|_2   \,e^{\|k_{\ell}\|_2} \\ \leq \; &C \| \nabla (k_{N,t} \overline{k}_{N,t} - k_t \overline{k}_\ell ) \|_2 + C N^{-\delta} \exp (c_1 \exp (c_2 |t|)) 
\end{split}
\end{equation}

We have
\begin{equation}\label{eq:nab-kk}
\|\nabla (k_{N,t} \bar k_{N,t} - k_t \bar k_t)\|_2 \leq \|\nabla k_{N,t} (\bar k_{N,t} - \bar k_t)\big\|_2 + \|\nabla ( k_{N,t} -  k_t)\bar k_t\|_2
\end{equation}
On the one hand, we find 
\begin{equation}\label{eq:nabkk-k}
\begin{split}
& \|\nabla k_{N,t} (\bar k_{N,t} - \bar k_t)\big\|_2^2  \\
&=    \int \di x \di y \di z_1 \di z_2 |\nabla_x k_{N,t}(x,z_1)|\,| (k_{N,t} -  k_t)(z_1,y)|\,|\nabla_x k_{N,t}(x,z_2)|\,| (k_{N,t} -  k_t)(z_2,y)| \\
& \leq   \int dx dy dz_1 dz_2 \, |(k_{N,t} - k_t) (z_1, y)|^2 |\nabla_x k_{N,t} (x,z_1)| |\nabla_x k_{N,t} (x,z_2)| \\  &\leq C  \| k_{N,t} - k_t \|^2_2 \end{split} \end{equation}
On the other hand, we compute
\begin{equation}\label{eq:nab-kk2} \begin{split} \| \nabla (k_{N,t} - k_t) \overline{k}_t \|_2^2 =\; & \int dx dy  \, \left| \int dz \, \nabla_x (k_{N,t} (x,z) - k_t (x,z) ) \overline{k}_t (z,y) \right|^2 \\
\leq \; &\int dx dy  \, \left| \int dz \, (\nabla_x + \nabla_z) (k_{N,t} (x,z) - k_t (x,z) ) \overline{k}_t(z,y) \right|^2 \\ &+ \int dx dy  \, \left| \int dz \, (k_{N,t} (x,z) - k_t (x,z) ) \nabla_z \overline{k}_t (z,y) \right|^2
\end{split} \end{equation}
integrating by parts. As in (\ref{eq:nabkk-k}), we can estimate
\[ 
\int dx dy  \, \left| \int dz \, (k_{N,t} (x,z) - k_t (x,z) ) \nabla_z \overline{k}_t (z,y) \right|^2 \leq C \| k_{N,t}  - k_t \|_2^2 \]
As for the first term on the r.h.s. of (\ref{eq:nab-kk2}), it can be estimated by 
\[ \begin{split} \int &dx dy  \, \Big| \int dz \, \Big[ N \omega_{N,\ell} (x-z) \phn_t ((x+z)/2) \nabla \phn_t ((x+z)/2) \\ & \hspace{4cm} - \omega_\ell^\text{asymp} (x-z) \ph_t ((x+z)/2) \nabla \ph_t ((x+z)/2) \Big] k_t (z,y) \Big|^2 \\ \leq \; &\sup_z \int dy |k_t (z,y)|^2 \\ &\times \left[ \int dx dz \, \left| N \omega_{N,\ell} (x-z) - \omega_\ell^\text{asymp} (x-z) \right|^2 | \phn_t ((x+z)/2)|^2 |\nabla \phn_t ((x+z)/2)|^2  \right. \\ &\hspace{.4cm} + \int dx dz \, \left| \omega_\ell^\text{asymp} (x-z) \right|^2 \left| \phn_t ((x+z)/2) - \ph_t ((x+z)/2)\right|^2 |\nabla \phn_t ((x+z)/2)|^2  \\ &\hspace{.4cm} \left. + 
\int dx dz \, \left| \omega_\ell^\text{asymp} (x-z) \right|^2  | \ph_t ((x+z)/2)|^2 |\nabla \phn_t ((x+z)/2) - \nabla \ph_t ((x+z)/2) |^2 \right] \end{split} \]
Proceeding as in the analysis of (\ref{eq:k-k0}) and using Proposition \ref{prop:ph}, we conclude that
\[ \begin{split} \int dx dy  \, \Big| \int dz \, (\nabla_x + \nabla_z) (k_{N,t} (x,z) - &k_t (x,z) ) \overline{k}_t (z,y) \Big|^2 \leq C N^{-\delta} \exp (c_1 \exp (c_2 |t|)) \end{split} \]
Inserting the last bound and (\ref{eq:nabkk-k}) in (\ref{eq:nab-kk}), we find 
\[ \|\nabla (k_{N,t} \bar k_{N,t} - k_t \bar k_t)\|_2 \leq C N^{-\delta} \exp (c_1 \exp (c_2 |t|)) \]
Hence, from (\ref{eq:nab-pp}), we conclude that 
\[ \| \nabla (p_{N,t} - p_t) \|_2 \leq C N^{-\delta} \exp (c_1 \exp (c_2 |t|)) \]
Analogously, we can show the bound for $\| \nabla (r_{N,t} - r_t) \|$. 

Also the bounds for $\| \Delta (p_{N,t} - p_t) \|$ and $\| \Delta (r_{N,t} - r_t) \|$ can be proven in a similar way. In fact, observing that
\[ \begin{split} \Delta_x k_{N,t} (x,y) = &-N \Delta \omega_{N,\ell} (x-y) (\phn_t ((x+y)/2))^2 \\ &- 2N \nabla \omega_{N,\ell} (x-y) \phn_t ((x+y)/2)) \nabla \phn_t ((x+y)/2) \\ &- N \omega_{N,\ell} (x-y) (\nabla \ph_t ((x+y)/2))^2 \\ &- N \omega_{N,\ell} (x-y) \phn_t ((x+y)/2) \Delta \phn_t ((x+y)/2) \end{split} \]
and that
\[ \begin{split} -N \Delta \omega_{N,\ell} (x-y) (\phn_t ((x+y)/2))^2 = \; & \frac{1}{2} N^{3\beta} V (N^\beta (x-y)) f_{N,\ell} (x-y) \\ &- N\lambda_{N,\ell} f_{N,\ell} (x-y) {\bf 1} (|x-y| \leq \ell) \end{split} \]
we find 
\[ \sup_x \int dy \, |\Delta_x k_{N,t} (x,y)| \leq C e^{K|t|} \]
and therefore, similarly to (\ref{eq:nabkk-k}), 
\begin{equation}\label{eq:Delta1} \| \Delta k_{N,t} (\overline{k}_{N,t} - \overline{k}_t ) \|_2 \leq C e^{K|t|} \| k_{N,t} - k_t \|_2 \leq C N^{-\delta} \exp (c_1 \exp (c_2 |t|))  \end{equation}
Moreover, integrating by parts twice, we obtain
\[ \begin{split} \| \Delta (k_{N,t} - k_t) \overline{k}_t \|_2^2 \leq &\int dx dy \left| \int dz (\nabla_x + \nabla_z)^2 (k_{N,t} (x,z) - k_t (x,z)) \overline{k}_t (z,y) \right|^2 \\ &+ \int dx dy \left| \int dz (\nabla_x + \nabla_z) (k_{N,t} (x,z) - k_t (x,z)) \nabla_z \overline{k}_t (z,y) \right|^2\\ &+
\int dx dy \left| \int dz (k_{N,t} (x,z) - k_t (x,z)) \Delta_z \overline{k}_t (z,y) \right|^2\end{split} \]
This implies that 
\[ \| \Delta (k_{N,t} - k_t) \overline{k}_t \|_2^2 \leq C N^{-\delta} \exp (c_1 \exp (c_2 |t|))  \]
and thus, together with (\ref{eq:Delta1}), that
\[ \| \Delta (k_{N,t} \overline{k}_{N,t} - k_t \overline{k}_\ell ) \|_2 \leq C N^{-\delta} \exp (c_1 \exp (c_2 |t|)) \]
The last bound allows us to conclude that
\[ \begin{split} \| \Delta (p_{N,t} - p_t) \|_2 &\leq C N^{-\delta} \exp (c_1 \exp (c_2 |t|))  \\ \| \Delta (r_{N,t} - r_t) \|_2 &\leq C N^{-\delta} \exp (c_1 \exp (c_2 |t|)) \end{split} \]
\end{proof}

\end{document}